\DeclareMathAlphabet\mathbfcal{OMS}{cmsy}{b}{n}
\title{Concerning Iterative Graph Normalization\\ and Maximum Weight Independent Sets}
\author{Laurent Guigues \\ Amazon \\guigues@amazon.com}
\date{December, 14, 2020}
\newcommand{\IS}{\mathcal{I}}
\newcommand{\MIS}{\overline{\mathcal{I}}}
\newcommand{\SUPP}{supp}
\newcommand{\DEG}{deg}
\newcommand{\N}[1]{\mathcal{V}_{#1}}
\newcommand{\R}{\mathcal{N}}
\newcommand{\Rh}{\mathcal{N}_h}
\newcommand{\NORMG}{\mathsf{N}} 
\newcommand{\DEN}[1]{dens_G(#1)}
\newcommand{\EM}[1]{\emph{\bf #1}}
\newcommand{\norm}[1]{\left\lVert#1\right\rVert}
\newcommand{\DOM}{\mathcal{D}}
\newcommand{\ONE}{\mathbb{1}}
\newcommand{\HD}{\oslash}
\newcommand{\HM}{\odot}
\newcommand{\DIAG}{diag}
\newcommand{\IMN}[1]{\mathcal{I_{#1}}}
\newcommand{\AP}{\textsf{AP}\xspace}
\newcommand{\SK}{\textsf{SK}\xspace}
\newcommand{\MWIS}{\textsf{MWIS}\xspace}
\newcommand{\IGN}{\textsf{IGN}\xspace}
\newcommand{\ICN}{\textsf{ICN}\xspace}
\newcommand{\KAKO}{\textsf{WG}\xspace}
\newcommand{\SA}{\textsf{SA}\xspace}
\newcommand{\PERM}{\mathcal{P}}
\newcommand{\BIRK}{\mathcal{B}}
\newcommand{\RoN}{\mathcal{R}}
\newcommand{\CoN}{\mathcal{C}}
\newcommand{\CrN}{\mathcal{X}}
\newenvironment{sloppypar*}
 {\sloppy\ignorespaces}
 {\par}
\newtheorem{proposition}{Proposition}[section]
\newtheorem{theorem}{Theorem}[section]
\newtheorem{conjecture}{Conjecture}[section]
\newtheorem{corollary}{Corollary}[theorem]
\theoremstyle{definition}
\newtheorem{definition}{Definition}[section]
\theoremstyle{remark}
\begin{document}
\maketitle

\begin{abstract}
We consider a very simple operation on a weighted graph which consists 
in normalizing in parallel all the weights of its nodes by the sum of the weights on their neighborhoods 
plus their own weight.
We call this operation Graph Normalization (GN). We study its iteration (IGN) 
and a variant in which we apply a non-linear activation function to the weights after each normalization.

We show that the indicator vectors of the 
maximal independent sets of $G$ are the only binary fixed points of GN, that they 
are attractive under simple conditions on the activation function and we characterize their basins of attraction. 
We enumerate a number of other fixed points and we prove repulsivity for some classes.
Based on extensive experiments and different theoretical arguments we conjecture 
that \IGN always converges and converges to a binary solution for non-linear activations.
If our conjectures are correct, \IGN would thus be 
a differentiable approximation algorithm for the Maximum Weight 
Independent Set problem (MWIS), a central NP-hard 
optimization problem with numerous applications.

IGN is closely related to a greedy approximation algorithm of \MWIS
by Kako et al. \cite{kako2005approximation} 
which has a proven approximation ratio.
Experimental results show that \IGN provides solutions of very close total weight 
to those found by the algorithm of Kako et al.

An important special case of \MWIS is the optimal Assignment Problem. 
In this context, \IGN corresponds to an iterative matrix normalization scheme 
which is closely related to the Sinkhorn-Knopp algorithm (\SK). 
The difference is that \SK projects the initial weight matrix to a doubly stochastic matrix, 
hence finds a soft assignment,
while \IGN projects to a permutation matrix, i.e. a crisp solution of the assignment problem.
We relate our scheme to the Softassign algorithm, or equivalently to entropy-regularized \SK,
and provide comparative experimental results.

As Graph Normalization is differentiable, its iterations can be embedded 
into a machine learning framework and used to train end-to-end 
any model which includes a graphical optimization step 
which can be cast as a maximum weight independent set problem.
This includes problems such as graph and hypergraph matching, sequence alignment, 
clustering, ranking, etc. with applications in multiple domains such as computer vision, 
mobile networks optimization or manufacturing.

\end{abstract}

\tableofcontents

\section{Introduction}
Given a set of items with associated values
and a set of incompatibility constraints between pairs of items, 
the maximum weight independent set problem (\MWIS) is the problem 
of selecting a subset of the items maximizing the sum of the values of the selected items 
and respecting the pairwise incompatibilities.
The graphical formulation of \MWIS represents the items as the weighted nodes 
of an undirected graph and the conflicts as its edges, 
and asks for finding a set of non-adjacent nodes, 
called an independent set or stable set of the graph, 
of maximum total weight.

A number of combinatorial optimization problems are special cases of \MWIS, 
such as the assignment problem and its multi-dimensional generalization, 
the min cost flow, shortest path, max flow, max weighted clique, 
or the general set packing problems. 
\MWIS is also at the heart of graph coloring problems \cite{mehrotra1996column} and 
it has been proven that any MAP estimation for probability distributions over finite domains 
can be reduced to a \MWIS problem \cite{sanghavi2009message}.

In practice, solving these problems has a wide range of applications, 
such as in economy for combinatorial auctions \cite{de2003combinatorial},
in data clustering \cite{li2012clustering, grover2019stochastic},
coding theory and error-correcting codes \cite{butenko2002finding},
interval selection problems arising in manufacturing \cite{spieksma1998approximating}, map labelling problems \cite{verweij1999optimisation}, frequency assignment problems in wireless networks \cite{malesinska1997graph}, 
or in computer vision for image segmentation \cite{brendel2010segmentation}, multi-object tracking \cite{papageorgiou2009maximum, brendel2011multiobject}, stereo-vision \cite{horaud1989stereo}, 
action recognition or robotics \cite{vento2013graph} - 
including recent deep learning-based approaches \cite{caetano2009learning, wang2019learning, zanfir2018deep}.

However, \MWIS is NP-hard \cite{garey1979computers} and hard to approximate.
In general, \MWIS is Poly-APX-complete, which is the class of the hardest problems that can be approximated efficiently to within a factor polynomial in the input size.
Even for bounded degree graphs, finding the maximal independent set (without considering weighted graphs) is MAXSNP-complete, implying that, for some constant c (depending on the degree) it is NP-hard to find an approximate solution that comes within a factor of c of the optimum \cite{papadimitriou1991optimization}.

Due to the importance of the problem, finding approximation algorithms to \MWIS have received 
much attention in the combinatorial optimization community. 
Different heuristic or greedy algorithms have been proposed \cite{busygin2002heuristic, sakai2003note, kako2005approximation} as well as many linear or quadratic integer programming-based methods, such as branch-and-price \cite{warrier2005branch} 
or branch-and-bound \cite{warren2006combinatorial}. 
A recent branch-and-reduce algorithm is able to exactly solve \MWIS on large graphs 
of millions of nodes \cite{lamm2019exactly}.
Many algorithms dedicated to sub-problems or sub-classes of graphs - some of which are known to be $P$ - 
have also been proposed.

By nature, all these combinatorial algorithms are not differentiable. 
The only differentiable approach which we are aware of for the general \MWIS problem 
is the specialization of the message passing / belief propagation framework to the particular case of \MWIS \cite{shah2005max, sanghavi2009message}.
Despite some proofs of optimality when they converge, message passing algorithms, such as max-product, have no general proof of convergence on general graphs with loops 
and are known in practice to diverge on some instances.

Differentiable approximation algorithms to combinatorial problems have regained attention in the 
recent years with the advent of machine learning and especially deep learning, because 
they can be embedded into end-to-end learning frameworks trained by gradient descent-based loss 
minimization techniques.

In this context, the Sinkhorn-Knopp approximation algorithm for the Assignment problem\cite{sinkhorn1967concerning}
has received much attention in the recent years \cite{patrini2018sinkhorn, mena2018learning, tay2020sparse}. 
The Assignment Problem (\AP) is a classical graphical optimization problem in which 
one has to assign a set of meals to a set of guests, 
maximizing the sum of preferences of each guest for each meal,
with the constraint that each guest can at most enjoy one meal and meals cannot be shared (1-1 assignment).
In the balanced \AP problem, one further assumes the same number of guests and of meals.
\AP can be cast as a maximum weight bipartite graph matching problem, 
in which the nodes correspond to the guests and the meals, and the weights of the edges to the preferences.
The goal is then to find a set of non intersecting edges of maximum total weight. 
One sees that by exchanging the role of the nodes and the edges 
- i.e. taking the dual of the graph in the hypergraph duality sense - \AP is a special instance 
of a \MWIS problem.
Now, this instance has $P$-time combinatorial algorithms, 
such as Kuhn's Hungarian algorithm \cite{kuhn1955hungarian}.

\AP can also be approached from the linear optimization point of view. 
The preferences are then arranged in a square matrix $W$
and one looks for a \emph{permutation} matrix $P$ such that $\sum_{ij} W_{ij}P_{ij}$ is maximal.
The constraint that $P$ must be a permutation matrix, i.e. represents a 1-1 matching, 
can be expressed by the fact that $P$ must be a doubly stochastic matrix, 
i.e. a matrix whose rows and columns are all normalized (sum up to $1$), 
and whose entries are \emph{binary}.
In their seminal work of 1967, Sinkhorn and Knopp have shown that alternating rows and columns normalization 
of a square nonnegative matrix $W$ converges to a doubly stochastic matrix if and only if $W$ has support \cite{sinkhorn1967concerning}. 
The Sinkhorn-Knopp algorithm (\SK) thus gives a \emph{relaxed} solution to the balanced 
assignment problem in the linear programming sense: it doesn't provide a binary 
solution to the problem but a real valued solution respecting the same row/column normalization constraints.
The properties of the \SK algorithm have been largely studied, see for example a review in \cite{knight2008sinkhorn}.

Using \SK to enforce the doubly stochastic condition,
and embedding the optimization into a deterministic annealing framework, 
Kosowsky and Yuille \cite{kosowsky1994invisible} 
have proposed the Softassign algorithm (\SA), which is proven to 
converge at zero temperature to an optimal solution of \AP \cite{rangarajan1997convergence}.
\SA has been used to approximate the traveling salesman problem, the graph matching problem,
and the graph partitioning problem \cite{gold1996softmax}.
\SK also plays a central role in the theory of optimal transport 
and in this context, it has been shown that \SA (which is not called like that in this community), 
corresponds to an entropy-regularized problem, 
the temperature corresponding to the intensity of the regularization \cite{mena2018learning}.

The iterative graph normalization algorithm that we introduce here is closely related to the \SK algorithm 
but applies to the more general problem of \MWIS and not only to \AP.
In fact, we initially discovered it by modifying \SK.
In the context of a graph matching application, 
we needed to use a fixed number of iterations of \SK and unrolling the \SK algorithm loop,
we asked ourselves "should we start by row or column normalization? does it matter?" and then 
"what if we normalized \emph{simultaneously} the rows and the columns?".
We tried that out - dividing each entry of the matrix by the sum of the values in its row and column, 
without counting twice the element itself - and we had the surprise that it was then converging to 
a \emph{binary} solution, a permutation matrix, instead of a soft assignment.
We called this operation "matrix cross normalization", 
as we were normalizing the matrix by the sum of its elements in crosses 
rather than alternating normalization in rows and in columns.
We then realized that it was a special case of a more general dynamical system on weighted graphs 
and that the constraint on the solution to be a permutation for the case of the assignment problem 
corresponded in the general case to the constraint of being a maximal independent set of the initial graph.

As usual, in this paper we present our findings the other way around.
We start from the general graph formulation which we ended with, 
and motivate it by the most general problem, i.e. the 
maximum weight independent set problem, 
to specialize it to the assignment problem 
and relate it to the Sinkhorn-Knopp and Softassign algorithms.

We prove a number of properties of Iterative Graph Normalization (\IGN),
in particular local attractivity of maximal independent sets 
and repulsivity of non-maximal ones under suitable activation functions.
We also prove convergence for complete graphs.

However, we couldn't prove general convergence.
Based on multiple experiments, we anyway formulate the following conjectures:

\begin{conjecture}\label{conj_convergence}
Iterative graph normalization always converges.
\end{conjecture}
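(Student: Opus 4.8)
The plan is to treat one step of graph normalization as a map $T$ on the nonnegative weight vectors $x\in\mathbb{R}_{\ge 0}^{V}$, given coordinatewise by $(Tx)_i = x_i/\sum_{j\in N[i]} x_j$, where $N[i]$ is the closed neighborhood of node $i$ (with the activation variant $T_\sigma=\sigma\circ T$). Since $T$ is invariant under global rescaling and maps into $[0,1]^{V}$, the orbit $\{T^k x\}$ lives in a compact set and hence always has accumulation points; the whole content of the conjecture is that it has a \emph{single} accumulation point rather than a cycle or a more complicated recurrent set. The strategy I would pursue is the classical one for normalization dynamics: exhibit a Lyapunov functional $\Phi$ that is monotone along orbits and strictly so off the fixed-point set, apply a LaSalle-type argument to force every accumulation point into that set, and then upgrade convergence of $\Phi$ to convergence of the iterates by showing the fixed points are isolated, or via a {\L}ojasiewicz gradient inequality when $\Phi$ is analytic.

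The first concrete step is to locate $\Phi$. Here I would exploit the tight link with \SK stressed in the introduction: \SK is alternating Bregman/KL projection and decreases a relative-entropy functional subject to marginal constraints, so for the assignment special case I would write the cross-normalization step as a single simultaneous projection and test whether the same KL-type free energy, or a free energy combining a linear weight term with a node-entropy term $-\sum_i x_i\log x_i$, decreases under $T$. If such a functional works on the bipartite (assignment) graph, I would then try to lift it to a general graph by replacing row/column sums with closed-neighborhood sums $\sum_{j\in N[i]} x_j$, checking monotonicity directly from the arithmetic-mean inequality that already underlies the fact that each $(Tx)_i\le 1$.

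A complementary route is through the theory of monotone and competitive dynamical systems. Normalization is \emph{competitive}: raising a neighbor's weight strictly lowers a node's normalized weight, so $T$ is order-reversing along edges. I would look for a change of variables (for instance a logarithmic one, under which products of neighborhood sums become additive) that renders the system cooperative, so that the Hirsch--Smith convergence theory for bounded orbits applies. The already-solved cases serve as consistency checks: on a complete graph all nodes share one neighborhood, the dynamics is effectively one-dimensional on the simplex and convergence is forced, which any candidate $\Phi$ must reproduce. For the nonlinear variant I would additionally use the sharpening effect of the activation (which the paper shows makes the maximal-independent-set indicators attractive) to argue that the fixed-point set becomes finite, after which convergence of $\Phi$ plus isolation yields convergence of the orbit.

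The hard part, and the reason this is stated as a conjecture rather than proved, is exactly the global step: ruling out periodic or chaotic orbits uniformly over \emph{all} graph topologies. Competitive systems in dimension three and higher genuinely admit limit cycles, so the competitive structure alone is not enough; one really needs a bona fide global Lyapunov functional, and it is far from clear that the KL/entropy candidate governing the alternating \SK scheme survives the switch to simultaneous closed-neighborhood normalization on graphs with odd cycles --- precisely the loopy regime in which the related max-product/message-passing algorithms cited earlier are known to oscillate. I therefore expect the principal obstacle to be constructing (or proving the nonexistence of) such a functional uniformly in the graph; absent that, the realistic fallback is the partial program already carried out in the paper: local attractivity of the genuine fixed points, repulsivity of the spurious ones, and full convergence on structured subclasses such as complete and bipartite graphs.
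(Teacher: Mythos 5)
You should first note that the statement you were given is not proved in the paper either: it is Conjecture~\ref{conj_convergence}, and the author states explicitly that general convergence could not be established. Your submission is accordingly a research program rather than a proof, and as a proof it has a genuine gap: no Lyapunov functional is actually exhibited, no monotonicity is actually verified, and the passage from convergence of a scalar functional to convergence of the orbit is left as an acknowledged open step. You are candid about this, which is to your credit, but nothing in the proposal settles the conjecture.

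The comparison with the paper's own partial evidence is still instructive. The paper follows exactly the Lyapunov strategy you outline, but with a different candidate functional: the $L_1$ norm $\norm{x^k}_1$ of the weights. Theorem~\ref{ref_thLoneincreases} shows that this quantity is strictly increasing off the fixed-point set and hence converges, \emph{conditional} on Conjecture~\ref{conj_QL1}, the quadratic-form inequality $y^t(A+I)y \le \norm{y}_1$ for $y = x \HD (A+I)x$; the argument is a short direct comparison of $\norm{\R(\R(x))}_1$ with $\norm{\R(x)}_1$. The paper then hits precisely the obstacle you identify --- convergence of the functional does not give convergence of the iterates --- and resolves it only by a further conjecture, appealing to the fact that the image of $\R$ is a hypersurface meeting each ray at most once (Theorem~\ref{ref_thprojective}). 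Your two proposed tools, a KL/Bregman free energy imported from Sinkhorn--Knopp theory and the Hirsch--Smith theory of competitive systems, do not appear in the paper and are genuinely different avenues; your own caveat that simultaneous closed-neighborhood normalization is not an alternating projection, so the \SK entropy argument need not transfer, is the correct reason to be skeptical of the first. In short, both you and the paper reduce the conjecture to (i) finding a monotone functional and (ii) upgrading its convergence to orbit convergence, and neither closes step (ii); the paper at least pins down an explicit candidate for (i) together with the precise inequality it would require.
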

\begin{conjecture}\label{conj_convergence_to_MIS}
For any node-weighted graph with all distinct weights, iterative graph normalization
with a suitable non-linear activation converges to a maximal independent set of the graph.
\end{conjecture}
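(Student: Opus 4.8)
The plan is to decompose the statement into two separate goals --- global convergence of the orbit (Conjecture~\ref{conj_convergence}) and identification of the limit as the indicator of a maximal independent set --- and to attack them with a Lyapunov-function argument for the first and an instability analysis of the non-binary fixed points for the second. I would model one step of the scheme as a map $T = \sigma \circ N$ on the nonnegative weight vectors, where $N$ is the parallel graph normalization and $\sigma$ the coordinatewise activation, and I would first pin down a compact forward-invariant region so that the orbit $\{T^k(w_0)\}$ has a nonempty limit set. Since we are already told that the only binary fixed points are the indicators of the maximal independent sets, and that these are attractive while non-maximal and several other non-binary fixed points are repulsive, the target reduces to showing that no other behaviour (cycles, non-binary attractors, or chaos) can occur for generic distinct weights.

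For the convergence half I would look for a scalar potential $\Phi$ that is strictly monotone along the iteration and bounded on the invariant region, so that LaSalle's invariance principle forces the orbit into the set where $\Phi$ is stationary, i.e. the fixed-point set. The natural candidate is a regularized relaxation of the \MWIS objective, a sum of per-node terms rewarding large weights minus a penalty coupling adjacent nodes, with the regularizer determined by $\sigma$; the defining feature distinguishing \IGN from \SK is precisely that the non-linear $\sigma$ should make $\Phi$ increase strictly except at fixed points and, unlike the linear case, destroy the continuum of doubly-stochastic / soft fixed points, leaving only the sharp binary ones as candidate limits. Establishing that the \GN map is (a monotone rescaling of) an ascent map for such a $\Phi$ is the \emph{crux}, since $N$ is not visibly the gradient of a clean functional.

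Having reached a fixed point, I would identify it using the results quoted above: the only asymptotically stable fixed points are the maximal-independent-set indicators, so it remains to exclude convergence to the unstable ones. Here I would linearize $T$ at each non-binary fixed point and verify hyperbolic instability (at least one eigenvalue of modulus $>1$), so that its stable set is a lower-dimensional manifold; a stable-manifold / measure-zero argument then shows almost every initial condition escapes, and the all-distinct-weights hypothesis is what removes the symmetric, non-hyperbolic fixed points (equal-weight configurations produce degenerate eigenvalues and spurious soft limits) and guarantees that the particular input $w_0$ is generic enough to avoid the remaining stable sets.

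The hardest part will be the convergence half: without a verified strictly monotone $\Phi$ one cannot rule out limit cycles or more complicated recurrence, and the parallel normalization $N$ --- each coordinate divided by a neighbourhood sum involving itself --- does not arise as an obvious gradient or contraction, so the Lyapunov construction (and the proof that $\sigma$ turns it into a strict ascent) is where I expect the real difficulty to lie. A secondary obstacle is making the distinct-weights hypothesis do precisely the work required: one must show it is enough both to eliminate the degenerate non-binary fixed points and to keep the deterministic trajectory off every stable set, which likely needs a careful transversality argument rather than a generic measure-zero statement.
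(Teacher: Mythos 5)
The statement you are addressing is Conjecture~\ref{conj_convergence_to_MIS}, which the paper explicitly leaves open: there is no proof in the paper to compare against, only partial results and a reduction to a further unproven claim. Your proposal is a research plan rather than a proof, and both of its halves bottom out in steps that you yourself flag as unresolved, so it does not close the gap either. On the convergence half, your Lyapunov candidate $\Phi$ is never actually constructed --- ``a sum of per-node terms rewarding large weights minus a penalty coupling adjacent nodes, with the regularizer determined by $\sigma$'' describes a hoped-for object, and you correctly identify that proving $T=\sigma\circ N$ ascends it is the crux. This is precisely where the paper also stalls: its candidate potential is the $L_1$ norm $\norm{x^k}_1$, whose strict increase (Theorem~\ref{ref_thLoneincreases}) is itself conditional on the unproven quadratic-form inequality of Conjecture~\ref{conj_QL1}; and even granting that, the paper notes that convergence of $\norm{x^k}_1$ does not imply convergence of $x^k$, so an additional argument (which neither you nor the paper supplies) is needed to exclude recurrence on a level set of the potential.

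On the identification half there are two concrete obstacles your plan does not meet. First, the non-binary fixed points are not only the non-maximal independent sets handled by Theorem~\ref{ref_thnmisrepulsive}: the paper exhibits continua of ``complete cluster'' fixed points (e.g.\ the segment $(a,1-a,0)$ on $P_3$) and $22$ non-trivial fixed clusters on graphs of order at most $7$, and no instability or hyperbolicity result is available for these in general, so ``verify hyperbolic instability at each non-binary fixed point'' is itself an open problem, not a verification step. Second, a stable-manifold / measure-zero argument yields a conclusion for almost every initial condition, whereas the conjecture asserts convergence for \emph{any} weight vector with all distinct entries; distinctness of the weights removes ties but does not place the one deterministic orbit under consideration off every stable set of every unstable fixed point. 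You acknowledge that a transversality argument would be needed here, but that argument is exactly the missing content. In short, your proposal is a reasonable map of the difficulties and broadly coincides with the paper's own programme (a monotone potential, local attractivity of the maximal independent sets, repulsivity of the remaining fixed points), but it establishes nothing beyond what the paper already proves, and the statement remains a conjecture.
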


The proper definitions and conditions under which we claim that these conjectures hold will be 
clarified below. 

\subsection{Outline of the paper}
The organization of the paper is as follows.
The section \ref{sec_def} establishes definitions and notations used throughout the paper, and defines our 
main objects of interest.
The section \ref{sec_basic_properties} starts by providing some basic properties of graph normalization.
We then look in section \ref{sec_Kn} at the special case of iterative normalization of complete graphs.
The section \ref{sec_general} then walks the reader through our general results on 
arbitrary graphs. We illustrate our results through the special case of the path graph of order $3$, 
of which we can make a complete study and provide insightful graphics thanks to its small dimensionality.
The first subsection concerns the geometrical aspect of the graph normalization map and the 
next one the fixed points of graph normalization, their identification, their stability and their basins of attractions.
The section \ref{sec_kako} then provides experimental results on the approximation error 
of \MWIS by iterated graph normalization.
The section \ref{sec_SK} gets back to the special case of the assignment problem.
Before conclusion, the section \ref{sec_discuss} proposes a discussion.
Most proofs are reported to appendix in order to privilege fluidity.

\section{Definitions and notations}\label{sec_def}
Throughout the paper, we consider a simple undirected graph over a set of $n$ nodes 
$V=\{1\dots n\}$.
The adjacency matrix $A$ of such a graph is binary (only simple edges and thus $A\in \mathcal{M}_n(\{0,1\})$) 
and with a zero diagonal (no loops).
In the following, the set $V$ is most of the time implicit, and we  
assimilate the datum of a graph to its adjacency matrix $A$.

A weighted graph is a couple $G=(A,x)$, where $A$ is the adjacency matrix of $G$ and $x \in {\mathbb{R}^{+}}^{n}$ is a vector of nonnegative weights on the nodes of $G$.

For a node $i$, $\N{i} = \{ j \in $V$ | A_{ij} = 1\}$ denotes its neighborhood and 
$\DEG(i) = |\N{i}|$ its degree.

An independent set $S$ of $G$ (or stable set)  
is a set of nodes which are not adjacent in $G$, i.e. such that 
$\forall (i,j)\in S^2 : A_{ij} = 0$. We denote by $\IS(G)$ the set of the independent sets of a graph $G$.
An independent set $S$ is maximal if adding any new node to $S$ breaks the independency of $S$, i.e. $S\in \IS (G)$ and $\forall T\ne S | S\subset T : T \notin \IS (G)$, which means that 
any node not in $S$ is adjacent to at least one node of $S$.
$\MIS(G)$ denotes the set of the maximal independent sets of $G$.

Given a nonnegative vector $x = (x_1 \dots x_n) \in {\mathbb{R}^{+}}^{n}$, its support -
which we denote by $\SUPP(x)$ - is the 
set of the indices $i$ such that $x_i > 0$\footnote{Please note that this definition of the support of a vector is unrelated to the notion of a matrix with support which is involved in the convergence of the Sinkhorn-Knopp algorithm mentioned above.}.

If $S$ is a subset of $V$, its indicator vector is the binary vector 
$ind(S) = (i_1 \dots i_n) \in \{0,1\}^n$ such that $i_j = 1$ if $j \in S$ and $i_j = 0$ otherwise.
In the following, we will assimilate a subset $S$ and its indicator vector as 
$S = \SUPP(ind(S))$.

If $A$ is a matrix or a vector, we will write $A>0$ when all its components are strictly positive.
Throughout the paper, $\HM$ and $\HD$ denote resp. the Hadamard (element-wise) product and division of either matrices or vectors.

For a graph $G=(A,x)$ and a subset of its nodes $S\subset V$, we will denote by $G[S] = (A[S], x[S])$ the subgraph of $G$ induced by $S$. It adjacency matrix $A[S]$ is obtained from $A$ by only keeping the rows and columns which are in $S$ and similarly for the induced weight vector $x[S]$.

The Maximum Weight Independent Set (\MWIS) problem is the problem of finding an independent set 
of total maximum weight in $G=(A,w)$, i.e. a solution of:
\[
	\max_{x\in \IS(G)} w^t x
\]

\MWIS can be formulated as a binary linear program (BLP):
\begin{eqnarray*}
BLP &:&\max w^t x \\
&&x\in\{0,1\}^n \\
&&x_i + x_j \le 1 \quad \forall (i,j)\in S^2 | A_{ij} = 1
\end{eqnarray*}

or as a binary quadratic program (BQP):
\begin{eqnarray*}
BQP &:& \max w^t x \\
&&x\in\{0,1\}^n \\
&&x^t A x = 0
\end{eqnarray*}

\begin{definition} Let $G=(A,x)$ be a simple weighted graph such that $(A+I)x > 0$.
The \EM{normalization} of $G$ is the operation $\R$ on the weights $x$ 
of $G$ defined by:
\begin{equation}
\R(x) = x \oslash (A+I)x.
\end{equation}

In other words, the components of $\R(x)$ are given by:
\begin{equation*}
{\R}_i(x) = \frac{x_i}{x_i + \sum_j A_{ij}x_j}
\end{equation*}

\end{definition}

Remark that graph normalization doesn't change the structure of the graph but just its weights.
It is actually a vector function from ${\mathbb{R}^+}^n$ to ${\mathbb{R}^+}^n$ parametrized 
by the structure of the graph, i.e. the adjacency matrix $A$.
It would have been maybe more accurate to call it something like ``graph-based vector normalization'' 
but we have privileged concision.
Also remark that the underlying graph upon which a weight vector $x$ is normalized is absent from 
the notation $\R(x)$. We have privileged an uncluttered notation, leaving the underlying graph 
implicit. 

Normalization is only defined on graphs $G=(A,x)$ such that $(A+I)x > 0$, 
in order to be able to divide each weight $x_i$ by $x_i+\sum A_{ij} xj$. 
We say that the weighted graphs $G=(A,x)$ such that $(A+I)x>0$ are \EM{normalizable}.
We denote by $\NORMG$ the set of normalizable graphs.
For a given graph structure $A$, we also denote by 
$\NORMG_A = \{ x \in {\mathbb{R}^+}^n | (A+I)x>0\}$ the set of weight vectors such that $G= (A,x)$ 
is normalizable.

Normalizable graphs can be characterized by introducing the notion of \EM{density} of a subset of the 
nodes of a graph:

\begin{definition}\label{def_density}
Given a graph $G$ with vertex set $V$ and adjacency matrix $A$, and a subset of its 
nodes $S\subset V$, the density of $S$ in $G$ is defined by 
\begin{eqnarray}
\DEN{S}  &=& \min_{i \notin S} | \N{i} \cap S | \\
&=& \min_{i \notin S} \sum_{j \in S} A_{ij}
\end{eqnarray}
\end{definition}

This definition is illustrated in figure \ref{fig_example_density}.

\begin{figure}
    \centering
    \includegraphics[width=0.8\linewidth]{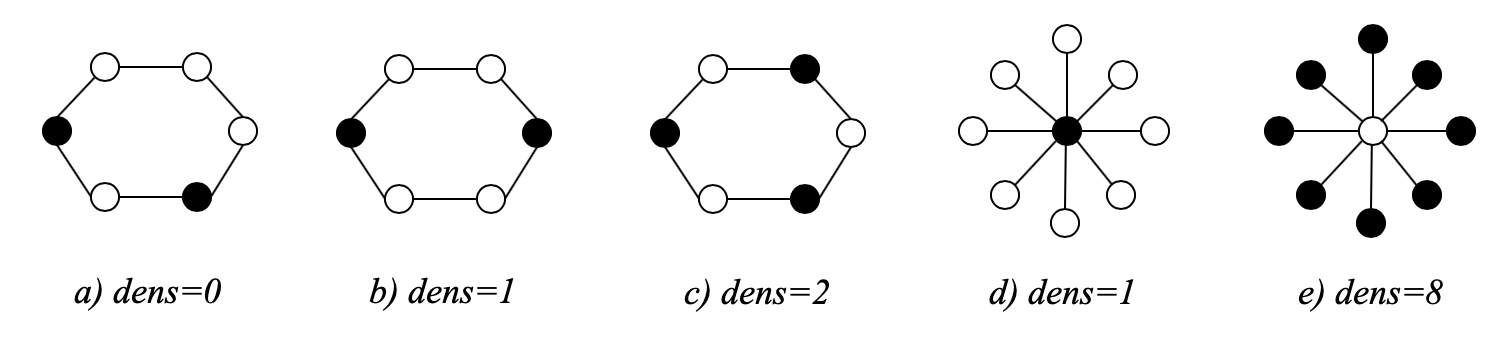}
    \caption{Examples of densities of a set in a graph. 
    In each graph, the black nodes represent the elements of the set $S$ considered. 
    The density is the minimum number of neighbors belonging to $S$ taken over the nodes which do not belong to $S$. 
    Remark that all the sets represented in this figure are independent and maximal except for a).}
    \label{fig_example_density}
\end{figure}

Obviously, $G=(A,x)$ is normalizable if and only if $\DEN{\SUPP(x)}>0$, i.e. 
if and only if every zero weight node has at least one non zero neighbor.
We will see below that this notion of density is central in graph normalization and also appears in 
the condition of stability of the binary fixed points of GN.

We say that a weighted graph $G=(A,x)$ is normalized when $\R(x) = x$, i.e. when $x$ is a fixed point of normalization on $G$.

We also consider a variant of GN in which after normalization, the weights are transformed by a non-linear activation function\footnote{Please note that normalization is already a non-linear operation, as it involves an element-wise division.}. As we will detail below, the role of this activation is to ensure convergence to binary fixed points 
and in particular to indicators of independent sets.
In order to play this role, the activation function must verify certain properties whose importance will become 
clear later on.
\begin{definition} \label{def_act_fun}
In our context, an \EM{activation function} is a real-valued function $h: [0,1]\rightarrow [0,1]$ which verifies:
\begin{enumerate}
\item $h(0)=0$ and $h(1)=1$.
\item Either:
\begin{enumerate}
\item $h$ is strictly convex on $[0,1]$, or
\item $h(1/2)=1/2$ and $h$ is strictly convex on $[0,1/2]$ and strictly concave on $[1/2,1]$. 
\end{enumerate}
\end{enumerate}

\end{definition}

To fix ideas, the following activation functions verify the above properties. Their graphs are illustrated in figure \ref{fig_activations}. Please note that other valid activation functions can be considered.
\begin{enumerate}
\item \EM{Power activation}: $p_{a,t}(x) = k_1(x+t)^a + k_2$  with $a > 1$ and $t\ge 0$, where $k_1$ and $k_2$ are $2$ coefficients which are set so that $p_{a,t}(0)=0$ and $p_{a,t}(1)=1$. $t$ is a shift from the origin, whose role will appear clear later on.
\item \EM{Sigmoid activation}.  Let $t_a(x) = \frac{1}{1+e^{-a x}}$ be the standard sigmoid function of parameter $a$. We use a translated and rescaled version $s_a$ of $t_a$ such that $s_a(0)=0$ and $s_a(1)=1$, defined by:
\begin{equation}
s_a(x) = \frac{1}{2}\frac{t_a(x-\frac{1}{2})-\frac{1}{2}}{t_a(\frac{1}{2})-\frac{1}{2}}+\frac{1}{2}
\end{equation}
$s_1$ is very close to the identity and as $a$ increases, $s_a$ converges to 
a Heavyside step function centered on $\frac{1}{2}$.

\end{enumerate}

\begin{figure}
    \centering
    \begin{tabular}{ccc}
    \includegraphics[width=0.4\linewidth]{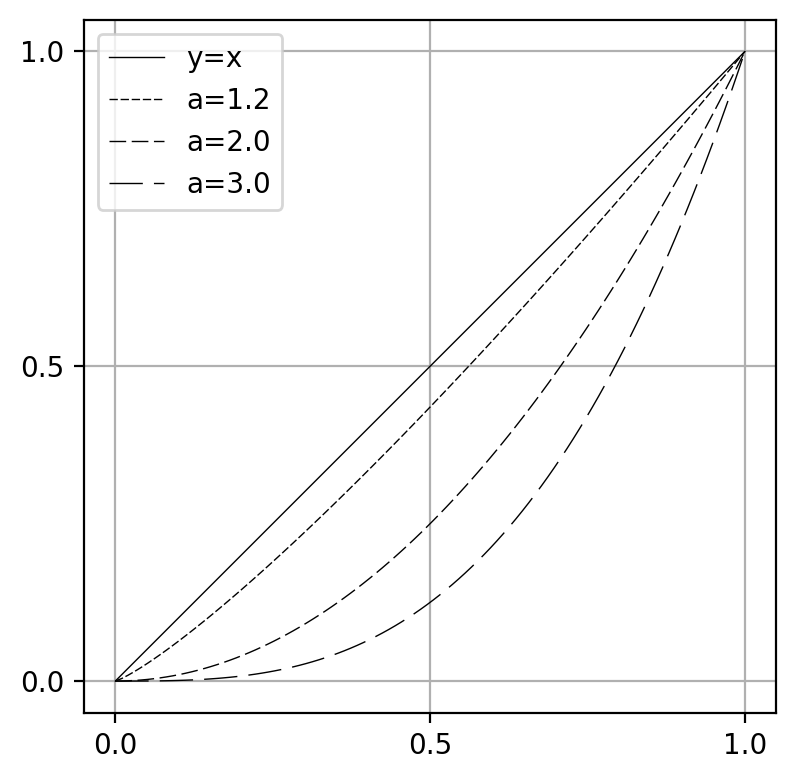} &
    \includegraphics[width=0.4\linewidth]{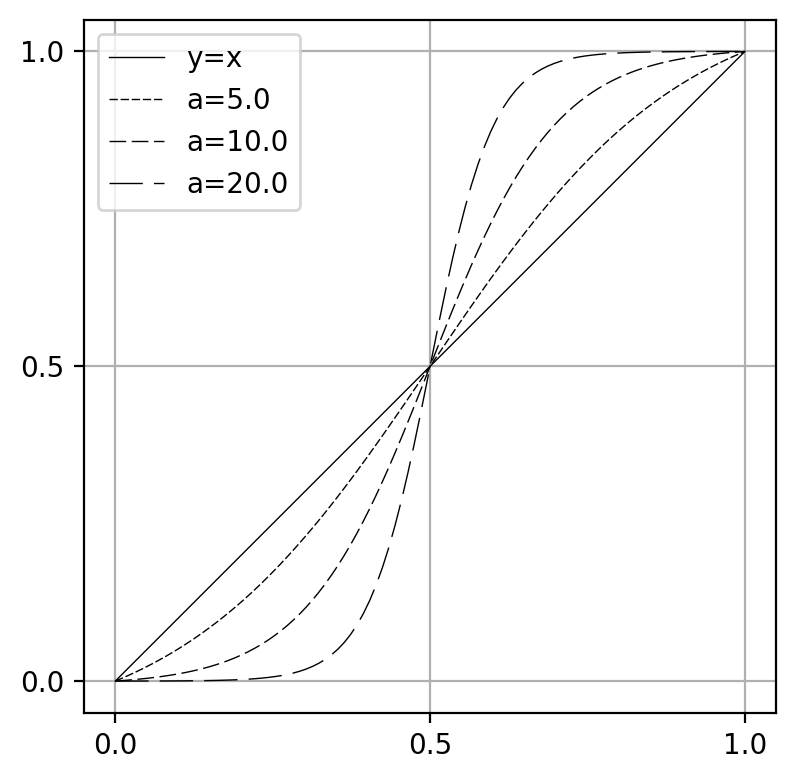}\\
    (a) Power activation &
    (b) Sigmoid activation 
    \end{tabular}
    \caption{Possible activation functions matching the conditions of definition \ref{def_act_fun}.
	}
    \label{fig_activations}
\end{figure}

\begin{definition} Given an activation function $h$ defined as above, 
the \EM{activated normalization} of $G=(A,x)$ is defined by:
\begin{equation}
\Rh(x) = h\left( \R(x) \right)
\end{equation}
where $h$ is applied component-wise, i.e. :
\begin{equation*}
{\Rh}_i(x) = h\left( \R_i(x) \right)
\end{equation*}
\end{definition}

In this paper, we are interested in the iteration of graph normalization (IGN), 
i.e. in the discrete dynamical system:
\begin{eqnarray}
x^0 &=& x \\
x^{k+1} &=& \R_h(x^k)  \quad=\quad \R_h^{k+1} (x)
\end{eqnarray}

where $\R_h^{k+1} = \R_h \circ  \R_h \dots  \circ \R_h$ represents $k+1$ compositions of $\R_h$.

\section{Basic properties of graph normalization}\label{sec_basic_properties}
The following properties have simple proofs, which we leave to the reader:
\begin{sloppypar*}
\begin{enumerate}
\item $\Rh$ maps the weights into $[0,1]$, i.e. $\forall G=(A,x) \in \NORMG : \R_h(x) \in [0,1]^n$.

\item Positiveness of the weights is conserved by normalization, i.e. ${\SUPP(\Rh(x)) = \SUPP(x)}$ (or componentwise: ${\R_h}_i(x) = 0 \Leftrightarrow x_i = 0$). \label{prop_positivity}

\item $\R_h$ is invariant by rescaling the weights of $G$, 
i.e. $\forall G=(A,x) \in \NORMG, \forall \alpha>0 : \R_h(\alpha x) = \R_h(x)$. 

\item $\R_h(x)$ only depends on the ratios of the weights between neighboring nodes the graph. Indeed, if $x_i=0$ then ${\R_h}_i(x)=0$ and if $x_i>0$, then 
\[
{\R_h}_i(x) = h \left(\frac{1}{1 + \sum A_{ij}\frac{x_j}{x_i}}\right).
\]

\item $\R$ commutes with the automorphisms of $G$ : for any permutation matrix $M$ such that $MA = AM$ : $\R(Mx) = M \R(x)$. This is not true for $\R_h$ if $h$ is nonlinear.

\item If $x$ is an independent set of $A$ then $G=(A,x)$ is normalizable if and only if $x$ is maximal. 
This follows from the fact that non maximal independent sets have a null density.

\item The normalization of $G=(A,x)$ coincides on $\SUPP(x)$ with the normalization of $G[\SUPP(x)]$. 
In words, the dynamics of $\Rh$ on the support of $x$ 
is independent of the connectivity of the support with the other nodes. \label{prop_norm_induced_supp}

\item A graph $G=(A,x)$ is normalized, i.e. is a fixed point of $\R$, if and only if for any node $i$ either $x_i = 0$ or $\sum A_{ij}x_j + x_i = 1$. Remark that normalization to $1$ on neighborhoods must
only hold at nodes of nonzero weight, which is a critical aspect of $\R$ as will be clarified below.

\end{enumerate}
\end{sloppypar*}

The properties \ref{prop_positivity} and \ref{prop_norm_induced_supp}
(zeros of $x$ are stable and do not influence the dynamics on the rest of the graph) 
imply that whenever $G=(A,x)$ is normalizable, 
the dynamics of $\Rh$ can be studied on the graph induced by the 
support of $x$.
Equivalently, we can assume that $x$ is strictly positive.  

Similarly, the dynamics of normalization on the different connected components of $G$ 
are independent from each other.
We thus can assume that $G$ is connected.

Following these last two properties, we will restrict our study of graph normalization 
to connected graphs with weights belonging to the domain $(0,1]^n$, 
which we denote by $\DOM$.
Please note however that iterative normalization, if it converges, 
can converge to weight vectors belonging to the closure of $\DOM$, hence which have null components 
(and it actually does in general, which is the main point of the paper).

\section{Complete graphs}\label{sec_Kn}
We start by looking at the particular case of normalization of complete graphs.

If $G=(A,x)$ is the complete graph $K_n$ with $n$ nodes, then $\forall i\in V$:
$\R_i(x) = x_i / \sum_{1}^{n} x_j$, which is stable in $1$ iteration. 
$\R(x)$ is normalized in the sense that its components sum up to $1$.
$\R$ thus corresponds to the standard normalization of the vector $x$ to a stochastic vector, 
which is obtained by projecting $x$ onto the unit $(n-1)$-simplex 
$\Delta_{n-1}$.

Now, if a suitable non-linear activation function is introduced, 
then the iterations converge to the indicator vector of $\arg\max x$:

\begin{restatable}{theorem}{thKnconvergence}\label{ref_thKnconvergence}
If $G=(A,x)$ is a complete graph and $x$ has a unique nonzero maximum component, then 
for any non-linear activation verifying the conditions of definition \ref{def_act_fun},
$\R^k_h(x)$ converges to the indicator vector of $\arg\max x$.
\end{restatable}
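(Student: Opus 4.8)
The plan is to reduce the iteration to a dynamical system on the probability simplex, establish a strict monotonicity of the coordinate ratios coming from convexity of $h$, and then upgrade that monotonicity to genuine convergence by a compactness argument.

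First I would reduce to the case $x>0$: by properties \ref{prop_positivity} and \ref{prop_norm_induced_supp} the zero coordinates are preserved and do not influence the dynamics on $\SUPP(x)$, the subgraph induced on $\SUPP(x)$ is again complete, and the (nonzero) maximum lies in the support. For $K_n$ we have $A+I=J$ (all-ones), so $\R_i(x)=x_i/\sum_j x_j$; thus $\R$ is exactly the projection onto $\Delta_{n-1}$ and it preserves the coordinate order. Writing $\hat y^{(k)}=\R(x^{(k)})$, a direct computation gives $\hat y^{(k+1)}=F(\hat y^{(k)})$ with $F_i(y)=h(y_i)/\sum_l h(y_l)$, and $x^{(k)}=h(\hat y^{(k-1)})$ for $k\ge 1$. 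Since $h$ is continuous with $h(0)=0$, $h(1)=1$, it then suffices to prove that $\hat y^{(k)}\to e_m$, the indicator of the singleton $\{m\}$ where $m=\arg\max x$; indeed this gives $x^{(k)}=h(\hat y^{(k-1)})\to h(e_m)=e_m$.

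Next I would record that $h$ is strictly increasing (the endpoint conditions $h(0)=0$, $h(1)=1$ together with $h\ge 0$, $h\le 1$ and the convexity/concavity hypotheses force this on each relevant subinterval), so $F$ preserves the strict order of the coordinates, $m$ stays the unique argmax of $\hat y^{(k)}$, and $\hat y^{(k)}_m\ge 1/n$. I then track the ratios $r_j^{(k)}=\hat y^{(k)}_j/\hat y^{(k)}_m$ for $j\ne m$, whose update is simply $r_j^{(k+1)}=h(\hat y^{(k)}_j)/h(\hat y^{(k)}_m)$. The crucial inequality is that for $0<a<b\le 1$ (with $a=\hat y^{(k)}_j$, $b=\hat y^{(k)}_m$) one has $h(a)/h(b)<a/b$. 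In case (a) this is exactly the statement that the secant slope $h(x)/x$ from the origin is strictly increasing, a standard consequence of strict convexity and $h(0)=0$. In case (b) I would exploit that on $\Delta_{n-1}$ at most one coordinate exceeds $1/2$: if $b\le 1/2$ the same secant argument applies on $[0,1/2]$; if $b>1/2$ then necessarily $a<1/2$, so $h(a)<a$ (strictly below the diagonal on the convex part) and $h(b)\ge b$ (above the diagonal on the concave part), giving $h(a)/h(b)<a/h(b)\le a/b$. Hence each $r_j^{(k)}$ is strictly decreasing and bounded below by $0$, so it converges to some $\ell_j\ge 0$.

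I expect the only real obstacle to be upgrading this monotone decrease to $\ell_j=0$: the update $r_j^{(k+1)}=h(\hat y^{(k)}_j)/h(\hat y^{(k)}_m)$ depends on the actual coordinate values, not just on $r_j^{(k)}$, so decrease by itself does not force the limit to be $0$. I would close this with compactness. Extract a subsequence along which $\hat y^{(k)}\to y^\ast\in\Delta_{n-1}$; since $\hat y^{(k)}_m\ge 1/n$ we have $y^\ast_m>0$ and $r_j^{(k)}\to y^\ast_j/y^\ast_m=\ell_j$. Passing to the limit in the update and using continuity of $h$ gives $\ell_j=h(y^\ast_j)/h(y^\ast_m)$, while also $\ell_j=y^\ast_j/y^\ast_m$. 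If $y^\ast_j>0$ then $0<y^\ast_j<y^\ast_m$ and the strict inequality above forces $h(y^\ast_j)/h(y^\ast_m)<y^\ast_j/y^\ast_m$, a contradiction; hence $y^\ast_j=0$ and $\ell_j=0$ for every $j\ne m$. Consequently $\hat y^{(k)}_m=1/\bigl(1+\sum_{j\ne m}r_j^{(k)}\bigr)\to 1$ and every other coordinate tends to $0$, i.e. $\hat y^{(k)}\to e_m$, which by the reduction above yields $\R_h^k(x)\to e_m$ and completes the proof.
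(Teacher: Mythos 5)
Your proof is correct and follows essentially the same route as the paper's: the same reduction to simplex dynamics on the complete graph, the same secant-slope inequality $h(a)/h(b)<a/b$ drawn from the convexity (or convex--concave) hypothesis on $h$, and an equivalent monotone quantity, since your decreasing ratios $r_j^{(k)}$ correspond exactly to the paper's strictly increasing maximal coordinate via $\hat y^{(k)}_m = 1/\bigl(1+\sum_{j\ne m} r_j^{(k)}\bigr)$. The only substantive difference is your explicit compactness/subsequence argument identifying the limit, which tightens the step the paper treats informally by asserting that equality in its monotonicity inequality can only occur at the binary fixed point.
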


The figure \ref{fig_Kn_examples} provides some examples of the dynamics of 
graph normalization on complete graphs.

Note that the theorem \ref{ref_thKnconvergence} can be easily extended to the case when 
$x$ has multiple maximal components, i.e. when $m = |\arg\max x| > 1$. 
In this case, all the maxima follow the same dynamics, are strictly increasing after the 
first normalization, and converge to $h(1/m)$, while the other weights converge to $0$.
Hence in this case the limit weight vector is not binary, 
but its support is still $\arg\max x$.

\begin{figure}[!ht]
    \centering
    \begin{tabular}{ccc}    
    \includegraphics[width=0.25\linewidth]{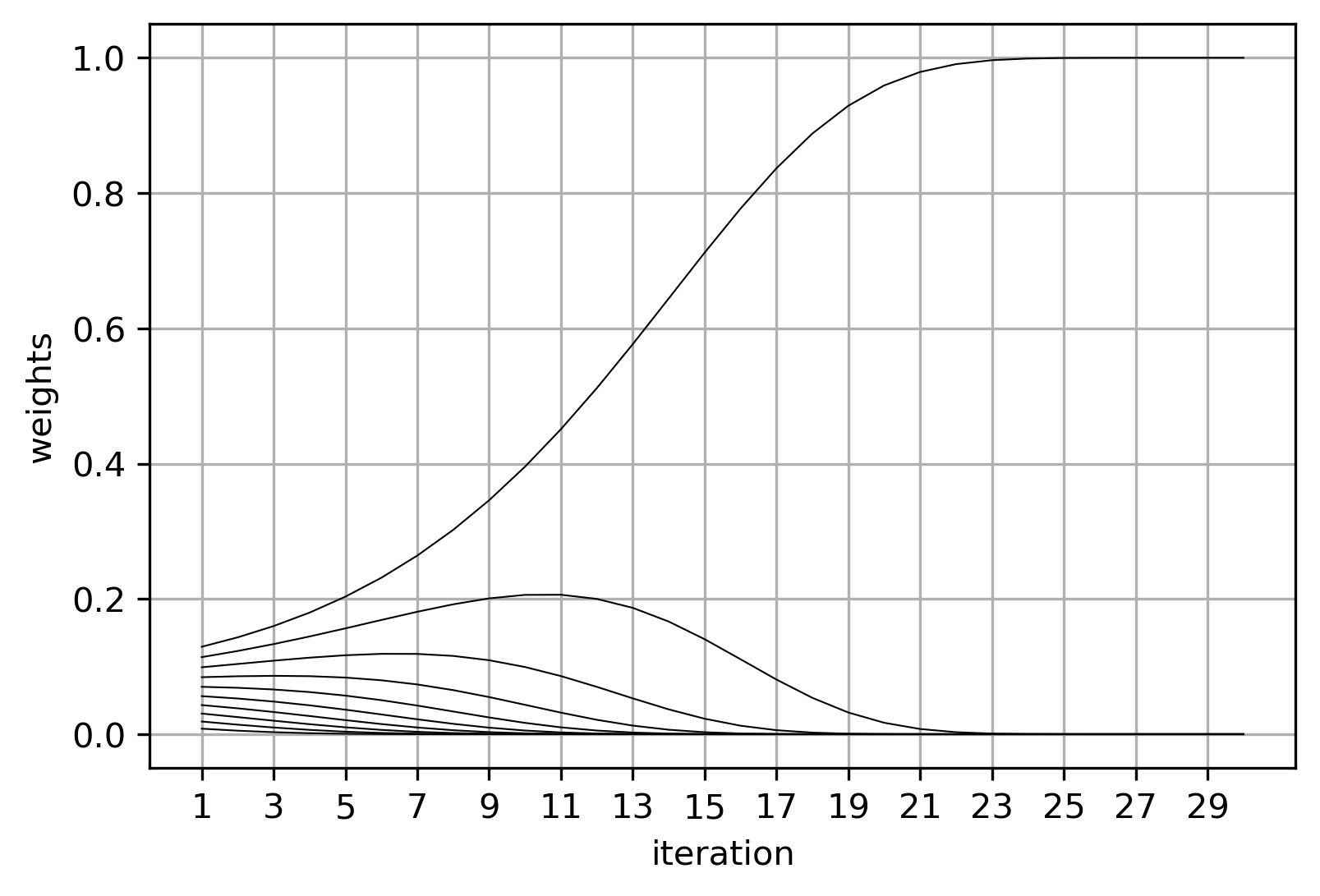} &
    \includegraphics[width=0.25\linewidth]{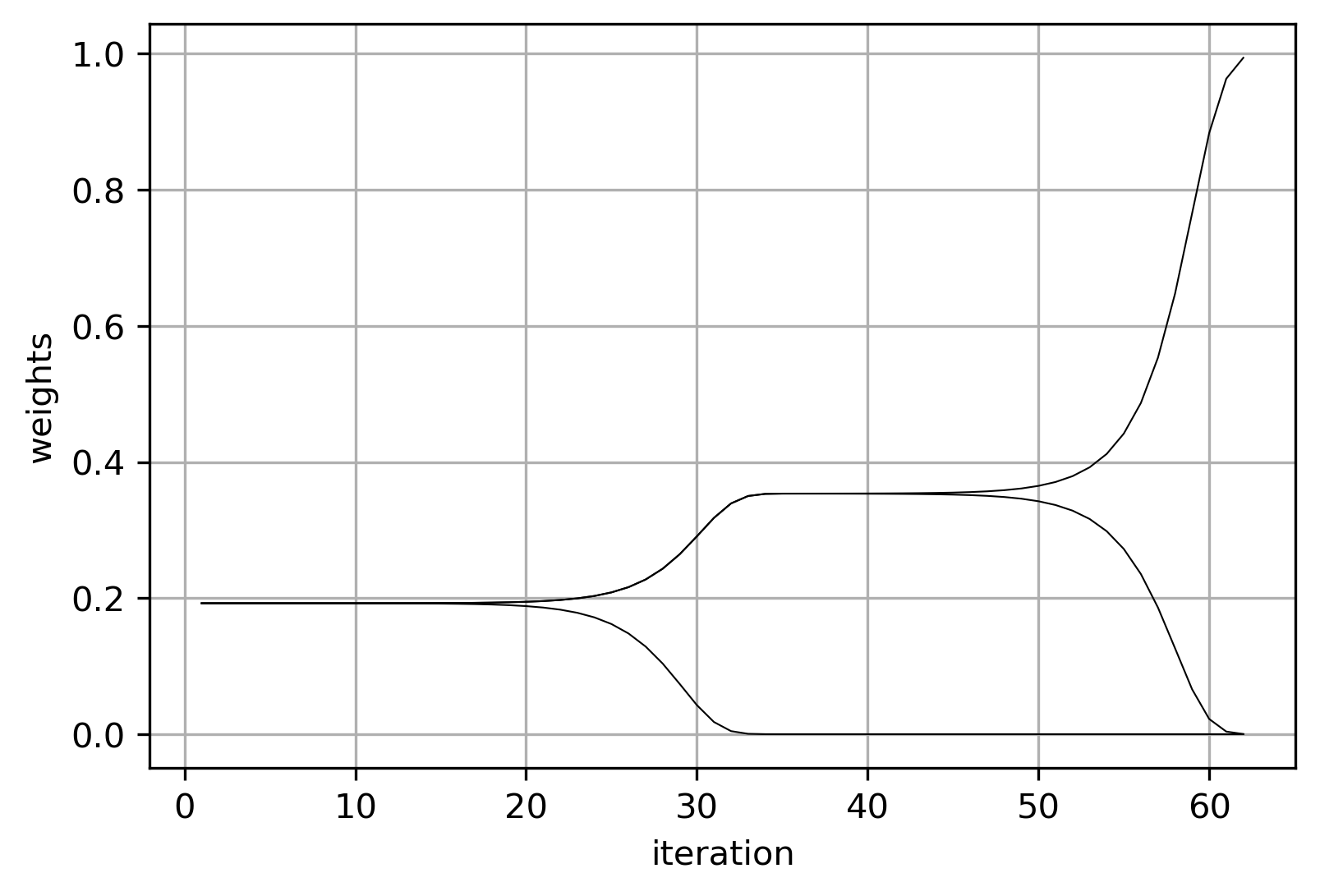} &
    \includegraphics[width=0.25\linewidth]{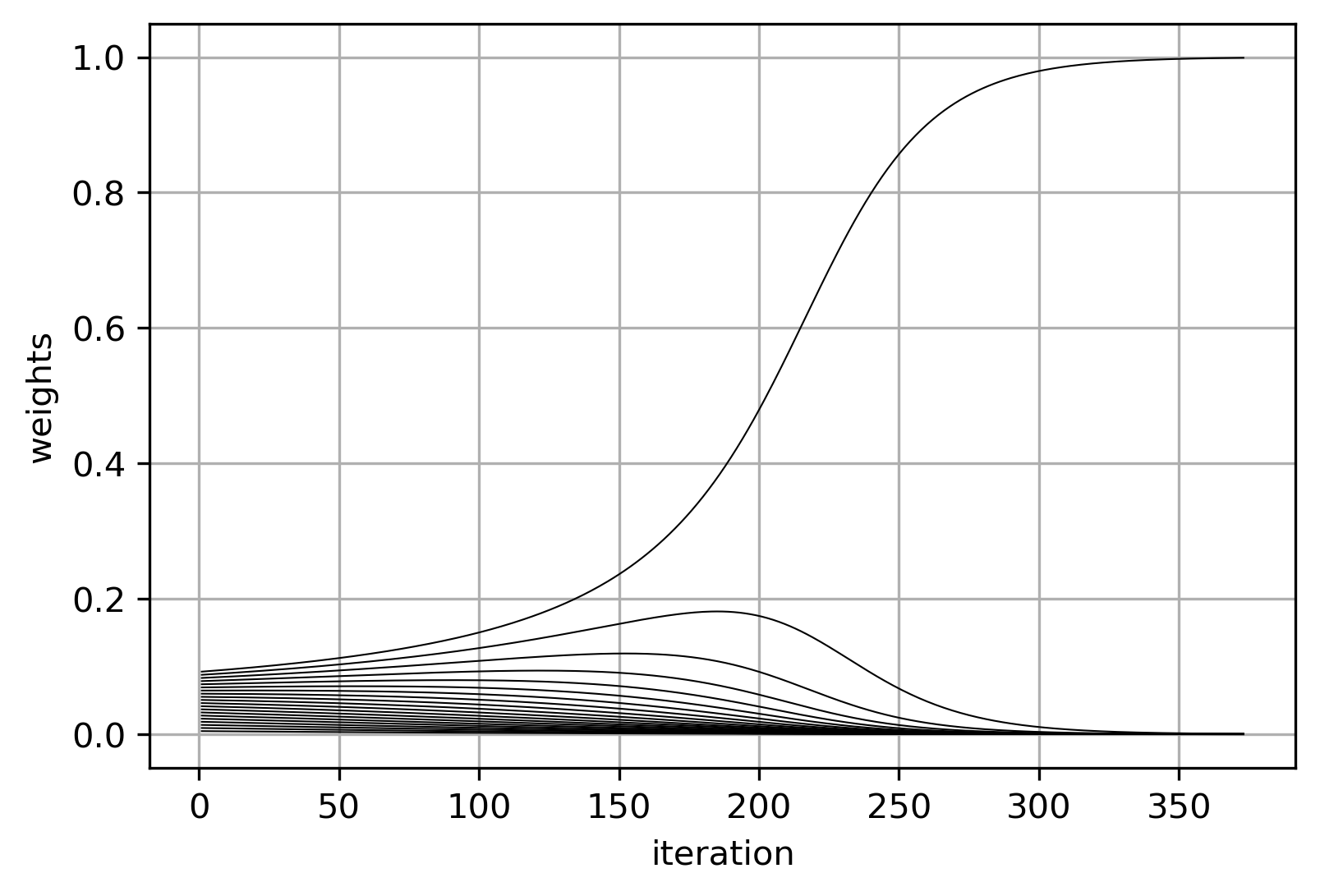} \\
    (a) \parbox{0.25\linewidth}{\small$K_{10}$. \\ $x=(10, 9, \dots, 1)$. \\ $h=p_{1.2, 0}$} &
    (b) \parbox{0.3\linewidth}{\small$K_{3}$. \\ $x=(1, 1-10^{-10}, 1-10^{-5})$. \\$h=p_{1.5, 0}$} &
    (c) \parbox{0.25\linewidth}{\small$K_{20}$. \\ $x=(20, 19, \dots, 1)$. \\ $h=s_{1}$} \\
    \includegraphics[width=0.25\linewidth]{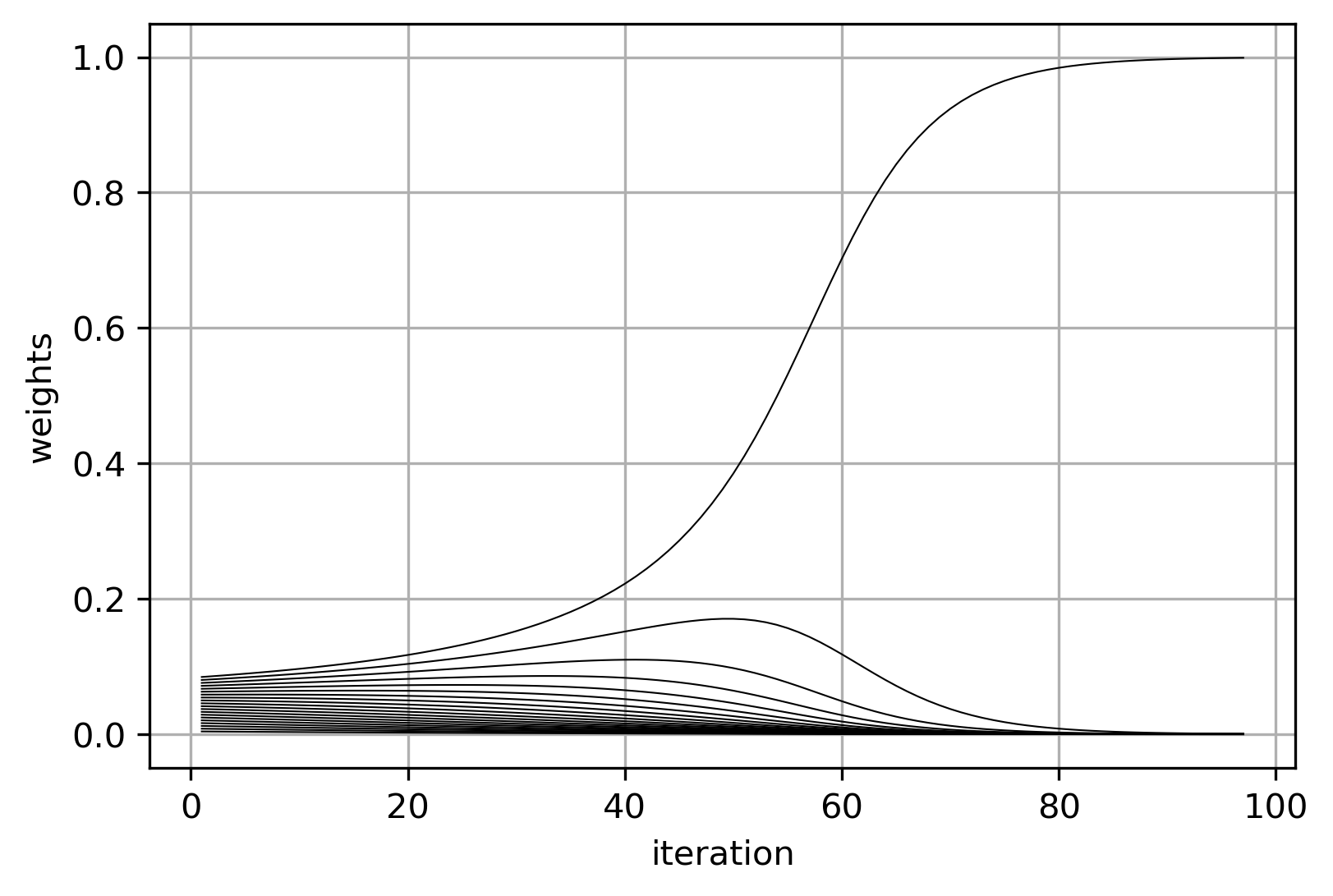} &
    \includegraphics[width=0.25\linewidth]{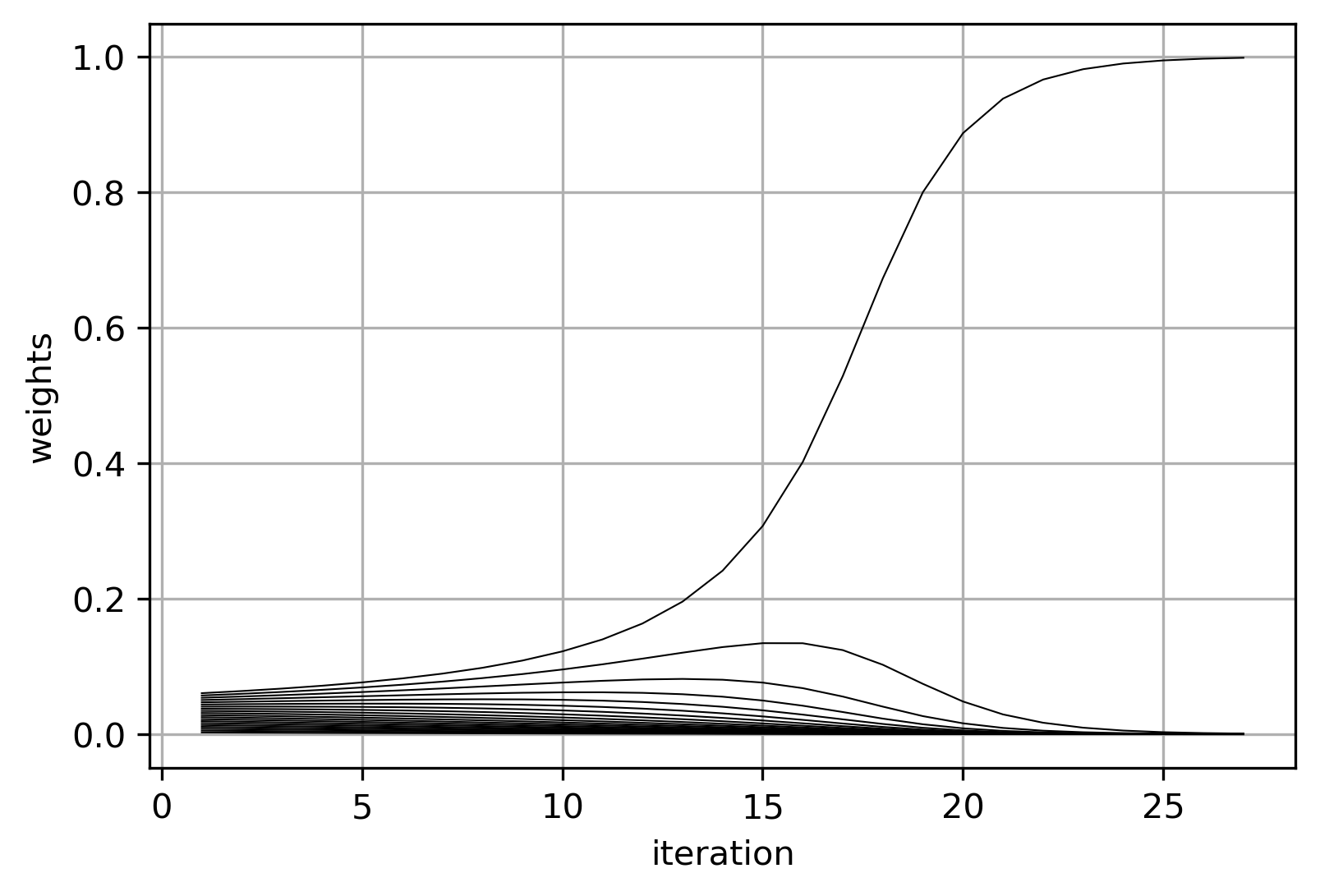} &
    \includegraphics[width=0.25\linewidth]{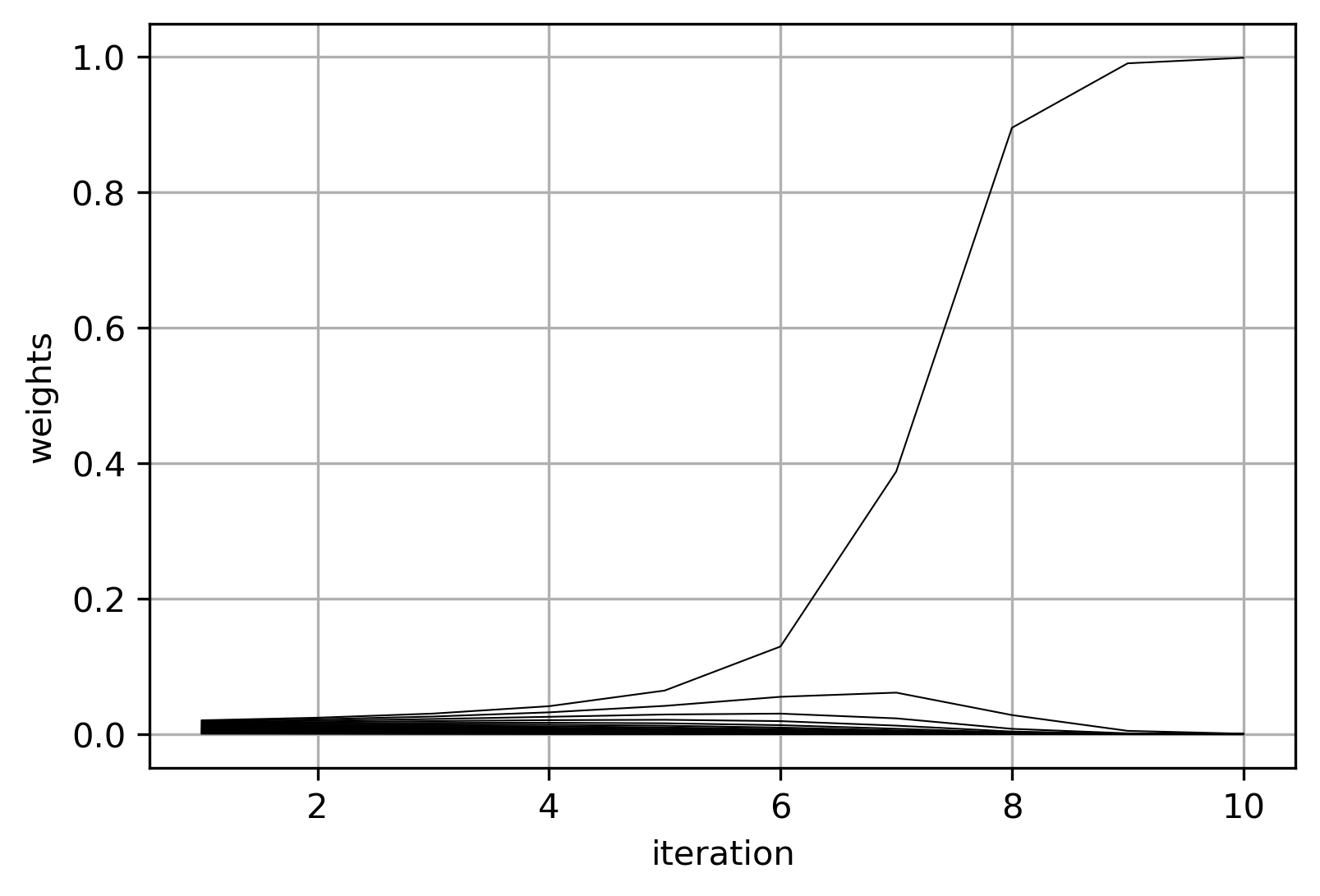} \\
    (d) \parbox{0.25\linewidth}{\small$K_{20}$. \\ $x=(20, 19, \dots, 1)$. \\ $h=s_{2}$} &
    (e) \parbox{0.28\linewidth}{\small$K_{20}$. \\ $x=(20, 19, \dots, 1)$. \\ $h=s_{4}$} &
    (f) \parbox{0.25\linewidth}{\small$K_{20}$. \\ $x=(20, 19, \dots, 1)$. \\ $h=s_{8}$}

    \end{tabular}
    \caption{Examples of dynamics of activated normalization 
    of complete graphs. Each curve represents the evolution of the weight of a node through normalization iterations.
    Remark that the maximum weight is strictly increasing (which is shown in the proof of the theorem \ref{ref_thKnconvergence}) 
    but that the other weights are not necessarily decreasing at the first iterations (except for the minimum weight(s)).
    (a) Illustrates that even a very slight non-linearity (a power of $1.2$) is sufficient to binarize the input vector in tens of iterations. (b) Shows the dynamics for very small relative differences between the input weights. 
    (c)-(f) Show that increasing the amount of non-linearity of a sigmoid activation speeds up convergence (the iterations were stopped when $x_1 > 0.999$) but that the global profile of the dynamics remain very similar. 
    }
    \label{fig_Kn_examples}
\end{figure}

The behavior of iterative normalization on complete graphs is reminiscent of the Softmax operator:
\[
{Softmax_{\tau}}_i (x) = \frac{e^{x_i / \tau}}{\sum e^{x_j / \tau}}
\]

which converges to the one-hot vector 
of $\arg\max x$ as $\tau$ goes to $0$, again if there are not ties \cite{gold1996softmax}.
The fundamental difference is that Softmax provides $\arg\max x$ as the limit of a continuous 
parameter $\tau$, which is not accessible in practice, whereas $\IGN$ provides it 
as the limit of the discrete iterations of a stable dynamical system.
We will find a similar parallel between $\IGN$ applied to the assignment problem 
and the Softassign algorithm in section \ref{sec_SK}.

\section{General graphs}\label{sec_general}
\subsection{The path graph of order $3$}
A path graph $P_n$ with $n$ nodes 
has an adjacency matrix of the form:
\[
A = 
\begin{bmatrix} 
0 & 1 & 0 & \dots & 0 & 0 \\
1 & 0 & 1 & 0       & \dots & 0 \\
0 & 1 & 0 & 1       & \dots & 0 \\
\vdots &&&&& \vdots \\
0 & 0 & \dots & 0 &  1 & 0 \\
\end{bmatrix}
\]
which represents a graph made up of a line of nodes:
\begin{center}
\includegraphics[width=1.5in]{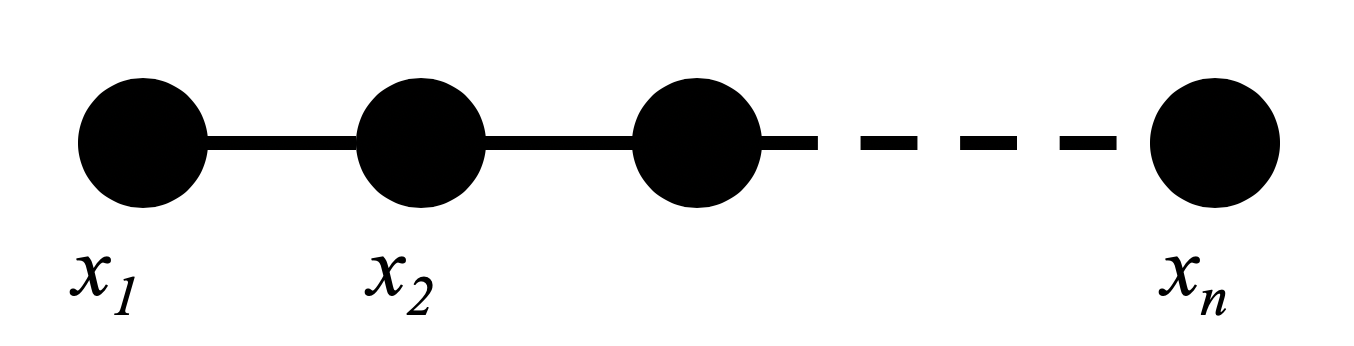}
\end{center}

$P_3$ is the smallest connected graph which is not complete hence the smallest interesting graph for normalization.
The study of iterated normalization on $P_3$ already illustrates 
most of the general properties of graph normalization.
Furthermore - as we will see in a moment - 
the dynamics of the system actually lie on a surface, 
and so can be easily visualized, which helps getting insight into the behavior of graph normalization.
In this section, we will use the special case of $P_3$ to 
walk the reader through our general results on graph normalization on arbitrary graphs.

\subsection{Geometry of the graph normalization map}
Let's first look at the case of $P_3$.
As we are in $3$ dimensions, we use the notation $(x,y,z)$ for the $3$ weights on $P_3$.
Normalization of $P_3$ is the function $(u,v,w) = \R((x,y,z)) $ defined by:
\[
\begin{array}{lcl} 
u &=& x / (x+y)\\ 
v &=& y / (x+y+z)\\
w &=& z / (y+z)
\end{array}
\]
It is defined everywhere on ${\mathbb{R}^+}^3$ except on the $x$ axis (the line such that $y=z=0$), the $z$ axis ($x=y=0$) and at the origin $(0,0,0)$, where the graph is not normalizable.

Our colleague Paul Munger 
remarked that the image of the unit cube $(0,1]^3$ by $\R$ 
is an algebraic variety $V$ of $(0,1]^3$.
Indeed, rewriting the normalization equations gives:
\[
\begin{array}{lcl} 
u (x+y) - x &=& 0\\
v (x+y+z) - y &=& 0\\
w (y+z) - z &=& 0
\end{array}
\]
which is the intersection of $3$ quadratic polynomials in the variables $u, v, w, x, y, z$ hence an algebraic variety in 
$\mathbb{R}^6$. 
The image $V$ of the unit cube by normalization is then the intersection of this variety with the subset of 
$\mathbb{R}^6$ spanned by $\{u, v, w\}$, which is still an algebraic variety.
 
Using Groebner basis, Paul computed that $V$ is actually given by the single polynomial equation:
\[
uvw + uw - u - v - w + 1 = 0
\] 

This equation represents a \emph{surface} in $(0,1]^3$ which is shown in figure \ref{fig_taco}.
Paul called it the ``Taco''. 

\begin{figure}[!ht]
    \centering
    \includegraphics[width=0.5\linewidth]{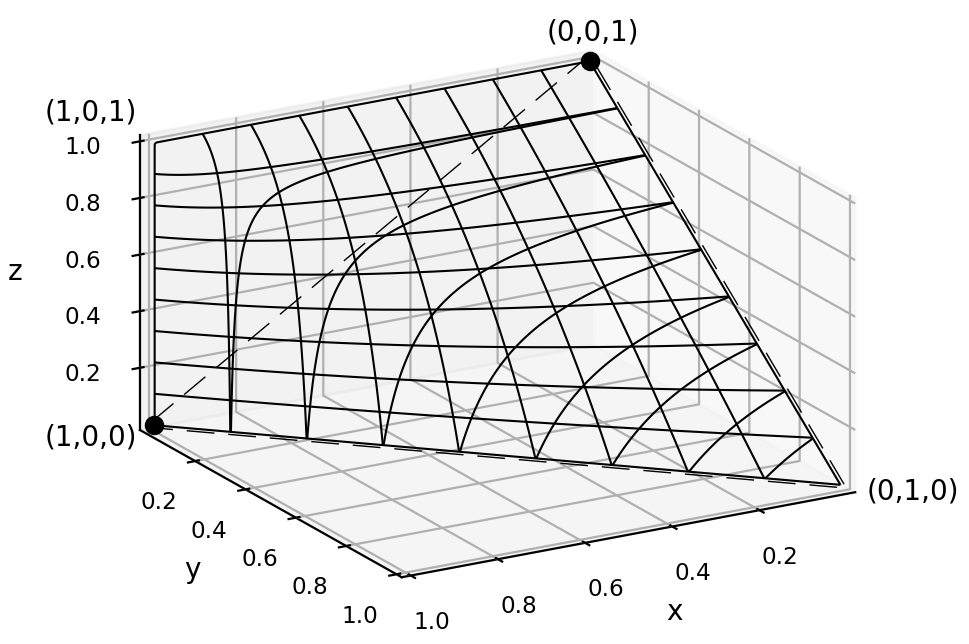} 
    \caption{The ``Taco''. This surface is the image of the unit cube by normalization of the path graph $P_3$.
    }
    \label{fig_taco}
\end{figure}

At the first iteration, any normalizable weight vector is projected on the Taco and then 
evolves on this surface. Remark that the Taco is sitting on the triangle $\Delta_2 = ((1,0,0),(0,1,0),(0,0,1))$ 
which is dashed in figure \ref{fig_taco}. 
$\Delta_2$ is the unit 2-simplex - or probability 2-simplex - i.e. the triangle such that $u+v+w=1$, 
hence the surface onto which the stochastic 3-vectors live.

Points on $\Delta_2$ are in a $1$ to $1$ mapping with the lines of ${\mathbb{R}^+}^3$, 
i.e. the points of the $2$-projective space $\mathbb{P}_2(\mathbb{R}^+)$. 
Now, the Taco also intersect any line of ${\mathbb{R}^+}^3$ only once, hence 
can be reparametrized as a function $d=f(a,b)$, where $(a,b)$ parametrizes the points on $\Delta_2$, or 
equivalently the lines of ${\mathbb{R}^+}^3$, and $d$ is the $L_1$ norm of the point on the surface, i.e. $u+v+w$.
Recalling that $\R$ is invariant by rescaling its input, i.e. that $\R(kx) = \R(x)$ for any positive $k$, 
it means that $\R$ is actually a mapping from $\mathbb{P}_2(\mathbb{R}^+)$ (minus the non normalizable points) to $\mathbb{P}_2(\mathbb{R}^+)$, i.e. a mapping from lines to lines.
For $P_3$, the non normalizable projective points correspond to the points $(1,0,0)$ and $(0,0,1)$ on $\Delta_2$.
They are represented with black dots in figure \ref{fig_taco}. 
Remark that these two non normalizable points, together with $(0,0,0)$, are the 
indicators of the three independent sets of $P_3$ which are not maximal.

For a general graph $A$, denote by 
$\IMN{A} = \{ \R(x); x\in \NORMG_A\}$ the image by normalization on $A$ of the set the 
normalizable vectors.
The following theorem shows that the above property of the Taco is general.

\begin{restatable}{theorem}{thprojective}\label{ref_thprojective}
For any graph $A$ of size $n$, 
$\IMN{A}$ is a hypersurface of 
$\mathbb{R}^n$ which intersects each line of $\mathbb{R}^n$ at most once.
\end{restatable}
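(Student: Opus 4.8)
```latex
The plan is to prove the two claims separately: first that $\IMN{A}$ lies on a hypersurface (i.e. is contained in the zero set of a single polynomial), and second that each ray meets this image exactly once, which is really a statement about the normalization map factoring through projective space.

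First I would establish the \emph{at most once} property, which I expect to be the more conceptual of the two and the one that drives the whole statement. The key observation is already present in the discussion of the Taco: by the rescaling invariance property (property 3 of section \ref{sec_basic_properties}), $\R(\alpha x) = \R(x)$ for all $\alpha > 0$, so $\R$ descends to a well-defined map on $\mathbb{P}_{n-1}(\mathbb{R}^+)$, the set of rays in ${\mathbb{R}^+}^n$. A point $y \in \IMN{A}$ determines the ray through it; I want to show that $y$ is the \emph{unique} point of $\IMN{A}$ on that ray. Suppose $y = \R(x)$ and $\lambda y = \R(x')$ both lie in the image with $\lambda > 0$. The cleanest route is to recover the ray of the preimage from the output: from the componentwise formula ${\R}_i(x) = x_i / ((A+I)x)_i$, on the support one can write $x_i \propto$ a quantity determined by $y_i$ and the off-diagonal sums. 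Concretely I would fix a normalization of $x$ (say scale so that $(A+I)x$ has a convenient form) and show that the map $x \mapsto \R(x)$, restricted to the simplex of representatives of rays, is injective, so that distinct points of the image correspond to distinct input rays; combined with invariance this forces $\lambda = 1$. The subtlety to handle carefully is the non-normalizable rays (such as $(1,0,0)$ and $(0,0,1)$ for $P_3$): these must be excluded exactly as in the definition of $\NORMG_A$, and I must check that the argument only uses rays in $\NORMG_A$.

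Next I would show $\IMN{A}$ is contained in a hypersurface. Following Paul Munger's computation for $P_3$, I would write the defining relations $y_i\,((A+I)x)_i - x_i = 0$ for $i = 1,\dots,n$ as $n$ polynomial equations in the $2n$ variables $(y,x)$, each of degree at most $2$, and then \emph{eliminate} the preimage variables $x_1,\dots,x_n$. The image $\IMN{A}$ is exactly the projection onto the $y$-coordinates of this variety, and by elimination theory (computing a Gr\"obner basis with $x$ ordered before $y$, or invoking the closure theorem for projections of affine varieties) the Zariski closure of the projection is again an algebraic variety, cut out by the elimination ideal $I \cap \mathbb{R}[y]$. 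To conclude it is a \emph{hyper}surface — codimension exactly one — I would combine the dimension count with the at-most-once property: the fibers of $\R$ over $\IMN{A}$ are (projectively) single points, so the image has the same projective dimension $n-1$ as the domain of rays, i.e. affine dimension $n-1$, which is precisely codimension one in $\mathbb{R}^n$. This is where the two halves of the proof interlock.

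The hard part will be making the dimension/codimension claim fully rigorous rather than heuristic: the elimination ideal could in principle be zero (giving all of $\mathbb{R}^n$) or could cut out something of the wrong dimension, so I must genuinely verify that $\IMN{A}$ has dimension exactly $n-1$. The most robust argument is to parametrize: points of $\IMN{A}$ are parametrized by rays in $\NORMG_A$, an open subset of $\mathbb{P}_{n-1}(\mathbb{R}^+)$ of dimension $n-1$, and the injectivity established in the first step (together with smoothness of $\R$ away from the non-normalizable locus) shows the parametrization is generically an immersion, pinning the dimension at $n-1$. I would present the elimination argument to exhibit the single defining polynomial and the injectivity argument to fix the dimension, and I expect the technical care to concentrate on the boundary/degenerate rays and on ruling out that the eliminated polynomial vanishes identically.
```
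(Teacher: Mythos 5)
There is a genuine gap in your treatment of the ``at most once'' property, which is the heart of the theorem. Your plan is to show that $\R$ is injective on a simplex of ray representatives and to conclude that ``distinct points of the image correspond to distinct input rays; combined with invariance this forces $\lambda=1$.'' That inference does not go through: if $y=\R(x)$ and $\lambda y=\R(x')$ with $\lambda\ne 1$, then $y$ and $\lambda y$ are \emph{distinct} points of the image, so injectivity merely tells you that $x$ and $x'$ lie on distinct rays --- there is no contradiction. What your strategy actually requires is that \emph{proportional outputs have proportional inputs}, i.e.\ injectivity of the induced map on projective space (rays to rays). The paper proves that only for trees (Theorem~\ref{ref_thtreeinjective}) and explicitly states that it is unknown for general graphs with cycles, so you would be resting the general theorem on an open question. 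The paper's proof sidesteps injectivity entirely with a short symmetry argument: writing $B=A+I$, if $y=x\HD Bx$ and $ky=z\HD Bz$ with $k>0$, then componentwise $k\,x_i(Bz)_i=z_i(Bx)_i$; summing over $i$ gives $k\,x^tBz=z^tBx$, and since $B$ is symmetric (and $x^tBz>0$ for positive $x,z$), one gets $k=1$. This is the key idea your proposal is missing.

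On the hypersurface half: the appendix proof in the paper actually only establishes the line-intersection property, so your elimination-theoretic sketch goes beyond what the paper writes down, and in outline it is reasonable. However, your dimension count again leans on ``the fibers of $\R$ are projectively single points,'' which is the same unproven injectivity. A cleaner route is to note that the at-most-once property itself already prevents $\IMN{A}$ from containing any open subset of $\mathbb{R}^n$ (an open set meets some ray through the origin twice), which gives codimension at least one without invoking injectivity; pinning the codimension at exactly one still requires a rank argument on the differential of $\R$ (whose kernel contains the radial direction by scale invariance), not an appeal to injective fibers. In short: keep the elimination argument, but replace the injectivity-based step with the symmetric-bilinear-form identity above.
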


Hence $\R$ is a mapping from $\mathbb{P}_{n-1}(\mathbb{R}^+) \backslash \NORMG_A$ to $\mathbb{P}_{n-1}(\mathbb{R}^+)$, i.e. a mapping from lines to lines.
The theorem \ref{ref_thprojective} also implies that $\IMN{A}$ can always be expressed as the zero-set 
of a unique polynomial equation in $n$ variables.

Furthermore, the next theorem shows that if the graph is a tree then 
the normalization mapping is injective for the interior points of the simplex, 
which means that $\R$ is reversible inside the simplex.

\begin{restatable}{theorem}{thtreeinjective}\label{ref_thtreeinjective}
For any tree $T$ of size $n$ and positive weights $x$, graph normalization of $(T, x)$ is 
an injection from $\mathbb{P}_{n-1}({\mathbb{R}^+}^*)\backslash \NORMG_T$ to $\IMN{T}$,
i. e. $\forall (x,y) \in{\NORMG_T}^2$ such that $x>0$ and $y>0$ : 
$\R(x) = \R(y) \Rightarrow \exists k>0 | y=kx$. 
\end{restatable}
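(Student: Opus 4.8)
The plan is to linearize the relation $\R(x)=\R(y)$ and then invoke the Perron--Frobenius theorem. First I would fix the common image $u = \R(x) = \R(y)$ and observe that, since $x$ and $y$ are strictly positive on a tree with $n \ge 2$ (so every node has at least one neighbor), each component satisfies $u_i \in (0,1)$. Setting $r_i = (1-u_i)/u_i \in (0,\infty)$, the defining identity ${\R}_i(x) = x_i/(x_i + \sum_j A_{ij} x_j)$ rearranges to $\sum_j A_{ij} x_j = r_i x_i$ for every $i$, which in matrix form reads $A x = D_r x$ with $D_r = \DIAG(r_1,\dots,r_n)$. The crucial point is that $r$ depends only on $u$, through the componentwise bijection $t \mapsto (1-t)/t$ from $(0,1)$ onto $(0,\infty)$; hence $x$ and $y$ are two strictly positive solutions of one and the same homogeneous linear system $(A - D_r)v = 0$.

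Next I would recast $A v = D_r v$ as an eigenvector equation for a nonnegative matrix. Since every $r_i > 0$, the diagonal matrix $D_r$ is invertible and $A v = D_r v \iff B v = v$, where $B = D_r^{-1} A$ has entries $B_{ij} = A_{ij}/r_i \ge 0$. Because $T$ is a tree it is connected, so $A$ is irreducible; multiplying each row by the positive scalar $1/r_i$ preserves the zero pattern, so $B$ is again irreducible and nonnegative. Both $x$ and $y$ are therefore strictly positive eigenvectors of $B$ for the eigenvalue $1$. By the Perron--Frobenius theorem for irreducible nonnegative matrices, the only eigenvalue admitting a nonnegative eigenvector is the spectral radius $\rho(B)$, that eigenvalue is simple, and its eigenvector is unique up to a positive scalar. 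The existence of the positive eigenvector $x$ forces $1 = \rho(B)$, and simplicity of the Perron root then yields $y = k x$ for some $k > 0$, which is exactly the claimed injectivity modulo scaling.

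The only genuinely delicate points are bookkeeping rather than deep: one must check that $u_i$ never reaches $0$ or $1$ (guaranteed by $x>0$ together with every tree node having a neighbor, the case $n=1$ being trivial) so that $r$ is well defined and positive, and one must cite Perron--Frobenius in the precise irreducible form that pins down uniqueness of the positive eigenvector. I would also flag that this argument in fact only uses connectivity of $T$, so it proves injectivity for every connected graph, trees being a special case; the tree hypothesis becomes essential only for the stronger, \emph{constructive} claim that $\R$ can be inverted explicitly (the ``reversibility'' alluded to in the text). For that refinement I would replace Perron--Frobenius by a leaf-peeling recursion: a leaf $\ell$ with unique neighbor $p$ gives $x_\ell = x_p\, u_\ell/(1-u_\ell)$, so its ratio to the parent is read off from $u$; eliminating $\ell$ reduces the tree with no fill-in in the linear system, a phenomenon special to trees, and iterating reconstructs every weight ratio, exhibiting the inverse directly. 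I expect the main obstacle to be purely expository — stating Perron--Frobenius in the right form and verifying irreducibility — while the algebraic reduction to $A x = D_r x$ is routine.
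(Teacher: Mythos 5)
Your proof is correct, but it takes a genuinely different route from the paper's. The paper works directly with the bilinear identity $x\HM Ay - y\HM Ax = 0$, introduces the antisymmetric quantities $d_{ij}=x_iy_j-x_jy_i$, and peels the tree layer by layer from its leaves: each leaf equation forces $d_{i\,p(i)}=0$, and an induction up the layers propagates this to every edge, after which connectivity glues the edge-wise proportionality constants into a single $k$. Your argument instead freezes the common image $u=\R(x)=\R(y)$, converts the relation into the eigenproblem $D_r^{-1}Av=v$ with $r_i=(1-u_i)/u_i>0$, and invokes Perron--Frobenius for the irreducible nonnegative matrix $D_r^{-1}A$ to conclude that the positive eigenvector is unique up to positive scaling. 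All the steps check out (the left-Perron-vector argument pins the eigenvalue of any nonnegative eigenvector to the spectral radius, and simplicity of the Perron root gives $y=kx$), and as you observe, nothing in it uses acyclicity -- only connectivity. This is a real gain: immediately after the theorem the paper states ``We don't know whether $\R$ is still injective for general graphs with cycles,'' and your argument answers that question affirmatively for all connected graphs with strictly positive weights. What the paper's approach buys in exchange is elementarity and constructiveness: the leaf-peeling recursion (which you also sketch as your ``refinement'') reconstructs the weight ratios explicitly from $\R(x)$ and exhibits the inverse map on trees, whereas Perron--Frobenius only certifies uniqueness. You should state Perron--Frobenius in the precise form you need (irreducible nonnegative matrix; every nonnegative nonzero eigenvector belongs to the simple Perron root), but that is bookkeeping, not a gap.
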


We don't know whether $\R$ is still injective for general graphs with cycles. 
Remark that $P_3$ is a tree hence that its dynamics are reversible in the interior of $\Delta_2$. 

\subsection{Fixed points}
\subsubsection{Identification}

Getting back to the Taco, 
one easily verifies that apart from the $3$ non-normalizable/non-maximal binary points mentioned above,
$\R$ has $2$ binary fixed points on $P_3$: $(1,0,1)$ and $(0,1,0)$ which are the 
indicators of the $2$ maximal independent sets of $P_3$.
The proposition \ref{refthbinaryfparemis} shows that for any graph the only binary fixed points of 
normalization are its maximal independent sets.

\begin{restatable}{proposition}{thbinaryfparemis}\label{refthbinaryfparemis}
For any graph $G$ and any activation function $h$, 
a normalizable binary vector $x$ is a fixed point of $\Rh$ if and only if it is a 
maximal independent set of $G$.
\end{restatable}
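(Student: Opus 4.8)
The plan is to verify the fixed-point equation $\Rh(x) = x$ coordinate by coordinate for a binary normalizable $x$, and to reduce it to a combinatorial statement about $S = \SUPP(x)$. Recall that normalizability is exactly $\DEN{S} > 0$, i.e. every node outside $S$ has a neighbor in $S$. For a coordinate $i \notin S$ we have $x_i = 0$, hence $\R_i(x) = 0$ and ${\Rh}_i(x) = h(0) = 0 = x_i$, so such coordinates are automatically fixed and impose nothing. All the content therefore sits on the coordinates $i \in S$, where $x_i = 1$ and $\R_i(x) = \frac{1}{1 + \sum_j A_{ij} x_j} = \frac{1}{1 + |\N{i} \cap S|}$. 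Thus $x$ is a fixed point of $\Rh$ if and only if $h\!\left(\frac{1}{1 + |\N{i}\cap S|}\right) = 1$ for every $i \in S$.

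The key step is an elementary property of every activation function of Definition \ref{def_act_fun}: on $[0,1]$ one has $h(t) = 1$ iff $t = 1$. I would prove this directly from the convexity hypotheses. Under (a), strict convexity with $h(0) = 0$, $h(1) = 1$ puts the graph strictly below its chord, giving $h(t) < t < 1$ for $t \in (0,1)$ and $h(0) = 0 \neq 1$. Under (b), the same chord argument on $[0,\tfrac12]$ gives $h(t) < t$ for $t \in (0,\tfrac12)$, so $h < 1$ on $[0,\tfrac12]$ (with $h(\tfrac12) = \tfrac12$); and on $[\tfrac12,1]$, if $h(t_0) = 1$ for some $t_0 \in (\tfrac12,1)$ then any $t \in (t_0,1)$ is a strict convex combination of $t_0$ and $1$, so strict concavity forces $h(t) > 1$, contradicting $h \le 1$. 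Hence $1$ has $t = 1$ as its only preimage. This lemma is the main obstacle: note that ruling out interior preimages of $1$ in the concave regime uses the codomain bound $h \le 1$ rather than any monotonicity of $h$.

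Granting the lemma, the equivalence follows immediately. For $i \in S$, the quantity $\frac{1}{1 + |\N{i}\cap S|}$ equals $1$ iff $|\N{i}\cap S| = 0$; so the fixed-point condition holds for all $i \in S$ iff no node of $S$ has a neighbor in $S$, i.e. iff $S$ is an independent set. Combining with the standing normalizability hypothesis $\DEN{S} > 0$, which says exactly that $S$ is dominating, and since an independent set is maximal iff it is dominating, $x$ is a fixed point iff $S$ is a maximal independent set. Spelling out both directions: a normalizable binary fixed point yields an independent (from the lemma) and dominating (from normalizability) set, hence $S \in \MIS(G)$; conversely if $S \in \MIS(G)$ then $S$ is independent, so $\R_i(x) = 1$ and ${\Rh}_i(x) = h(1) = 1 = x_i$ on $S$ while the coordinates off $S$ vanish as above, giving $\Rh(x) = x$.
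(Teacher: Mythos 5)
Your proof is correct and follows essentially the same route as the paper's: coordinate-wise verification, with the off-support coordinates trivially fixed and the on-support coordinates reducing to whether $h$ applied to $1/(1+|\N{i}\cap S|)$ equals $1$, combined with normalizability to rule out non-maximal independent sets. The only difference is that the paper dispatches the non-independent case by invoking that $h$ is strictly increasing (so $h(1/2) < h(1) = 1$), whereas you prove directly from the convexity axioms and the codomain bound that $1$ has no preimage under $h$ other than $1$ --- a welcome refinement, since strict monotonicity is not literally listed in Definition \ref{def_act_fun} and itself requires a small derivation.
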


The other fixed points on $P_3$ are of the form $(a,1-a,0)$ and $(0,a,1-a)$ with $a\in(0,1[$ and 
correspond to the points on the segments $((1,0,0),(0,1,0))$ and $((0,1,0),(0,0,1))$.

These fixed points belong to a general class of fixed points of graph normalization which we call
``complete clusters''.
\begin{definition} For a weighted graph $G=(A,x)$, the subgraphs induced by the 
connected components of the support of $x$ are called the \EM{clusters} of $G$.
\end{definition} 
For example, for $A=P_3$ and $x=(a,1-a,0)$, we have $\SUPP(x)=\{1,2\}$ 
and the nodes $1$ and $2$ are connected in $P_3$ 
so that the connected components of the support of $x$ are $\{\{1,2\}\}$ 
(if the nodes would have been disconnected this set would have been $\{\{1\},\{2\}\}$).
The unique cluster is thus the induced subgraph $G[\{1,2\}]$ 
which is $(A[\{1,2\}], x[\{1,2\}]) = (\begin{bmatrix}0 & 1 \\ 1 & 0\end{bmatrix}, (a, 1-a))$. 
Now this graph is the complete graph $K_2$ and it is normalized (as $a+(1-a)=1$) hence 
it is stable by normalization.

Obviously, in general, for any graph, a normalizable vector $x$ is a fixed point of normalization 
if and only if it is made up of fixed clusters.
Clearly also, any normalized complete cluster (a cluster $C=(K_d, x)$ such that $\sum x_i = 1$) is fixed.
Remark that the maximal independent sets of a graph are the 
fixed points made up of clusters which are singletons of weight $1$,
that is normalized complete graphs of dimension $1$.

Now, there are many other fixed clusters. 
In general, a cluster is fixed if and only if it is a connected graph $C=(A,x)$ such that $x>0$ 
verifies $x = x\HD (A+I)x$. 
This is equivalent to $x\HM (A+I)x = x$ 
and as $x>0$, we can divide each equation by $x_i$, hence $x$ is a solution of 
\begin{equation}\label{eqn_fixed_cluster}
(A+I)x = \ONE \quad\textrm{with}\quad x>0
\end{equation}

This equation simply expresses that the cluster is normalized in the sense that 
for each node, its weight added to those of its neighbors sum up to $1$.
It is critical to mention here that the condition of normalization to $1$ over neighborhoods 
only holds for nodes with \emph{strictly positive} weights, 
hence that this equation doesn't hold in general on every node of a fixed point,
because fixed points have in general many nodes with zero weight. 
All the nodes of a cluster are normalized to $1$ because we precisely required clusters 
to be only made of nodes with strictly positive weights.

One verifies that there is a systematic solution to this equation when the cluster is regular, i.e. such that 
the degrees of its nodes are all equal to a number $d$:

\begin{proposition} 
Any $d$-regular cluster $C=(A,x)$ such that $\forall i : x_i = \frac{1}{d+1}$ is fixed.
\end{proposition}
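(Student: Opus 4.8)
The plan is to reduce the claim to the algebraic characterization of fixed clusters already established just above, namely equation \eqref{eqn_fixed_cluster}: a cluster $C=(A,x)$ with $x>0$ is a fixed point of $\R$ if and only if $(A+I)x = \ONE$. Since a cluster is by definition a connected graph carrying strictly positive weights, and since the prescribed weights $x_i = \frac{1}{d+1}$ are all strictly positive, the hypothesis $x>0$ is automatically met and the characterization applies. Thus the entire proof amounts to checking the single linear identity $(A+I)x = \ONE$ for the constant weight vector.

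To carry out this verification I would compute $(A+I)x$ componentwise. For any node $i$,
\[
\left((A+I)x\right)_i = x_i + \sum_j A_{ij}\, x_j = \frac{1}{d+1} + \frac{1}{d+1}\sum_j A_{ij}.
\]
Here the key input is $d$-regularity: by definition every node has degree exactly $d$, so $\sum_j A_{ij} = \DEG(i) = d$ for every $i$. Substituting gives
\[
\left((A+I)x\right)_i = \frac{1}{d+1} + \frac{d}{d+1} = 1,
\]
which holds uniformly over all $i$, hence $(A+I)x = \ONE$. Equivalently, one may phrase this directly in terms of $\R$: since $x_i>0$, the normalized value is ${\R}_i(x) = x_i / (x_i + \sum_j A_{ij}x_j) = \frac{1}{d+1} \big/ 1 = x_i$, so $x$ is literally unchanged.

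There is essentially no genuine obstacle in this statement: once the fixed-cluster criterion \eqref{eqn_fixed_cluster} is in hand, the result is a one-line consequence of regularity, and the only point requiring a moment's care is confirming that the hypotheses of that criterion are satisfied (connectedness is part of the definition of a cluster, and strict positivity of the constant weights is immediate). The degree condition is precisely what makes the off-diagonal contribution $\sum_j A_{ij} x_j$ independent of $i$, which is what allows the uniform normalization to $1$ to hold simultaneously at every node.
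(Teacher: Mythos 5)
Your proof is correct and follows exactly the route the paper intends: it reduces the claim to the fixed-cluster equation $(A+I)x = \ONE$ and verifies it componentwise using $d$-regularity, which is the one-line check the paper leaves to the reader with ``one verifies''. Nothing is missing.
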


Note however that a $d$-regular cluster can also be fixed without having all its weights identical.
This is for example the case for the cycle of size $6$, 
which is $2$-regular, 
when it is weighted by positive vectors of the form $(a, a, b, a, a, b)$ verifying $2a+b=1$.

Nevertheless, we have that
\begin{restatable}{proposition}{propfixedregular}\label{ref_propfixedregular}
The only fixed clusters with all-identical weights are the regular clusters.
\end{restatable}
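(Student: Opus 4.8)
The plan is to read the claim straight off the defining equation of a fixed cluster, \eqref{eqn_fixed_cluster}. First I would fix a fixed cluster $C = (A, x)$ all of whose weights share a common value; since a cluster is by definition supported on a connected component of $\SUPP(x)$, all of its weights are strictly positive, so I may write $x_i = c$ with $c > 0$ for every node $i$.

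Next I would use that, being fixed, $x$ satisfies $(A+I)x = \ONE$, i.e. $x_i + \sum_j A_{ij} x_j = 1$ at every node $i$. Substituting $x_i = x_j = c$ turns the $i$-th equation into
\[
c + c\sum_j A_{ij} = c\bigl(1 + \DEG(i)\bigr) = 1,
\]
since $\sum_j A_{ij} = |\N{i}| = \DEG(i)$.

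Because $c > 0$, I can then solve for the degree: $\DEG(i) = \tfrac{1}{c} - 1$, whose right-hand side does not depend on $i$. Hence all nodes share the same degree $d = \tfrac{1}{c} - 1$, which is exactly the assertion that $C$ is regular. As a consistency check, this also forces $c = \tfrac{1}{d+1}$, the uniform weight of the regular fixed cluster exhibited in the previous proposition, so the two statements dovetail.

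I do not expect a genuine obstacle here: the whole argument is a single substitution into \eqref{eqn_fixed_cluster}. The only points deserving a word of care are that positivity of the cluster weights (guaranteed by the definition of a cluster) is what legitimizes dividing by $c$, and that the all-identical hypothesis is precisely what collapses the per-node neighborhood equations into one constraint on each degree, thereby forcing all degrees to coincide.
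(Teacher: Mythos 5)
Your proof is correct and rests on the same core step as the paper's: substituting the constant weight into the fixed-cluster equation \eqref{eqn_fixed_cluster}. Your direct version (solving each node's equation for $\DEG(i) = \tfrac{1}{c}-1$) is in fact a little cleaner than the paper's, which argues by contradiction by picking two \emph{adjacent} nodes of different degree and subtracting their normalization equations — a step that needs connectedness, which your argument does not use at all.
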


Working in the other direction, some graphs or some graphs containing certain sub-structures cannot 
be fixed clusters.
\begin{proposition}
A fixed cluster cannot contain a node of degree $1$, except if it is $K_2$. In particular, trees other than $K_1$ and $K_2$ cannot be fixed clusters.
\end{proposition}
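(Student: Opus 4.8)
The plan is to argue directly from the fixed-cluster characterization \eqref{eqn_fixed_cluster}, i.e.\ from the fact that a cluster $C=(A,x)$ is fixed exactly when $(A+I)x = \ONE$ with $x>0$. Componentwise this says that at every node $i$ the weight of $i$ plus the weights of its neighbors sum to $1$: $x_i + \sum_{k\in\N{i}} x_k = 1$. A node of degree $1$ pins down a very rigid local relation, and the strategy is to show that this relation propagates to the neighbor and collapses the entire (connected) cluster onto a single edge.

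First I would suppose $C$ contains a node $i$ with $\DEG(i)=1$, and let $j$ be its unique neighbor. The fixed equation at $i$ reads simply $x_i + x_j = 1$. I would then write the fixed equation at $j$, peeling off the contribution of $i$ from $j$'s neighborhood: $x_j + x_i + \sum_{k\in\N{j}\setminus\{i\}} x_k = 1$. Substituting $x_i + x_j = 1$ leaves $\sum_{k\in\N{j}\setminus\{i\}} x_k = 0$. Since all weights are strictly positive (a cluster has $x>0$ by definition), this forces $\N{j}=\{i\}$, i.e.\ $\DEG(j)=1$ as well. Thus $i$ and $j$ are two mutually adjacent degree-$1$ nodes, so the edge $\{i,j\}$ forms an entire connected component; as a cluster is connected, $C$ must be exactly this edge, i.e.\ $C=K_2$ (which is indeed fixed, the single constraint being $x_i+x_j=1$). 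This proves the first claim.

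For the ``in particular'' statement I would note that $K_1$ (one node of weight $1$, satisfying $x_1=1$) and $K_2$ are both fixed clusters, whereas any tree on at least $3$ nodes has a leaf — hence a node of degree $1$ — and is not $K_2$; by the first part it therefore cannot be a fixed cluster. There is no real obstacle here: the only points requiring care are that strict positivity of the weights is precisely what eliminates any additional neighbor of $j$, and that connectivity of the cluster is what upgrades ``two adjacent leaves'' to ``the whole cluster is $K_2$''.
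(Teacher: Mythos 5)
Your proof is correct and follows essentially the same route as the paper: both write the fixed-point equation $x_i+x_j=1$ at the degree-$1$ node and the equation $x_i+x_j+\sum_{k\in\N{j}\setminus\{i\}}x_k=1$ at its neighbor, and conclude from strict positivity that $j$ can have no further neighbors. The only cosmetic difference is that you phrase the conclusion constructively (forcing $C=K_2$ via connectivity) where the paper phrases it as a contradiction, and you additionally spell out the ``in particular'' claim about trees, which the paper leaves implicit.
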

\begin{proof}
Consider a cluster $C$ with a dangling node of weight $a$, its unique neighbor of weight $b$ and 
the nodes connected to this neighbor, whose sum of weights is $c$. 
As $C$ is a cluster $a>0$, $b>0$ and $c>0$.
The normalization condition at the dangling node is $a+b=1$ and 
the equation for its neighbor is $a+b+c=1$ hence $c=0$ which contradicts the fact that $C$ is a cluster.
\end{proof}

Other structures are impossible for a cluster, such as a dangling triangle attached by a bridge to the rest of the graph. 
We let the reader verify it.

This restricts the set of possible fixed clusters, however, 
finding general solutions to the equation \ref{eqn_fixed_cluster} is 
not easy. 
For a given graph structure $A$, knowing whether a solution $x$ exists is a linear feasibility problem however 
with a \emph{strict} positivity condition on $x$.
One way to solve such a problem numerically is to transform it into the linear program:
\begin{eqnarray}
\min \ONE^t y \label{eqn_fixed_cluster_LP}\\
(A+I)x + y &=& \ONE \nonumber\\
x &\ge& 0 \nonumber\\ 
 y &\ge& 0 \nonumber
\end{eqnarray}
Where $y$ is an additional variable of same size than $x$. 
One sees that this LP is always feasible by setting $x=0$ and $y=\ONE$.
If the equation \ref{eqn_fixed_cluster} has no positive solution then 
the LP \ref{eqn_fixed_cluster_LP} only has solutions in which $y$ has at least a strictly positive component. 
On the other hand, if \ref{eqn_fixed_cluster} has a positive solution $x^*$, 
then the LP \ref{eqn_fixed_cluster_LP} will be able to attain an objective of $0$ by setting $y=0$ and $x=x^*$.
As a conclusion, if the LP has a solution such that $y=0$ and $x>0$ then it is a solution of equation 
 \ref{eqn_fixed_cluster} and if the LP is unable to find a solution achieving a zero objective then 
 there is no solution to equation \ref{eqn_fixed_cluster}.

We have run this linear program on the graphs of Read and Wilson atlas of graphs \cite{read2005atlas}.
The atlas comprises $1253$ graphs of maximal size $7$, among which $569$ are connected, not regular and do not have dangling nodes, i.e. are candidate non trivial fixed clusters. 
We have found $22$ non trivial fixed clusters. They are represented in figure \ref{fig_RW_fixed_clusters}.
We found $1$ fixed cluster of size $5$, $2$ fixed clusters of size $6$ and $19$ fixed clusters of size $7$.

\begin{figure}[!ht]
    \centering
    \begin{tabular}{cccc}    
    \includegraphics[width=0.22\linewidth]{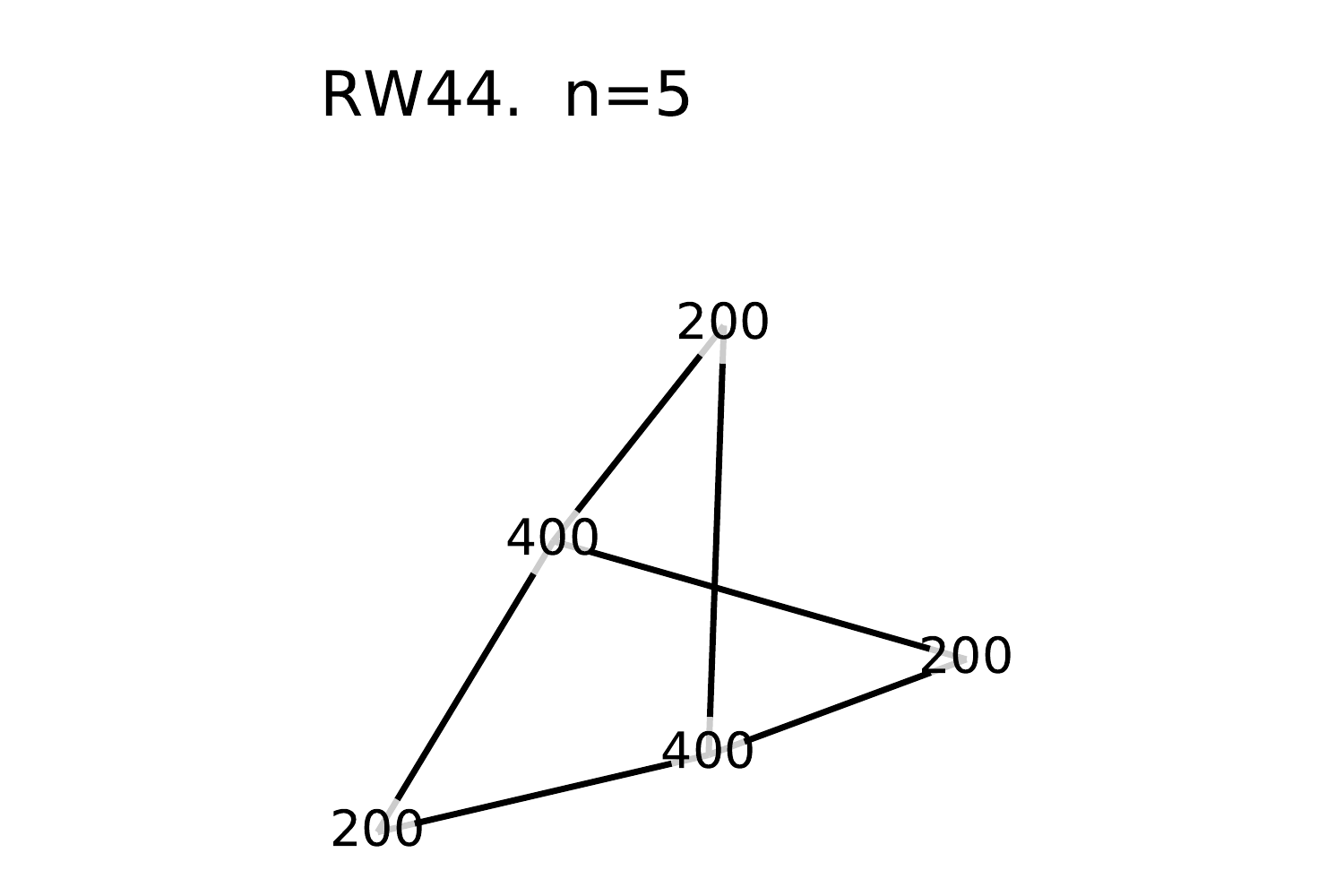} &
    \includegraphics[width=0.22\linewidth]{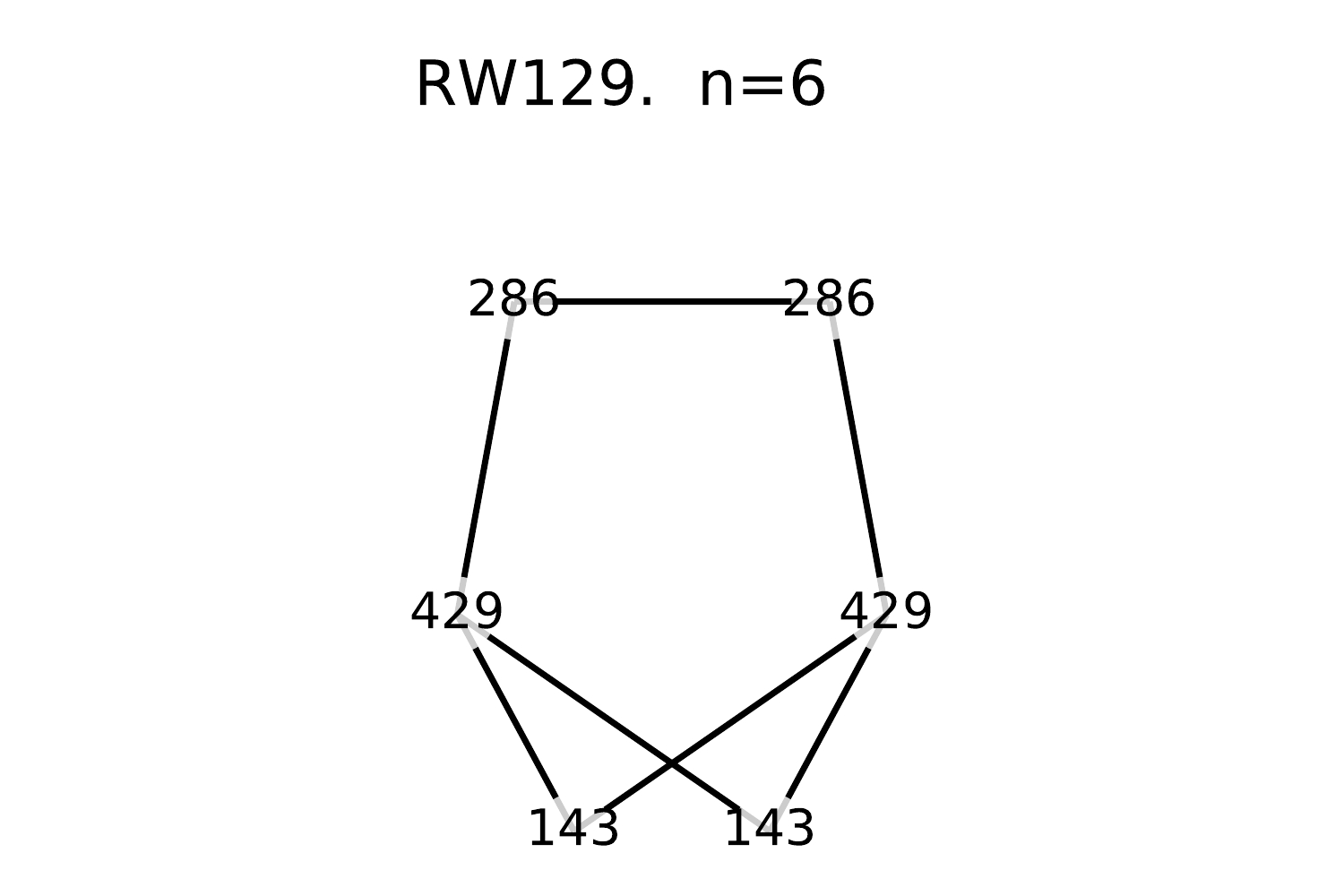} &
    \includegraphics[width=0.22\linewidth]{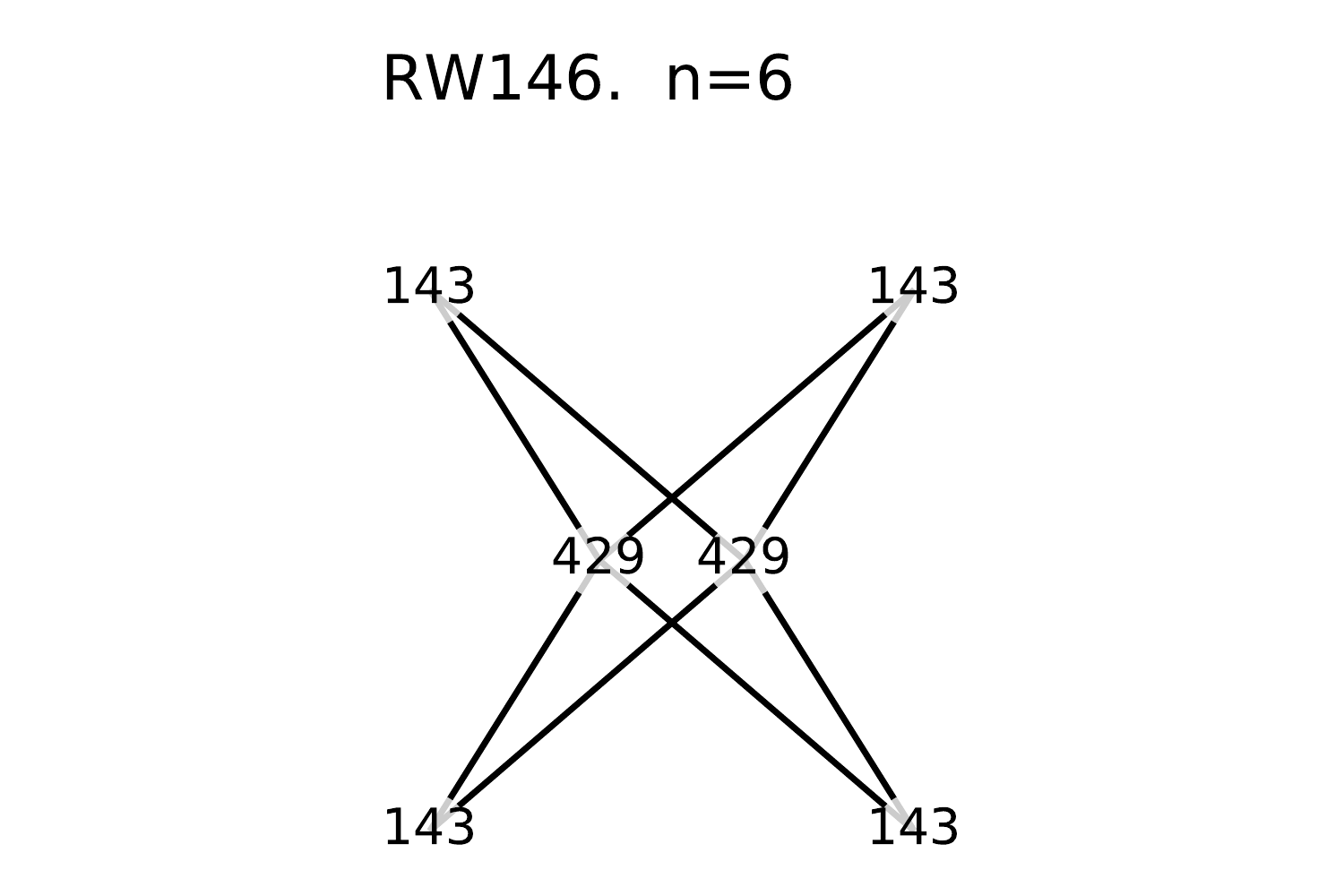} & 
    \includegraphics[width=0.22\linewidth]{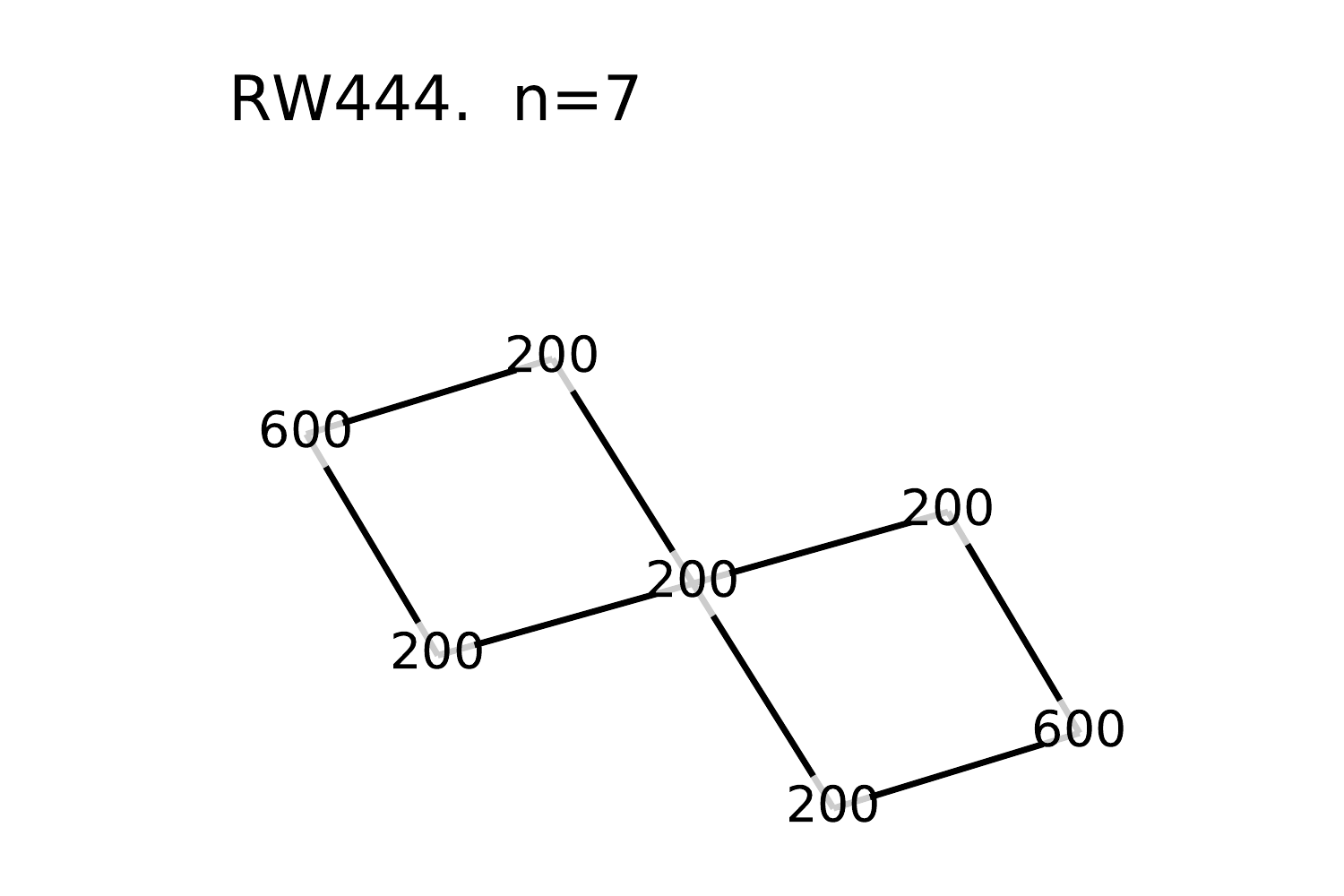} \\
    \includegraphics[width=0.22\linewidth]{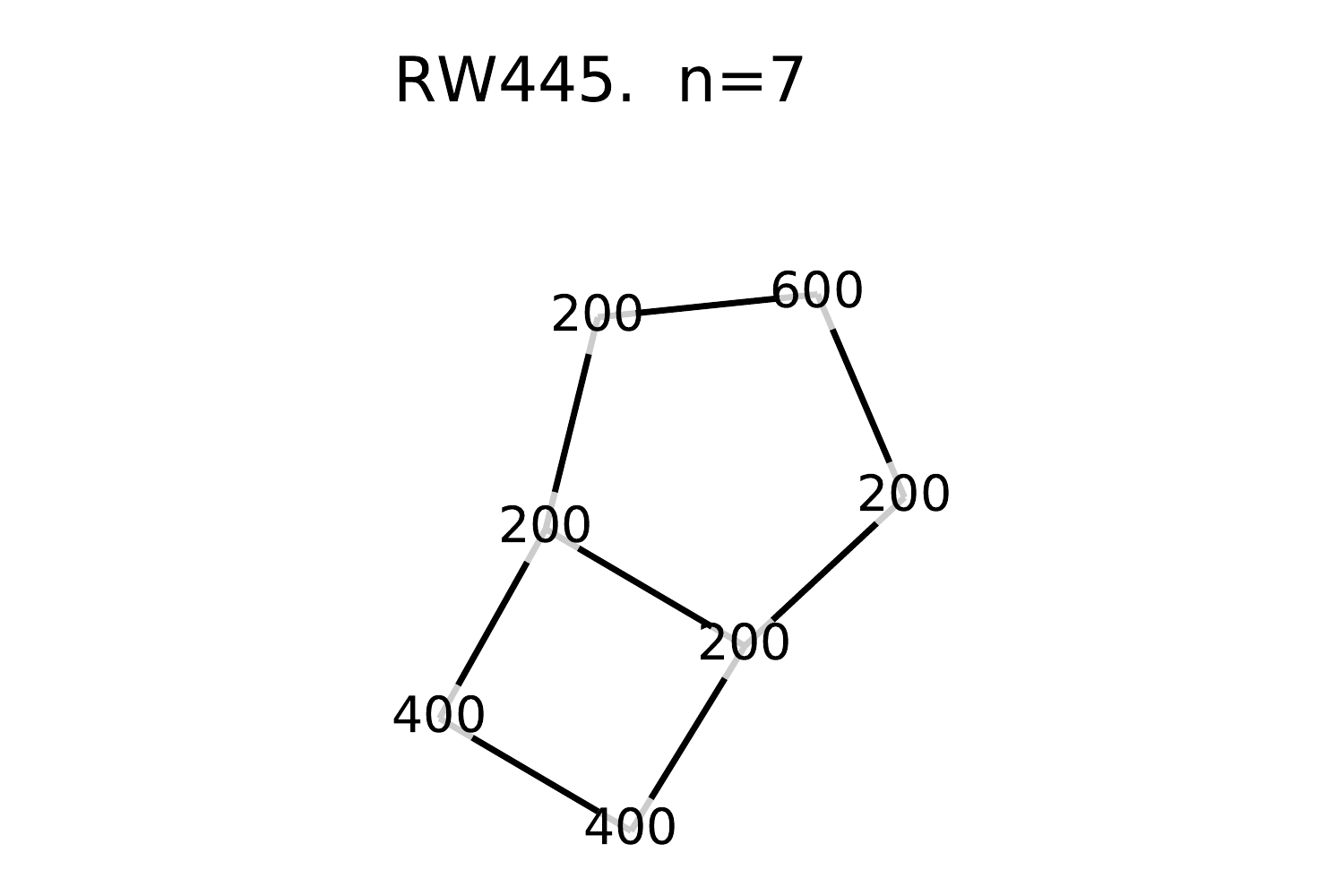} &
    \includegraphics[width=0.22\linewidth]{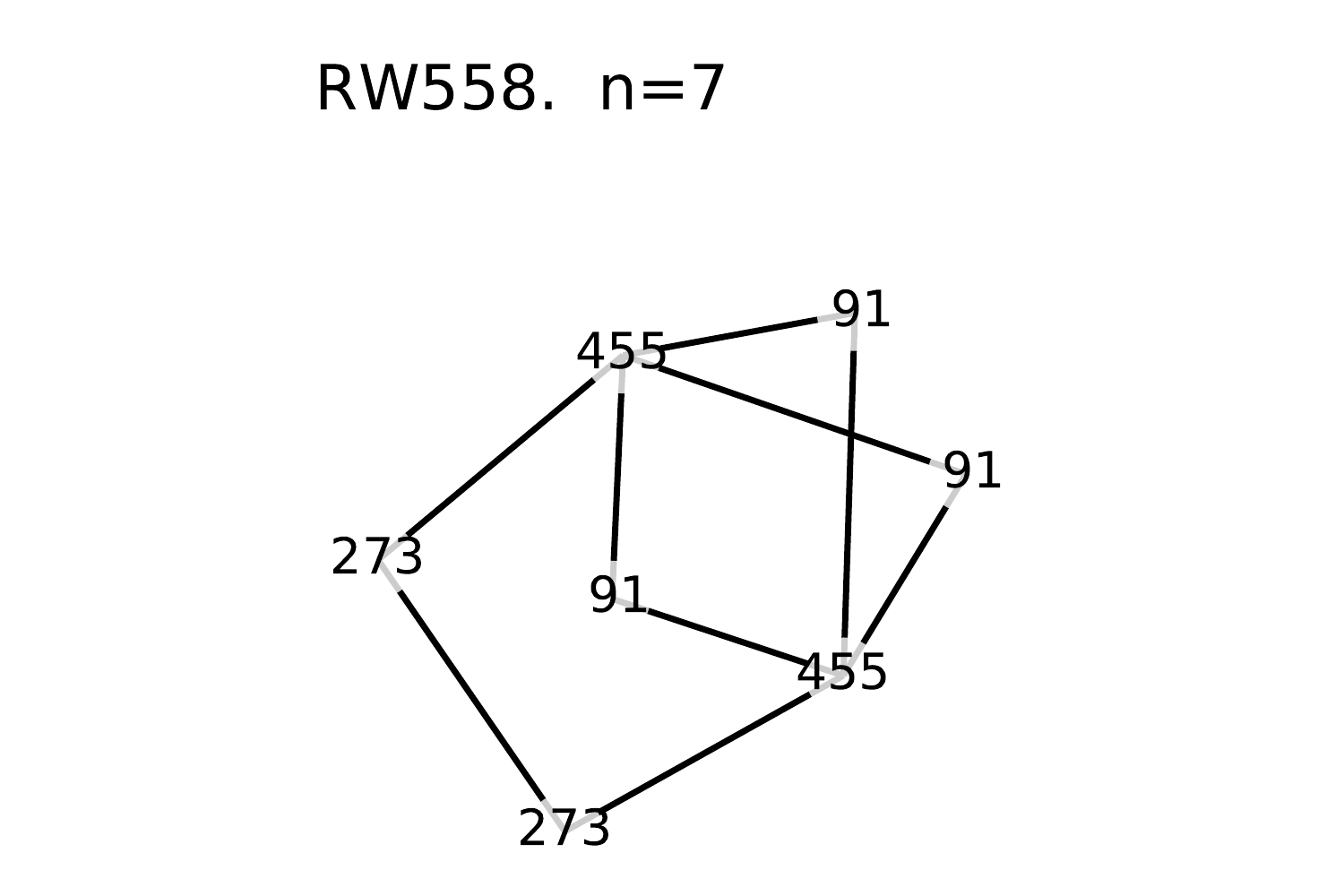} & 
    \includegraphics[width=0.22\linewidth]{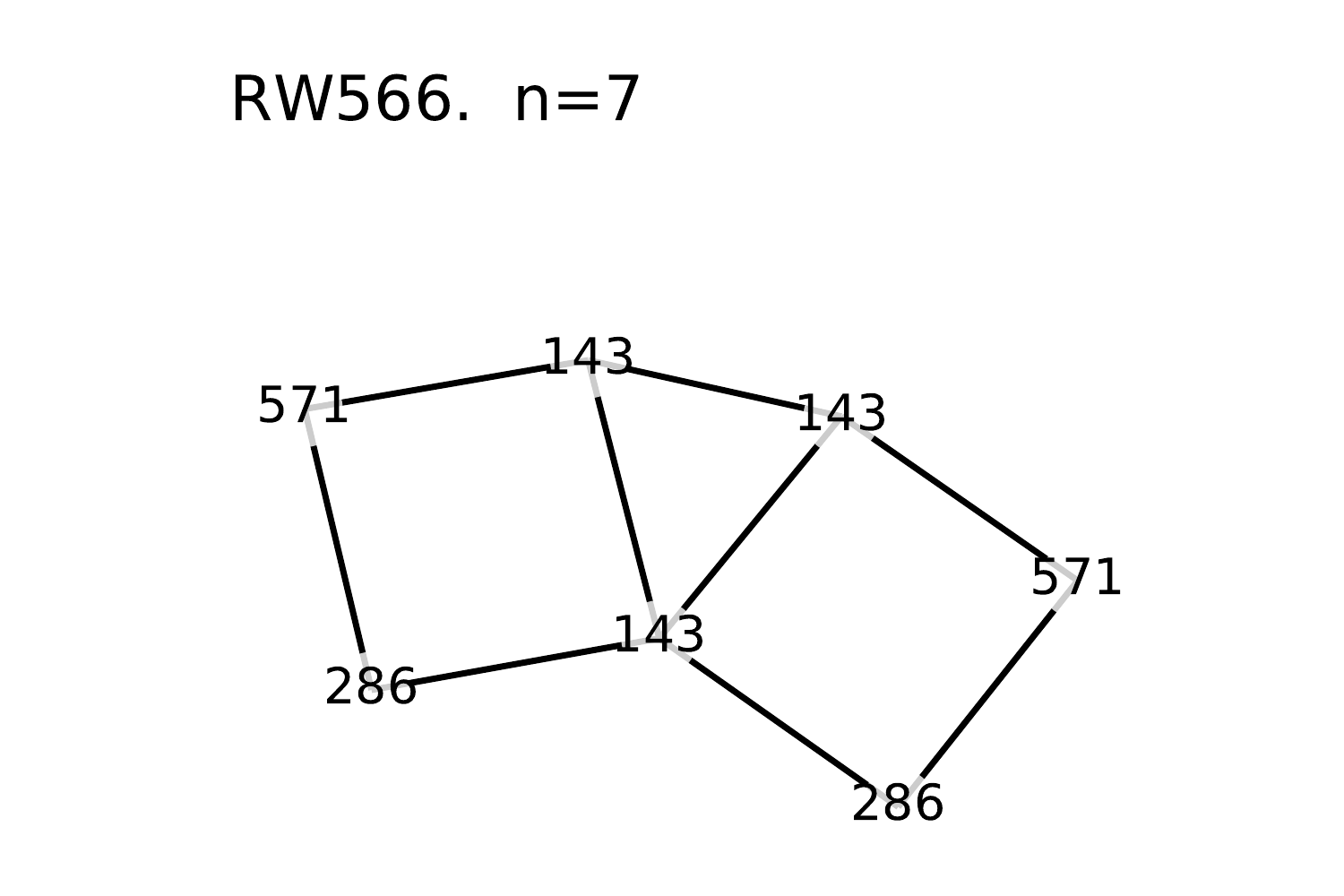} &
    \includegraphics[width=0.22\linewidth]{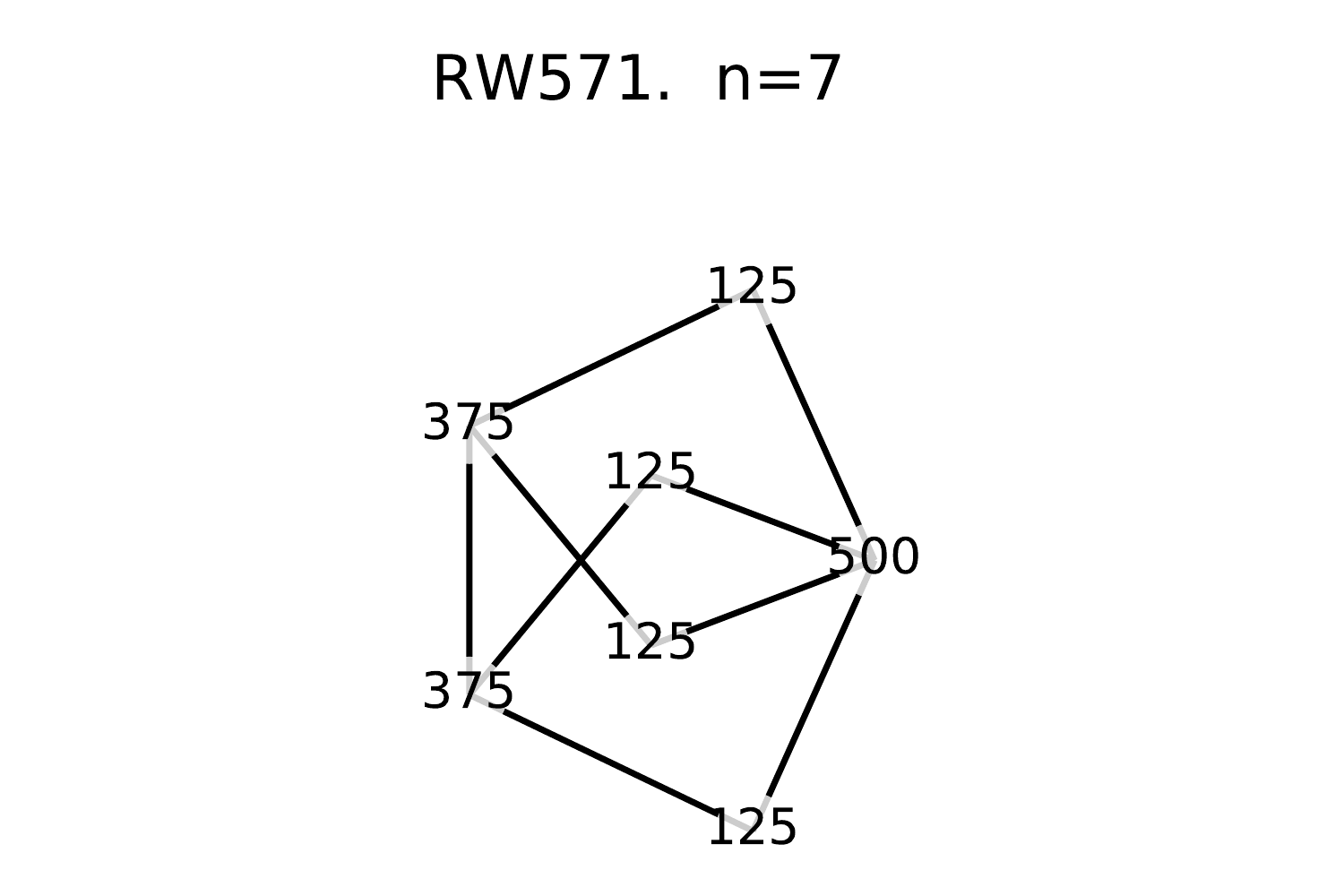} \\
    \includegraphics[width=0.22\linewidth]{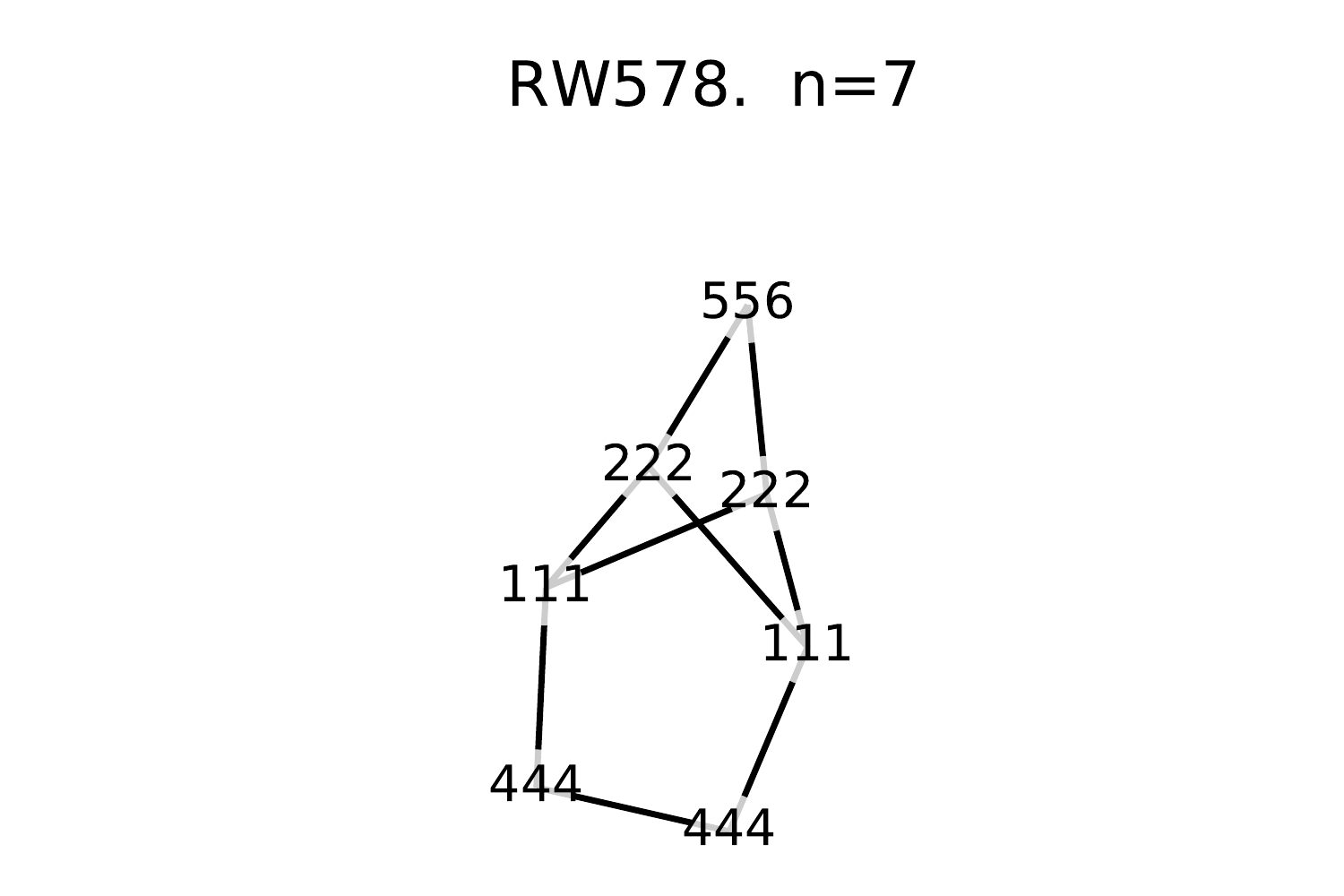} & 
    \includegraphics[width=0.22\linewidth]{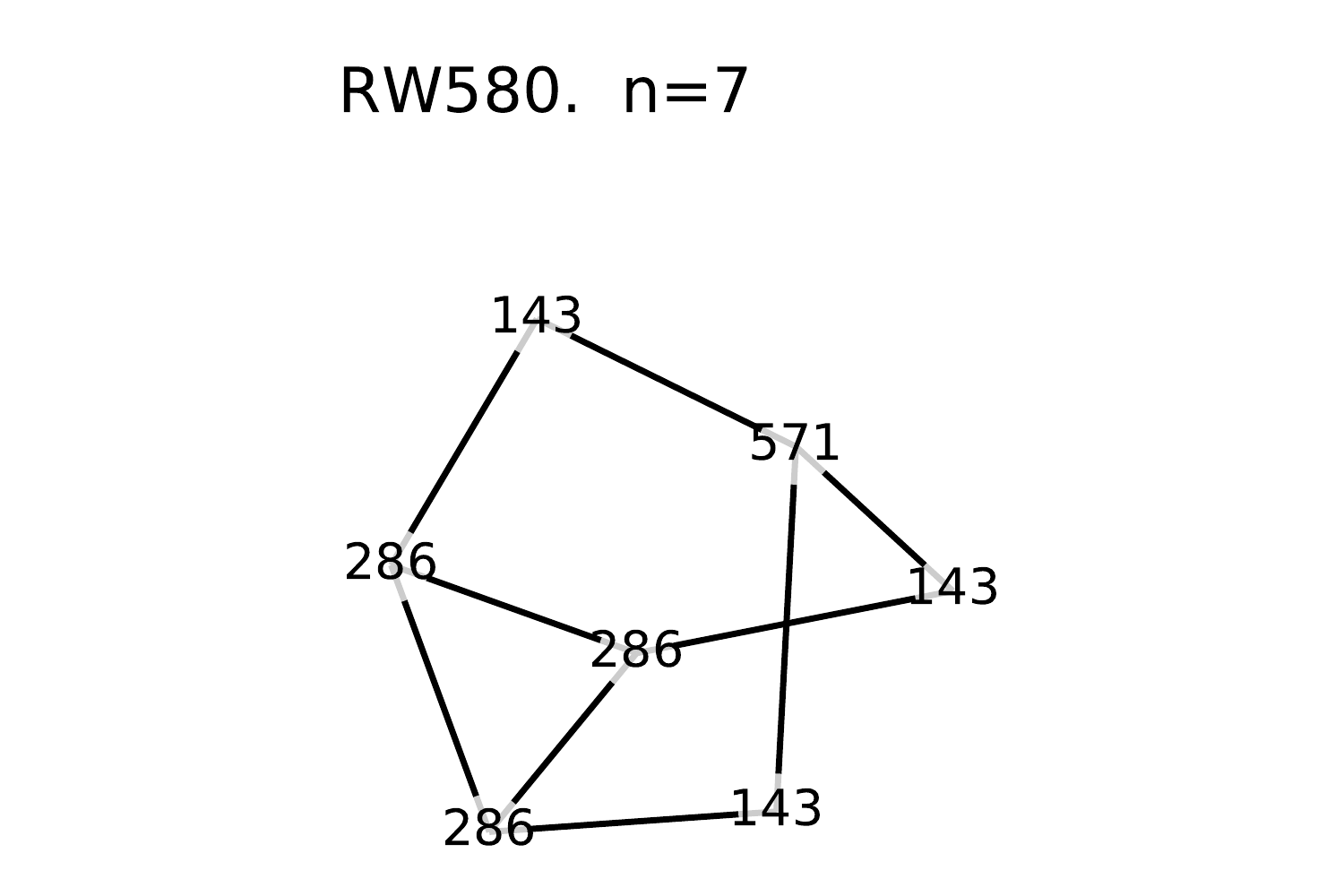} &
    \includegraphics[width=0.22\linewidth]{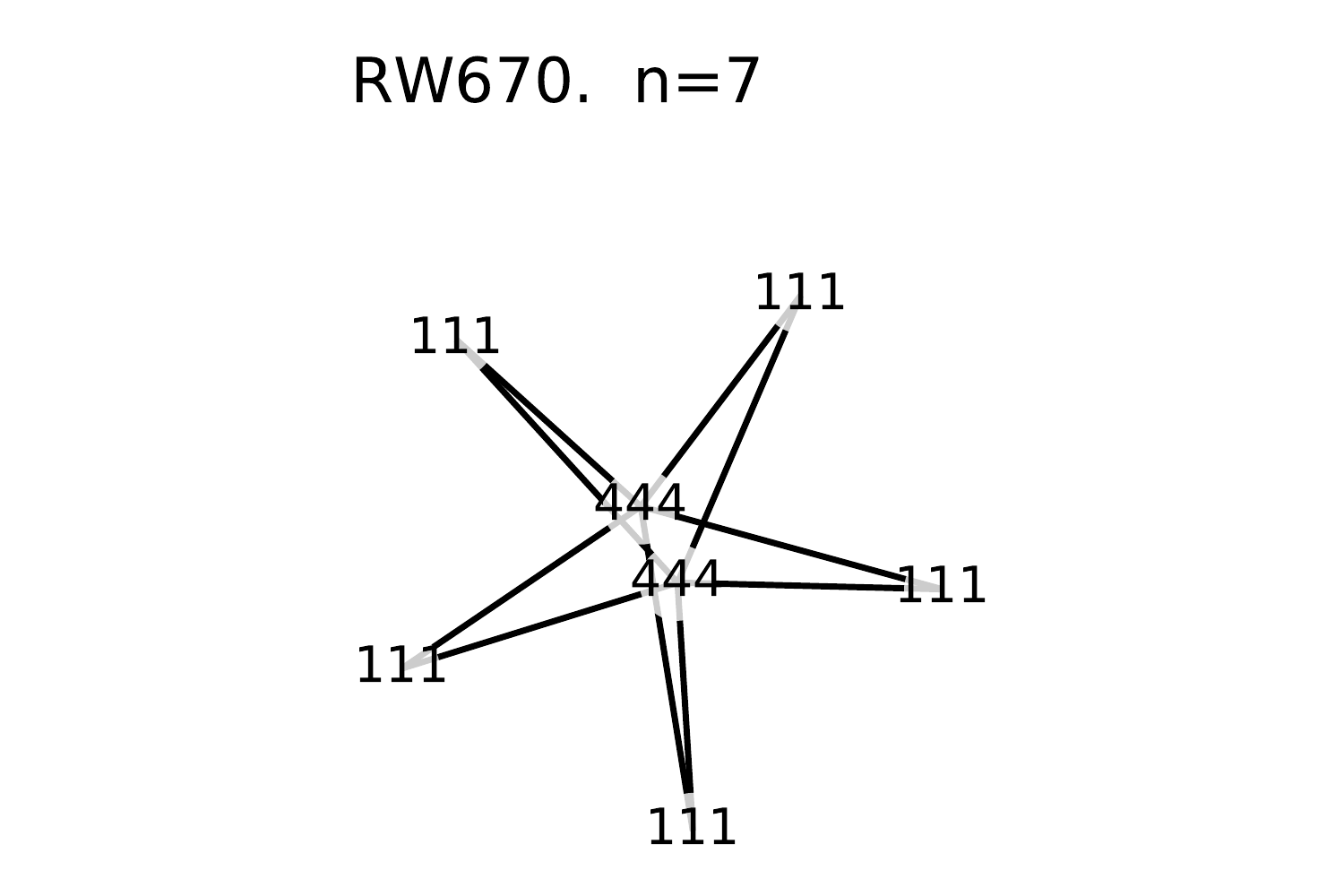} &
    \includegraphics[width=0.22\linewidth]{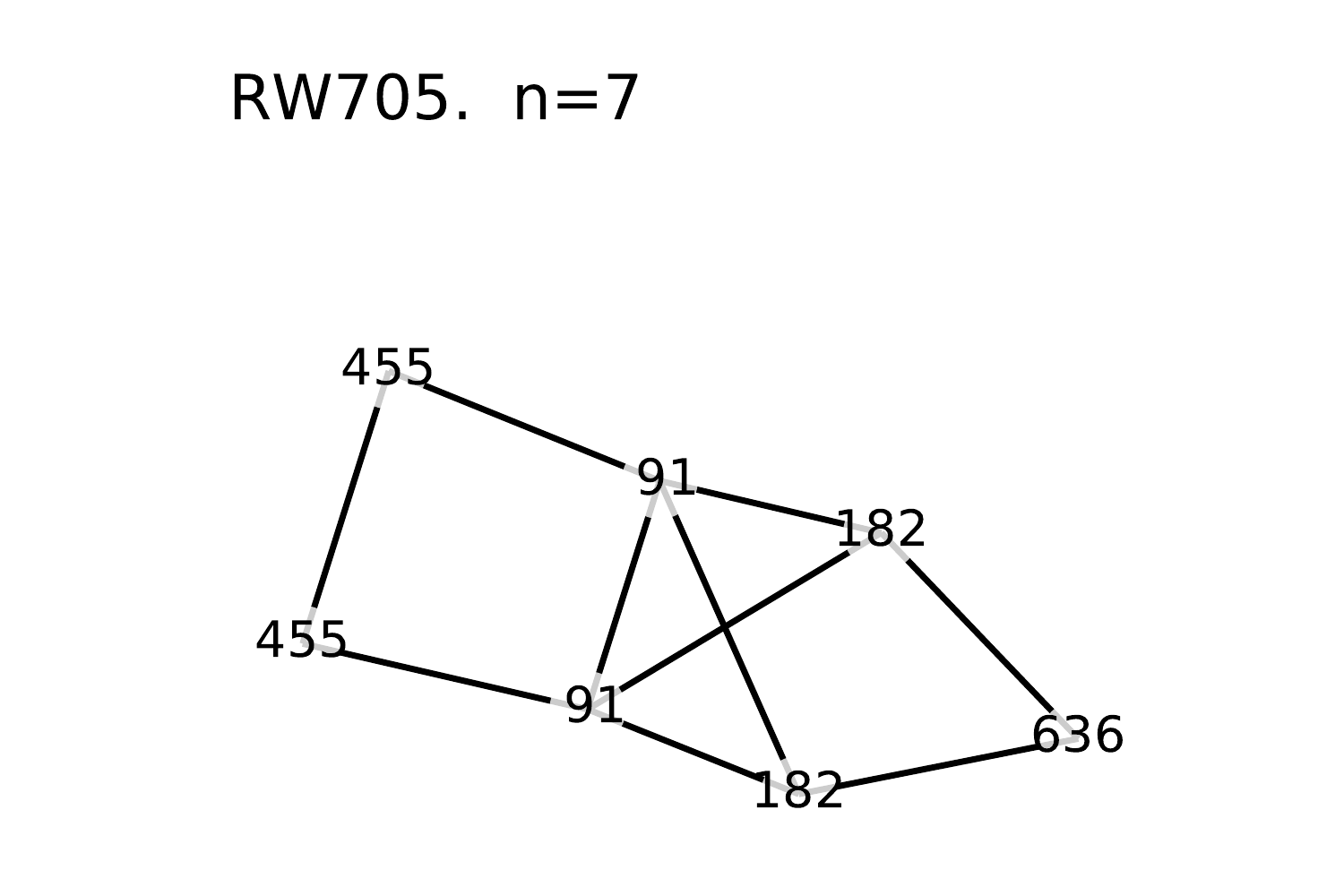}  \\ 
    \includegraphics[width=0.22\linewidth]{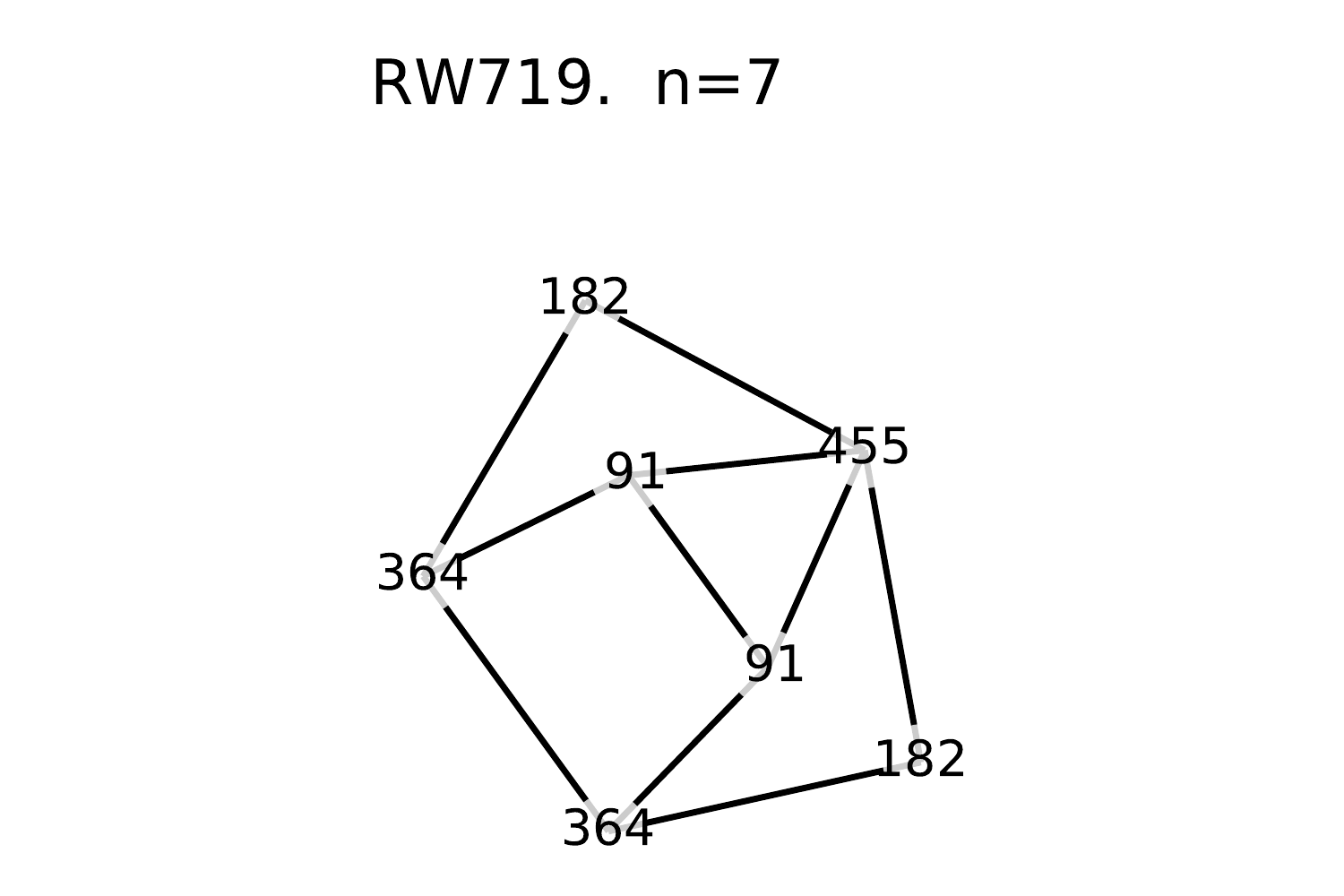} &
    \includegraphics[width=0.22\linewidth]{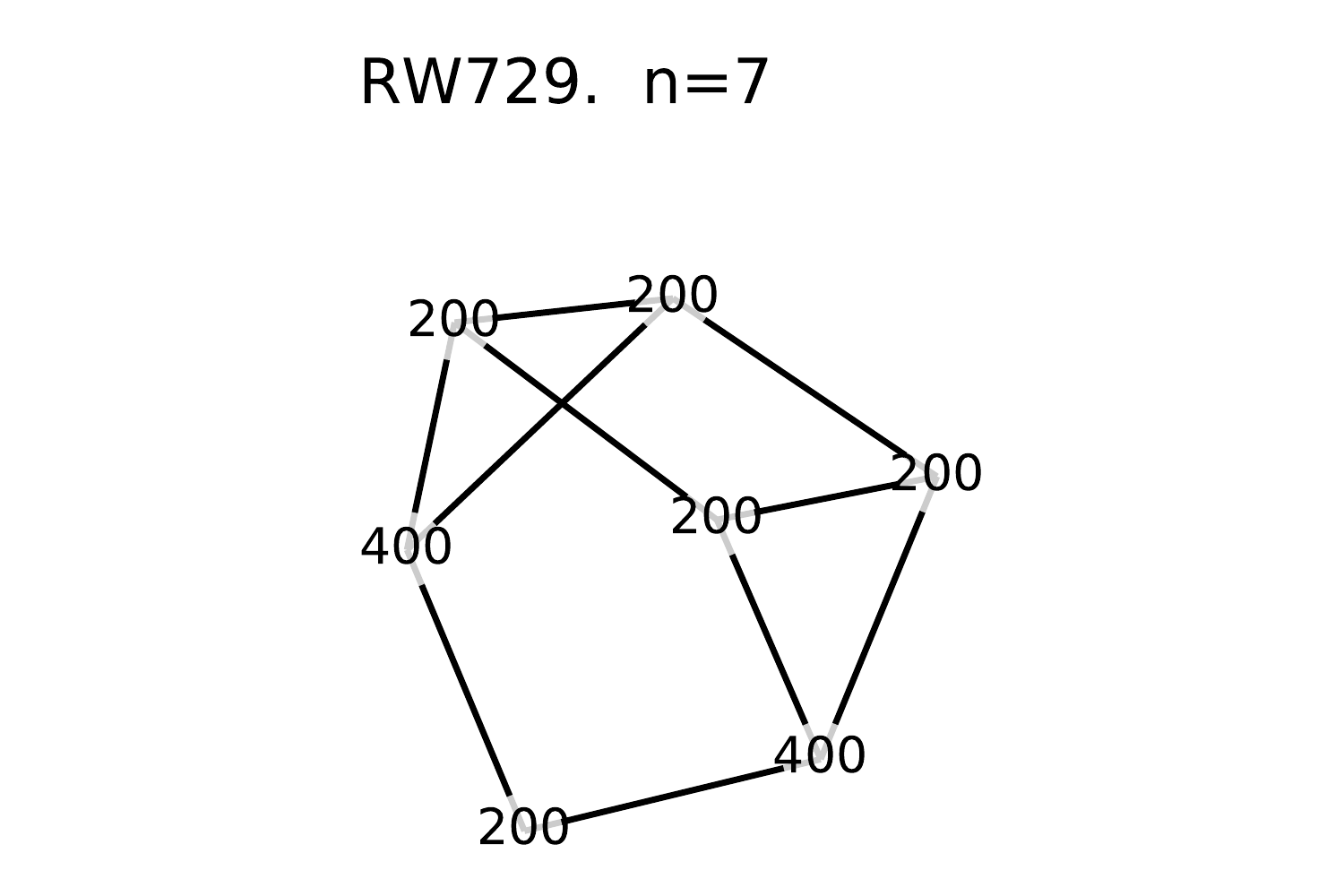} &
    \includegraphics[width=0.22\linewidth]{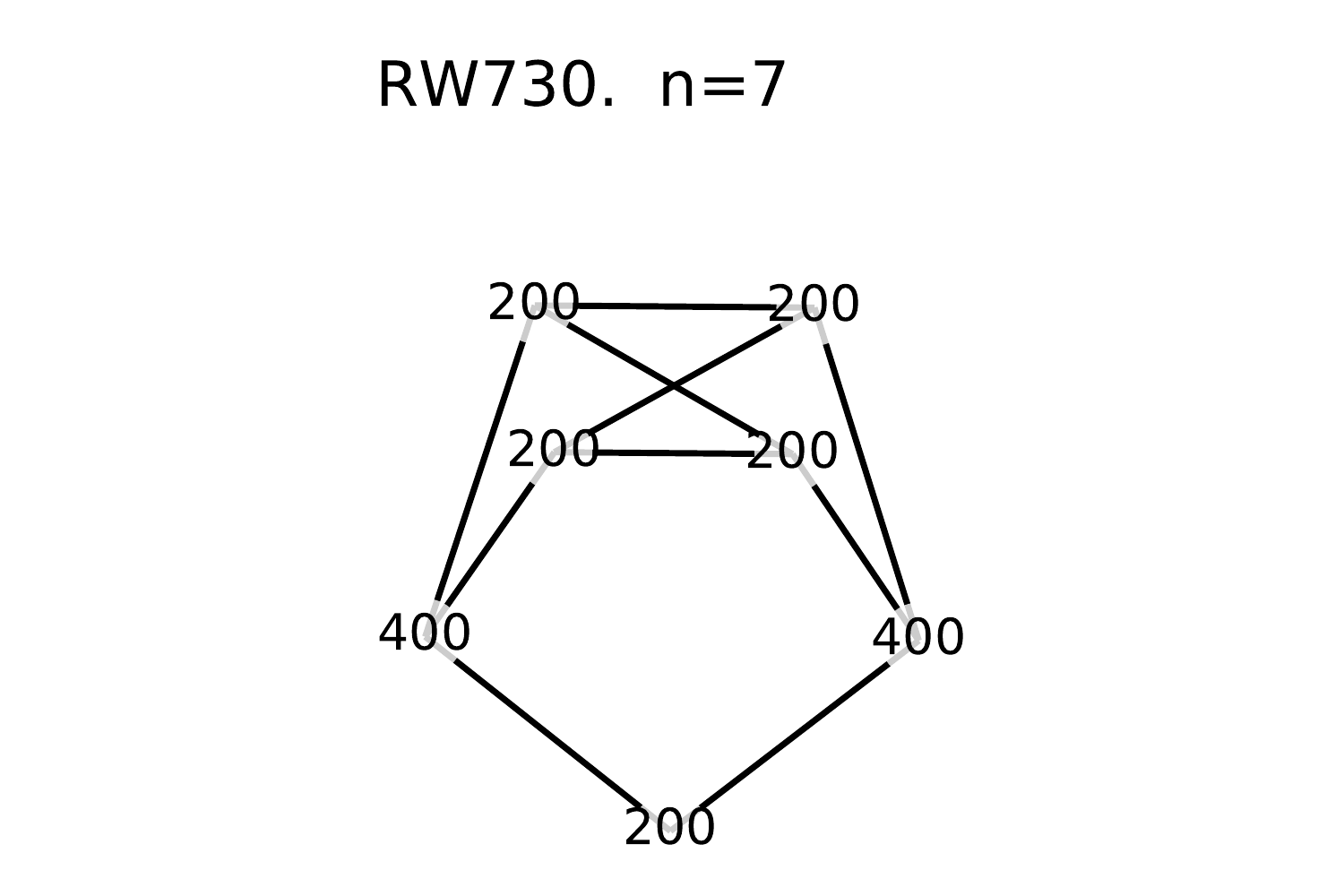}  &
    \includegraphics[width=0.22\linewidth]{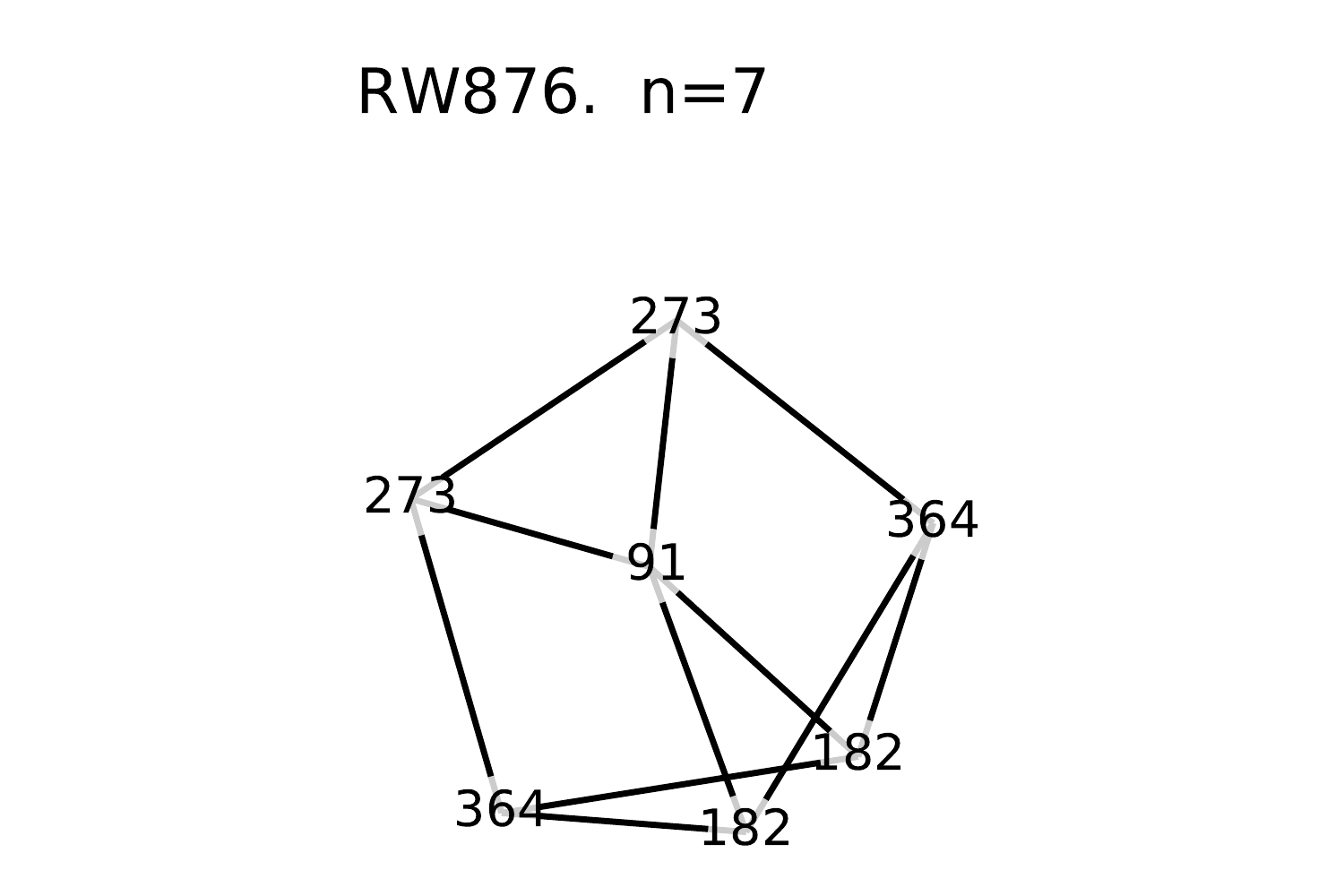}  \\
    \includegraphics[width=0.22\linewidth]{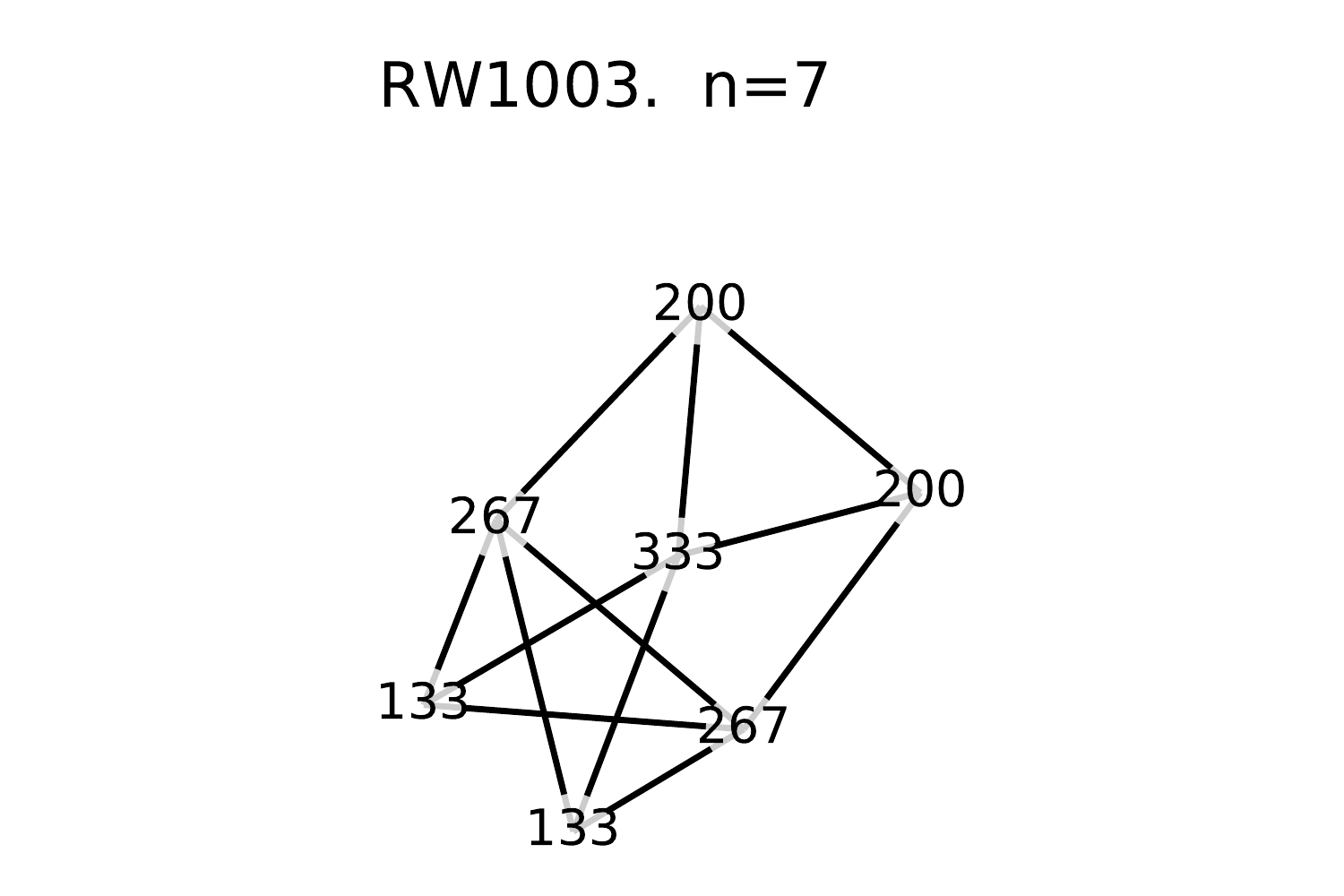} &
    \includegraphics[width=0.22\linewidth]{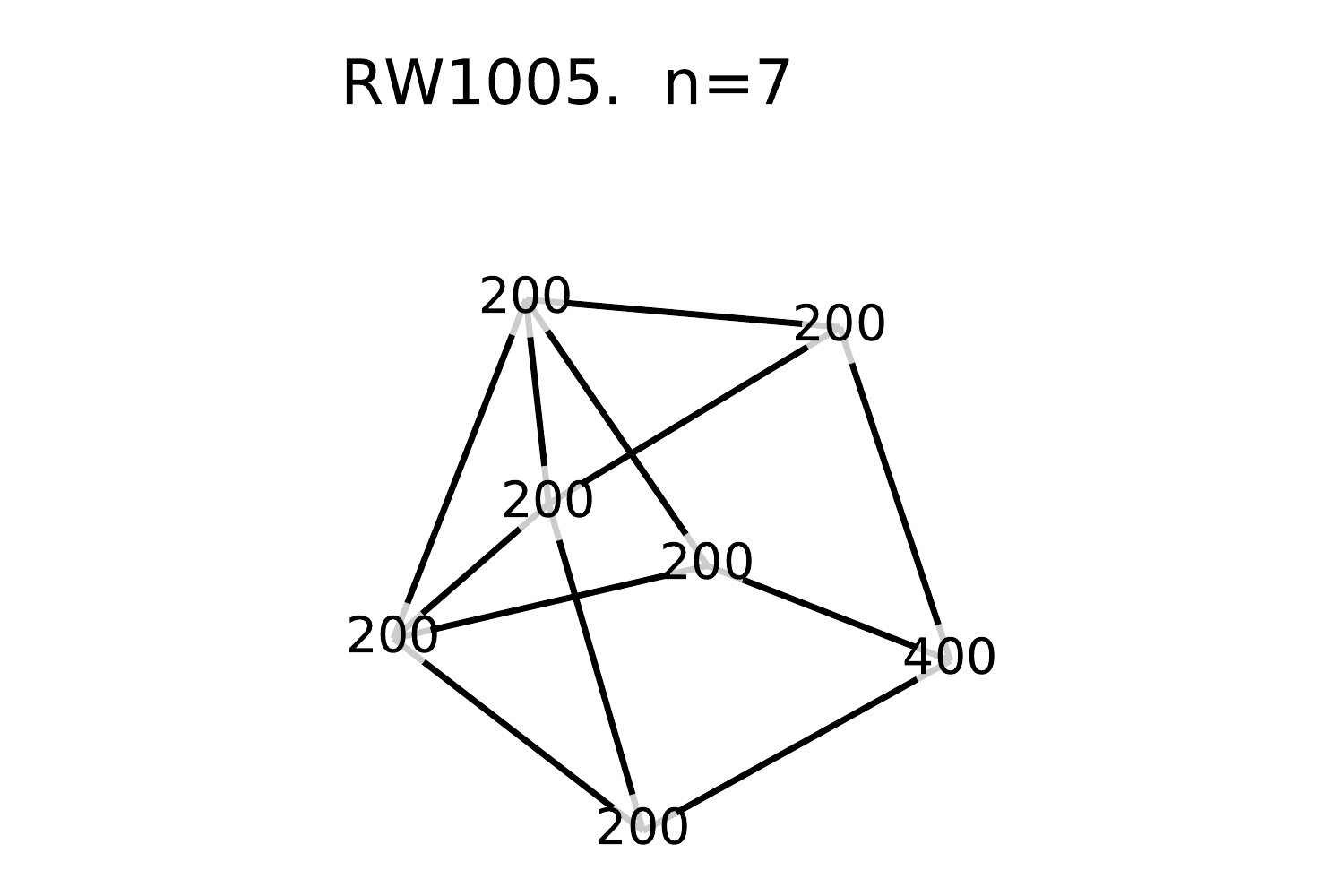} &
    \includegraphics[width=0.22\linewidth]{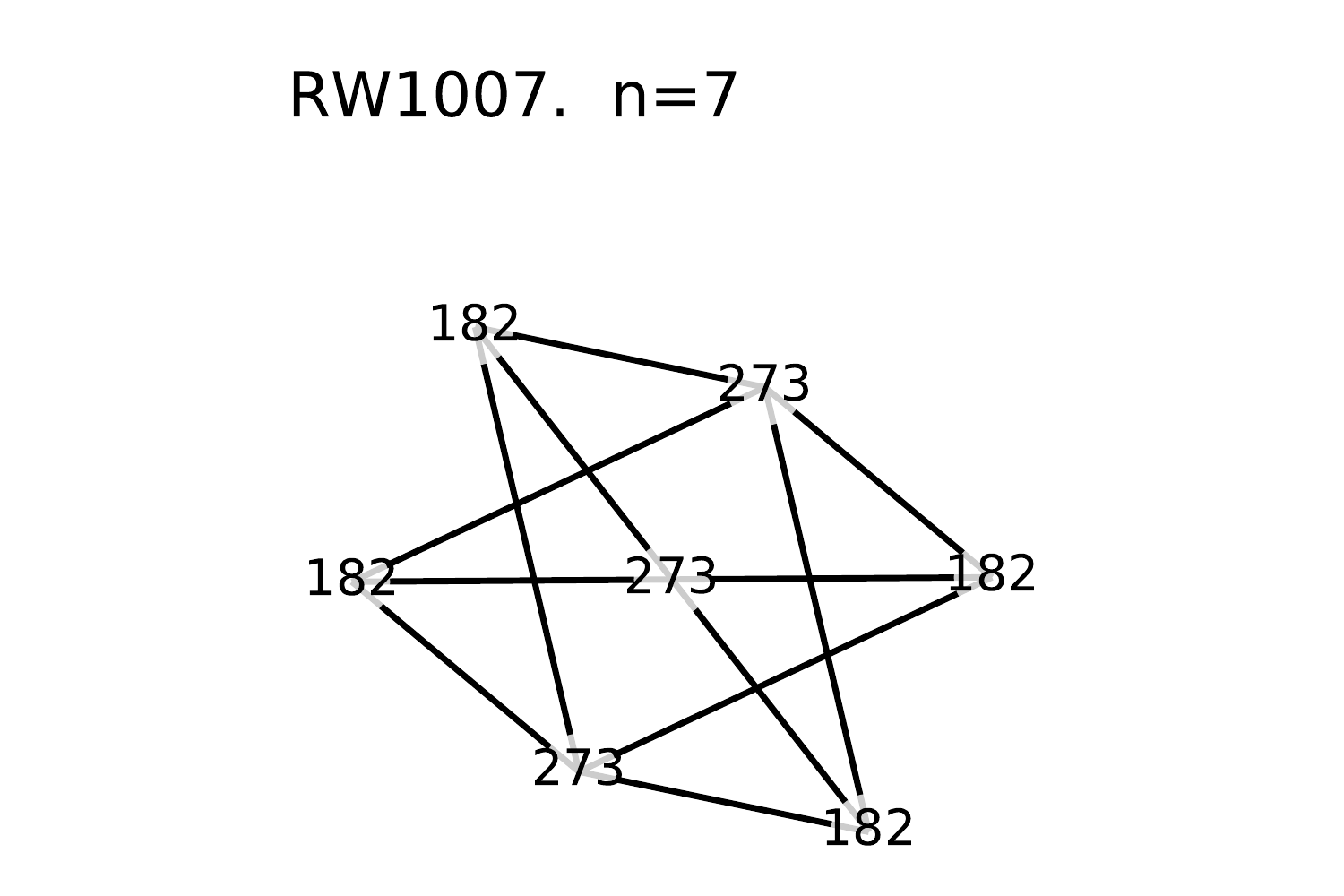}  &
    \includegraphics[width=0.22\linewidth]{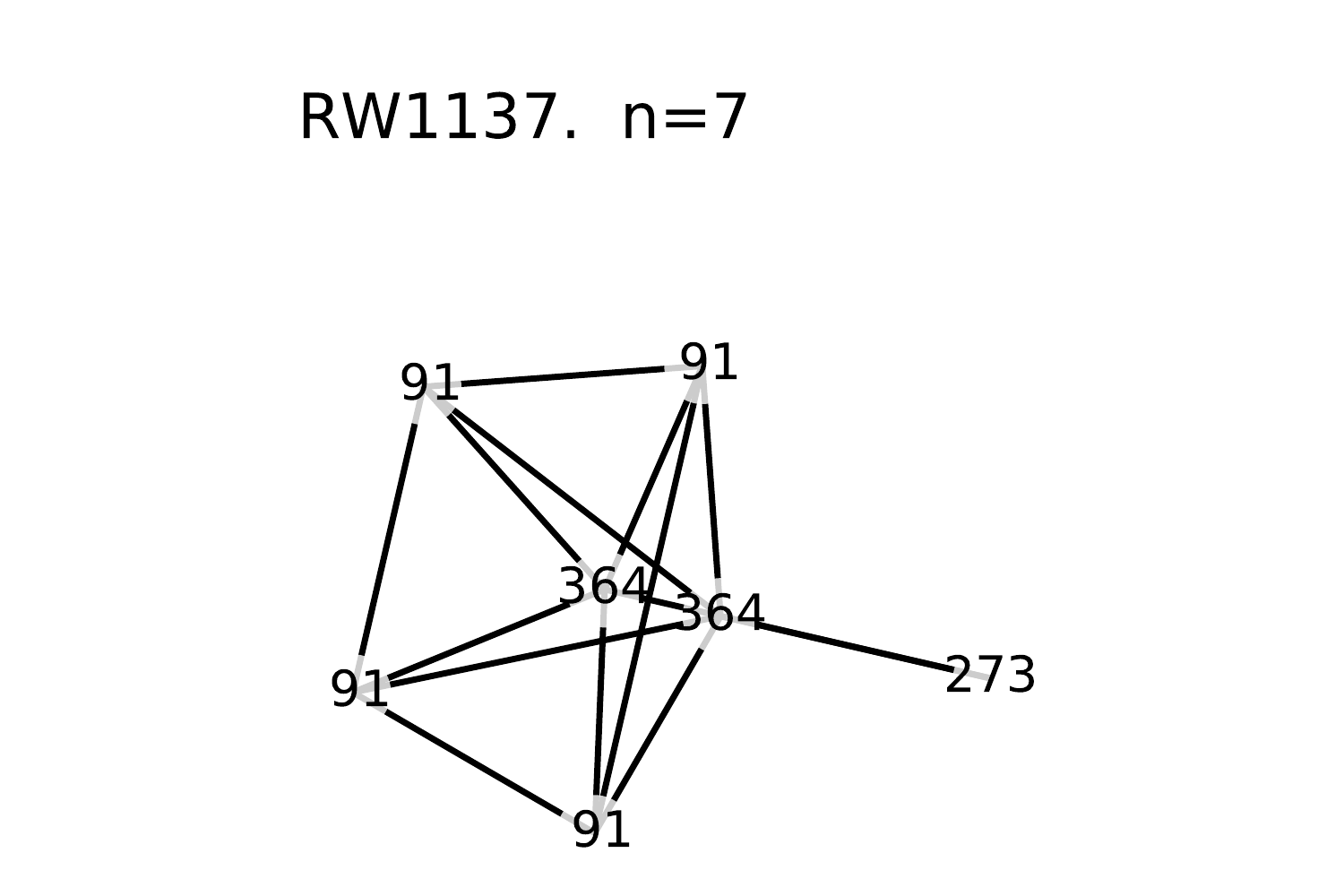}  \\
    \includegraphics[width=0.22\linewidth]{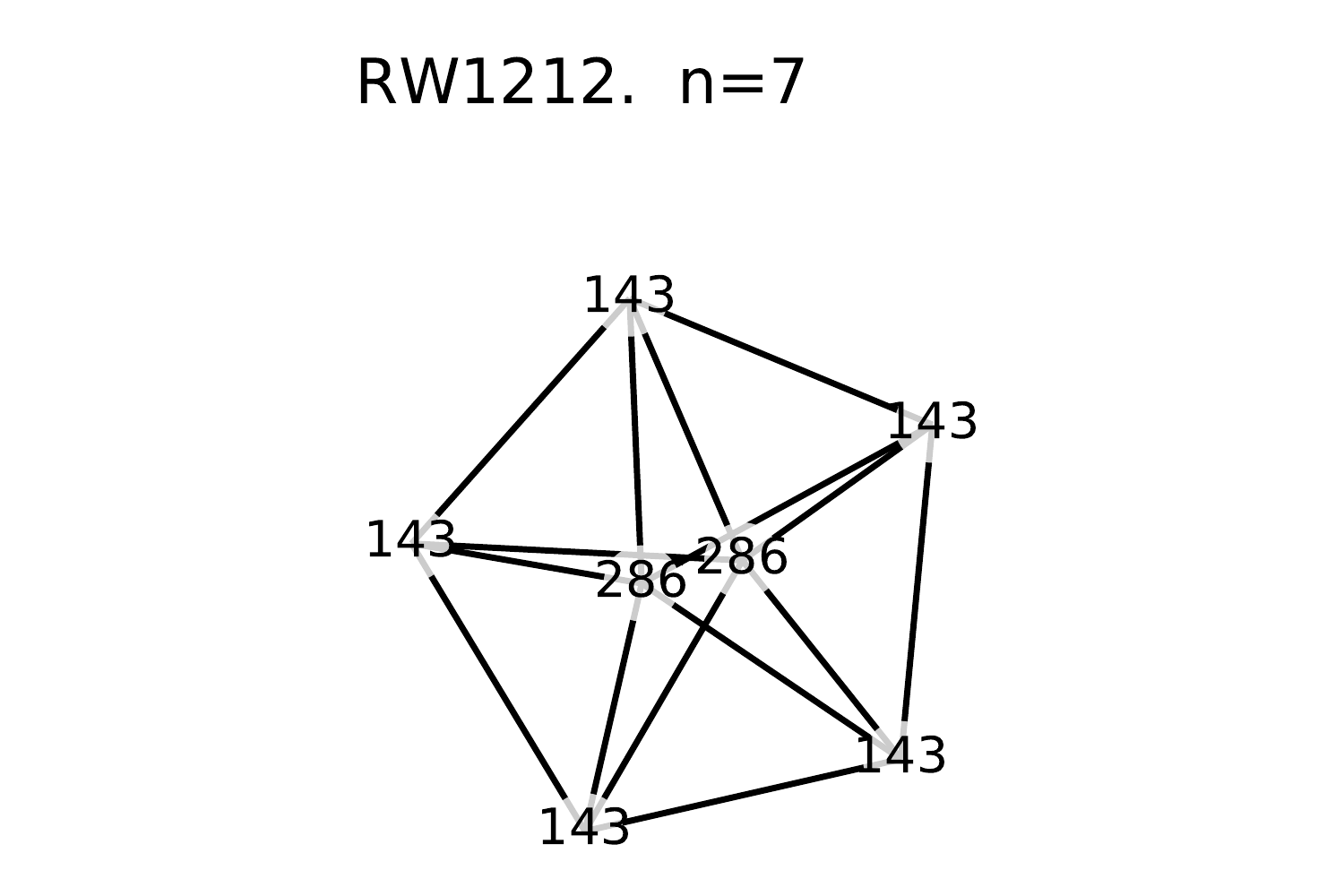} &
    \includegraphics[width=0.22\linewidth]{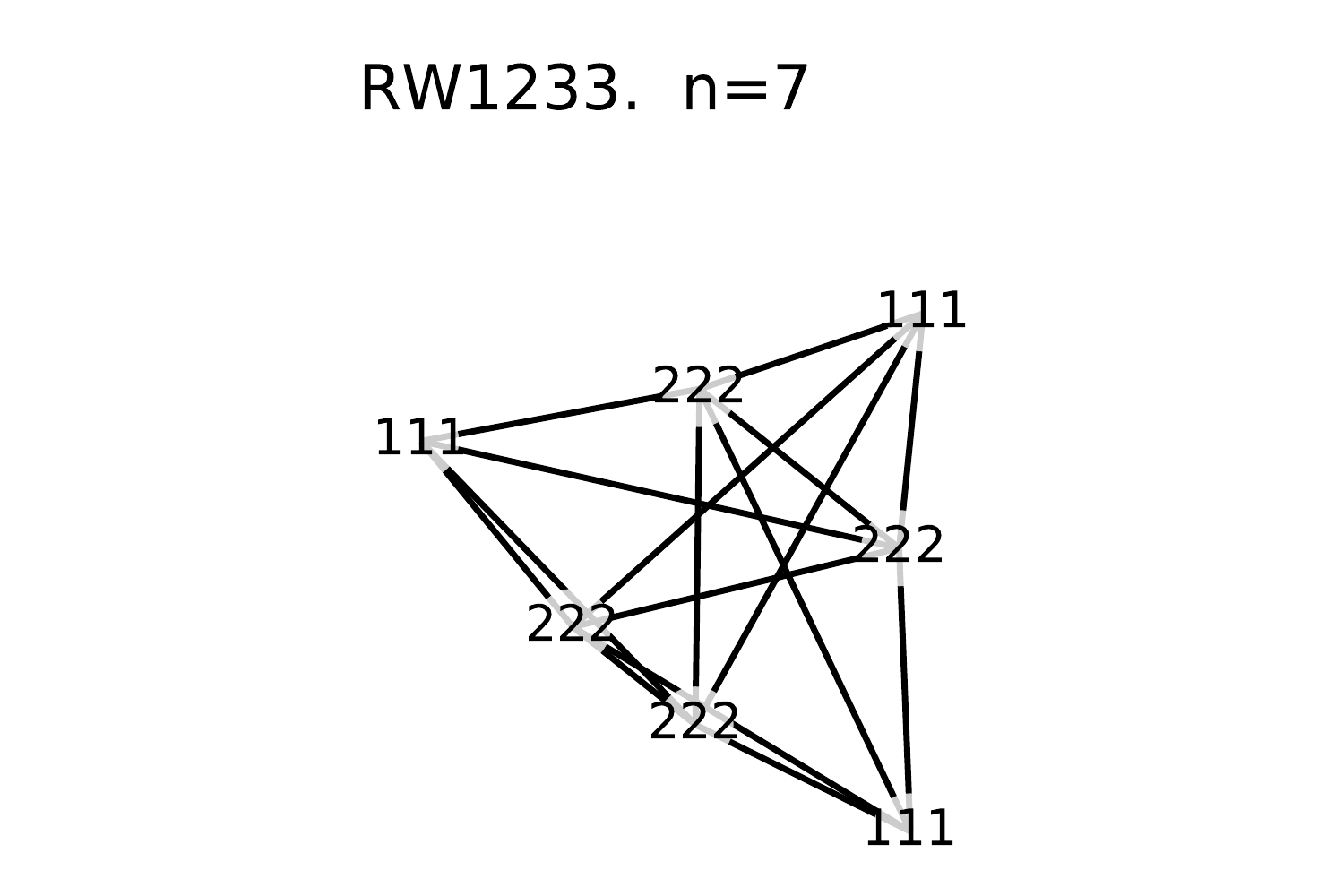}
    \end{tabular}
    \caption{The $22$ non trivial fixed clusters of graph normalization belonging to Read and Wilson atlas of graphs. 
     The numbers on the nodes are the first $3$ decimals of the weights of the solution. 
     For example $286$ represents a weight of $0.286... $ . 
Please note that for a given graph there may be other solutions with different weights as the linear program to find them only exhibits one solution.
Also note that despite what the automatic drawing suggests, $RW1137$ doesn't have a dangling node: the 
node with weight $273$ is connected to the $2$ nodes with weights $364$.  
    }
    \label{fig_RW_fixed_clusters}
\end{figure}

Remark that in accordance with the proposition \ref{ref_propfixedregular}, 
all the non trivial fixed clusters found do not have all-identical weights.

\subsubsection{Stability}
We now look at the dynamics of graph normalization and at the stability of its fixed points, 
first illustrating its behavior with $P_3$. 

The figure \ref{fig_P3_orbits} represents the orbits of normalization on $P3$ when initialized 
at regular barycentric positions on the triangle $\Delta_2$ for different activation functions.
For a linear activation, i.e. no activation, 
one sees that all the trajectories which are not initiated on a fixed point 
converge to $(1,0,1)$ (figure \ref{fig_P3_orbits} a)). 

\begin{figure}[!ht]
    \centering
        \centering
    \begin{tabular}{ccc}    
    \includegraphics[width=0.28\linewidth]{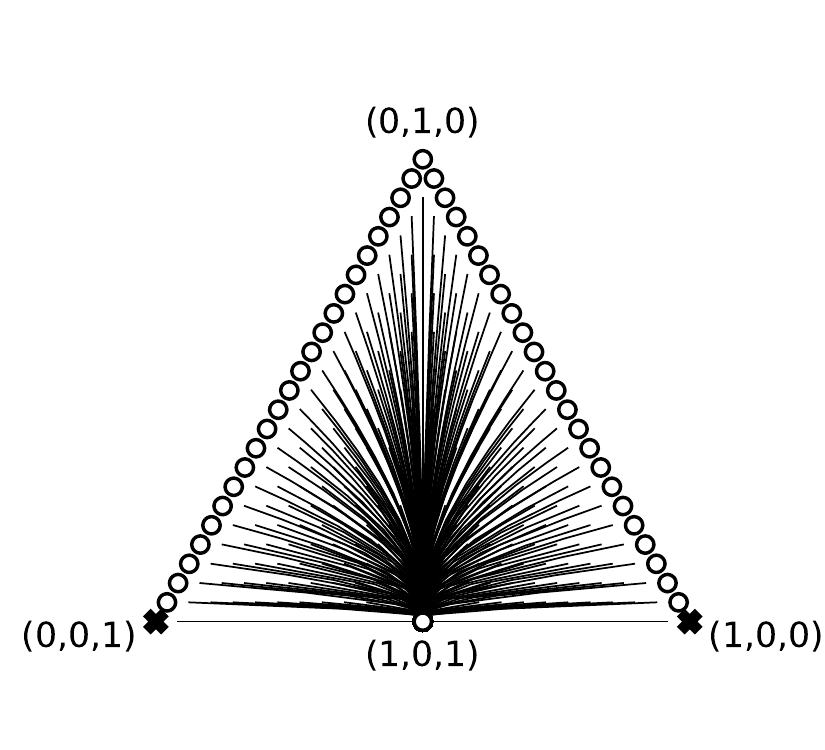} &
    \includegraphics[width=0.28\linewidth]{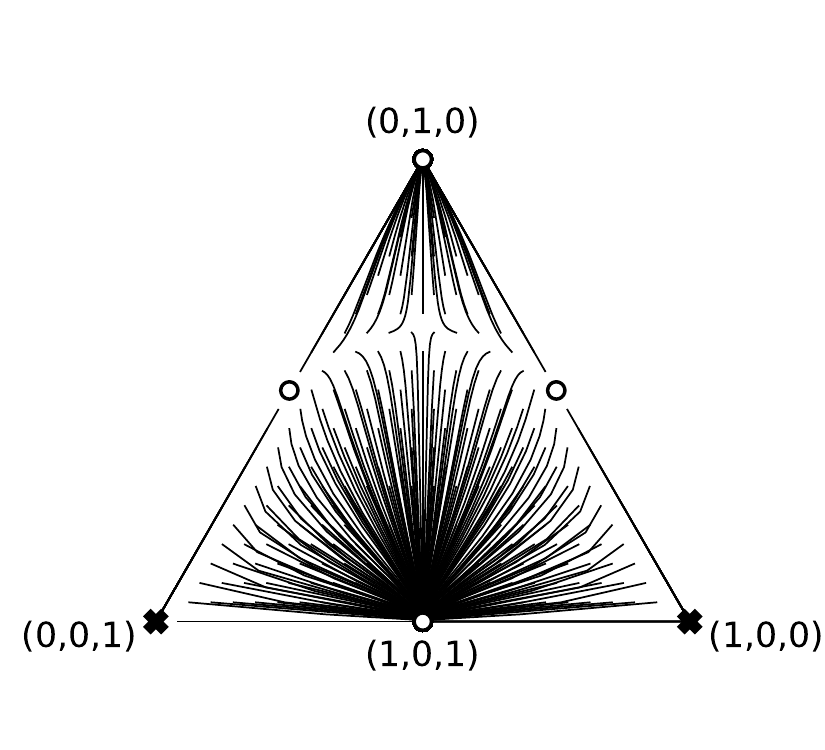} &
    \includegraphics[width=0.28\linewidth]{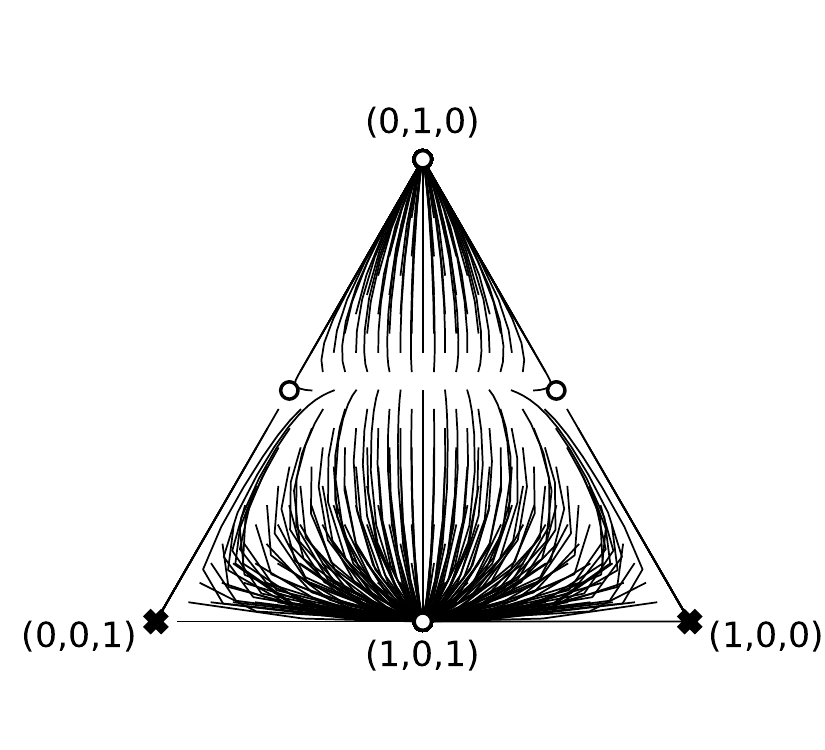} \\
    (a) $h=p_{1.0}$ &
    (b) $h=p_{1.2}$ &
    (c) $h=p_{1.5}$ \\
    \includegraphics[width=0.28\linewidth]{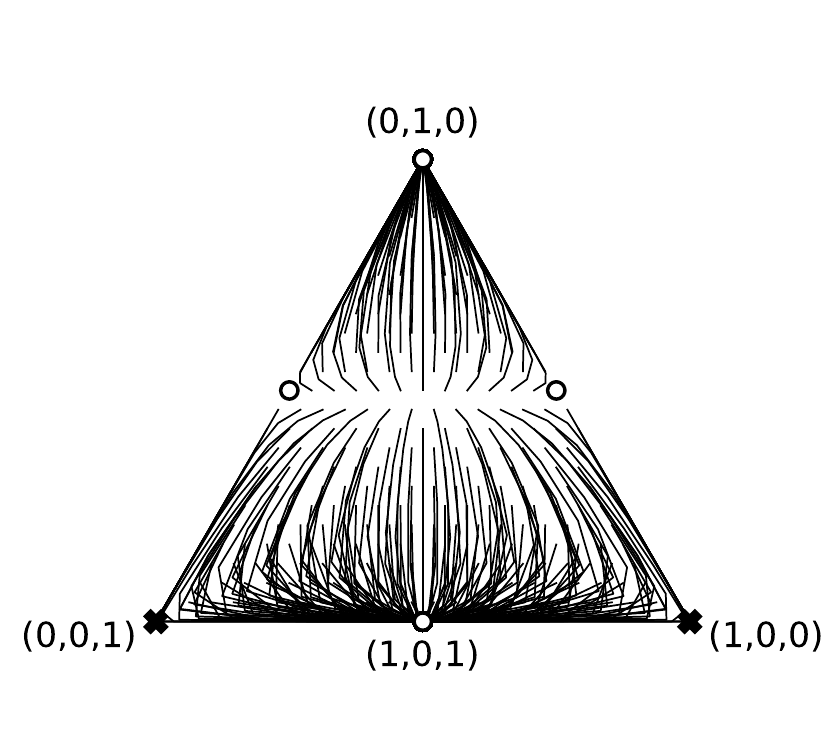} &
    \includegraphics[width=0.28\linewidth]{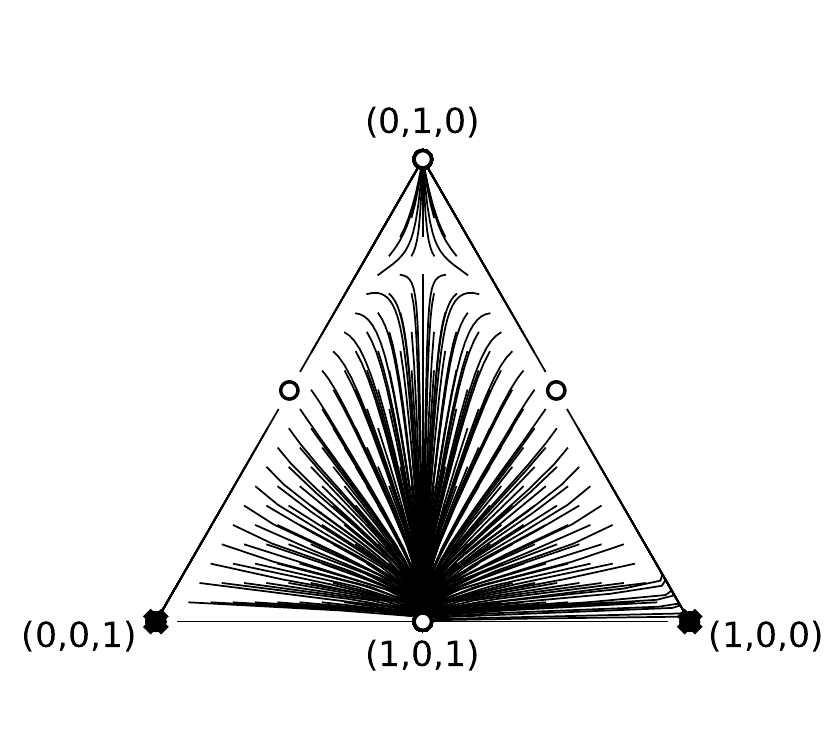} &
    \includegraphics[width=0.28\linewidth]{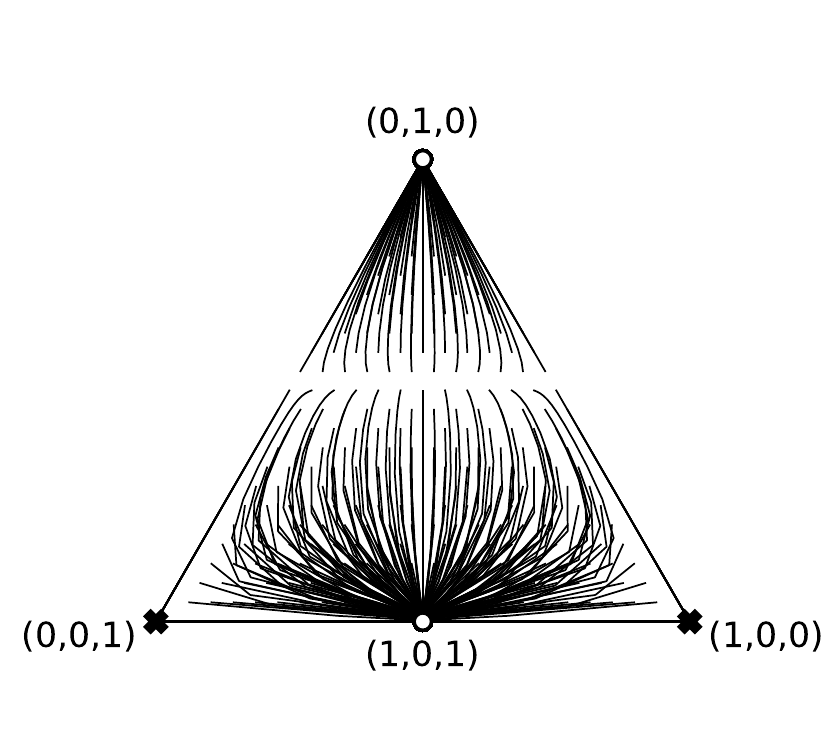} \\
    (a) $h=p_{2.0}$ &
    (b) $h=s_{2.0}$ &
    (c) $h=s_{5.0}$ \\
    \end{tabular}
    \caption{Orbits of the normalization of $P_3$ for different activation functions. 
    For each plot, we initialize normalization on the points of a $2D$ grid 
    defined by regular barycentric coordinates on the triangle $\Delta_2$, which thus samples the 
    projective space $\mathbb{P}_2(\mathbb{R}^+)$. 
    The $3D$ trajectory of each weight vector $x$ is then reprojected 
    on $\Delta_2$ by dividing by $\ONE^t x$ (regular $L_1$-norm normalization) and transformed to 
    a $2D$ reference system attached to the plane containing $\Delta_2$ and centered at $(1,0,1)$.
    For each initialization we perform $500$ iterations to be sure to get close to convergence.
    The black curves represent the trajectories of the iterations from each initialization (orbits). 
    A white circle is drawn at each final point of a trajectory, hence at each numerical fixed-point of normalization.
    The black crosses represent the 2 points numerically tested as non normalizable ($(1,0,0)$ and $(0,0,1)$). See discussion in the text.
    }
    \label{fig_P3_orbits}
\end{figure}

Apart from the special case of $P_3$, 
let us now look at the attractivity of the fixed points of normalization for general graphs.
It is well know that the spectral radius of the Jacobian of a discrete dynamical system 
governs the attractivity of its fixed points.
One easily verifies by calculus that for a graph $G=(A,x)$ and with the notation $B=A+I$,
the components of the Jacobian $J(x)$ of $\Rh$ at $x$ are given by
\begin{eqnarray*}
\forall i \quad
J_{ii}(x) = 
\frac{ \partial {\Rh}_i } {\partial x_i} (x) 
& = &
h\prime \left( \frac{x_i}{\sum B_{ij}x_j}\right) 
\frac{ \sum A_{ij}x_j } { \left( \sum B_{ij}x_j \right) ^2}\\
\forall j\ne i \quad 
J_{ij}(x) = \frac{ \partial {\Rh}_i } {\partial x_j} (x) 
& = & 
h\prime \left( \frac{x_i}{\sum B_{ij}x_j}\right) 
\frac{ -x_i A_{ij}} { \left( \sum B_{ij}x_j \right) ^2} 
\end{eqnarray*}

The structure of $J(x)$ is quite interesting.
It can be written as:
\begin{equation}
J(x) = H(x) \times ( W(x) - \DIAG(x) A)
\end{equation}
where:
\begin{itemize} 
\item$H(x)$ is the diagonal matrix with diagonal entries:
\[
H_{ii}(x) = h\prime \left( \frac{x_i}{\sum B_{ij}x_j}\right)  /  \left( \sum B_{ij}x_j \right) ^2
\]
\item$W(x)$ is the diagonal matrix with diagonal entries:
\[
W_{ii}(x) = \sum A_{ij}x_j
\]
\item$\DIAG(x)$ represents the diagonal matrix with the $x_i$ on its diagonal.
\end{itemize}

$J$ can be interpreted 
as a weighted version of the Laplacian of $G$, which is defined by L$(G) = D - A$, 
where $D=\DIAG(A\ONE)$ is the diagonal matrix which has the degrees of the nodes on its diagonal. 

Evaluating the Jacobian $J$ on the all-ones vector, we get: 
\[
J(\ONE) = H(\ONE) \times L(G)
\]
with
\[
H_{ii}(\ONE) = h\prime \left(\frac{1}{\DEG(i)+1}\right) / (\DEG(i)+1)^2
\]
hence $J(\ONE)$ is the Laplacian whose lines are weighted by the $H_{ii}(\ONE)$ which are only
functions of the degrees of the nodes.

On an arbitrary point $x$, $W_{ii}(x)$ is the sum of the weights of the neighbors of $i$, 
hence can be thought of as the weighted degree of the node $i$. 
The same, the off-diagonal terms of the line $i$ of $J$ are weighted by the weight $x_i$.
This intriguing relationship between the Jacobian of the graph normalization transform 
and the Laplacian of the graph certainly deserves more investigation.

Now, for maximal independent sets, the spectrum of the Jacobian can be found explicitely:
\begin{restatable}{theorem}{thmisspectralradius}\label{ref_thmisspectralradius}
If $x$ is a maximal independent set of $G$ then the spectrum of the 
Jacobian $J(x)$ of $\Rh$ is 
\begin{equation}
\Lambda(J(x)) = \{0\} \cup \left\{ \frac{h\prime(0)}{\sum_j A_{ij} x_j} ; i\notin \SUPP(x) \right\}
\end{equation}
and its spectral radius is
\begin{equation}
\rho(J(x)) =  \frac{h\prime(0)}{dens_{G}(x)}.
\end{equation}
\end{restatable}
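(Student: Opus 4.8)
The plan is to evaluate the factored Jacobian $J(x) = H(x)\,(W(x)-\DIAG(x)A)$ derived above directly at the binary indicator vector $x = ind(S)$, where $S=\SUPP(x)$ is the maximal independent set, and then to read its spectrum off a block-triangular structure. Throughout I write $B=A+I$ and I partition $V$ into $S$ and its complement $V\setminus S$, ordering the nodes of $S$ first. The whole argument rests on two facts about $S$: \emph{independence} forces $A_{ij}=0$ for $i,j\in S$, and \emph{maximality} forces $\sum_j A_{ij}x_j = |\N{i}\cap S|\ge 1 > 0$ for every $i\notin S$, which in particular guarantees that the quantities below are well defined.

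First I would evaluate the three diagonal factors at $x=ind(S)$. For $i\in S$ we have $x_i=1$ and, by independence, $\sum_j A_{ij}x_j=0$, so $\sum_j B_{ij}x_j=1$; hence $W_{ii}(x)=0$ and $H_{ii}(x)=h\prime(1)/1^2=h\prime(1)$. For $i\notin S$ we have $x_i=0$ and $\sum_j B_{ij}x_j=\sum_j A_{ij}x_j=|\N{i}\cap S|$, so $\R_i(x)=0$, $W_{ii}(x)=|\N{i}\cap S|$, and $H_{ii}(x)=h\prime(0)/|\N{i}\cap S|^2$. Next I examine $M:=W(x)-\DIAG(x)A$, whose entries are $M_{ij}=W_{ii}(x)\,\delta_{ij}-x_i A_{ij}$. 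For a row $i\notin S$ the factor $x_i=0$ annihilates the entire off-diagonal, leaving only the diagonal entry $|\N{i}\cap S|$; for a row $i\in S$ the diagonal is $0$ and the off-diagonal entries are $-A_{ij}$, but those pointing back into $S$ vanish by independence. Thus, with the $S$-first ordering, $M$ is block upper triangular with a zero top-left block and a diagonal bottom-right block $D_{V\setminus S}=\DIAG\big(|\N{i}\cap S|\big)_{i\notin S}$.

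Since $H(x)$ is diagonal, hence block diagonal conformally with this partition, left-multiplication preserves the block-triangular shape, giving
\[
J(x)=\begin{pmatrix} 0 & -H_S\,A_{S,V\setminus S} \\ 0 & H_{V\setminus S}\,D_{V\setminus S}\end{pmatrix}.
\]
The spectrum is therefore the union of the spectra of the two diagonal blocks. The top-left block contributes the eigenvalue $0$ with multiplicity $|S|$ (note that $h\prime(1)$ drops out entirely, as expected from the theorem statement), while the diagonal bottom-right block contributes, for each $i\notin S$, the entry $H_{ii}(x)\,|\N{i}\cap S| = \frac{h\prime(0)}{|\N{i}\cap S|^2}\,|\N{i}\cap S| = \frac{h\prime(0)}{\sum_j A_{ij}x_j}$. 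This is exactly the claimed set $\Lambda(J(x))$.

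Finally, for the spectral radius I would note that $h\prime(0)\ge 0$ (monotonicity of $h$, which follows from its convexity on $[0,1/2]$ together with $h(0)=0$) and that every denominator $\sum_j A_{ij}x_j$ is strictly positive by maximality; hence the nonzero eigenvalues are nonnegative and the largest one corresponds to the smallest denominator. By Definition \ref{def_density}, $\min_{i\notin S}\sum_j A_{ij}x_j = \DEN{S}=dens_G(x)$, which yields $\rho(J(x))=h\prime(0)/dens_G(x)$. The only genuinely delicate point is establishing the block-triangular structure cleanly, i.e. verifying simultaneously that the rows in $V\setminus S$ lose their off-diagonal part and that the $S\times S$ block vanishes; once independence and maximality are invoked for these two cancellations, the remainder is bookkeeping on products of diagonal matrices.
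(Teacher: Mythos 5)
Your proposal is correct, and the Jacobian you end up with at $x=ind(S)$ agrees entry-for-entry with the one the paper computes (zero diagonal and $-h\prime(1)A_{ij}$ off-diagonal on rows of $S$; the single diagonal entry $h\prime(0)/\sum_j A_{ij}x_j$ on rows outside $S$). Where you genuinely diverge is in how the spectrum is extracted. The paper solves $Jv=\lambda v$ by hand, splitting into the cases where $v$ vanishes on $V\setminus S$ or not, and has to reason somewhat informally about which index subsets $K\subset\bar S$ can support an eigenvector with a consistent eigenvalue. You instead order $S$ first and observe that $W(x)-\DIAG(x)A$ is block upper triangular with a zero $S\times S$ block (by independence) and a diagonal block on $V\setminus S$ (because $x_i=0$ kills the off-diagonal there), so that left-multiplying by the diagonal matrix $H(x)$ preserves the shape and the spectrum is read off the diagonal blocks. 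This is cleaner, avoids the paper's case analysis on eigenvector supports, and gives the algebraic multiplicities for free (the eigenvalue $0$ with multiplicity $|S|$, and each value $h\prime(0)/|\N{i}\cap S|$ with multiplicity equal to the number of $i\notin S$ realizing it), which the paper's set-valued statement glosses over. The passage to the spectral radius via $\min_{i\notin S}\sum_j A_{ij}x_j=\DEN{S}$ is the same in both arguments.

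One small imprecision: you justify $h\prime(0)\ge 0$ by ``convexity on $[0,1/2]$ together with $h(0)=0$,'' but a convex function with $h(0)=0$ can still have negative derivative at $0$. The correct (and immediate) justification is that $h$ maps $[0,1]$ into $[0,1]$ with $h(0)=0$, so $h\prime(0)=\lim_{t\to 0^+}h(t)/t\ge 0$. This does not affect the validity of the rest of the argument.
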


\begin{corollary}\label{ref_coro_attraction}
\begin{enumerate}
\item A maximal independent set $x$ such that $h\prime(0) < dens_{G}(x)$ is an attractive fixed point of $\Rh$.
\item In particular, if $h$ is contractive at $0$ ($h\prime(0) < 1$) then all the maximal independent sets of any graph are attractive.
\item Furthermore, if $h\prime(0) = 0$ then $\forall G, \forall x\in \MIS(G)$, the iterations of $\Rh$ converge quadratically to $x$ around $x$.
\end{enumerate} 
\end{corollary}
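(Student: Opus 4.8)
The plan is to derive all three items from the spectral description in Theorem~\ref{ref_thmisspectralradius} together with the classical linearized-stability criterion for discrete dynamical systems, namely that a fixed point $x$ of a $C^1$ map is attractive (locally asymptotically stable) as soon as the spectral radius of its Jacobian is strictly less than $1$. For item~1 I would invoke this criterion directly. First I would note that $\Rh$ is $C^1$ in a neighbourhood of a maximal independent set $x$: since $x$ is maximal we have $\DEN{x}\ge 1>0$, so the denominators $x_i+\sum_j A_{ij}x_j$ stay bounded away from $0$ near $x$ and the element-wise division defining $\R$, hence $\Rh$, is smooth there. Theorem~\ref{ref_thmisspectralradius} then gives $\rho(J(x))=h'(0)/\DEN{x}$, and the hypothesis $h'(0)<\DEN{x}$ is exactly $\rho(J(x))<1$, so attractivity follows. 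The eigenvalue $0$ appearing in the spectrum (coming from the scale invariance of $\R$ and from the support directions) is harmless, since $0<1$; radial perturbations in fact collapse to $x$ in a single step because $\R(\alpha x)=\R(x)$.

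For item~2 the only extra ingredient is the density lower bound. By Definition~\ref{def_density}, $\DEN{x}=\min_{i\notin \SUPP(x)}|\N{i}\cap \SUPP(x)|$, and maximality of $x$ means precisely that every node outside $x$ is adjacent to at least one node of $x$, so $\DEN{x}\ge 1$. Consequently, whenever $h$ is contractive at $0$, one has $\rho(J(x))=h'(0)/\DEN{x}\le h'(0)<1$ uniformly over all graphs and all their maximal independent sets, and item~1 applies.

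Item~3 is where care is needed, and I expect it to be the main obstacle. When $h'(0)=0$ the whole spectrum of Theorem~\ref{ref_thmisspectralradius} collapses to $\{0\}$, so $J(x)$ is nilpotent; but it is not the zero matrix in general, so quadratic convergence does \emph{not} follow from the textbook statement ``$J(x)=0\Rightarrow$ second-order convergence''. The plan is to read off the precise block structure of $J(x)$ from the formula $J(x)=H(x)\,(W(x)-\DIAG(x)A)$. Ordering the coordinates as $\SUPP(x)$ followed by its complement, one checks that $J(x)$ is strictly block upper-triangular: the support block vanishes because $W_{ii}(x)=\sum_j A_{ij}x_j=0$ for $i\in\SUPP(x)$ (independence), the lower-left block vanishes because $x_i=0$ for $i\notin\SUPP(x)$ kills $\DIAG(x)A$ there, and the lower-right block is the diagonal matrix with entries $h'(0)/\sum_j A_{ij}x_j=0$. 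Hence $J(x)^2=0$, but the upper-right block $-h'(1)A$ need not vanish.

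To upgrade $J(x)^2=0$ to genuine quadratic convergence I would split the perturbation into its support part $u$ (coordinates near $1$) and its complement part $v$ (coordinates near $0$) and expand $\Rh$ to second order around $x$. For $i\notin\SUPP(x)$ the normalized value $\R_i(x)$ is $O(\norm{v})$ and, because $h(0)=h'(0)=0$, the activation annihilates its linear part, giving $v^{\mathrm{new}}=O(\norm{v}^2)$; for $i\in\SUPP(x)$ one finds $u^{\mathrm{new}}=O(\norm{v})$, linear in $v$ with no first-order dependence on $u$ (this is exactly the nonzero upper-triangular block, and it also shows $u^{\mathrm{new}}$ vanishes when $v=0$). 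Feeding these two estimates into successive iterations yields $\norm{v_{k+1}}=O(\norm{v_k}^2)$ and $\norm{u_{k+1}}=O(\norm{v_k})$, from which a short induction gives $\norm{e_{k+1}}=O(\norm{e_k}^2)$ for the full error $e_k$. The delicate point, and the crux of the whole argument, is precisely this coupling: one must show that the linear feedback from the small complement weights into the support weights does not destroy the overall second-order rate, which it does not because that feedback is itself fed by a quadratically small quantity one iteration later.
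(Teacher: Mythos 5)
Your items 1 and 2 follow exactly the route the paper intends: the corollary is stated without a separate proof precisely because, once Theorem~\ref{ref_thmisspectralradius} gives $\rho(J(x))=h'(0)/\DEN{x}$, attractivity is the standard linearization criterion, and maximality forces $\DEN{x}\ge 1$ so that $h'(0)<1$ suffices uniformly. Where you genuinely add something is item~3. The paper implicitly reads ``$h'(0)=0\Rightarrow\rho(J(x))=0\Rightarrow$ quadratic convergence,'' but as you correctly point out this is too quick: $J(x)$ is only \emph{nilpotent}, not zero, because the block $-h'(1)A$ coupling the support rows to the complement columns survives. Your two-block expansion ($v^{k+1}=O(\norm{v^k}^2)$ off the support, $u^{k+1}=O(\norm{v^k})$ on it) repairs this and shows that the full error is dominated by $\norm{v^k}$ from the first iteration on, whence $\norm{e^{k+1}}=O(\norm{e^k}^2)$ for $k\ge 1$; note that the very first step need not be quadratic (take $u^0=0$, $v^0\ne 0$), so the induction should start at $k=1$, which still establishes quadratic convergence in the usual asymptotic sense. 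Two small caveats worth recording: the argument needs $h(t)=O(t^2)$ near $0$, not merely $h'(0)=0$ (a power activation $p_{a,0}$ with $1<a<2$ has $h'(0)=0$ yet only gives order-$a$ convergence), and the parenthetical attributing part of the $0$-eigenspace to scale invariance is inessential --- the theorem's proof exhibits that eigenspace as the span of the support directions $ind(\{i\})$, $i\in\SUPP(x)$.
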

Let's look back at the special case of $P_3$ and see how this theorem explain its normalization 
dynamics.
Remark that $(1,0,1)$ has a density of $2$ and $(0,1,0)$ has a density of $1$.
Hence $\rho(J((1,0,1)) = h\prime(0)/2$ and $\rho(J((0,1,0)) = h\prime(0)$.
With a linear activation $h\prime(0) = 1$, hence $(1,0,1)$ is an attractive fixed point 
and $(0,1,0)$ is just at the boundary between attractivity and repulsivity (it is neutral or Lyapunov stable). 
In practice, we found it repulsive, but it becomes attractive as soon as $h\prime(0)<1$, 
as illustrated in figure \ref{fig_P3_orbits}b)-f).

What about the attractivity of the non maximal independent sets of $G$, 
which are $(1,0,0)$ and $(0,0,1)$ for $P_3$?
Remember that they are not normalizable as they have 
a density of $0$. However, they can still be 
attractors of the normalization dynamics.

The following theorem gives a sufficient condition on $h$ so that they are repulsive.
\begin{restatable}{theorem}{thnmisrepulsive}\label{ref_thnmisrepulsive}
If $\forall y\in[0,1] : h\prime(y)\ne0$ then any non maximal independent set $x$ of $G$ 
is a repulsive point for $\R_h$. 
\end{restatable}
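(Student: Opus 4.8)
The plan is to argue directly rather than through the Jacobian. Since $S$ is non-maximal it has null density, so by the density characterisation of normalizability established above, $x = ind(S)$ is \emph{not} normalizable: $\R$ and $\Rh$ are undefined at $x$ itself and there is no Jacobian at $x$ to analyse. Instead I would study the one-step map $\Rh$ on a small neighbourhood of $x$ inside the domain and exhibit, arbitrarily close to $x$, points whose image is pushed a fixed distance away.

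First I would record the structure forced by non-maximality. Partition $V = S \sqcup B \sqcup F$, where $S = \SUPP(x)$, $B$ is the set of nodes outside $S$ having at least one neighbour in $S$, and $F$ is the set of nodes outside $S$ with \emph{no} neighbour in $S$. Because $S$ is independent but not maximal, $F \ne \emptyset$; fix a free node $i_0 \in F$. Its defining property is that every member of $\N{i_0}$ lies in $B \cup F$, i.e. outside $S$. For points $y$ near $x$ this means $y_{i_0}$ and $\{y_j : j \in \N{i_0}\}$ are all small, so the numerator and denominator of ${\R}_{i_0}(y)$ are of the same (small) order --- in sharp contrast to a blocked node, whose denominator contains a weight-$\approx 1$ neighbour in $S$.

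The core of the proof is an exact computation along a ``balanced'' family. For small $\epsilon > 0$ let $y_\epsilon$ be defined by $(y_\epsilon)_i = 1$ for $i \in S$ and $(y_\epsilon)_i = \epsilon$ for $i \notin S$; this has full support, hence is normalizable, and $y_\epsilon \to x$ as $\epsilon \to 0$. Since all neighbours of $i_0$ are outside $S$ (so they carry weight $\epsilon$), the $\epsilon$'s cancel and ${\R}_{i_0}(y_\epsilon) = \epsilon/\big(\epsilon(1+\DEG(i_0))\big) = 1/(1+\DEG(i_0))$, independently of $\epsilon$. Hence ${\Rh}_{i_0}(y_\epsilon) = h\!\left(1/(1+\DEG(i_0))\right) =: c$. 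Here the hypothesis enters: $h'\ne 0$ on $[0,1]$ together with $h(0)=0$, $h(1)=1$ forces $h'$ to keep a constant sign integrating to $1>0$, so $h$ is strictly increasing and $c = h(1/(1+\DEG(i_0))) > 0$. Since $x_{i_0} = 0$ while ${\Rh}_{i_0}(y_\epsilon) = c$ for \emph{every} $\epsilon$, we get $\norm{\Rh(y_\epsilon) - x} \ge c$ while $\norm{y_\epsilon - x} \to 0$. Thus no neighbourhood of $x$ of radius $< c$ is forward-invariant: points arbitrarily close to $x$ are ejected by a fixed amount in a single step, which is exactly repulsivity (failure of stability).

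Finally I would upgrade this to the statement that every orbit exciting a free node actually leaves a neighbourhood of $x$ --- the genuine repelling behaviour, as opposed to the stable directions obtained by keeping all free weights at $0$, on which $S$ is maximal in the induced subgraph and hence attractive. The idea is to track the free weights relative to the $S$-weights: blocked weights contract, since their denominator contains a weight-$\approx 1$ $S$-neighbour and $h(t)<t$ for small $t>0$ (strict convexity near the origin with $h(0)=0$), whereas a free weight, whose denominator only sees small weights, is amplified, the first-order amplification factor being governed by $h'(0)>0$ and $h'>0$ throughout $[0,1]$ keeping it strictly positive as the free weight grows through intermediate values. I expect the main obstacle to be precisely this uniform, multi-step control: making the free-versus-blocked competition quantitative when a free node starts with weight far smaller than a neighbouring blocked node, and assembling the per-node estimates into a monotone quantity guaranteed to increase along the whole orbit until it exits the neighbourhood. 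The one-step balanced witness already settles repulsivity in the sense of instability; this last step is what makes the ``$h'\ne 0$ everywhere'' hypothesis (rather than just $h'(0)\ne 0$) do real work.
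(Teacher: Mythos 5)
Your one-step computation is correct and nicely explicit: the balanced family $y_\epsilon$ (weight $1$ on $S$, weight $\epsilon$ elsewhere) does give ${\R}_{i_0}(y_\epsilon)=1/(1+\DEG(i_0))$ at a free node $i_0$, and such a node exists exactly because $S$ is not maximal --- the paper's own proof isolates the same node $i$ with $x_i=0$ and $x_j=0$ for all $j\in\N{i}$. The problem is that this step never uses the hypothesis $\forall y: h'(y)\ne 0$. The constant $c=h\left(1/(1+\DEG(i_0))\right)$ is strictly positive for \emph{every} admissible activation, since a nonnegative function with $h(0)=0$ that is strictly convex near $0$ cannot vanish again on $(0,1/2]$. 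So what you have actually proved is Lyapunov instability of $x$ for all activations, hypothesis or not. That cannot be the content of the theorem: the paper observes experimentally that when $h'(0)=0$ the iterations \emph{do} converge to non-maximal independent sets, and suggests the condition of the theorem is necessary. Ejecting one special curve of initial points is entirely compatible with $x$ attracting a large set of orbits --- namely those along which the free weight decays faster than the weights of its neighbours, which is precisely the regime that $h'(0)=0$ enables, since then $h$ crushes the small ratio $y_{i_0}/\sum_{j\in\N{i_0}}y_j$ further at each step. The substance of ``repulsive'' here is that no nontrivial orbit can converge to $x$, and that is exactly the multi-step control you defer to your final paragraph and explicitly leave open. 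So the proposal has a genuine gap: the part you prove does not need the hypothesis, and the part that needs the hypothesis is not proved.

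For comparison, the paper argues through the Jacobian at nearby points $x+\epsilon$ (correctly sidestepping, as you note, the fact that $x$ itself is not normalizable): it bounds the diagonal entry at the free node below by $\min_{[0,1]}h'\,/\sum_j A_{i_0j}\epsilon_j$, which diverges as $\epsilon\to 0$, so the trace and hence the spectral radius of $J(x+\epsilon)$ blow up. There the hypothesis does real work --- if $\min h'=0$ the bound is vacuous, because the factor $h'({\R}_{i_0}(x+\epsilon))$ can cancel the divergence precisely when $\epsilon_{i_0}$ is small relative to its neighbours' perturbations. That argument gives expansion at every point near $x$ rather than along a single curve, which is much closer to (though, to be fair, still short of a fully rigorous proof of) the statement that orbits cannot converge to $x$. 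If you want to salvage your direct route, the quantity to control along an orbit is the ratio $y^k_{i_0}/\sum_{j\in\N{i_0}}y^k_j$, and the claim to establish is that it is bounded away from $0$ whenever $h'$ is bounded away from $0$ on $[0,1]$; that is where the hypothesis would finally enter your argument, and without it your proof establishes a different, strictly weaker statement.
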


Unfortunately, this condition on the repulsivity of the non maximal independent sets  
doesn't allow simultaneously quadratic convergence on the maximal ones.
One must thus make a compromise between convergence speed and quality of the 
solution found, as a non maximal solution is obviously sub-optimal.
This will be detailed in the experimental section below.

\subsubsection{Basins of attraction}\label{sec_basins}
Beyond the local attractivity of the maximal independent sets of a graph, 
it is interesting to look at the spatial extension of the basin of attraction of a MIS.

It is then useful to consider graph normalization as a coupled system of two sequences of vectors: 
the graph weights and the sum of the weights on neighborhoods, and to decompose the iterations as:
\begin{eqnarray*}
x^0 &=& x \\
r^k &=& Ax^k \\
x^{k+1} &=& \frac{x^k}{x^k + r^k}
\end{eqnarray*}

For a node $i$, $r^k_i$ is the sum of the weights of its neighbors at iteration $k$.
We call $r^k_i$ the \emph{complementary weight} of the node $i$ 
and $r^k$ the vector of complementary weights of the graph.

Let us first consider a node of weight $x$ and complementary weight $r$, 
and assume that \emph{$r$ is fixed}, 
i.e. assume that the weights of all the other nodes 
of the graph are \emph{not} updated during the iteration.

$x$ then evolves according to:
\begin{eqnarray}
x^0 &=& x \\
x^{k+1} &=& \frac{x^k}{x^k + r}
\end{eqnarray}

The function $f_r(x) = x/(x+r)$ has 2 fixed points $x=0$ and $x=1-r$.
Indeed:
\[
x = \frac{x}{x+r} \quad\Leftrightarrow\quad x(x+r-1) = 0
\]

The figure \ref{fig_f_r} shows the graphs of $f_1$ and $f_{0.3}$.

\begin{figure}[!ht]\center\begin{tabular}{cc}
    \includegraphics[width=0.4\linewidth]{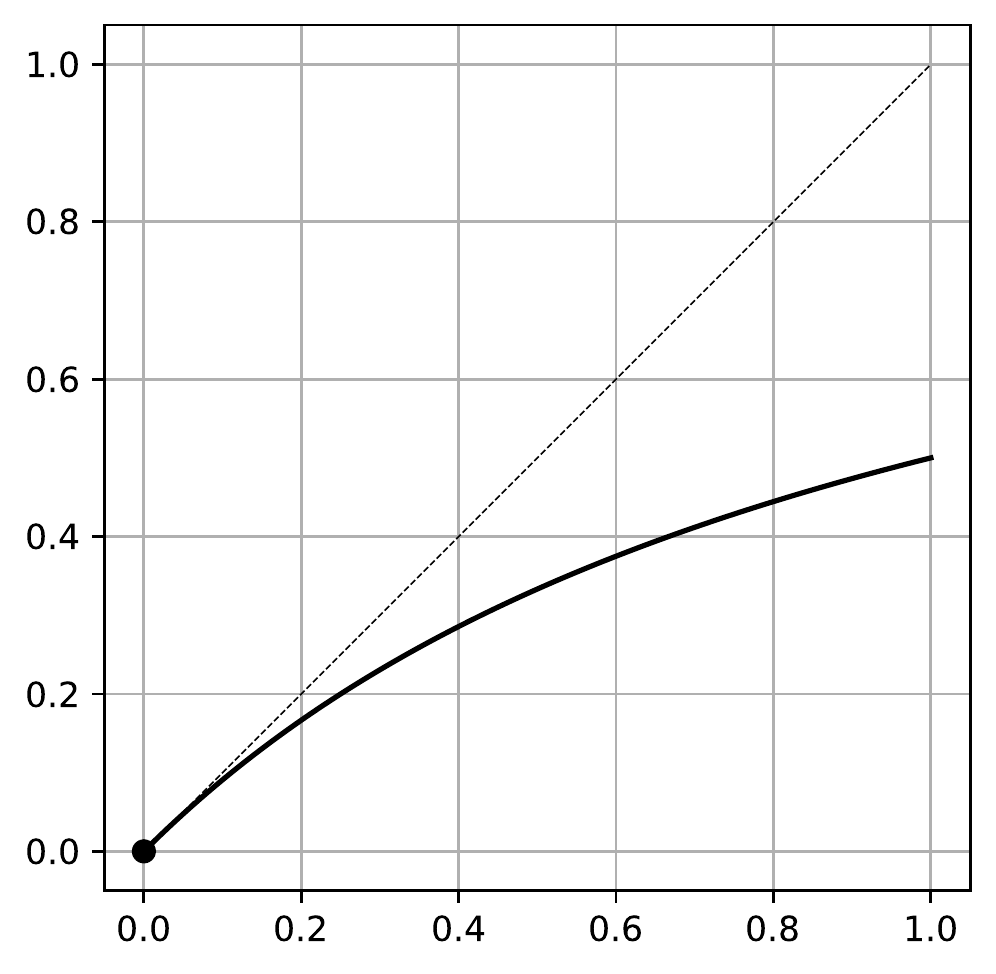}&
    \includegraphics[width=0.4\linewidth]{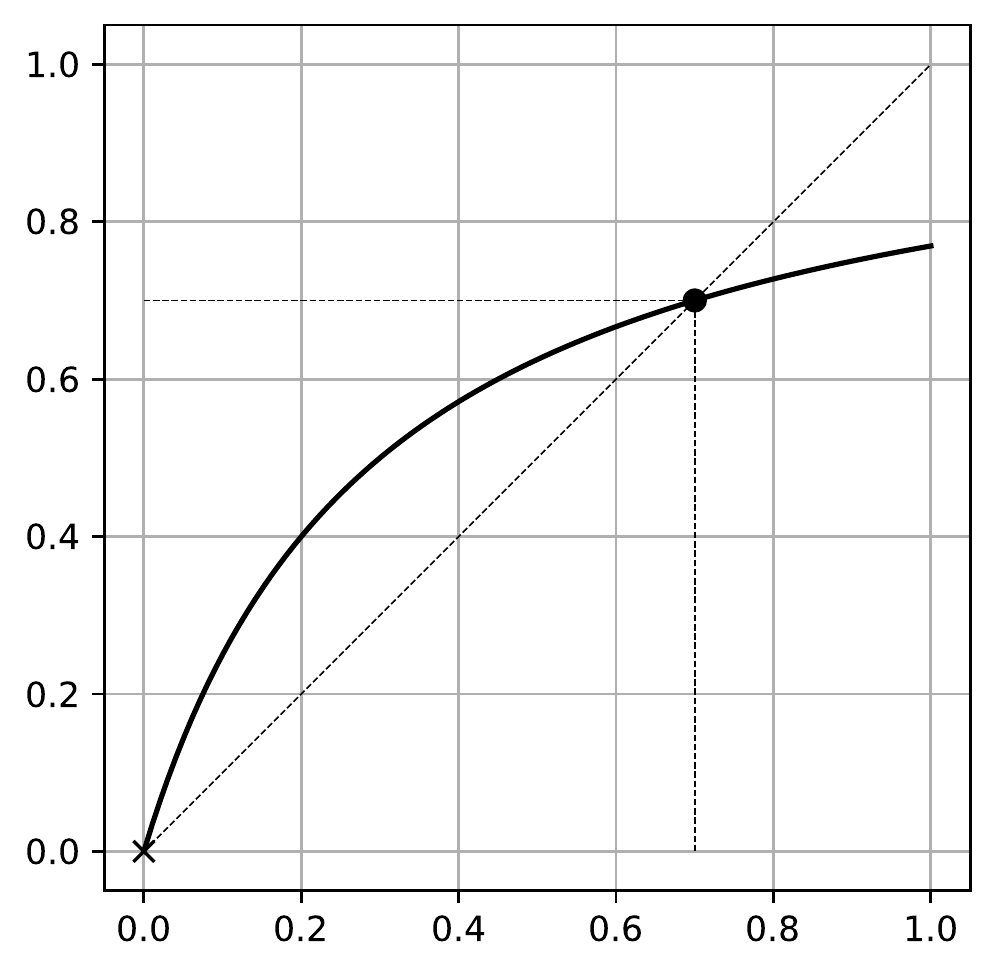}\\
    (a) $f_1$ & (b) $f_{0.3}$
\end{tabular}
    \caption{Graphs of the function $f_r$ for $r=1$ and $r=0.3$.}
    \label{fig_f_r}
\end{figure}

\begin{itemize}
\item If $r\ge 1$ then $0$ is the only fixed point of $f_r$. It is attractive. $x^k$ monotonically decreases to $0$.
\item If $r<1$ then $1-r$ is the only attractive fixed point. If $x<1-r$ then $x^k$ monotonically increases to $1-r$, otherwise it monotonically decreases to $1-r$.
\end{itemize}

For a general graph with coupled nodes dynamics through the $r_i$, one sees that there are only two 
possible combinations of couples $(x_i, r_i)$ at a node when $x$ is a fixed point:
either $x_i = 0$ and then $r_i$ can take any value, or $x_i>0$ and then $r_i = 1-x_i$.

Let us introduce the vector $l$ whose components are defined by:
\begin{equation}
l^k_i = \max \{ 0, 1-r^k_i \} \label{eqn_lik}
\end{equation}

At each iteration, $l^k_i$ is the limit to which $x_i$ would converge if its complementary weight was fixed.
Hence, as normalization is done in parallel, $x^{k+1}_i$ makes a step from $x^{k}_i$ towards $l_i^k$, i.e.
\[
\exists \alpha \in (0,1[ \quad x_i^{k+1} = (1-\alpha) x_i^{k} + \alpha l_i^{k}
\]

Intuitively, the equation \ref{eqn_lik} shows that a weight $x_i$ competes with its neighbors. 
If the sum $r_i^k$ of the weights of the neighbors of a node $i$ is large then $l_i^k$ is small and conversely, 
and the same symmetrically for the neighbors of $i$.

Now, there are configurations of weights such that these dynamics are collaborating instead of competing,
in the sense that all the $x_i$ and $r_i$ evolve monotonically in a coherent direction and thus converge. 
These configurations correspond to fast convergence regions within 
the basins of attraction of attractive fixed points.

The following theorem characterizes such a region around the
maximal independent sets of density at least $2$ of a graph:

\begin{restatable}{theorem}{thmisbasin}\label{ref_thmisbasin}
For any graph $G$ and any $S\in\MIS(G)$ such that $\DEN{S} \ge 2 $,
if $x=\R(y)$ for some (normalizable) $y$ verifies
\begin{eqnarray}
\forall i \in S & : & x_i > 1/2 \label{eqn_mis_basin_1} \\
\forall i \notin S & : & x_i < 1/(2 d_S) \label{eqn_mis_basin_2}
\end{eqnarray}
where
\[d_S = \max_{i \in S} \DEG(i)\]
then 
$\R^k(x)$  converges monotonically to $ind(S)$ (componentwise).
\end{restatable}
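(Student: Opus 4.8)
The plan is to analyze the orbit through the coupled form introduced for basins of attraction: writing $r^k = A x^k$ for the vector of complementary weights, each update reads $x^{k+1}_i = x^k_i/(x^k_i + r^k_i)$, which is the convex step from $x^k_i$ toward the node target $l^k_i = \max\{0, 1 - r^k_i\}$. I would run a simultaneous induction on the two envelopes $m^k = \min_{i\in S} x^k_i$ and $M^k = \max_{i\notin S} x^k_i$, carrying as invariant the fact that conditions \eqref{eqn_mis_basin_1}--\eqref{eqn_mis_basin_2} persist, i.e.\ $m^k > 1/2$ and $M^k < 1/(2 d_S)$. Preservation of these two bounds is the easy part and needs neither the image hypothesis nor monotonicity: for $i\in S$ all neighbours lie outside $S$, so $r^k_i = \sum_{j\in\N{i}} x^k_j < \DEG(i)/(2 d_S) \le 1/2$, whence $x^{k+1}_i > (1/2)/(1/2+1/2) = 1/2$; and for $j\notin S$ maximality with $\DEN{S}\ge 2$ forces $|\N{j}\cap S|\ge 2$, so $r^k_j \ge 2 m^k > 1$ and $x^{k+1}_j < x^k_j < 1/(2 d_S)$.

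This last computation already delivers monotonicity for the outside weights: since $r^k_j \ge 2 m^k > 1$ for $j\notin S$, we get $x^{k+1}_j \le x^k_j/(2 m^k)$, so these weights decrease monotonically, and once monotonicity of the $S$-weights gives $m^k \ge m^1 > 1/2$ the decrease is geometric, so $M^k \to 0$. The real content, and the step I expect to be the main obstacle, is the reverse monotonicity $x^{k+1}_i \ge x^k_i$ for $i\in S$, which is exactly the ``no-overshoot'' inequality $x^k_i + r^k_i \le 1$. This cannot follow from the envelope bounds alone -- with $x^k_i$ near $1$ and $r^k_i$ near $1/2$ the sum can exceed $1$ -- so it must use that every iterate is a normalization image.

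I would isolate the heart of the argument as a lemma: if $z$ satisfies the two hypotheses and $w=\R(z)$, then $w_i + \sum_{j\in\N{i}} w_j \le 1$ for every $i\in S$. Set $r^z_i = \sum_{j\in\N{i}} z_j < 1/2$. For each $j\in\N{i}$, density at least $2$ supplies a second neighbour $i'\in S$ of $j$, so the complementary weight of $z$ at $j$ satisfies $\sum_{l\in\N{j}} z_l \ge z_i + z_{i'}$; and because $z_{i'} > 1/2 > r^z_i \ge \sum_{j'\in\N{i}\setminus\{j\}} z_{j'}$ we obtain $z_i + z_{i'} \ge (z_i + r^z_i) - z_j$, hence $w_j = z_j/(z_j + \sum_{l\in\N{j}} z_l) \le z_j/(z_i + r^z_i)$. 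Summing over $j\in\N{i}$ gives $\sum_{j\in\N{i}} w_j \le r^z_i/(z_i + r^z_i) = 1 - w_i$, which is the claim. Applying this with $z = x^k$ (legitimate because $x^k$ satisfies the hypotheses by the envelope induction) shows $x^{k+1}_i + r^{k+1}_i \le 1$, hence $x^{k+2}_i \ge x^{k+1}_i$, launching the monotone increase of every $S$-weight; note that this is precisely where $\DEN{S}\ge 2$ and the threshold $1/2$ are both consumed.

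The subtle point I would flag explicitly is the very first step: the lemma proves no-overshoot for $x^{k+1}=\R(x^k)$, but the hypotheses are assumed only for $x=x^0=\R(y)$, not for the underlying $y$, so deducing $x^1_i \ge x^0_i$ (rather than merely $x^2_i \ge x^1_i$) is exactly the place where the representation $x^0=\R(y)$ -- as opposed to a free point meeting the bounds -- must be exploited, and where the density/threshold interplay is tightest; I expect the bulk of the technical care to sit here. Once both families of components are monotone, boundedness in $[0,1]$ gives convergence of $\R^k(x)$, the outside coordinates tending to $0$ and, via $r^k_i \le d_S M^k \to 0$ together with $x^k_i > 1/2$, the $S$-coordinates tending to $1$; the limit is therefore a positive-support fixed point supported exactly on $S$, which by Proposition \ref{refthbinaryfparemis} is $ind(S)$.
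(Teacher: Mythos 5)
Your plan reproduces the paper's proof exactly in what you call the ``easy part,'' and the paper's proof is \emph{only} that part: it checks that the bounds \eqref{eqn_mis_basin_1}--\eqref{eqn_mis_basin_2} persist (for $i\in S$ the target $l^k_i=1-r^k_i>1/2$ keeps the convex step above $1/2$; for $j\notin S$ the density hypothesis gives $r^k_j>1$, hence $l^k_j=0$ and strict decrease), deduces that the outside weights decrease to $0$, hence that $r^k_i\to 0$ monotonically for $i\in S$ and $x^k_i\to 1$, and stops. In particular the published proof never shows that the $S$-components are themselves monotone increasing, and it never uses the hypothesis $x=\R(y)$ at all. Your diagnosis of that omission is correct --- the envelope bounds are compatible with $x^k_i+r^k_i>1$ --- and your no-overshoot lemma checks out: writing $r_i=\sum_{j\in\N{i}}z_j<1/2$, a second $S$-neighbour $i'$ of $j\in\N{i}$ gives $z_{i'}>1/2>r_i\ge r_i-z_j$, hence $z_j+\sum_{l\in\N{j}}z_l\ge z_j+z_i+z_{i'}\ge z_i+r_i$, so $w_j\le z_j/(z_i+r_i)$ and summing over $\N{i}$ yields $\sum_{j\in\N{i}}w_j\le r_i/(z_i+r_i)=1-w_i$. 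This is a genuine addition that closes a real gap in the paper's own argument from the second iterate onward. The one step you flag as open, $x^1_i\ge x^0_i$ for $i\in S$, is equally open in the paper (whose argument would apply verbatim to any point satisfying the bounds, image of $\R$ or not), so on the ``monotonically'' half of the claim your proposal is strictly more complete than the published proof; everything else --- persistence of the invariant, decrease outside $S$, convergence to $ind(S)$ --- follows the paper's route.
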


To illustrate this result, let us consider the MIS $(1,0,1)$ on $P_3$.
It has a density of $2$ and $d_{(1,0,1)}=1$, 
hence any point $(u,v,w)$ on the Taco such that $u>1/2$, $v<1/2$ and $w>1/2$ is 
in its basin of attraction. Whenever a point enters this region then $u$ and $w$ monotonically converge to $1$, and $v$ monotonically converges to $0$.

\section{Approximation of the maximum weight independent set problem}\label{sec_kako}
Assuming that iterating $\R_h$ always converges to the indicator vector 
of a maximal independent set of a graph, 
the question is then how well this dynamical scheme approximates {\MWIS}.

{\IGN} is closely related to a greedy approximation algorithm of {\MWIS} by Kako et al. \cite{kako2005approximation}.
Given a weighted graph $G=(A,x)$, the authors define the \emph{weighted degree} of a node $i$ by
\[
d_x(i)=\frac {\sum_{j\in \N{i}} x_j} {x_i}=\frac {\sum A_{ij}x_j} {x_i}
\]

They then greedily build a maximal independent set $S$ of $G$ by iteratively adding to $S$ 
the node of minimum weighted degree, removing this node from the graph as well as all its neighbours, 
and repeating until the graph is empty. They call this algorithm {\KAKO}.
Kako et al.'s algorithm of 2005 is a relatively recent extension to weigthed graphs 
of Hochbaum's algorithm of 1983 for the maximum independent set problem \cite{hochbaum1983efficient}, 

One sees that the 'weighted degree' defined by Kako et al. is actually a \emph{relative} weighted degree,
and it corresponds up to a constant to the inverse of the weights obtained after one iteration of normalization of the graph:
\[
(\R_i(x))^{-1} = 1 + d_x(i) 
\]

Obviously, ranking the nodes at each iteration according to 
$\frac{x_i + \sum A_{ij}x_j}{x_i} = (\R_i(x))^{-1}$ in \KAKO would give 
the same solution.
Iterative graph normalization can be thus thought of as a soft version of $\KAKO$, 
in which instead of picking the vertex of maximum relative weighted degree at each step, 
the vertices are reweighted according to the inverse 
of their relative weighted degree, hence pushing up the nodes which have a small relative weighted degree.

Kako et al. have shown that their algorithm achieves an approximation ratio  
related to the weighted inductiveness of the graph and that this bound is tight.
The weighted inductiveness of a graph is however difficult to compute in practice 
as it involves enumerating all the subgraphs of the graph.
We refer the reader to \cite{kako2005approximation} for details.

We compared the solutions found by {\IGN} and {\KAKO} on binomial random graphs $G_{n,p}$, where 
$n$ is the number of nodes and $p$ the probability of edge creation.
The weights are drawn uniformly in $(0,1]$.
Let $w(S_{\IGN})$ and $w(S_{\KAKO})$ be the weights of 
the solutions found by resp. {\IGN} and {\KAKO} on a given graph.
We compute the gap between {\IGN} and {\KAKO} by:

\begin{equation}
g = \frac{ w(S_{\IGN}) - w(S_{\KAKO})}{w(S_{\KAKO})} \label{eqn_gap}
\end{equation}

For each experiment, we sample $10 000$ random graphs.

One has to be careful on the convergence criterion of $\IGN$, 
because in some conditions, the speed $\max_i | x_i^{k+1} - x_i^k |$ can be very slow.
If we stop based on a threshold $\epsilon$ on this speed, 
we can end up with a solution which did not converge
hence is not close to a binary solution or is not maximal.
We thus add two other conditions, 
which are $\max_i \min \{ x_i, 1 - x_i \} \le \alpha$ and 
$\min_i x_i \ge 1-\alpha$, for a small $\alpha$.
The first condition expresses that $x$ must be close to binary and 
the second that it must have at least one component close to $1$.
The second condition is needed for dense large graphs for which the first iteration of 
renormalization can bring all the weights close to $0$.
In our experience, all $3$ conditions are required because 
the weights can also come close to a binary point but still finally converge to another one, 
typically come close to a non maximal independent set and then eventually saturate it.

In all the experiments reported here, we have set $\epsilon = 10^{-6}$ and $\alpha = 10^{-2}$.
With these settings, $\IGN$ always converged to an independent set and converged to a 
maximal independent set when we used a non linear activation $h$ such that $h\prime(0)>0$,
in accordance with the theorem \ref{ref_thnmisrepulsive}.
This is the reason why when using a power function for activation, 
we translate it away from $0$, to avoid a zero derivative at zero.
We otherwise get non maximal solutions, which seems to indicate that the condition 
of theorem \ref{ref_thnmisrepulsive} is necessary and sufficient.

The figure \ref{fig_kako_histo} shows histograms of the gaps found for a power activation 
of parameters $(a,t)=(2, 0.01)$
and different sizes of graphs of medium density ($G\sim G_{n, 0.5}$). 
For small graphs, {\IGN} and {\KAKO} almost always find the same solution, which corresponds to the peak 
in the histogram bin centered on $0$.
For larger graphs ($n=512$), one sees that there is still a peak at $0$ 
but that another distribution builds up, 
made up of gaps which are both positive and negative, which 
means that $\IGN$ finds worse solutions than $\KAKO$ on some instances and better solutions on other instances.

\begin{figure}[!ht]\center\begin{tabular}{ccc}
    \includegraphics[width=0.3\linewidth]{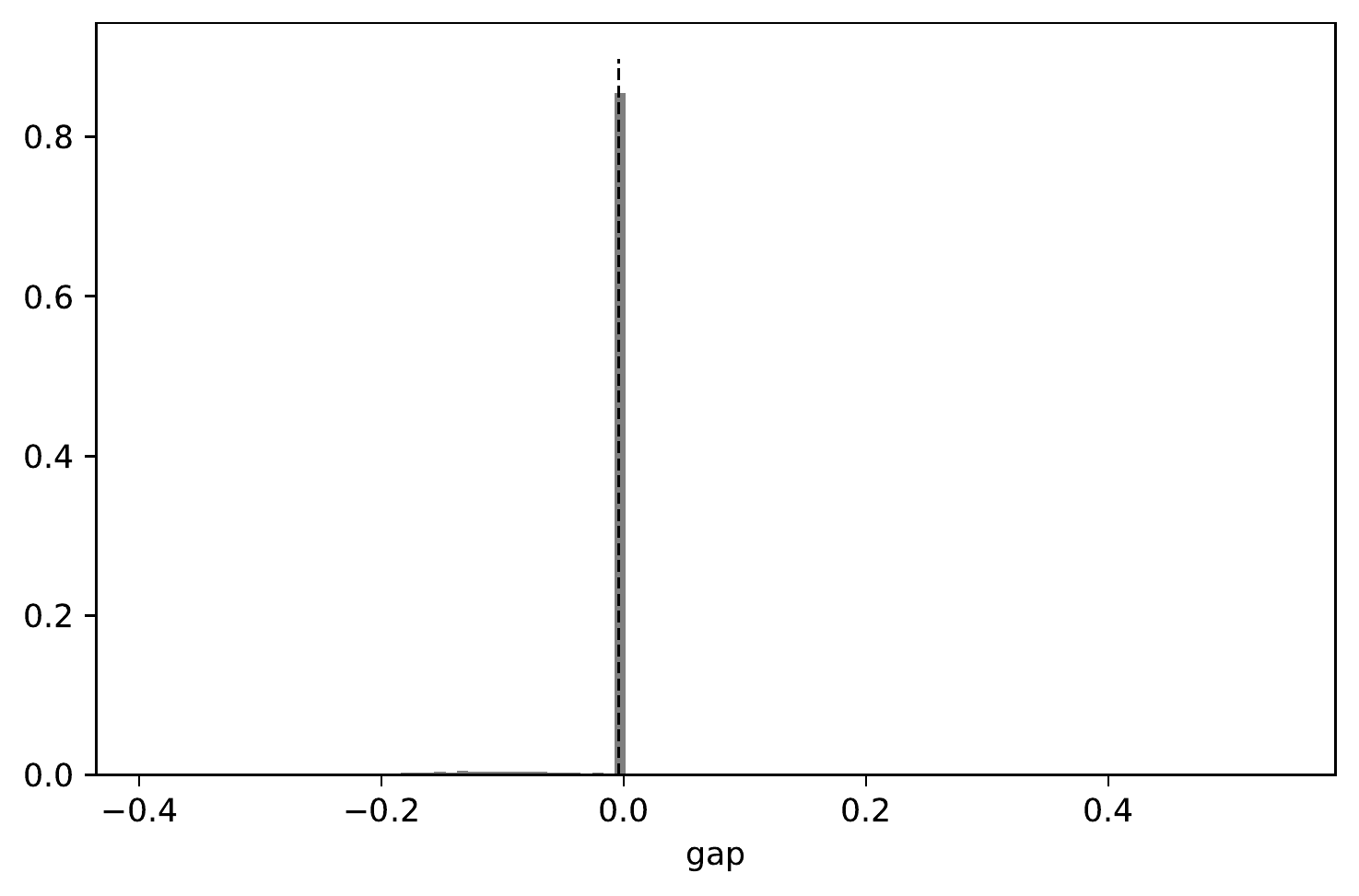}&
    \includegraphics[width=0.3\linewidth]{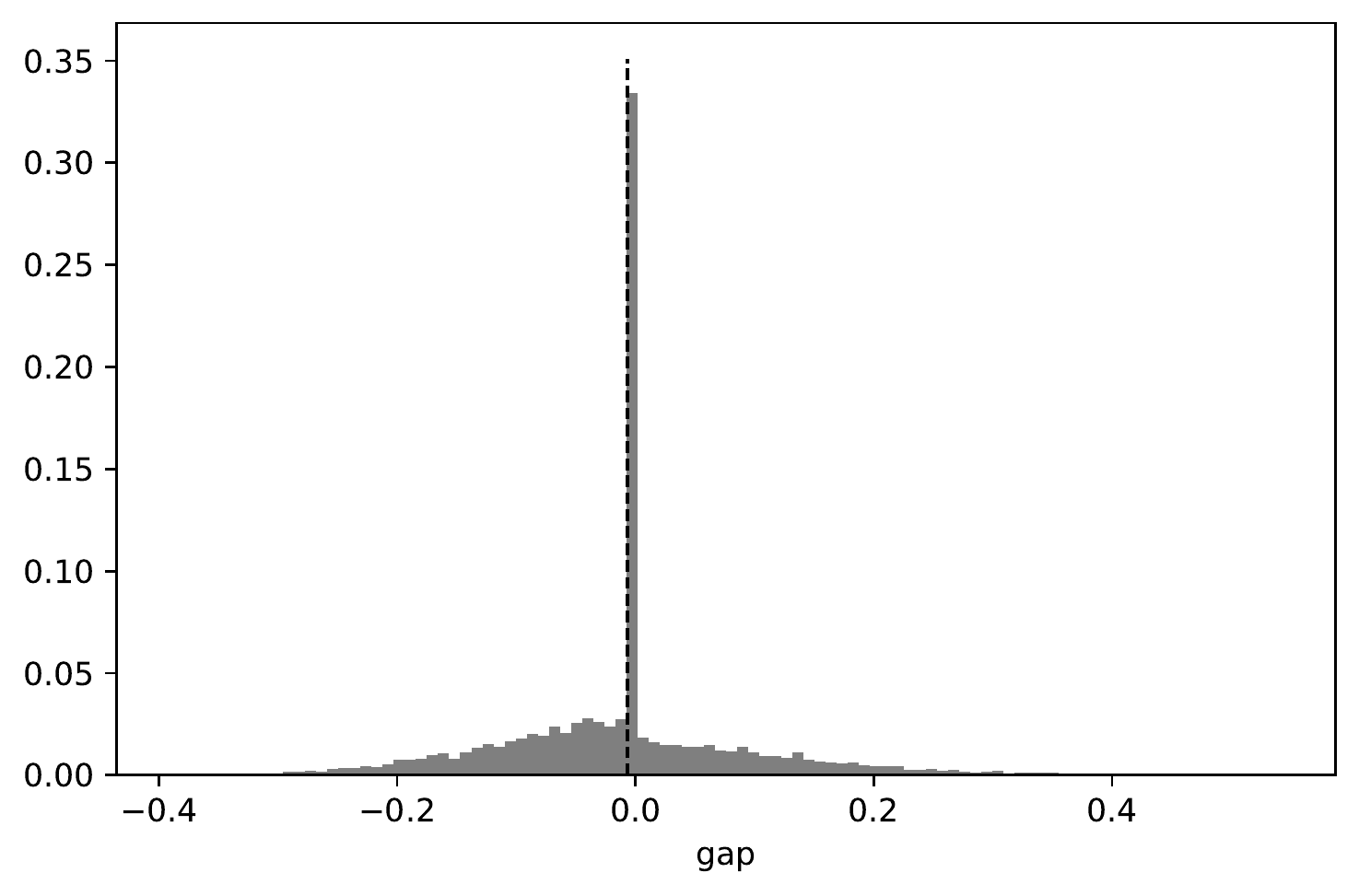}&
   \includegraphics[width=0.3\linewidth]{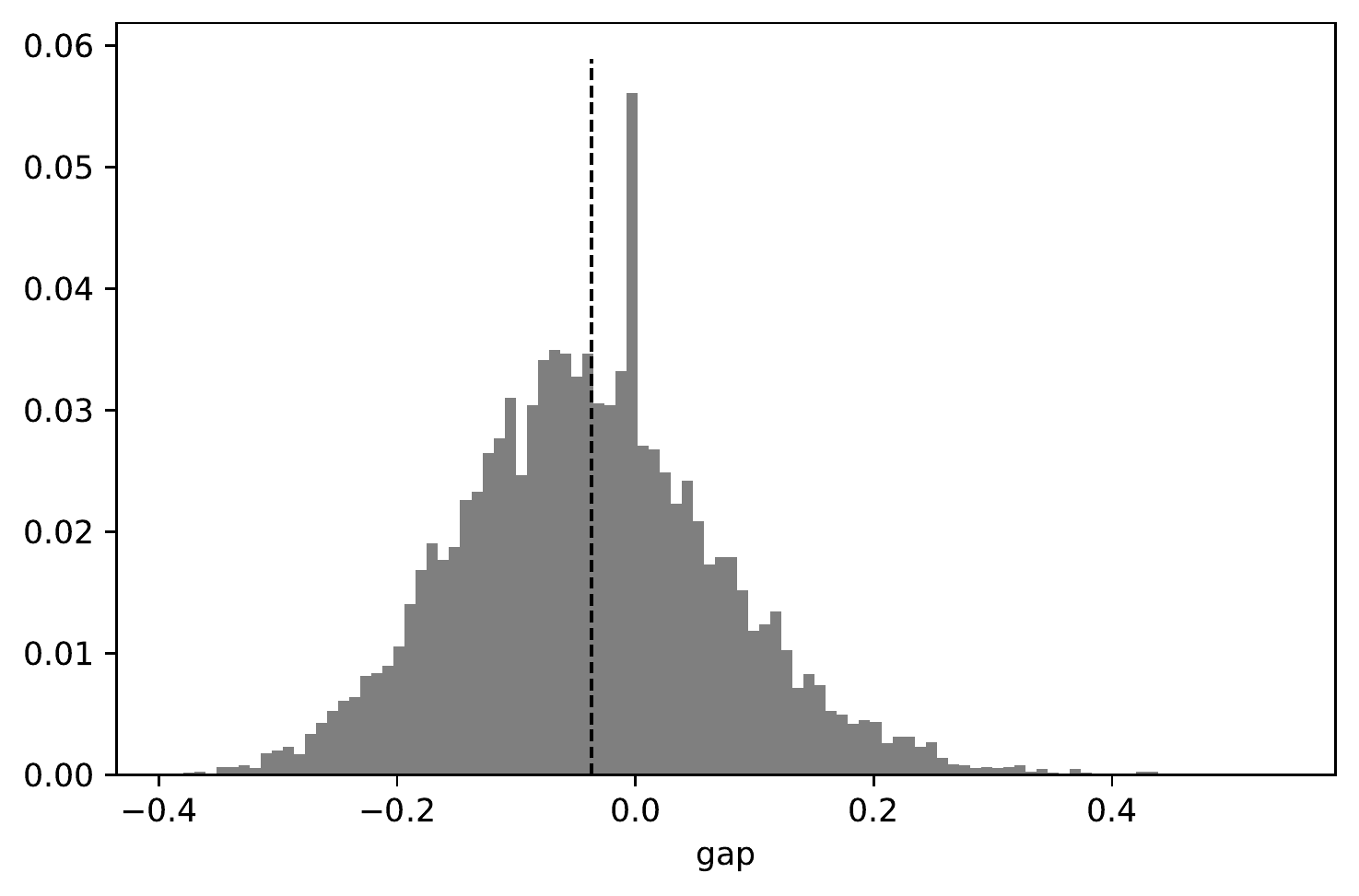}\\
   a) $n=8$ & 
   b) $n=64$ & 
   c) $n=512$ 
\end{tabular}
    \caption{Histograms of the gaps between {\IGN} and {\KAKO} for different sizes of graphs drawn from $G_{n, 0.5}$ for a power activation of parameters $(a,t)=(2, 0.01)$. The dashed vertical line represents the average gap.}
    \label{fig_kako_histo}
\end{figure}

In order to sumarize the performance of $\IGN$ w.r.t. $\KAKO$, we compute three 
statistics: the average gap, the median gap and the proportion of positive gaps.
The average gap indicates the expected difference of quality between $\IGN$ and $\KAKO$.
The median gap is a robust version of it, better accounting for the fact that the distribution of gaps has multiple modes.
The proportion of positive gaps corresponds to the chance that $\IGN$ finds a solution at least as good as $\KAKO$.

\begin{figure}[!ht]\center\begin{tabular}{ccc}
    \includegraphics[width=0.3\linewidth]{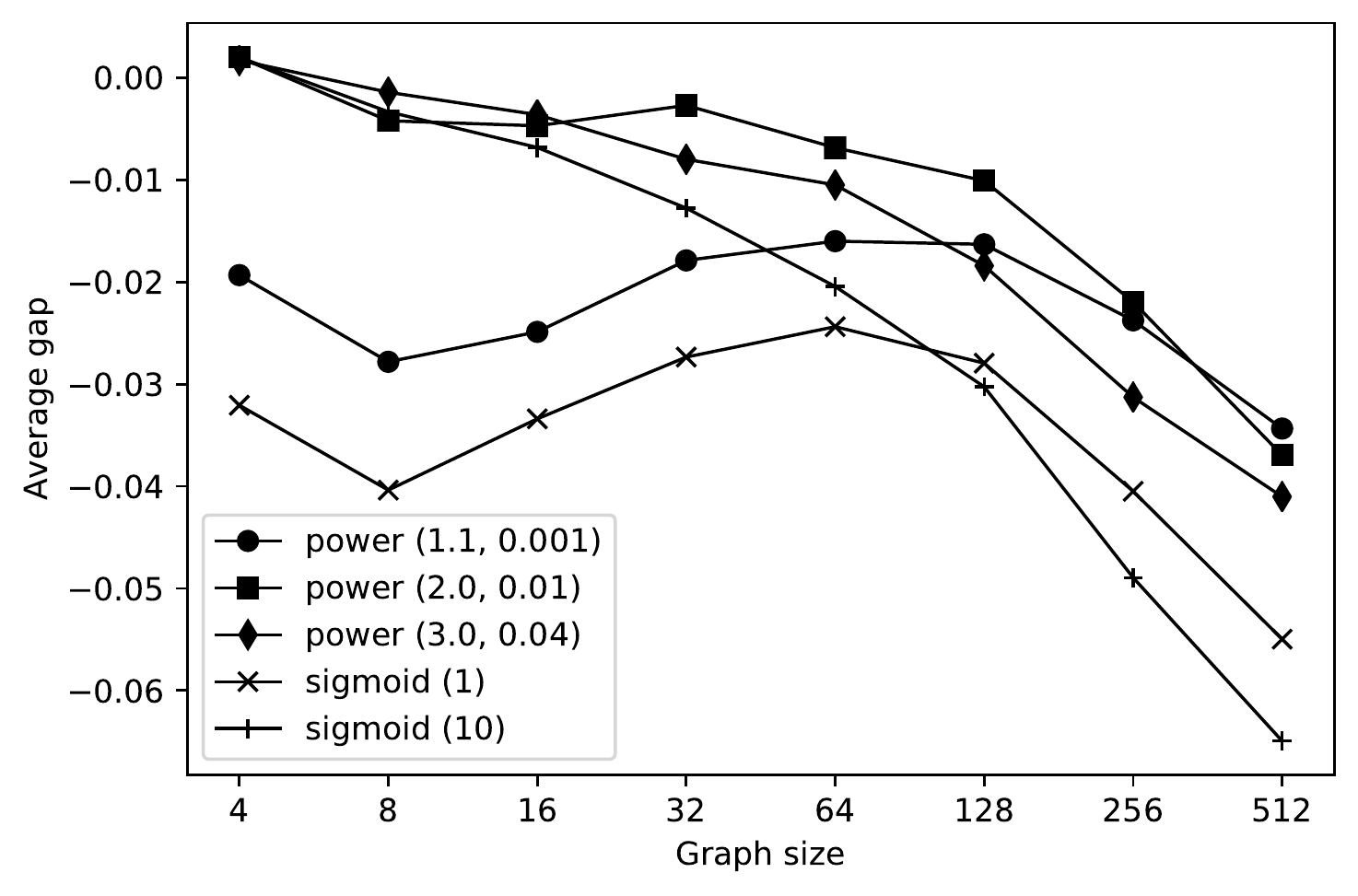}&
    \includegraphics[width=0.3\linewidth]{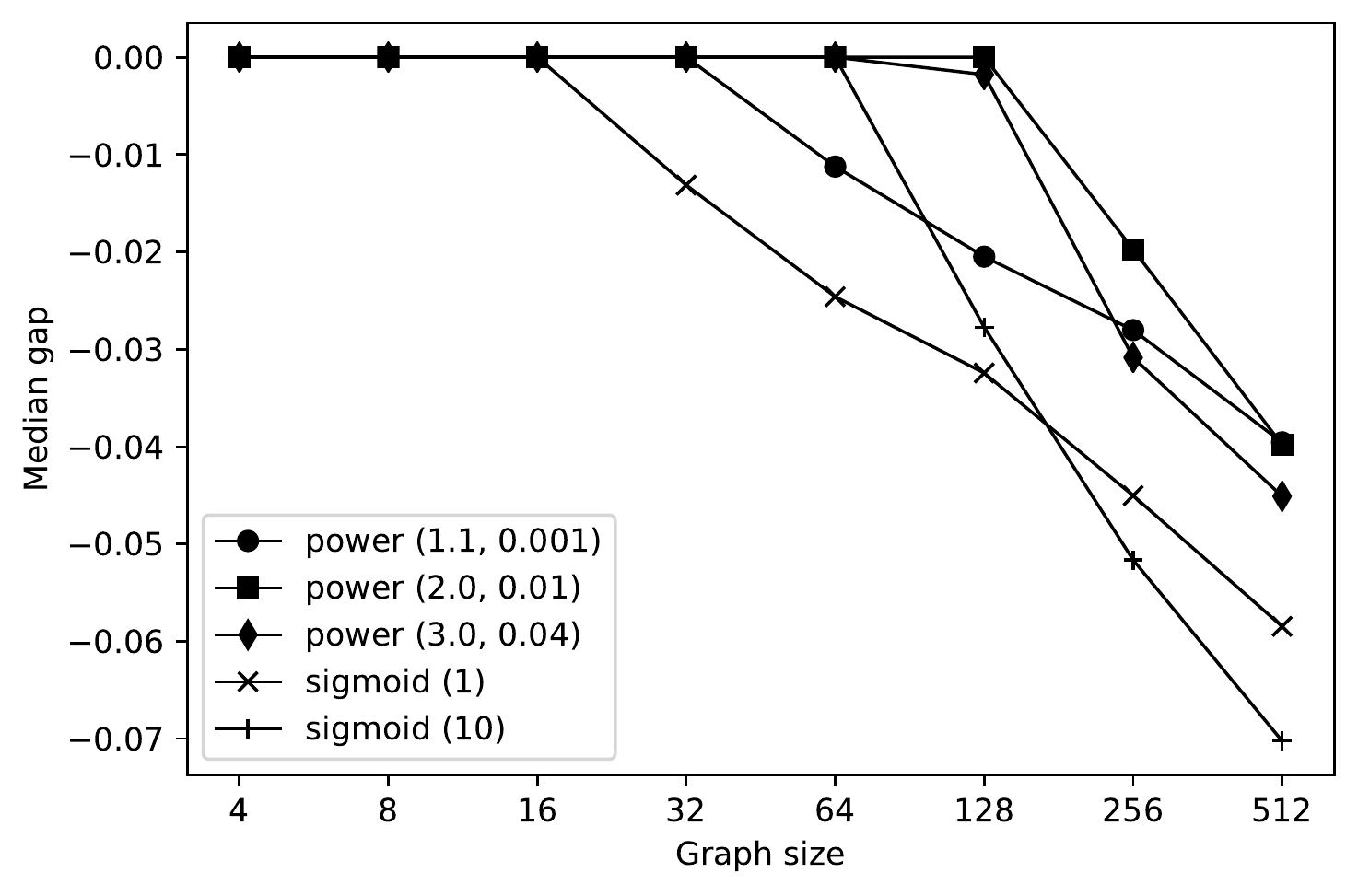}&
    \includegraphics[width=0.3\linewidth]{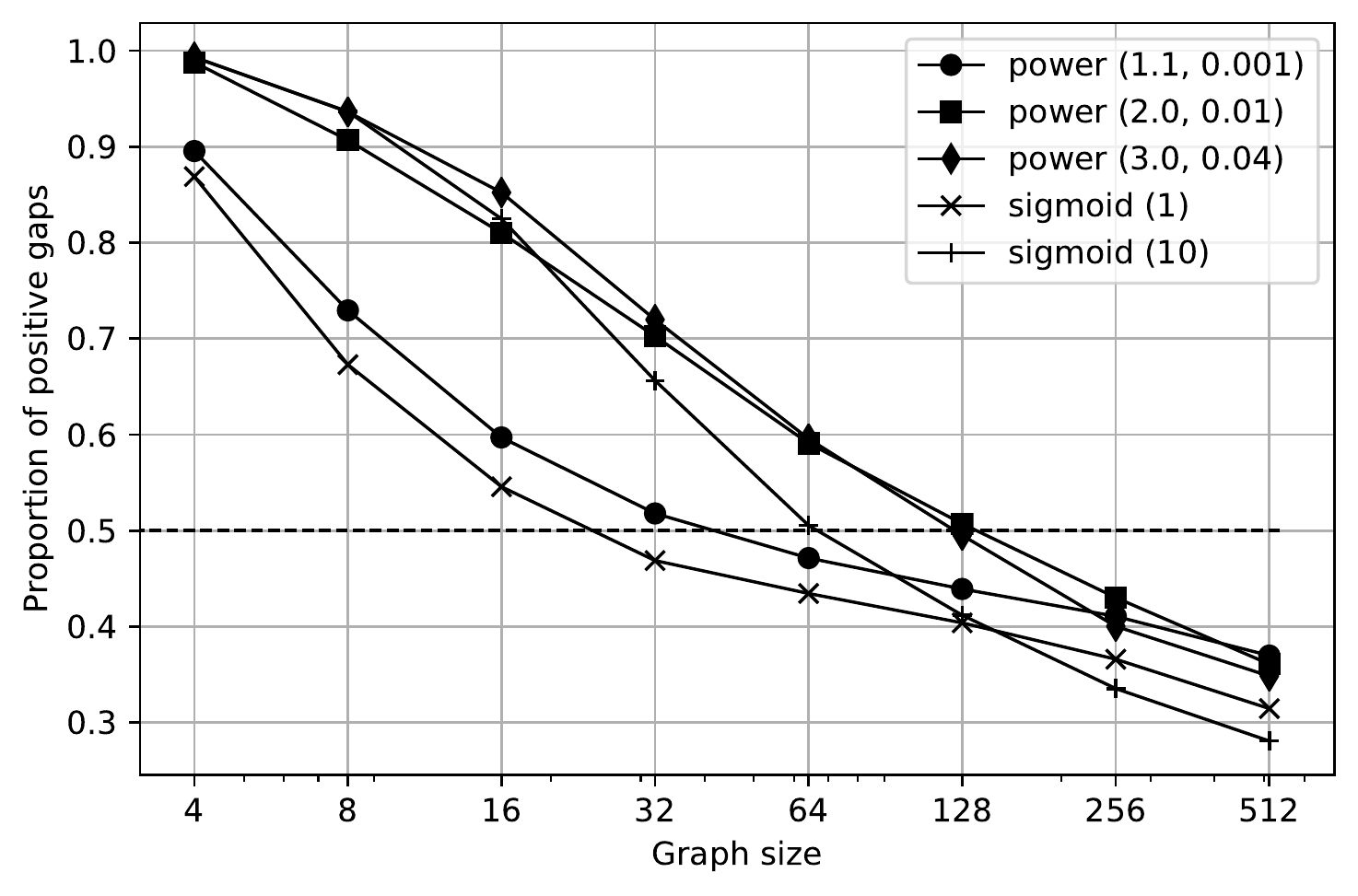}\\
   a) Average gap & b) Median gap & c) Proportion of positive gaps
\end{tabular}
    \caption{Comparison between the solutions found by $\IGN$ and $\KAKO$ for various activation functions.}
    \label{fig_kako_perf}
\end{figure}

One sees in figure \ref{fig_kako_perf} that 
$\IGN$ overall provides a slightly worse solution than $\KAKO$ \emph{on average}  
($<5\%$ smaller weight for $n\le 512$ and a suitable activation). 
However, up to graphs of $128$ nodes, 
it gives a better solution \emph{in probability} when a power activation with an exponent greater than $2$ is used.

Remark also that sigmoid activations seem to perform poorer in general than power activations.

These experiments correspond to binomial graphs of medium density, i.e. drawn according to $G_{n, 0.5}$.
We found however that the gap between $\IGN$ and $\KAKO$ almost doesn't depend on the graph density, 
as is illustrated in figure \ref{fig_kako_density}. Note that it doesn't mean that $\IGN$ and $\KAKO$ 
provide approximations of \MWIS whose quality is independent of the graph density, 
but that  $\IGN$ and $\KAKO$ behave very similarly for all graph densities, 
hence confirming the fact that they are closely related.
 
\begin{figure}[!ht]\center\begin{tabular}{ccc}
    \includegraphics[width=0.3\linewidth]{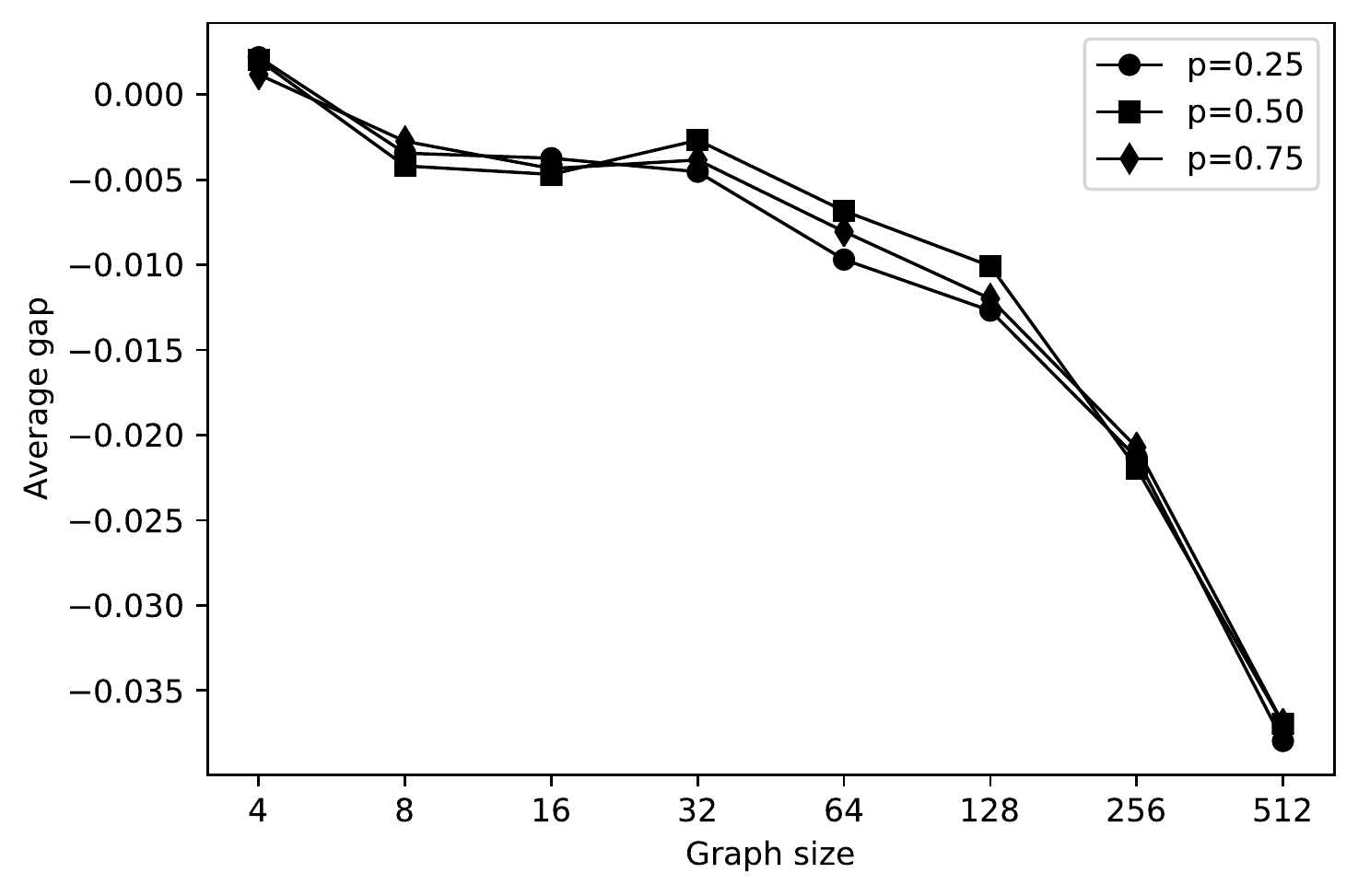}&
    \includegraphics[width=0.3\linewidth]{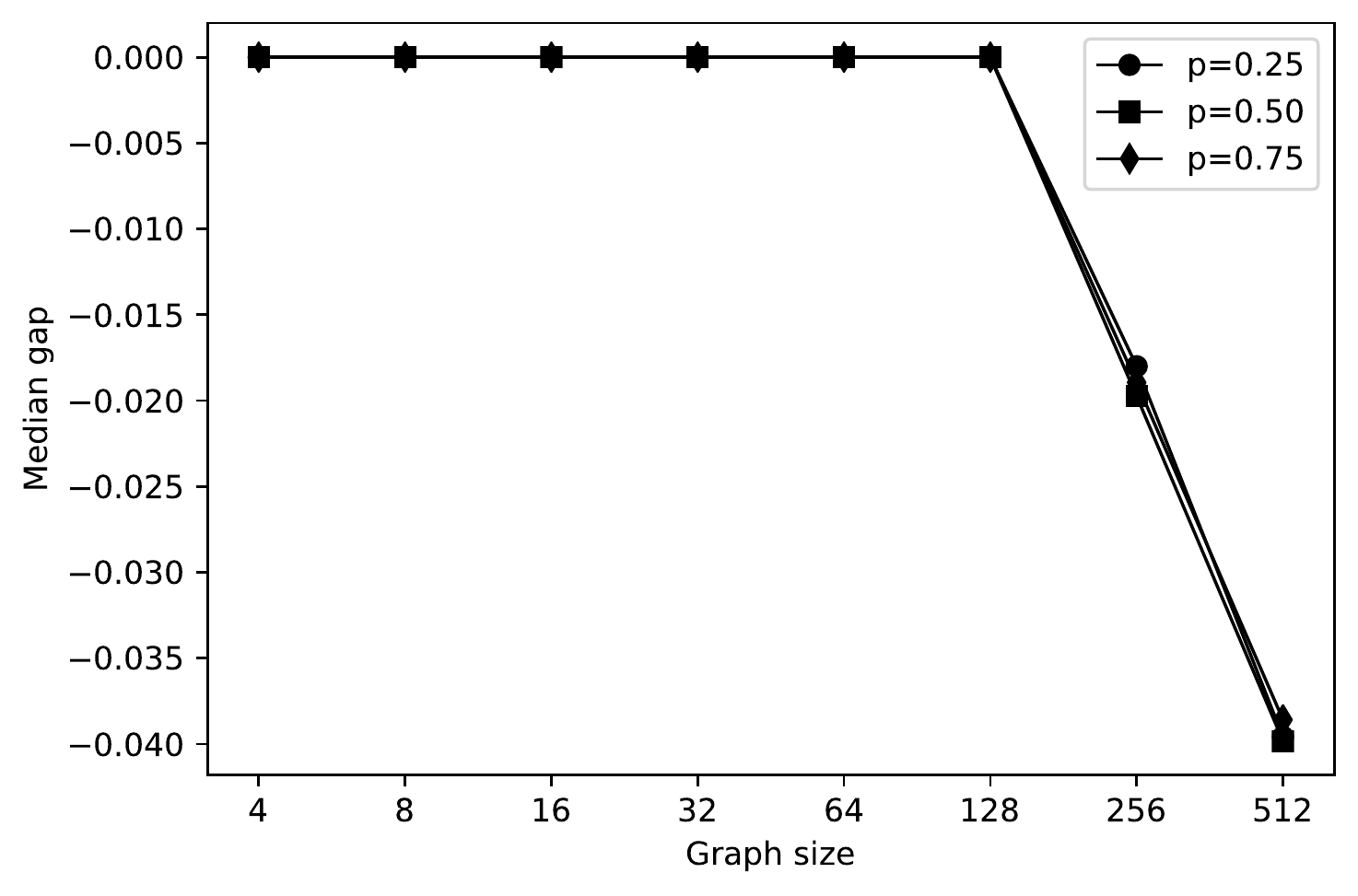}&
    \includegraphics[width=0.3\linewidth]{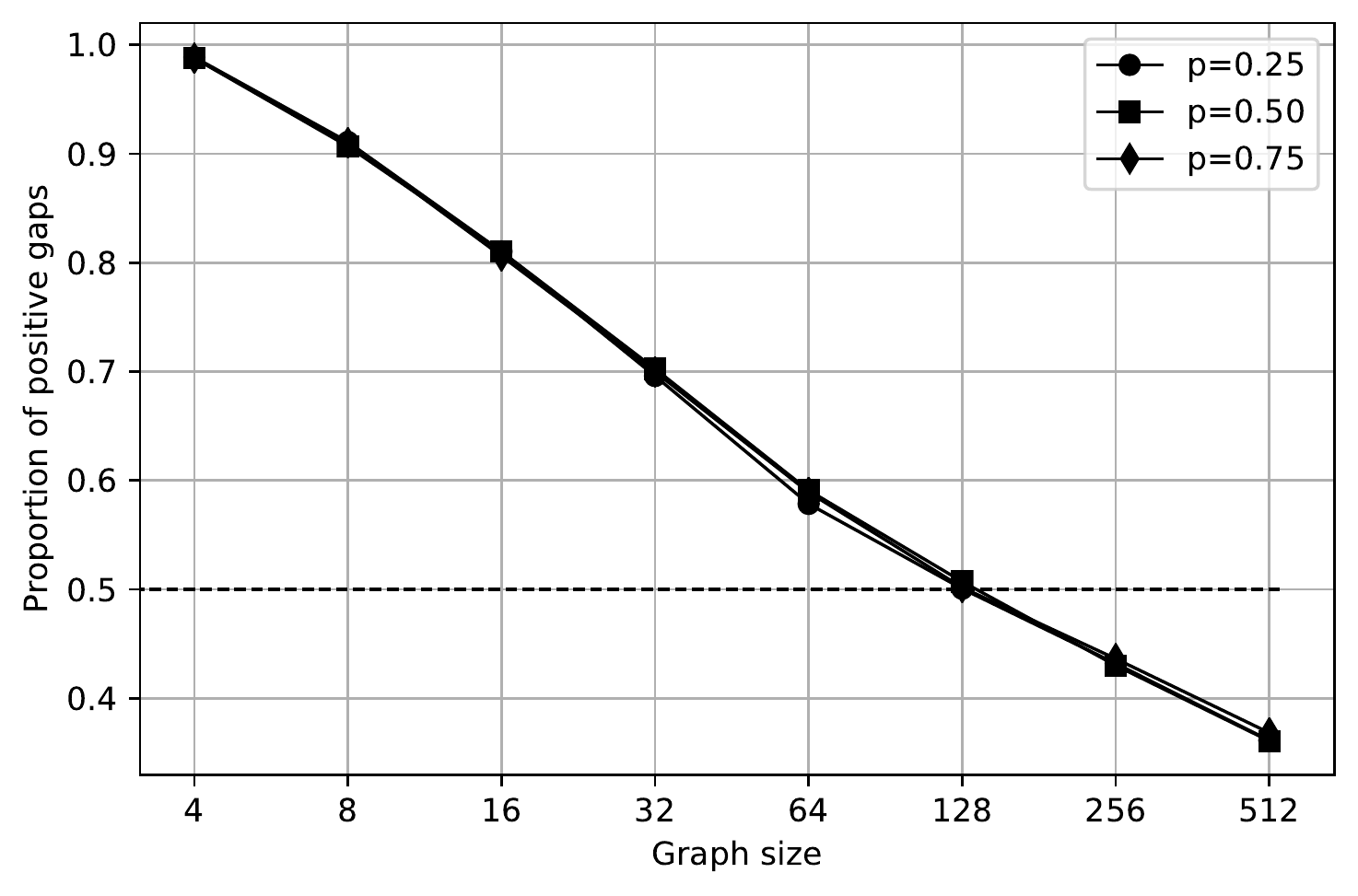}\\
   a) Average gap & b) Median gap & c) Proportion of positive gaps
\end{tabular}
    \caption{Gap between $\IGN$ and $\KAKO$ for different graph densities. Power activation $(2, 0.01)$. }
    \label{fig_kako_density}
\end{figure}

The speed of convergence of $\IGN$ according to the criteria described above is provided in 
figure \ref{fig_kako_iter} for different power activations.
One sees that increasing non linearity from a power $1.1$ to $2$ significantly speeds up convergence 
and also significantly reduce the variance of the number of iterations needed.
Comparatively, going from a power $2$ to $3$ doesn't improves much the convergence speed. 

\begin{figure}[!ht]\center\begin{tabular}{cc}
    \includegraphics[width=0.4\linewidth]{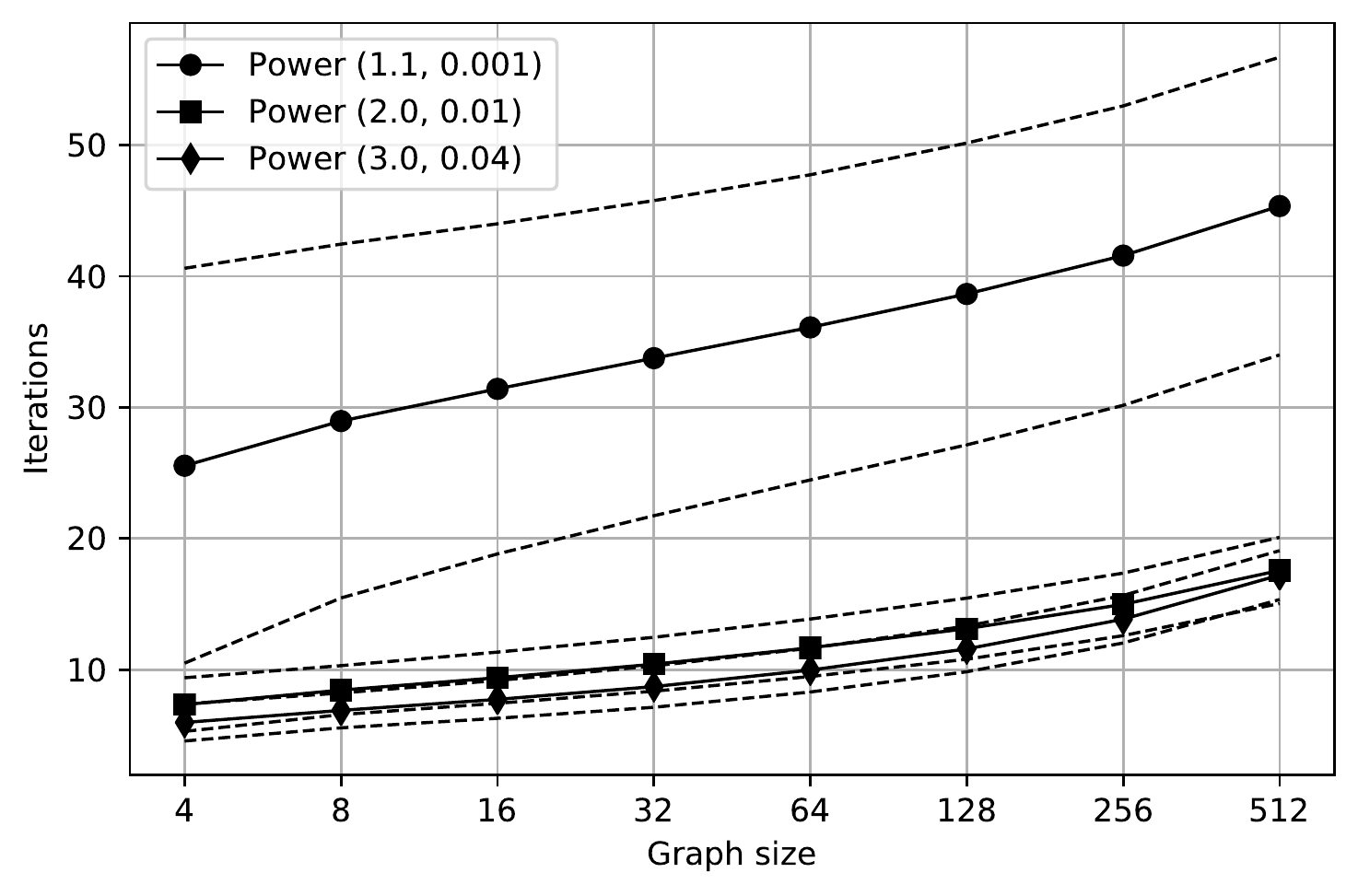}&
    \includegraphics[width=0.4\linewidth]{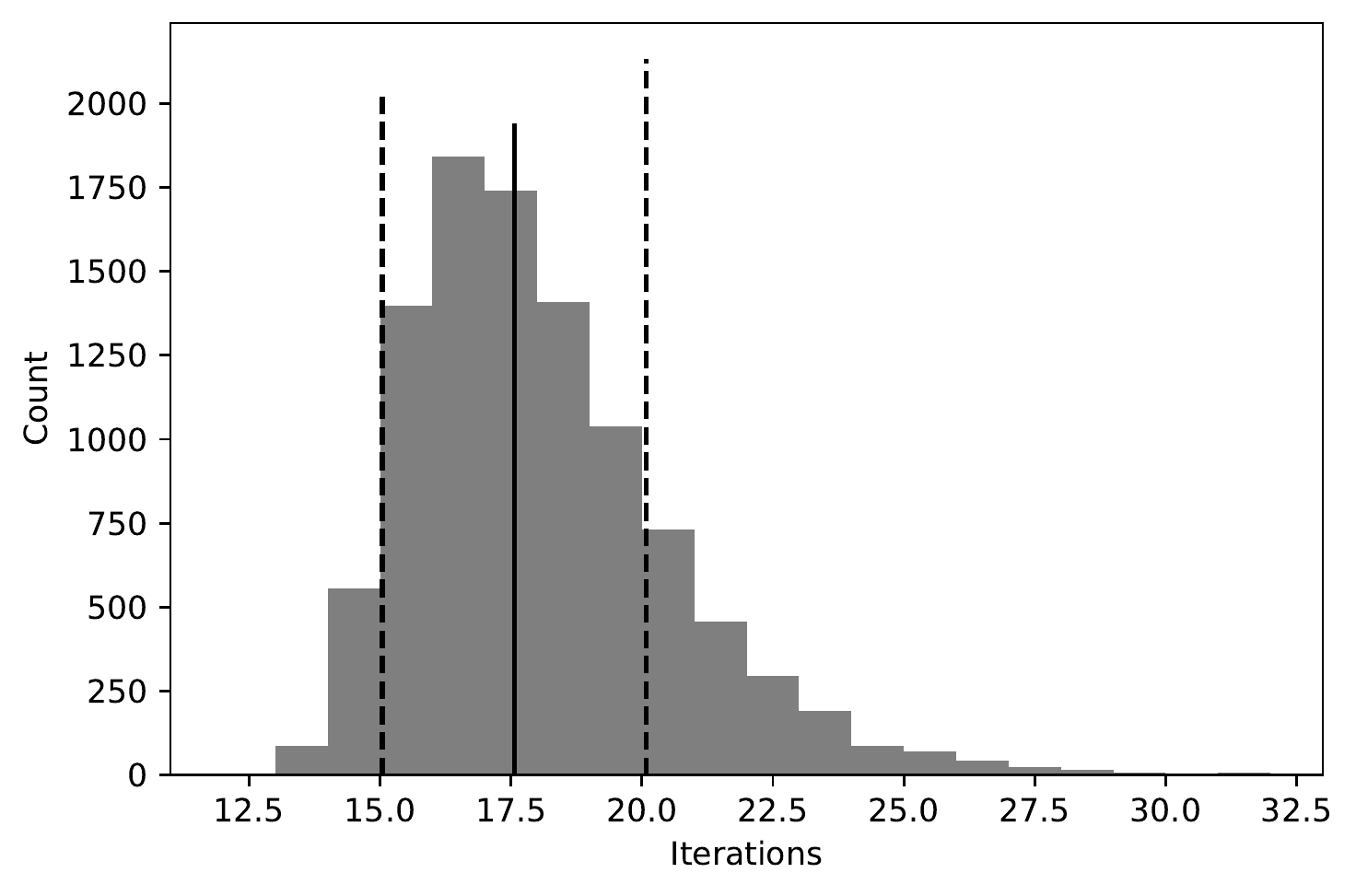}\\
   a) & b) 
\end{tabular}
    \caption{(a) Number of Iterations to converge within $10^{-2}$ distance to a binary solution for different power activations. The solid curves represent the average on $10 000$ instances in $G_{n, 0.5}$ and the dashed lines correspond to $1$ standard deviation.
    (b) Histogram of the number of iterations for the activation parameters $(2.0, 0.01)$ and graphs of size $512$.}
    \label{fig_kako_iter}
\end{figure}

 All the experiments and the figures of this section can be reproduced with the script 'kako.py'.
 
\section{The assignment problem, Sinkhorn-Knopp and Softassign}\label{sec_SK}
Let $B=(V_1,V_2,X)$ be an edge-weighted bipartite graph where $V_1=\{1\dots n\}$, $V_2=\{1\dots m\}$ and 
$X=(X_{ij})$ is a $n\times m$ matrix of nonnegative weights on the edges between $V_1$ and $V_2$.
A matching of $B$ is a set of independent edges in $B$, i.e. a set $M$ of edges without common nodes: $M=\{ (a_i,b_i) \}_{i=1\dots p}$  with $a_i \in V_1$ and $b_i \in V_2$ and $\forall i\ne j$ : $a_i \ne a_j$ and $b_i \ne b_j$.
The Maximum Weight Bipartite Graph Matching problem, or Assignment Problem (\AP), is to find a matching $M$ of maximum weight, 
i.e. maximizing $\sum_{(a,b)\in M} X_{a,b}$ .

As mentioned earlier, \AP is a particular instance of \MWIS in the weighted graph $G$ which has one node per edge in $B$ and one edge between two nodes if the corresponding two edges in $B$ share a common node,
and it can be solved in polynomial time, for example by the Hungarian algorithm \cite{kuhn1955hungarian}.

Please remark that any \AP problem such that $m<n$ can be turned into a balanced problem in 
which $m=n$ by adding $n-m$ dummy nodes to $V_2$, each connected with a zero weight to all the nodes 
of $V_1$. In the following, we thus assume $m=n$. 
In this case, $X$ is a square matrix and $B$ admits perfect matchings, 
i.e. matchings made up of edges which hit all the nodes of the graph.
As previously, let us assimilate sets and indicator vectors. 
In the bipartite graph matching context, the $n\times n$ binary matrices are the indicators of sets of edges.
The indicator matrix of a matching cannot have two $1$ on the same row or on the same column, 
because two $1$ on the same row (resp. column) represent $2$ edges sharing the same node in $V_1$ (resp. $V_2$). A perfect matching is a maximal matching whose indicator matrix has exactly one $1$ per row and per column, hence is a permutation matrix.

We denote by $\PERM_n$ the set of the permutation matrices of size $n$ 
and by $\BIRK_n$ the Birkhoff polytope, i.e the set of doubly stochastic matrices, 
which are the positive matrices normalized to $1$ both in rows and columns.
The Birkhoff–Von Neumann theorem \cite{birkhoff1946three, von1953certain} says 
that $\BIRK_n$ is the convex hull of $\PERM_n$. 

Let's define the following operators on a square nonnegative matrix $A$:
\begin{center}
\begin{tabular}{lcl}
Row normalization &:& $ \RoN(A) =  A \HD A \ONE \ONE^t $ \\
Column normalization &:& $ \CoN(A) =  A \HD \ONE \ONE^t A$ \\
Cross normalization &:& $ \CrN(A) =  A \HD ( A \ONE \ONE^t + \ONE \ONE^t A - A )$ \\
\end{tabular}
\end{center}
 
 Each one with its own domain of definition.
 We see in particular that the set of the cross-normalizable matrices is larger than the 
 set of either row or column normalizable matrices. 
We say that $A$ is cross normalized if $\CrN(A)=A$, i.e. if it is a fixed point of cross normalization. 
We also denote by $\CrN_h$ the activated cross normalization, i.e. $\CrN$ followed by 
a non-linear activation as previously.
 
The Sinkhorn-Knopp algorithm (\SK) corresponds to the iteration of $\CoN\circ \RoN$ (or $\RoN\circ \CoN$), 
and we call Iterative Cross Normalization (\ICN) the iteration of $\CrN_h$ on a matrix.

Let $vec(A)$ denote the vectorization of a matrix $A$, obtained by stacking its columns in a vector of size $n^2$.
Vectorizing cross normalization gives:
\[
vec(\CrN(A)) = vec(A) \HD \left[ \left(\ONE \ONE^t \otimes I_{n} + I_{n} \otimes \ONE \ONE^t - I_{n^2}\right) vec(A)\right]
\]
where $\otimes$ is the Kronecker product.

Which shows that matrix cross normalization is a special case of graph normalization: 
\[
vec(\CrN(A)) = vec(A) \HD (C+I_{n^2})vec(A)= \R_C(vec(A))
\]
where $C = \ONE \ONE^t \otimes I_n + I_n \otimes \ONE \ONE^t - 2I_{n^2}$ 
is the $n^2 \times n^2$ adjacency matrix of the dual of the complete bipartite graph $K_{n,n}$, 
in the hypergraph node/edge duality sense.
$C+I_{n^2}$ is a symmetric block circulant matrix, with circulant $n \times n$ blocks:
\begin{center}
$\begin{pmatrix}
\ONE \ONE^t & I & \cdots & I \\
I & \ONE \ONE^t & & \vdots \\
\vdots & & \ddots & I \\
I & \cdots & \ONE \ONE^t & I
\end{pmatrix}$
\end{center}

Remark that the graph $C$ is $(2n-2)$-regular.

We can now specialize our general results on graph normalization to the assignment problem:

\begin{itemize}
\item Matrix cross normalization commutes with pre or post multiplication by a permutation matrix:
$\forall (M,N)\in\PERM^2 : \CrN(MAN) = M\CrN(A)N$. Indeed, any permutation of $V_1$ or $V_2$ gives an automorphism of the graph.
\item A matching is cross normalizable if and only if it is maximal, i.e. is a permutation matrix.
\item All the maximal matchings, i.e. all the permutation matrices, are attractive fixed points of cross normalization, even for a linear activation. It follows from the corollary \ref{ref_coro_attraction} and the fact that the associated density of a permutation matrix is $2$, in the sense of definition \ref{def_density}.
\item For any $P\in \PERM$, any matrix $X$ such that
\begin{eqnarray*}
\textrm{whenever} \quad P_{ij} = 1 & \textrm{then} & X_{ij} > \frac{1}{2}, \\
\textrm{and whenever} \quad P_{ij} = 0 & \textrm{then} & X_{ij} < \frac{1}{4n-4}
\end{eqnarray*}
is in the attraction bassin of $P$. $\CrN^k(X)$ converges monotonically to $P$ componentwise (theorem \ref{ref_thmisbasin}). Furthermore, for the case of the assignment problem, $P$ corresponds to 
the optimal assignment associated with $X$, hence $\ICN$ exactly solves $\AP$ for such an initialization
(obviously, any other permutation than $P$ provides a smaller total weight).
\end{itemize}

A relevant algorithm in our context is the Softassign algorithm ($\SA$),
introduced by Kosowsky and Yuille \cite{kosowsky1994invisible}.
$\SA$ corresponds to applying $\SK$ to an exponentiated version of the initial weight matrix:
\[
\SA_\tau(X) = \SK(\exp\left(X / \tau)\right)
\]
Using a statistical physics approach, the authors show that as $\tau$ goes to $0$, 
$\SA_\tau$ converges to the optimal solution of $\AP$. 
$\tau$ can be interpreted as a temperature parameter in a deterministic annealing framework. 
Relating it to the auction algorithm of Bertsekas \cite{bertsekas1992auction} and making an 
economic interpretation of it, Kosowsky and Yuille called their algorithm the "invisible hand" in their introducing paper. It has then been renamed to "Softassign" in later work \cite{gold1996softmax, rangarajan1997convergence}.
It is obviously related to the Softmax operator, discussed earlier. 

Mena et al. have recently proven that $\SA_\tau$  corresponds to the solution of 
an entropy-regularized assignment problem, which is very insightful\cite{mena2018learning}.
Given a doubly stochastic matrix $D$, define its entropy 
by $h(D) = -\sum D_{ij} \log D_{ij}$. 
Mena et al. then show that 
\[
\SA_\tau(X)= \arg\max_{D\in \BIRK}  \quad Tr(D^t X) -  h(D) / \tau.
\]
and that $\SA_\tau(X)$ converges almost surely to the optimal assignment as $\tau$ goes to $0$ 
when the $X_{ij}$ are i.i.d. samples from a distribution that is absolutely continuous with respect to the Lebesgue measure in $\mathbb{R}$\footnote{The authors actually state this result with an opposite sign on the entropy term, which is wrong. They apparently made a sign error in their paper.}.

This result makes total sense remarking 
that the entropy is null if and only if $D$ is binary and that it is strictly positive otherwise.
Regularizing the maximum assignment by the \emph{opposite} of the entropy of the solution thus
penalizes non binary solutions
and when the weight of the regularization term goes to infinity, 
it enforces a binary solution.
On its side, the $\SK$ operator projects on the Birkhoff polytope, i.e. ensures double stochasticity.
Both constraints are thus forcing to pick a solution from the set of permutations.
Although we didn't prove it, we think that any regularizer with a similar property than the entropy 
would yield a similar result, such as a quadratic penalty $q(D) = -\sum D_{ij} ( 1-D_{ij})$, 
which has for example been used (component-wise) to enforce a binary solution in 
some quadratic programming formulations of \MWIS.

A problem though with the $\SA$ algorithm is that it provides an optimal solution 
in the form of a limit on a temperature/entropy regularization parameter, 
which is not accessible in practice. 
Setting $\tau$ large enough to avoid numerical overflow, i.e. of the order of $0.01$ for 
initial weights in $[0,1]$, which gives exponentiated weights up to $\exp(100)\sim 10^{43}$, 
one still gets soft assignments after $\SK$ convergence (see figure \ref{fig_SK_ICN}).
Furthermore, the solution is not guarantied to be close to binary, 
and contains in practice competing values, close to $0.5$, in rows or columns.
Most of the weights have converged to values close to zero however a number of 
ambiguous matches remain.
All authors mention having to apply  a final "clean-up" procedure 
in order to get to a binary solution.
In \cite{gold1996graduated}, Gold and Rangarajan mention that ”a clean-up heuristic is necessary because the algorithm does not always converge to a permutation matrix”.
In \cite{zanfir2018deep}, Zanfir and Sminchisescu use SK iterations, in what they call a Bi-Stochastic Layer, 
in order to perform deep learning for the graph matching problem, 
and end up with a crisp matching using a voting scheme.
Similarly, Wang et al. \cite{wang2019learning} use what they call Sinkhorn layers for a similar task 
and they mention that ”for testing, Hungarian algorithm is performed as a post processing step to discretize output into a permutation matrix”. 

\begin{figure}[!ht]\center\begin{tabular}{lllll}
	\includegraphics[width=0.17\linewidth]{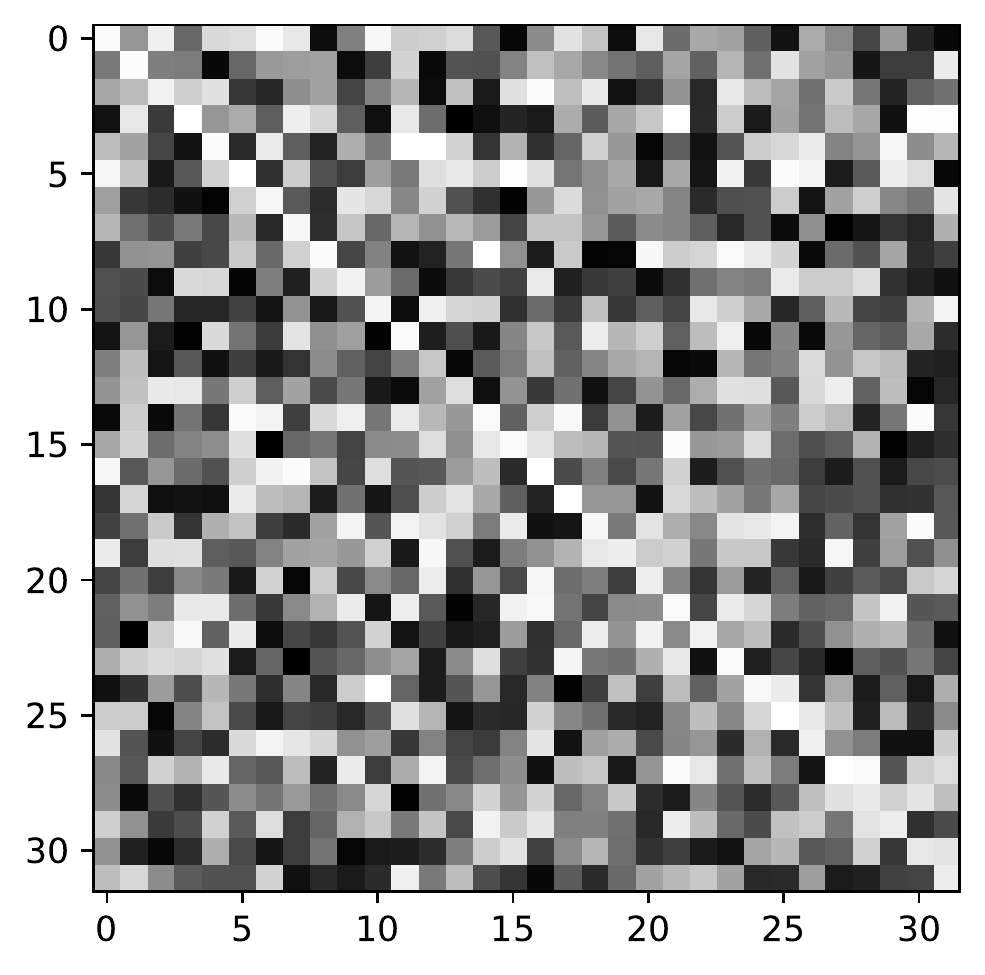}& 
	\includegraphics[width=0.17\linewidth]{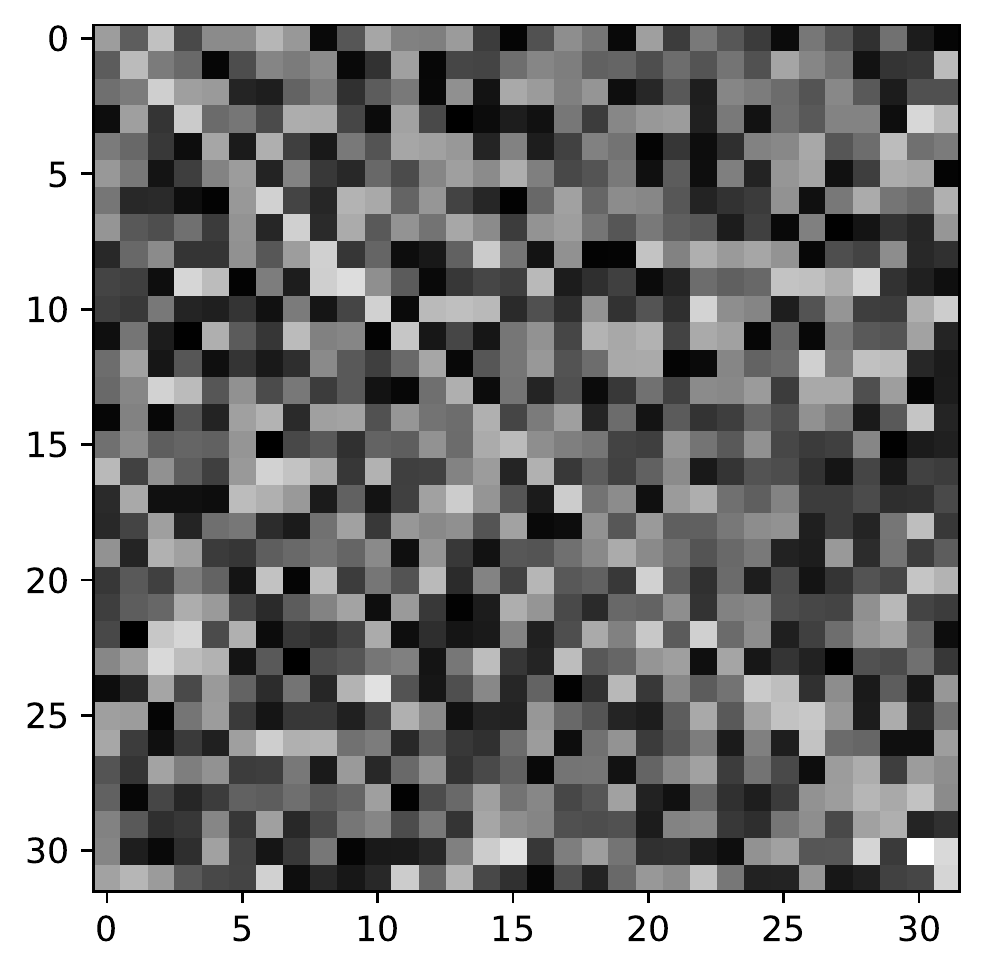}& 
 	\includegraphics[width=0.17\linewidth]{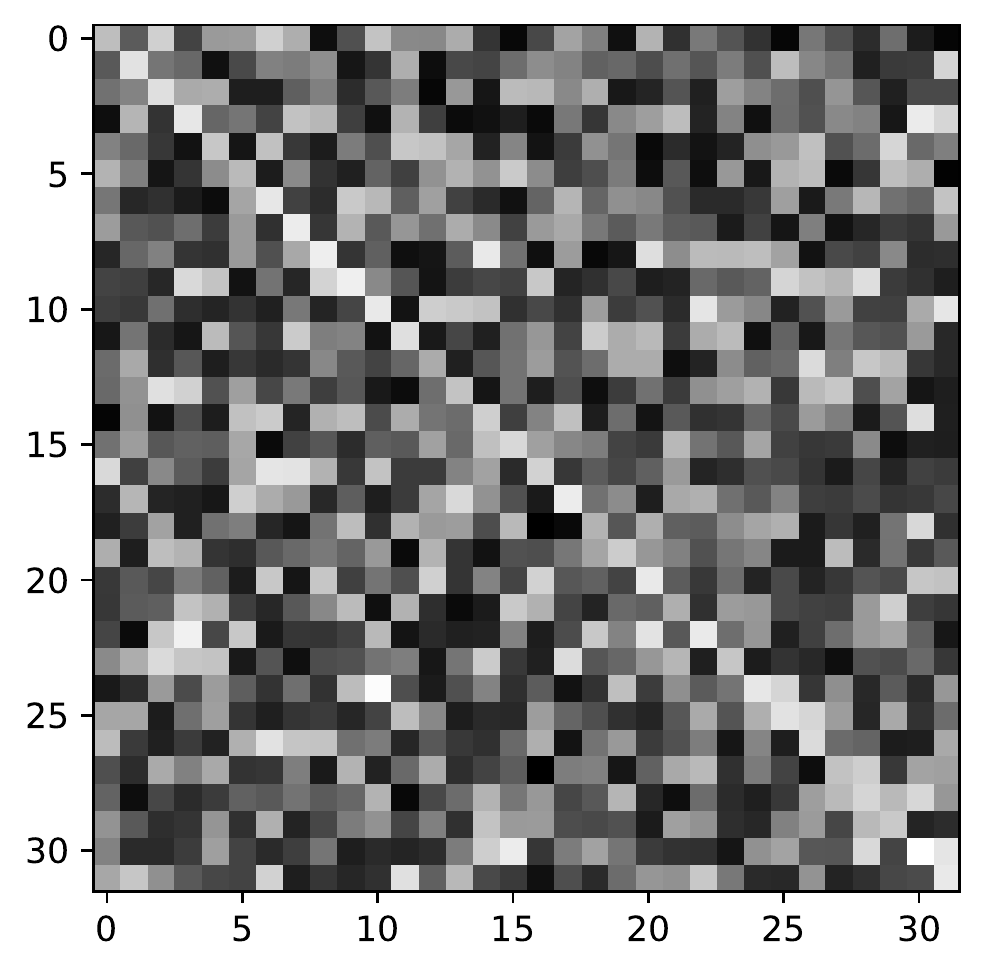}& 
 	\includegraphics[width=0.17\linewidth]{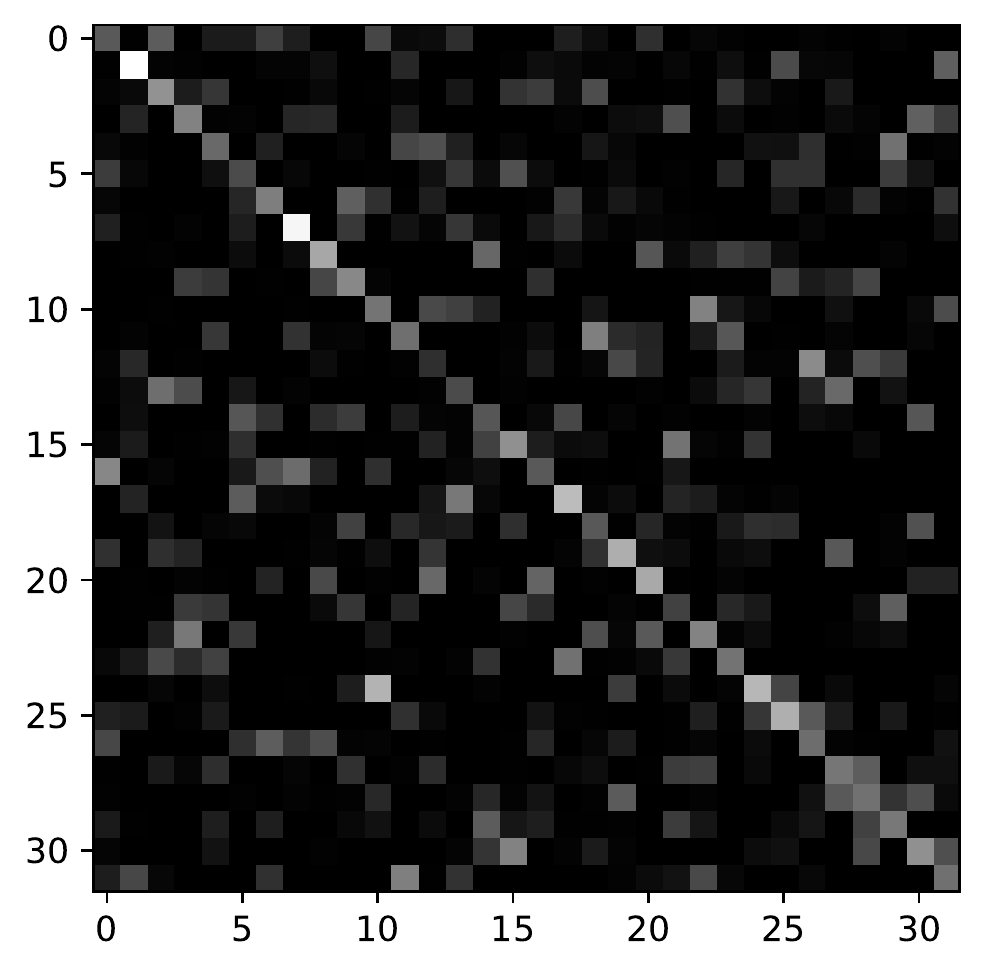}& 
 	\includegraphics[width=0.17\linewidth]{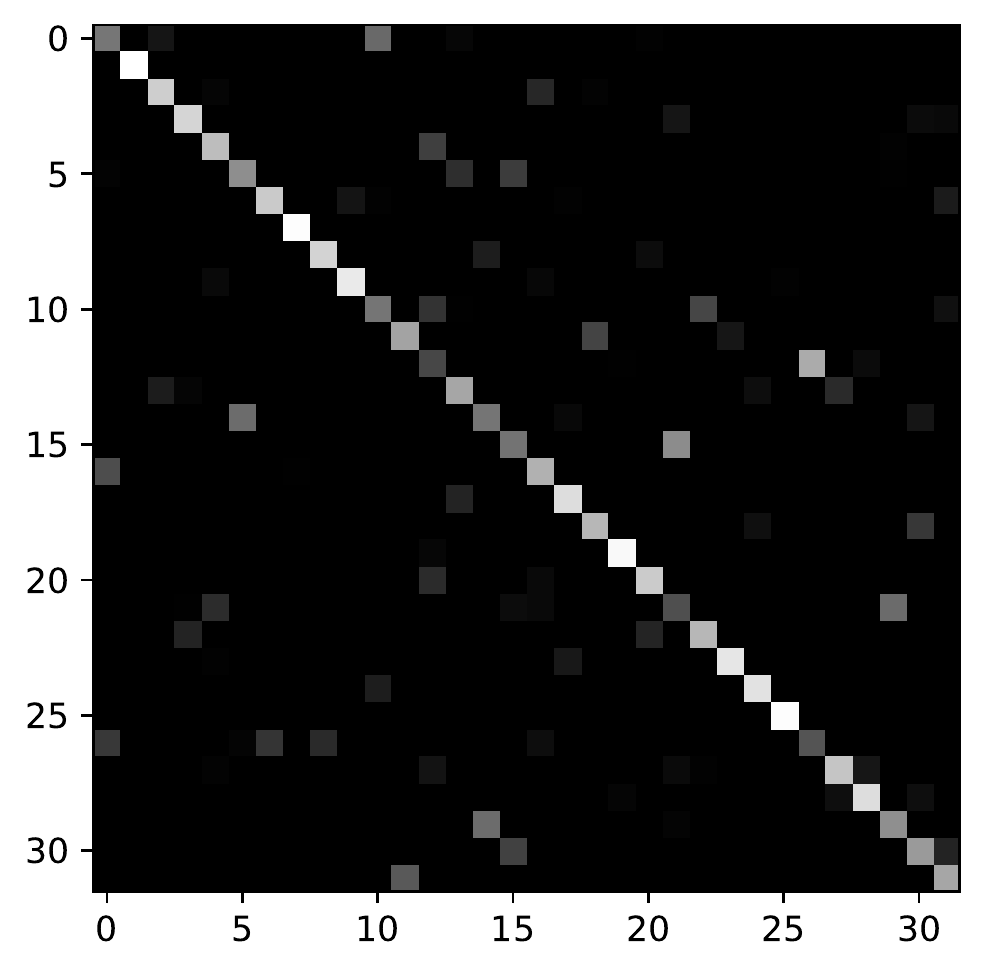}\\
	$(a) \quad A$ & $(b) \quad \SK(A)$ & $(c) \quad \SA_1(A)$ & $(d) \quad \SA_{0.1}(A)$ & $(e) \quad \SA_{0.01}(A)$
	\end{tabular}
    \caption{Typical results of Softassign as a function of $\tau$. (a) Initial weight matrix. The optimal assignment is diagonal. 
    (b) Result of $\SK$ (no exponentiation). (c)-(e) Result of Softassign ($\SA_\tau$) for different values of $\tau$.
    The iterations were stopped when the matrix was doubly-stochastic to a precision of $10^{-2}$.} 
    \label{fig_SK_tau}
\end{figure}

In our interpretation, these reasons explain why we have never seen any quantitative data on 
the optimality gap of an end-to-end assignment solution based on $\SA$, 
which must includes the necessary final "clean-up" step. 
Indeed, any final clean-up which does not ensure to get a permutation matrix,
 such as taking the maximum weight assignment per row, is invalid and can give 
 a "solution" which has a larger total weight than the optimal assignment.
On another hand, using an optimal assignment algorithm as the final step, 
such as the Hungarian algorithm, in order to evaluate the optimality gap of $\SA$ also
doesn't make sense as just skipping $\SA$ would always provide the optimal solution.

As a conclusion, in practice, $\SA$ alone doesn't provide a valid end-to-end answer to $\AP$
which can be rigorously benchmarked against the optimum.

On the contrary, if our conjectures are correct, 
$\ICN$ with a non-linear activation always provides a permutation matrix, 
by iterating a numerically stable discrete dynamical system.

A possibility is then to chain both algorithms: 
run $SA_\tau$ with a small $\tau$ to get to a doubly stochastic matrix which is close to a permutation
and then crisp the assignment by $\ICN$ with a non-linear activation.
We compared both approaches: running $\ICN$ from the initial matrix or running $\SA_\tau$ on 
the initial matrix, followed by $\ICN$, which we denote by $\SA_\tau + \ICN$ (see figure \ref{fig_SK_ICN}).

\begin{figure}[!ht]\center\begin{tabular}{ccc}
    \begin{tabular}{l}\includegraphics[width=0.17\linewidth]{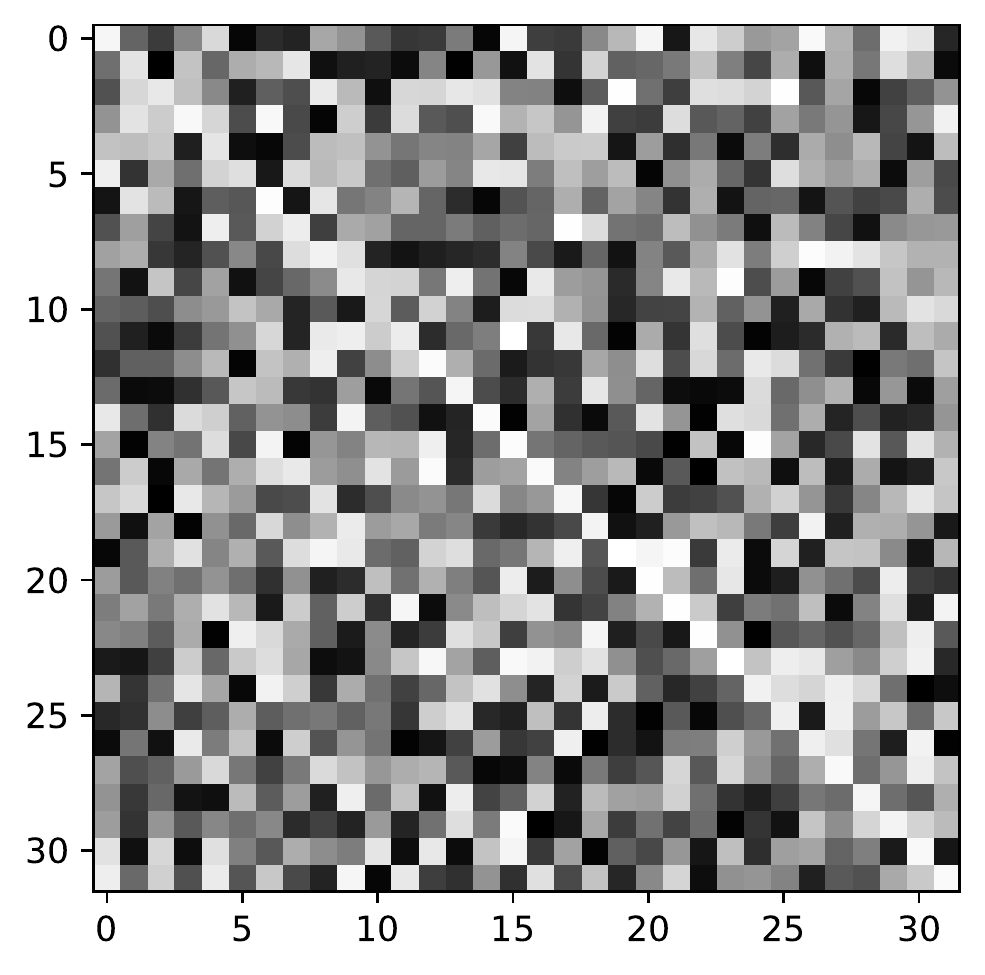}\end{tabular}& 
    \begin{tabular}{c}$\SA_\tau$ \\ $\xrightarrow{\makebox[0.08\linewidth]{}}$
    \end{tabular} &
    \begin{tabular}{l}\includegraphics[width=0.17\linewidth]{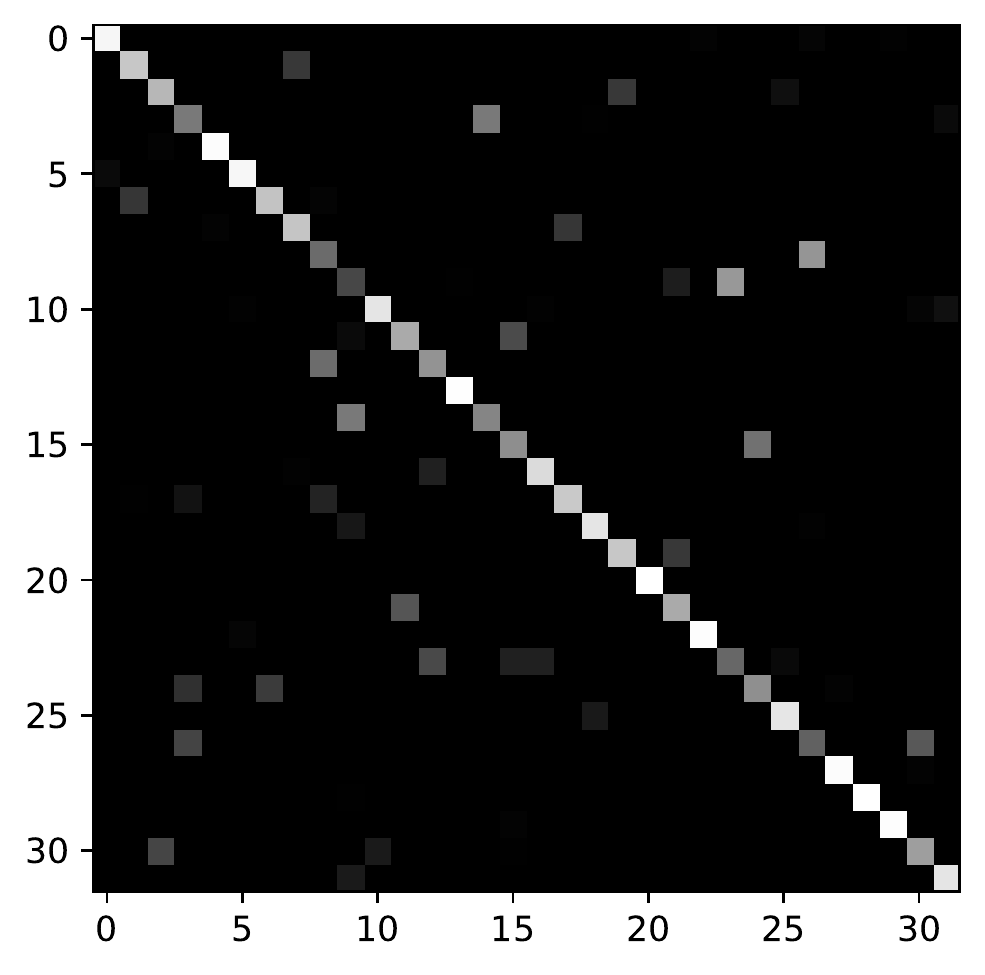}\end{tabular}\\
    \begin{tabular}{l}$\ICN\left\downarrow\rule{0cm}{0.04\linewidth}\right.$\end{tabular}&& 
    \begin{tabular}{l}$\ICN\left\downarrow\rule{0cm}{0.04\linewidth}\right.$\end{tabular}\\
    \begin{tabular}{l}\includegraphics[width=0.17\linewidth]{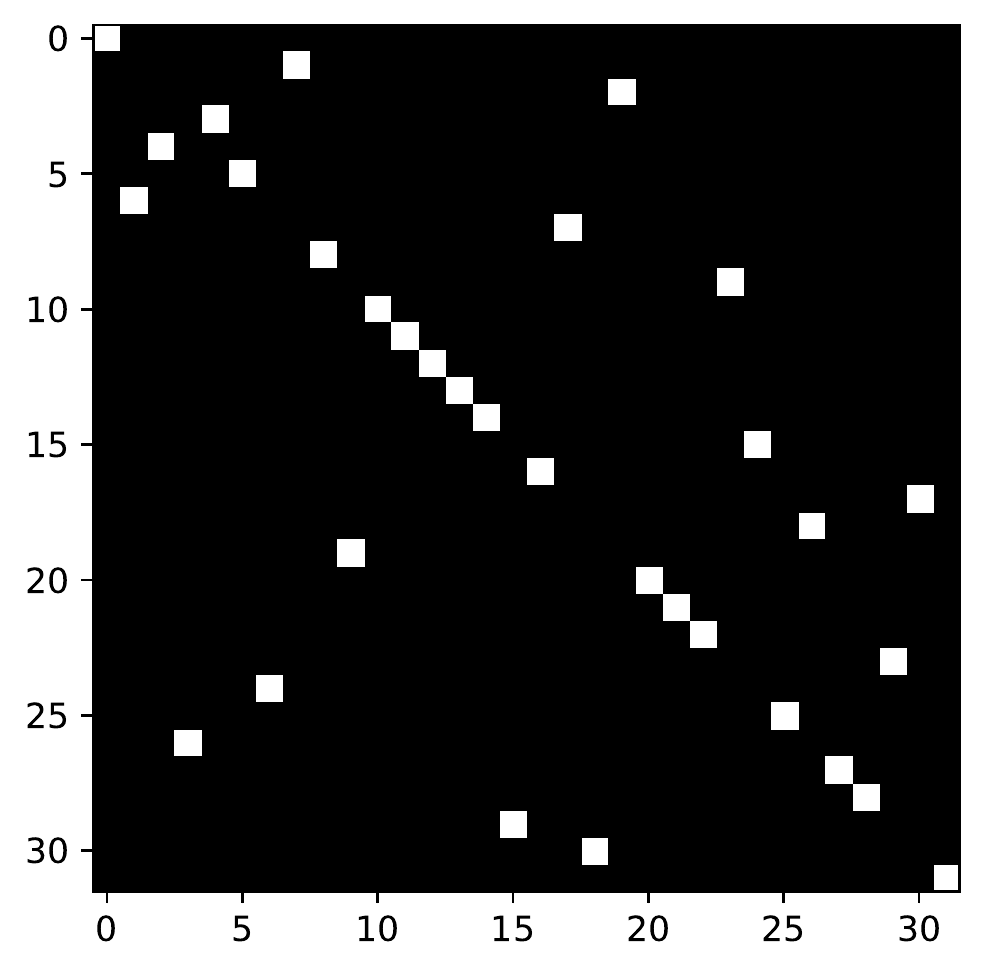}\end{tabular}& &
    \begin{tabular}{l}\includegraphics[width=0.17\linewidth]{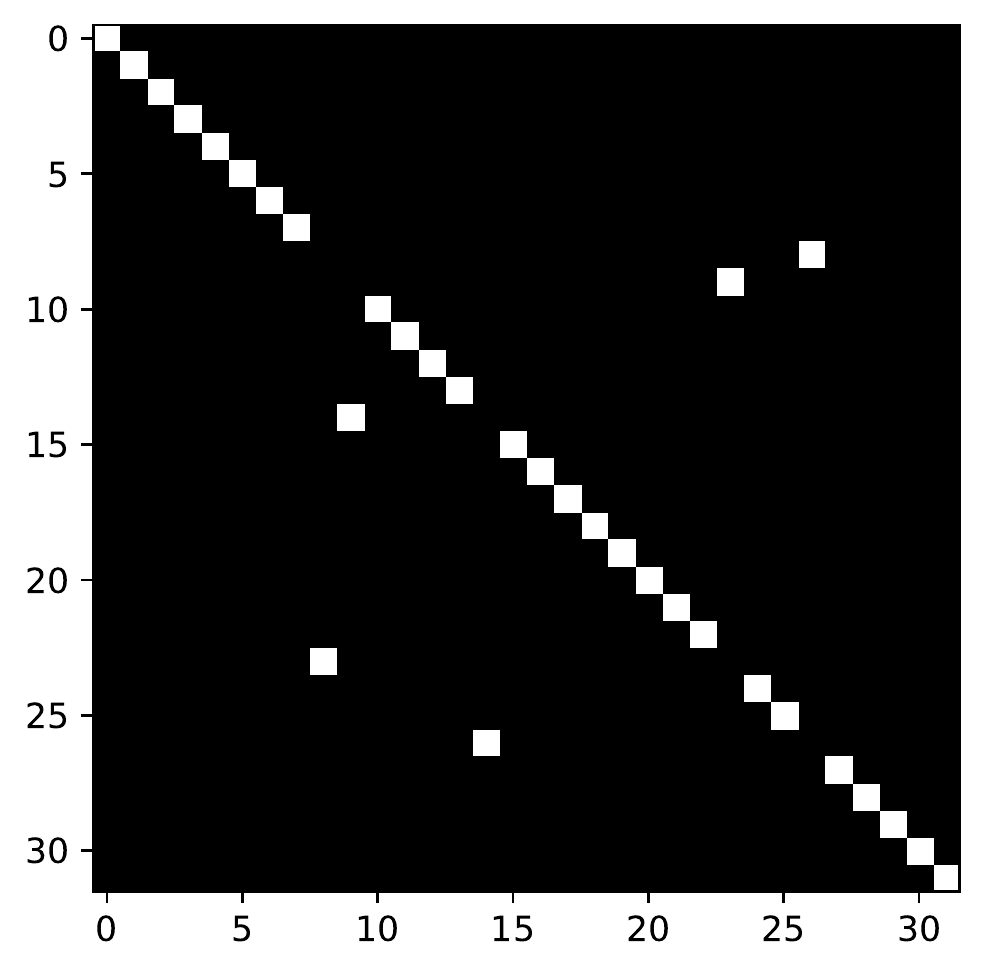}\end{tabular}
\end{tabular}
    \caption{Two solutions to get an end-to-end differentiable algorithm for the assignment problem.
    1) Directly apply \ICN to the original weight matrix (left). 
    2) Chain \SA and \ICN (right).
    Our experiments show that chaining the two algorithms performs 
    better for small graphs (such as the one in this figure) than directly applying \ICN to the initial weights 
    but at the expense of more iterations,
    and that both algorithms perform equally well for large graphs but that chaining 
    saves an order of magnitude of iterations.}
    \label{fig_SK_ICN}
\end{figure}

In all the experiments reported here, we sample the initial weight matrices by drawing 
each entry uniformly and independently in $[0,1]$.
$\SK$ iterations are stopped when the matrix is doubly-stochastic to a precision of $10^{-2}$, 
i.e. when $\forall i : |\sum_j A_{ij}-1| < 10^{-2}$ and $\forall j : |\sum_i A_{ij}-1| < 10^{-2}$.
$\ICN$ is stopped as soon as binarizing it by thresholding at $0.5$ gives a permutation matrix.
We sample $1000$ random matrices for each experiment.
We compute the optimal assignment by the Hungarian algorithm and we compute the 
optimality gap as in equation \ref{eqn_gap}.

The figure \ref{fig_SK_ICN_tau} shows the median and average gaps obtained for various 
problem sizes and values of $\tau$ for a sigmoid activation of parameter $a=5$.

\begin{figure}[!ht]\center\begin{tabular}{cc}
	\includegraphics[width=0.47\linewidth]{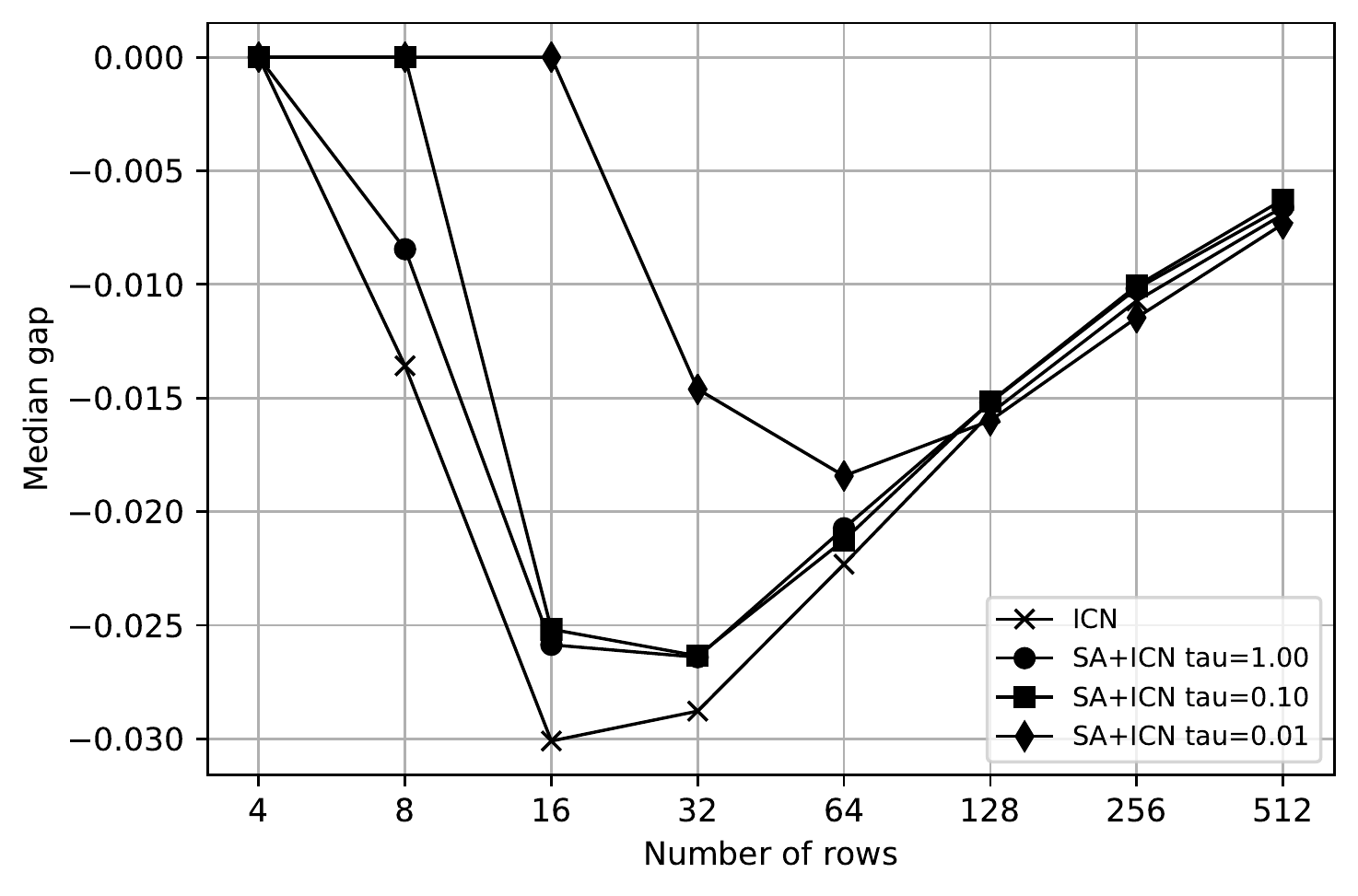}& 
	\includegraphics[width=0.47\linewidth]{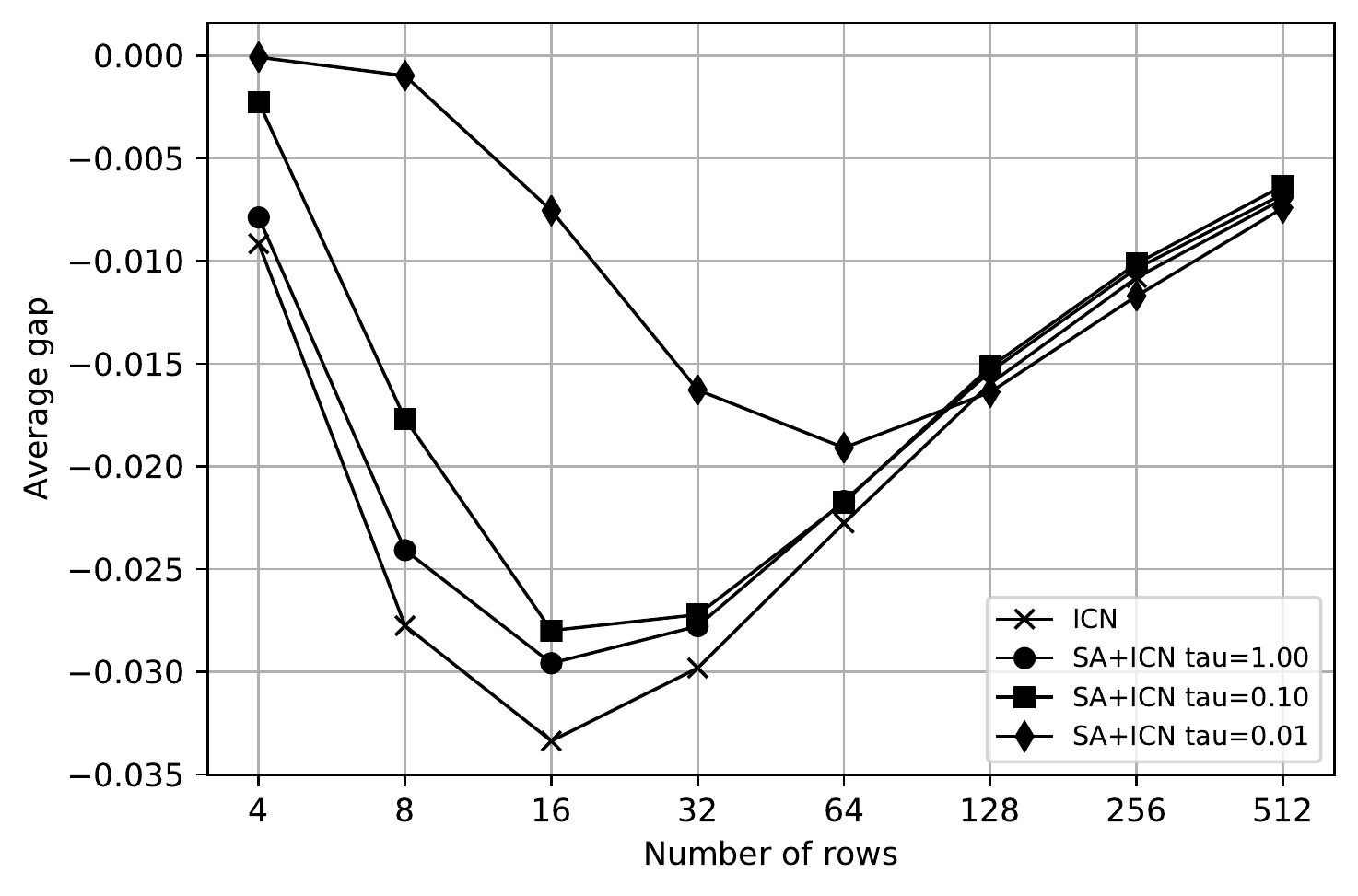}\\ 
	(a) Median gap & (b) Average gap
	\end{tabular}
    \caption{(a) Median and (b) average optimality gap for $\ICN$ and $\SA_\tau$ followed by $\ICN$ for different values of $\tau$. Sigmoid activation of parameter $a=5$.} 
    \label{fig_SK_ICN_tau}
\end{figure}

For matrices of size $n\le 64$, a smaller $\tau$ gives a better solution, as expected.
In particular, for $n\le 16$ and $\tau=0.01$, $\SA+\ICN$ with this activation exactly 
solves the assignment problem with a probability larger than $0.5$ (the median gap is zero).
Overall, the median optimality gap is lower than $3\%$ and the average optimality gap lower than $3.5\%$.
It is quite interesting also that for any $\tau$ the absolute gap first increases with the graph size but then decreases. Furthermore, for graph of size $128$ and more, 
all the median and average gaps are very close, independently of $\tau$ and 
for both $\ICN$ alone or $\SA+\ICN$.

The figure \ref{fig_SK_ICN_iterations} provides
the corresponding number of iterations of $\ICN$ alone, of $\SA_{\tau}$ and of $\ICN$ after $\SA_{\tau}$.

\begin{figure}[!ht]\center\begin{tabular}{c}
	\includegraphics[width=0.6\linewidth]{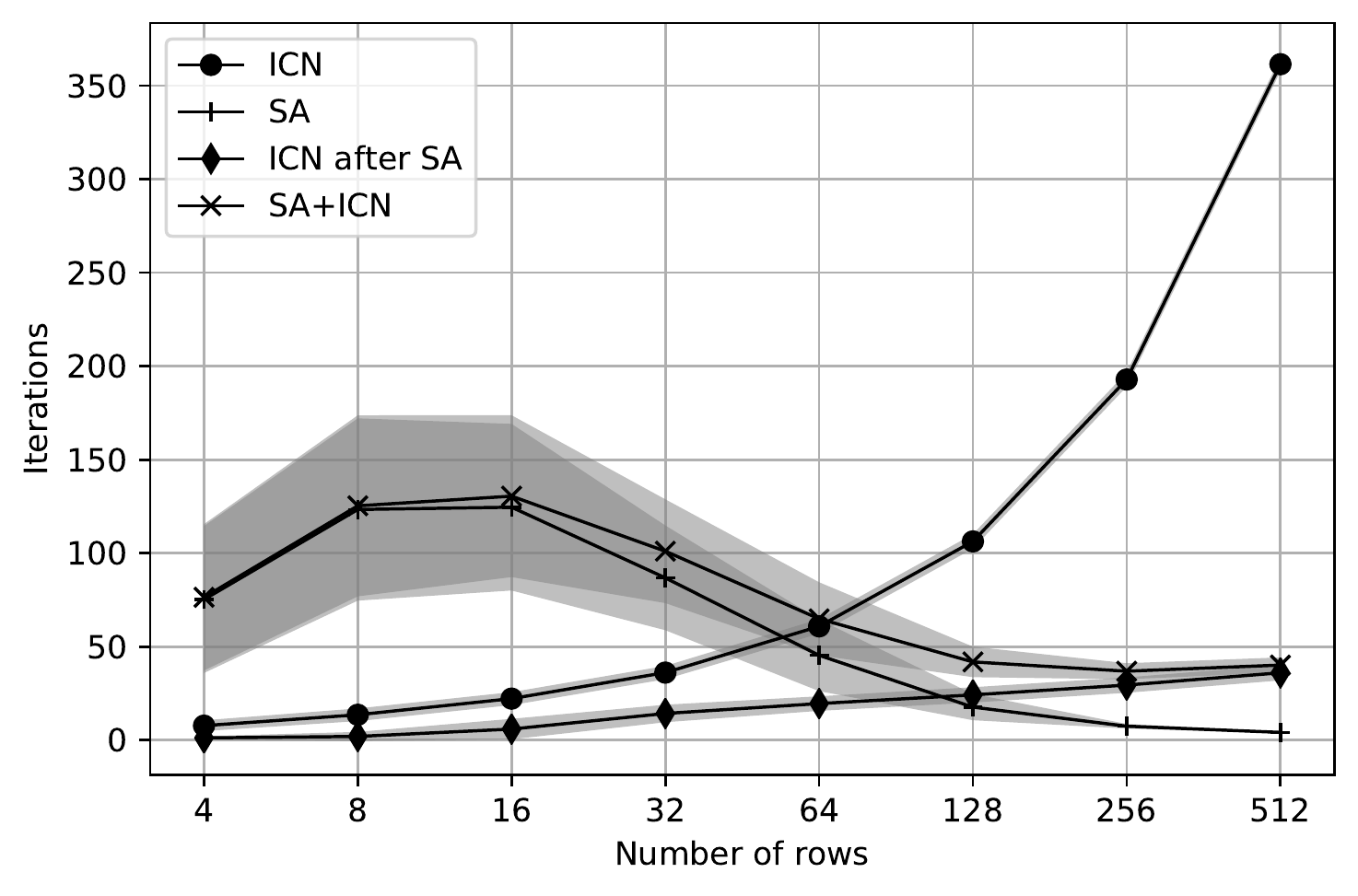}
	\end{tabular}
    \caption{Average number of iterations of $\ICN$, $\SA_{0.01}$, $\ICN$ after $\SA_{0.01}$ and 
    $\SA_{0.01}+\ICN$ for a sigmoid activation ($a=5$), 
    as a function of the problem size. 
    The grayed areas correspond to $1$ standard deviation around the average.} 
    \label{fig_SK_ICN_iterations}
\end{figure}

One sees that for uniformly sampled weights and our stopping criteria: 
\begin{itemize}

\item When running from the initial weight matrix, the number of iterations of $\ICN$ grows almost linearly vs. the problem size $n$ (the x-axis is logarithmic, doubling $n$ approximately doubles the number of iterations). It has a very small standard deviation, even when the average number of iterations is large 
($361$ iterations on average for $n=512$ with a standard deviation of $3.7$ iterations).

\item $\SA_{0.01}$ on its side first increases up to $n=16$ and then decreases to reach a few iterations ($4.1$ iterations on average for $n=512$). The standard deviation is large when the number of iterations is also large (e.g. $125$ iterations on average for $n=16$ with a standard deviation of $45$ iterations).

\item Starting from the nearly doubly stochastic matrix provided by $\SA$, $\ICN$ requires an order of magnitude less iterations to crisp the assignment than starting from scratch ($36$ vs $361$ iterations on average for $n=512$).

\item All-in-all, comparing the number of iterations required to the gaps obtained in figure \ref{fig_SK_ICN_tau}, 
for small sizes ($n\le 64$) $\ICN$ provides a less accurate solution than $\SA_{0.01}+\ICN$ but converges faster, 
and for larger sizes, $\SA_{0.01}+\ICN$ is faster and provides equivalent accuracy, hence should be preferred.
\end{itemize}

In figure \ref{fig_SK_ICN_iterations_sigmoid}, we have represented the average number of iterations of $\SA_{0.01}$ followed by $\ICN$ for increasing values of the parameter $a$ of a sigmoid activation.
One sees that for $a\le 3$, increasing $a$ increases the number of iterations hence reduces the 
convergence speed. The standard deviation is also large for these settings. 
But a transition occurs between $a=3$ and $a=4$, 
after which the number of iterations drops by an order of magnitude.
Convergence speed then increases with $a$ and the standard deviation has reduced importantly. 
Remark that for $a=4$, the average number of iterations has reduced below $50$ for $n=512$ but 
that the standard deviation is still large, so that we think that $a=4$ is close to the transition point.

\begin{figure}[!ht]\center\begin{tabular}{ccc}
	\includegraphics[width=0.3\linewidth]{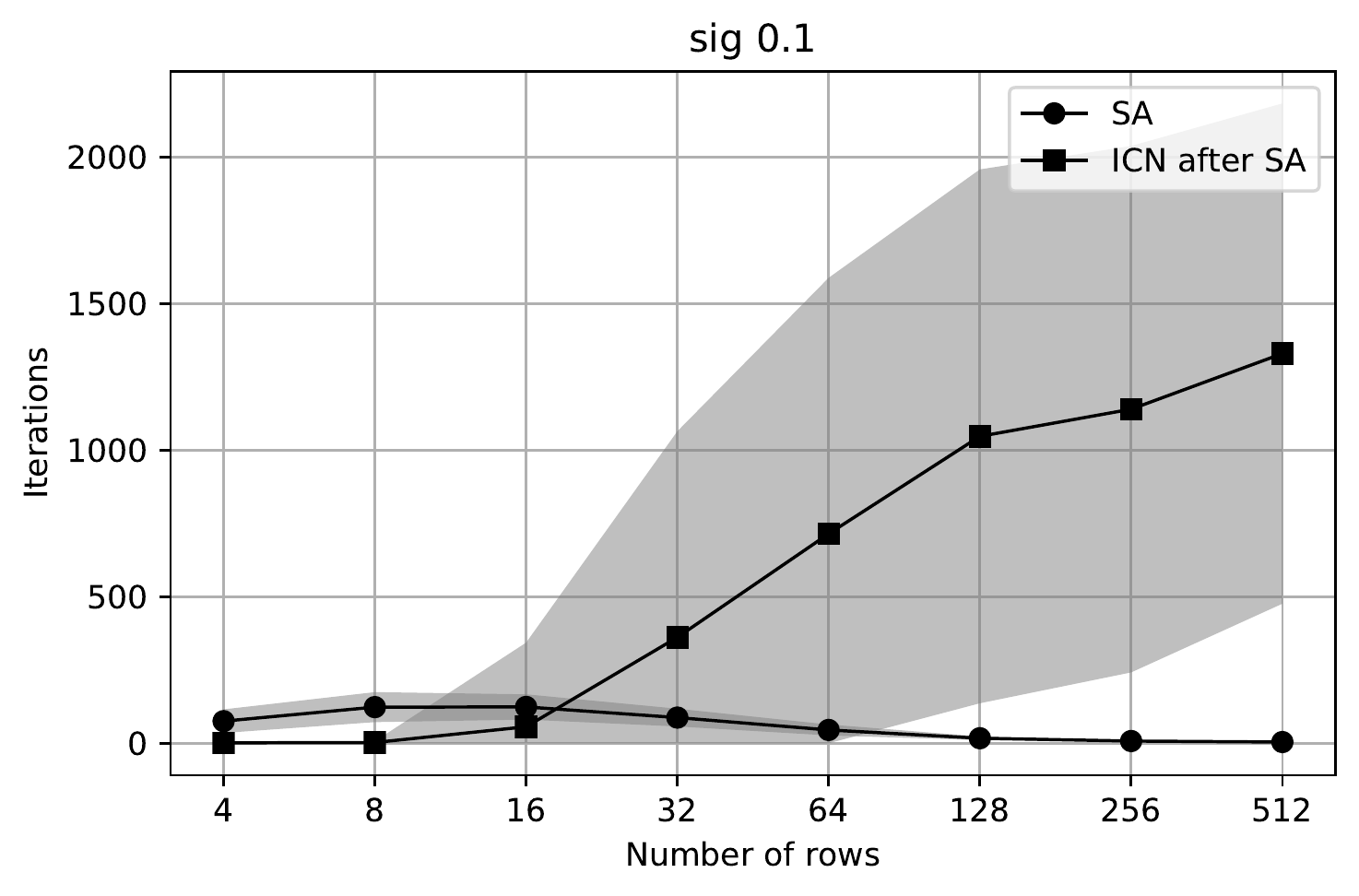}&
	\includegraphics[width=0.3\linewidth]{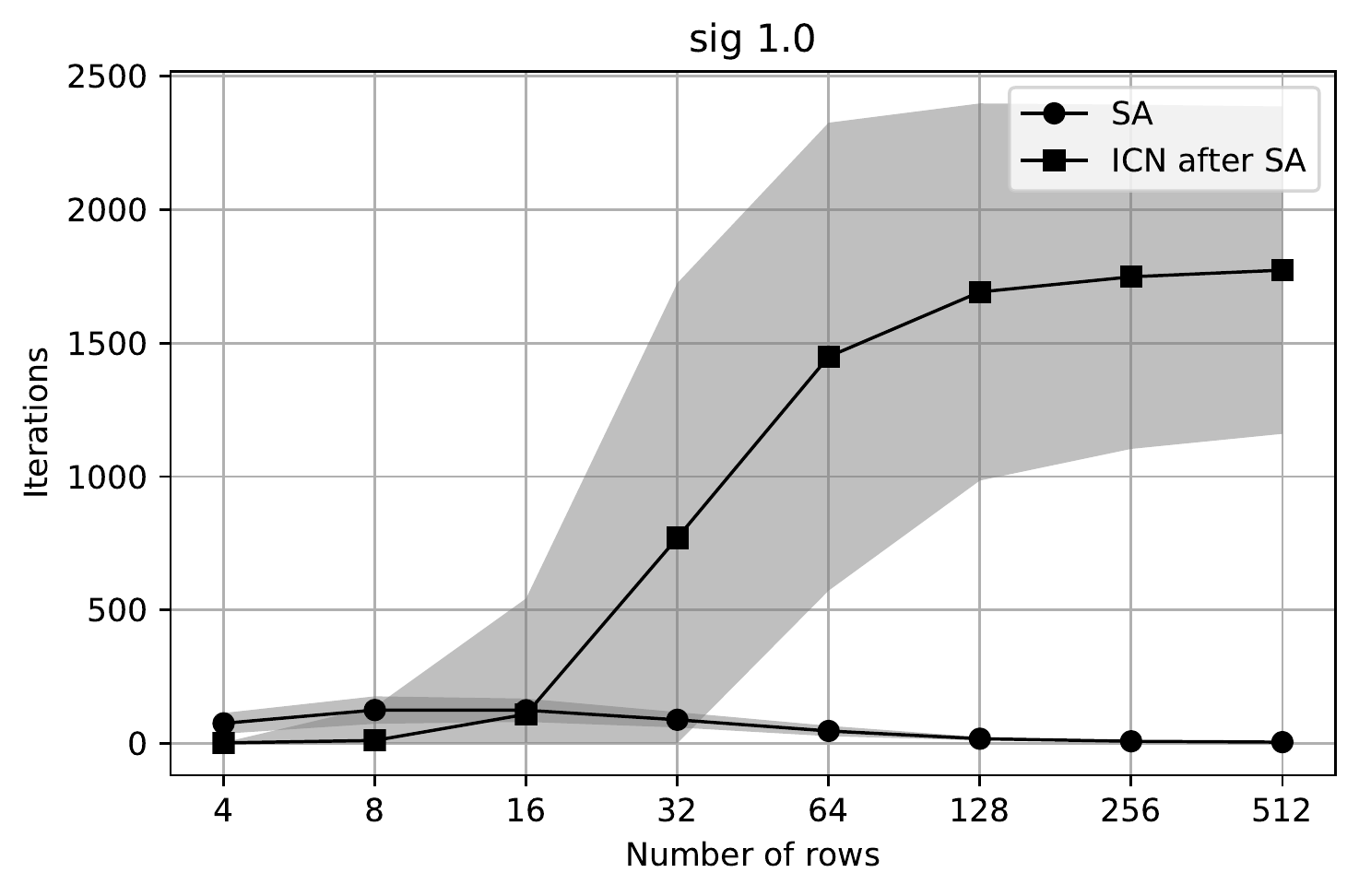}&
	\includegraphics[width=0.3\linewidth]{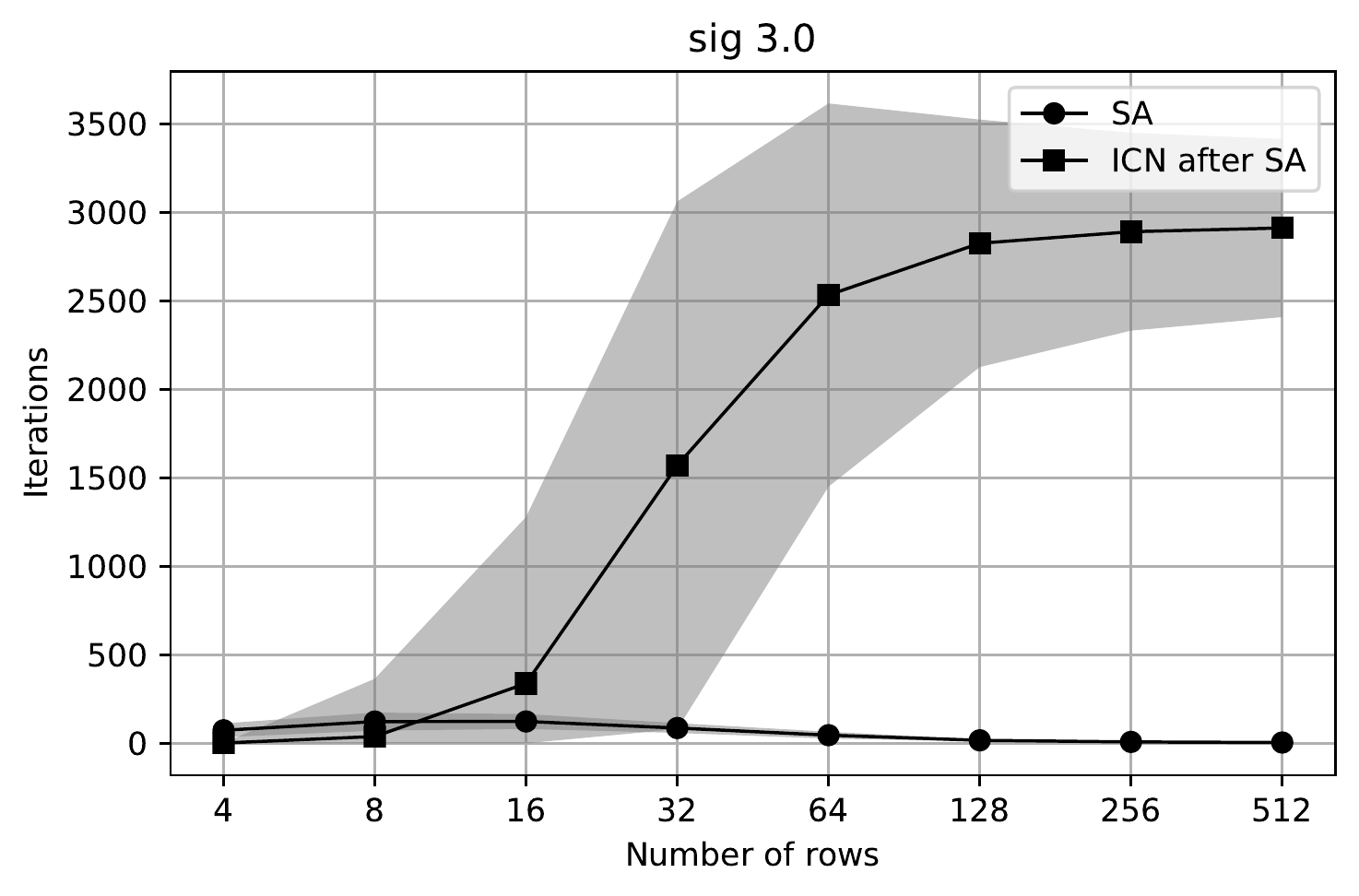}\\
	(a) $a=0.1$ & (b) $a=1.0$ & (c) $a=3.0$ \\
	\includegraphics[width=0.3\linewidth]{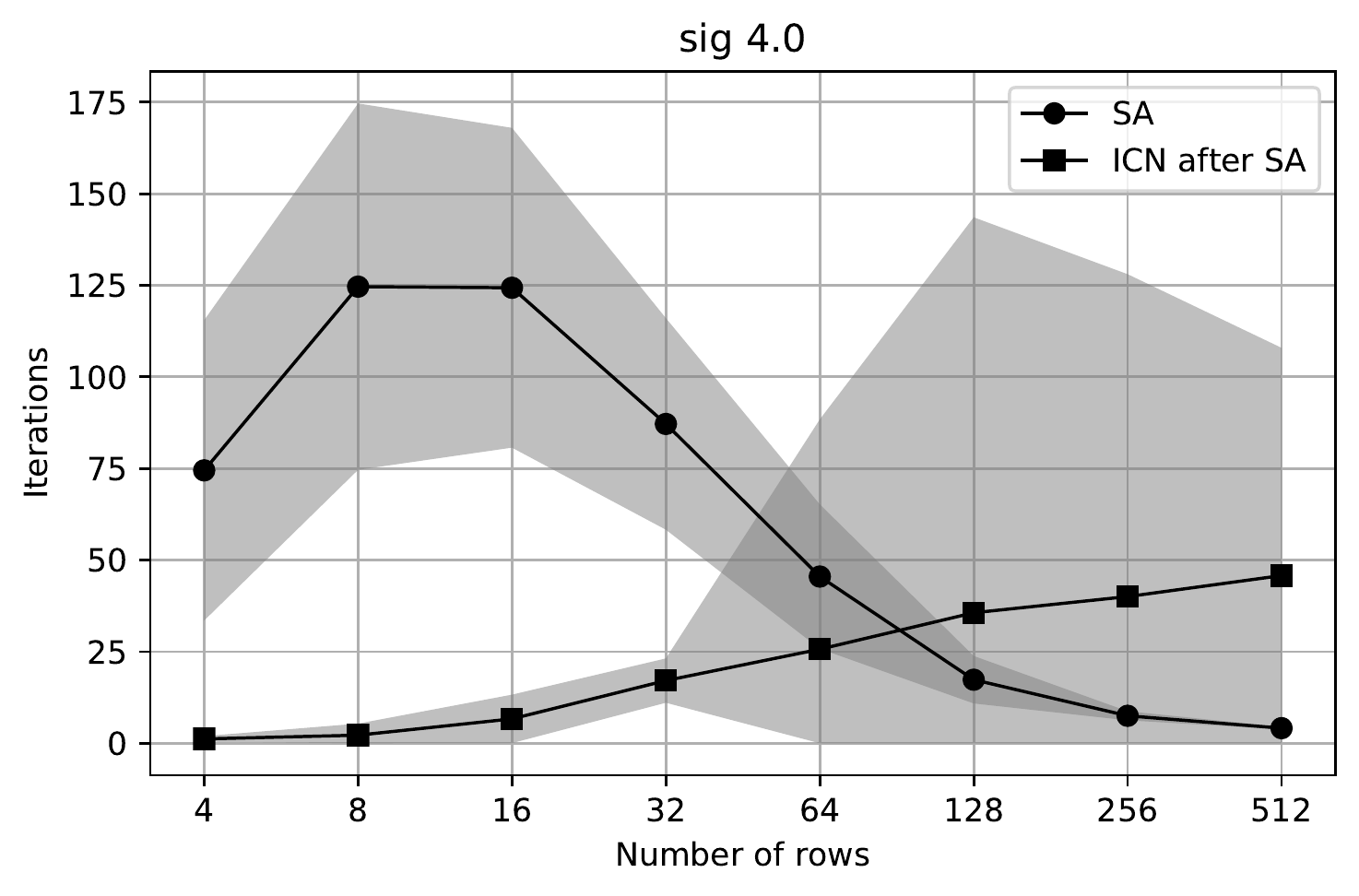}&
	\includegraphics[width=0.3\linewidth]{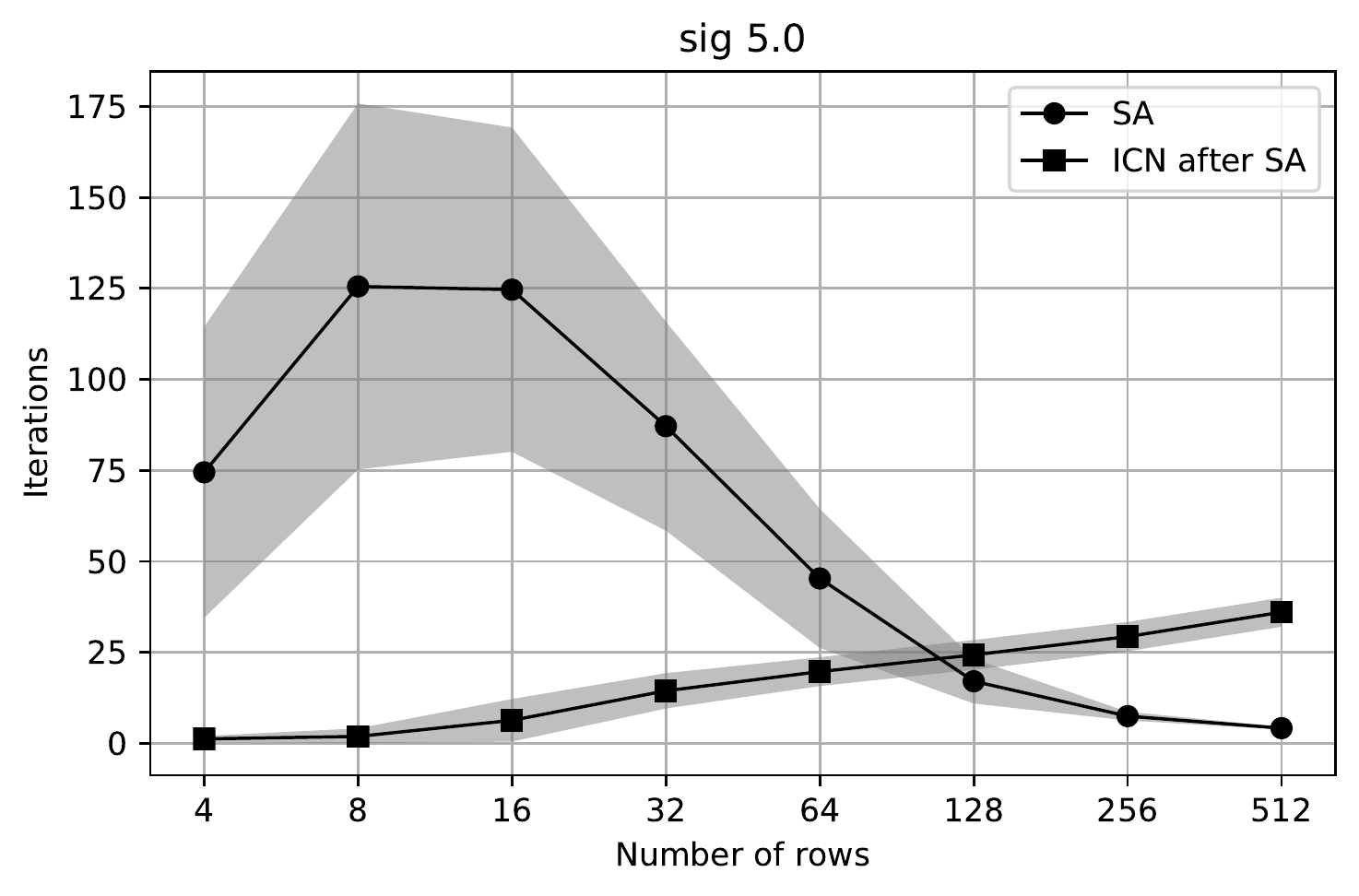}&
	\includegraphics[width=0.3\linewidth]{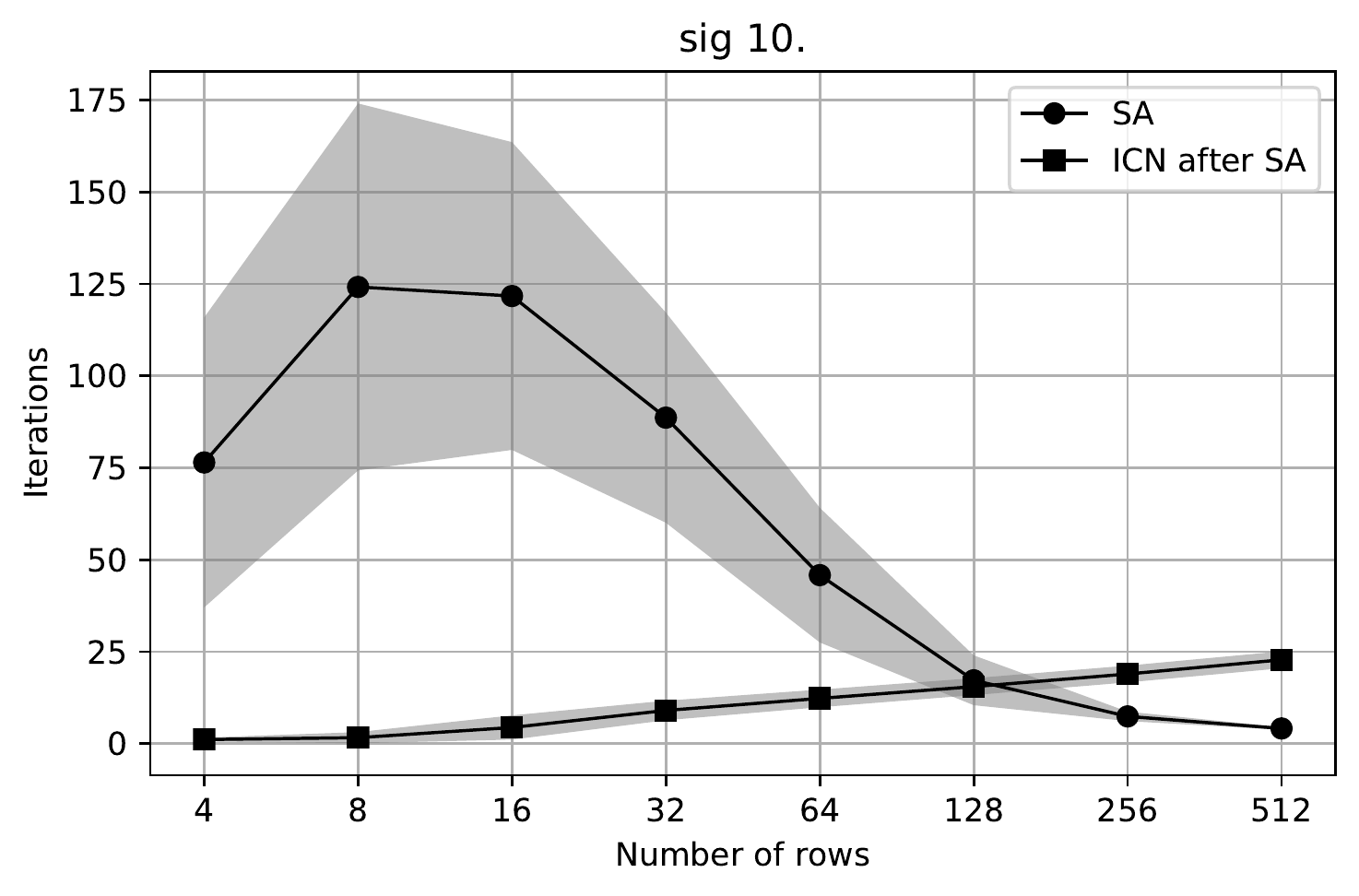}\\
	(d) $a=4.0$ & (e) $a=5.0$ & (f) $a=10.0$ 
	\end{tabular}
    \caption{Average number of iterations of $\SA_{0.01}$ followed by $\ICN$ for increasing non-linearity 
    of a sigmoid activation.}
    \label{fig_SK_ICN_iterations_sigmoid}
\end{figure}

In figure \ref{fig_SK_ICN_gap_vs_iterations}, we look at the optimality gap of $\SA+\ICN$ 
vs convergence speed for various activation functions.
The results show that increasing the exponent of a power activation seems 
to systematically degrade both the optimality gap and the convergence speed 
(compare $a=1.01$ to $a=1.1$ and $a=1.5$).
However, reducing the shift at zero, which amounts to reducing the derivative of the activation function at zero, 
speeds up convergence while only slightly increasing the gap (compare $a=1.01, t=10^{-1}$ to $a=1.01, t=10^{-4}$). 
The behaviour of the algorithm for sigmoid activations is different, as noticed previously by looking at the convergence speed w.r.t. non-linearity.
Increasing the non-linearity of the sigmoid from $a=0.1$ to $a=3$ both degrades the optimality gap and the 
convergence speed, however, starting from $a=4$, the convergence speed dramatically increases 
and the optimality gap regresses. Both performances for lvalues of $a$ larger than $4$ then seem to stabilize.

As a conclusion, there is a trade-off between optimality gap and convergence speed.
Increasing non-linearity increases convergence speed but degrades accuracy, which 
is intuitively related to the range at which information propagates in the graph through the 
speed at which the weights are binarized.
We observe a sort of phase transition phenomenon at a certain point, 
in the sense that past a certain amount of linearity, the dynamics get faster by an order of magnitude 
but seem to freeze at this level.

\begin{figure}[!ht]\center\begin{tabular}{c}
	\includegraphics[width=0.6\linewidth]{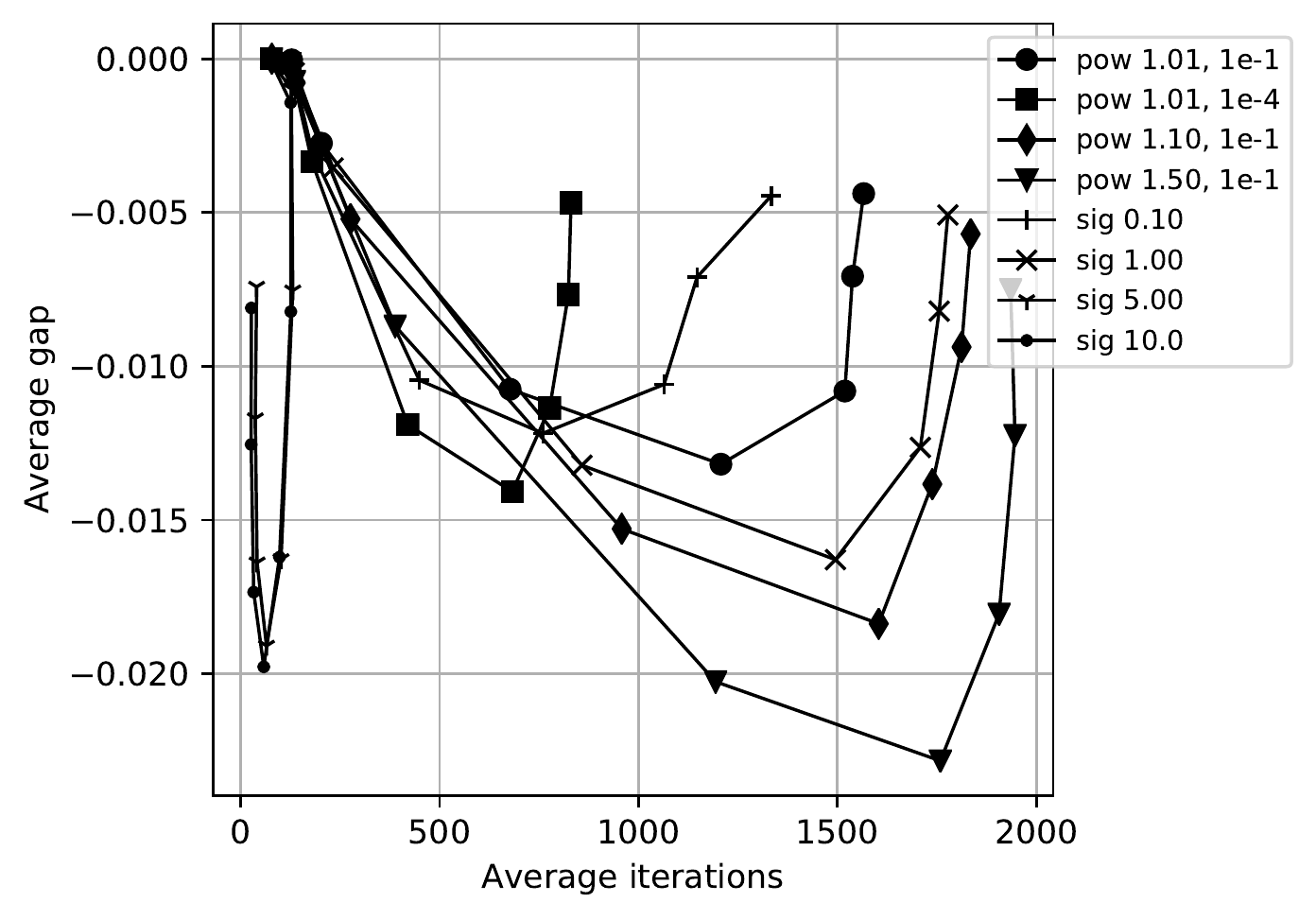}
	\end{tabular}
    \caption{Average optimality gap vs average number of iterations of $\SA_{0.01}+\ICN$  for various activations.
    The points on each curve correspond to different problem sizes, from $n=4$ to $n=512$.} 
    \label{fig_SK_ICN_gap_vs_iterations}
\end{figure}
\section{Additional remarks and discussion}\label{sec_discuss}
\subsection{Elements of general convergence}
Despite the results above on convergence around maximal independent sets, 
we weren't able to prove convergence for any initialization.
We always verified it in practice and we conjecture that iterative graph normalization converges in general.

We also remarked another interesting property: 
in all our numerical experiments with a linear activation, 
the sum of the weights of the graph was systematically strictly increasing (except on a fixed point of course).
As the weights $x$ are positive, this sum also corresponds to the $L_1$ norm of $x$.
We related the increase of the $L_1$ norm by graph normalization 
to a more fundamental linear algebra property about adjacency matrices of simple connected graphs 
which we also verified through extensive numerical experiments.
The following property is our main conjecture upon which the general convergence of graph normalization rests:

\begin{conjecture}\label{conj_QL1} If $A$ is the adjacency matrix of a connected simple graph then 
\begin{eqnarray}
 \forall x \in {\mathbb{R}^{+*}}^{n} && 
 y^t (A+I) y \le \norm{y}_1 \\
 \textrm{where} && y = x \HD (A+I) x  \nonumber
\end{eqnarray}
\end{conjecture}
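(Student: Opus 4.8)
The plan is to prove the stronger (and cleaner) statement that the defect
\[
D \;:=\; \norm{y}_1 - y^t(A+I)y
\]
is nonnegative for \emph{every} simple graph --- connectivity will turn out to be unnecessary --- by exhibiting it as an explicit sum of squares over the edges of $G$. Write $B=A+I$ and $c = Bx$, so that each $c_i = x_i + \sum_j A_{ij}x_j > 0$ (using $x>0$) and $y = \R(x) = x\HD c$, i.e. $y_i = x_i/c_i$. Since $y\ge 0$ we have $\norm{y}_1 = \sum_i y_i$, so the target inequality is exactly $D\ge 0$.

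First I would expand $D = \sum_i y_i\bigl(1-(By)_i\bigr)$, using $(By)_i = y_i + \sum_j A_{ij}y_j$. The key algebraic step is to rewrite the per-node factor $1-(By)_i$ as a discrete gradient along the edges incident to $i$. Using the two identities $1-y_i = (c_i-x_i)/c_i = \bigl(\sum_j A_{ij}x_j\bigr)/c_i$ and $\sum_j A_{ij}y_j = \sum_j A_{ij}x_j/c_j$, one obtains
\[
1-(By)_i \;=\; \sum_j A_{ij}\,x_j\left(\frac{1}{c_i}-\frac{1}{c_j}\right).
\]
Substituting $y_i = x_i/c_i$ then gives $D = \sum_{i,j} A_{ij}\,\dfrac{x_ix_j}{c_i}\left(\dfrac{1}{c_i}-\dfrac{1}{c_j}\right)$ as an ordered double sum.

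The decisive move is to symmetrize this double sum using $A_{ij}=A_{ji}$: pairing the $(i,j)$ and $(j,i)$ terms collapses the bracketed factors into a perfect square, yielding
\[
D \;=\; \frac12\sum_{i,j} A_{ij}\,x_ix_j\left(\frac{1}{c_i}-\frac{1}{c_j}\right)^{2}\;\ge\;0,
\]
which is the claim. I expect the main obstacle to be purely one of \emph{recognition}: spotting that $1-(By)_i$ is a weighted difference of the reciprocals $1/c_i$ over neighbors, so that symmetry produces a sum of squares; once this telescoping form is in hand the inequality is immediate. Several by-products then fall out: connectivity of $G$ is never used; equality holds iff $c_i=c_j$ across every edge (for connected $G$, iff $(A+I)x$ is constant); and combining $D\ge 0$ (applied with $y$ in place of $x$) with the Cauchy--Schwarz inequality gives $\norm{\R(y)}_1 \ge \norm{y}_1$, the monotonicity of the $L_1$ norm that drives the convergence argument the conjecture is meant to support.
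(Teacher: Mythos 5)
The statement you were asked to prove is one the paper itself does \emph{not} prove: it is stated as Conjecture~\ref{conj_QL1} and described as ``our main conjecture upon which the general convergence of graph normalization rests,'' with no argument offered beyond numerical evidence. Your proposal, if correct, therefore settles an open point of the paper rather than reproving a known one --- and I have checked the algebra and it is correct. Writing $c=(A+I)x$ (so $c>0$ when $x>0$) and $y_i=x_i/c_i$, the identity $1-((A+I)y)_i=\sum_j A_{ij}x_j(1/c_i-1/c_j)$ follows exactly as you say from $c_i-x_i=\sum_j A_{ij}x_j$, and symmetrizing the resulting ordered double sum over the symmetric $A$ gives
\[
\norm{y}_1-y^t(A+I)y\;=\;\tfrac12\sum_{i,j}A_{ij}\,x_ix_j\left(\frac{1}{c_i}-\frac{1}{c_j}\right)^{2}\;\ge\;0,
\]
which I verified numerically on $P_3$ with $x=\ONE$ (both sides equal $1/18$). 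Your sum-of-squares form in fact strengthens the conjecture: connectivity is irrelevant, and equality holds precisely when $(A+I)x$ is constant across every edge, i.e.\ (for connected $G$) when $x$ is a scalar multiple of a normalized fixed cluster --- which matches the paper's informal remark that equality in Theorem~\ref{ref_thLoneincreases} occurs only at fixed points, and makes that theorem unconditional. One small imprecision in your closing remark: to deduce $\norm{\R(y)}_1\ge\norm{y}_1$ from Cauchy--Schwarz you need $y^t(A+I)y\le\norm{y}_1$ for $y=\R(x)$, which is your inequality applied to the \emph{original} $x$, not ``with $y$ in place of $x$''; the substance is right but the parenthetical should be corrected.
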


If this conjecture holds then the next theorem shows that the sum of the weights of a graph systematically increases by normalization and converges:

\begin{restatable}{theorem}{thLoneincreases}\label{ref_thLoneincreases}
Let $G=(A,x)$ be a connected simple graph with positive weights $x>0$.
Let $x^0 = x$ and $x^{k+1} = \R(x^k)$.
If $x^0$ is not a fixed point and if the conjecture \ref{conj_QL1} holds then $\norm{x^k}_1$ is a strictly increasing sequence with respect to $k$ which converges as $k$ goes to infinity.
\end{restatable}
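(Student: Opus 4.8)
The plan is to separate the two claims---monotonicity and convergence---and to note that the latter is almost free once the former is in hand. Write $s_k=\norm{x^k}_1$. Since $\R_h$, and in particular $\R$, maps weights into $[0,1]^n$ (Section~\ref{sec_basic_properties}), every iterate from $x^1$ onward satisfies $0<s_k\le n$, so a bounded strictly increasing real sequence necessarily converges. The entire content is therefore the strict increase $s_{k+1}>s_k$.

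For monotonicity I would exploit one exact identity together with one application of Conjecture~\ref{conj_QL1}. Put $B=A+I$. Since $\R_i(x^k)=x^k_i/(Bx^k)_i$, we have $x^{k+1}_i (Bx^k)_i=x^k_i$; summing over $i$ (legitimate because everything is positive) gives the identity $(x^{k+1})^t B\,x^k=\norm{x^k}_1=s_k$. On the other hand, Conjecture~\ref{conj_QL1} applied with input $x^k$ and $y=x^{k+1}=\R(x^k)$ reads $(x^{k+1})^t B\,x^{k+1}\le s_{k+1}$. Chaining the two, the desired $s_{k+1}\ge s_k$ follows at once from the middle inequality $(x^{k+1})^t B\,x^{k+1}\ge (x^{k+1})^t B\,x^k$, i.e. $(x^{k+1})^t B\,(x^{k+1}-x^k)\ge0$.

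This middle step is the crux, and it is where the fact that we are \emph{iterating} must be used: $x^k$ is itself a normalization $\R(x^{k-1})$, hence a point of $\IMN{A}$, and the inequality is simply false for an arbitrary positive vector---already on $K_2$ the very first normalization can strictly decrease the $L_1$ norm. Abbreviating $u=x^k$, $v=x^{k+1}$, $P=u^t B u$, $Q=u^t B v=v^t B u=s_k$ and $R=v^t B v$, the bilinear expansion $(v-u)^t B(v-u)=R-2Q+P$ rearranges to $R-Q=(Q-P)+(v-u)^t B(v-u)$. The first term is nonnegative: applying Conjecture~\ref{conj_QL1} at the \emph{previous} iterate $x^{k-1}$ (whose normalization is $u$) gives $P=u^t B u\le\norm{u}_1=Q$. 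The task thus reduces to the single inequality $(v-u)^t B(v-u)\ge P-Q$, a lower bound on the residual quadratic by a nonpositive quantity. I expect this to be the main obstacle: $B$ is not positive semidefinite in general (for instance $P_3$ already has an eigenvalue $1-\sqrt2<0$), so the residual quadratic can itself be negative and cannot simply be dropped; one must show that the slack $Q-P$ dominates it. To close it I would feed in the explicit coordinatewise relation $v_i=u_i/(Bu)_i$ together with the image constraint on $u$, re-deriving the bound from the same mechanism that underlies Conjecture~\ref{conj_QL1} itself.

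Finally, strictness and convergence. Equality throughout the chain forces $v=u$, i.e. $x^{k+1}=x^k$, which is exactly the fixed-point case excluded by hypothesis; the same residual-quadratic estimate yields a strictly positive gap whenever $x^{k+1}\neq x^k$, so $s_{k+1}>s_k$, and the bounded monotone sequence $(s_k)_{k\ge1}$ converges. I would record explicitly that the initial step $x^0\to x^1$ is genuinely exceptional, since $x^0$ need not lie on $\IMN{A}$ and the compensating term $Q-P$ is unavailable at $k=0$; the clean monotonicity holds from the first iterate onward, consistent with the convention of Section~\ref{sec_basic_properties} of starting from an already-normalized weight vector.
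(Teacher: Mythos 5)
Your scaffolding is sound and matches the paper's: the identity $(x^{k+1})^t B x^k=\norm{x^k}_1$, the observation that Conjecture \ref{conj_QL1} only applies to vectors in the image of $\R$ (so clean monotonicity starts at $k\ge 1$), the boundedness giving convergence, and equality forcing a fixed point. But the central step is left open, and this is a genuine gap rather than a routine detail. You reduce $s_{k+1}\ge s_k$ to $R\ge Q$, i.e.\ $v^tB(v-u)\ge 0$, and then to bounding the indefinite quadratic $(v-u)^tB(v-u)$ from below by $-(Q-P)$; you explicitly flag that you do not know how to do this, and indeed nothing in the conjecture controls that residual. This reduction is also stronger than what is needed: $R\ge Q$ together with $s_{k+1}\ge R$ gives the claim, but the paper never proves $R\ge Q$ and does not need to.

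The paper's route bounds the increment $s_{k+1}-s_k$ directly and \emph{componentwise}. Writing $u=x^k$, $v=\R(u)$, one has
\begin{equation*}
v_i-u_i \;=\; \frac{u_i\bigl(1-(Bu)_i\bigr)}{(Bu)_i},
\end{equation*}
and the elementary sign observation that $(Bu)_i\le 1$ exactly when the numerator is nonnegative (and $(Bu)_i\ge 1$ when it is nonpositive) yields, in both cases, $v_i-u_i\ \ge\ u_i\bigl(1-(Bu)_i\bigr)$. Summing over $i$ gives $s_{k+1}-s_k\ \ge\ \norm{u}_1-u^tBu = Q-P$, which is nonnegative by one application of Conjecture \ref{conj_QL1} at $x^{k-1}$; strictness follows because termwise equality forces $(Bu)_i=1$ at every node (as $u>0$), i.e.\ a fixed point. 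In your notation this is precisely the inequality $s_{k+1}-s_k\ge Q-P$ obtained \emph{without} passing through $R$ or the quadratic form in $v-u$, so the obstacle you identified (non-definiteness of $B$) never arises. If you want to salvage your own decomposition, note that $s_{k+1}-s_k\ge Q-P$ combined with $s_{k+1}\ge R$ does not by itself give $R\ge Q$, so the two routes are genuinely different; the missing ingredient in yours is exactly the one-line sign comparison above.
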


Of course, convergence of the $L_1$ norm of the weights doesn't imply convergence of the weights themselves 
\footnote{Convergence of the $L_1$ norm of $x^k$ isn't convergence \emph{in} $L_1$ norm, 
i.e. convergence of $\norm{x^k - x^*}_1$ for some point $x^*$, which would imply 
component-wise convergence as we are in a finite space.}.
However, as we have proved, the image of normalization is a hypersurface which
is in a 1-1 mapping with the unit simplex. 
Such a variety is very regular and increasing $L_1$ norm amounts to 
climbing it, eventually ending on one of its maxima in $L_1$ norm.
We thus further conjecture that convergence of the $L_1$ norm of the weight vector implies 
component-wise convergence.

\subsection{Independent sets and Rayleigh quotients}
The independent sets of a graph $A$ are intimately related with the Rayleigh Quotients of $A+I$.

A binary vector $x\ne 0$ is the indicator of an independent set $S$ if and only if it is orthogonal to $Ax$.
Indeed if $i\in S$ then $\forall j\in \N{i} : j\notin S$ hence $x_i = 1 \Rightarrow (Ax)_i = 0$ and
so $x^t Ax = 0$.
Conversely, if $x^t Ax = 0$ then whenever $x_i = 1$ then $(Ax)_i$ must be null.
This equation corresponds to the constraint used in quadratic 
programming formulations of \MWIS.

We thus have:
\begin{eqnarray*}
\forall x \in \{0,1\}^n \ne 0_n : x \in \IS(A) &\Leftrightarrow & x^t Ax = 0 \\
&\Leftrightarrow & x^t Ax  + x^t x= x^t x \\
&\Leftrightarrow & x^t (A+I)x = x^t x 
\end{eqnarray*}
And thus
\begin{eqnarray*}
\forall x\in \{0,1\}^n \ne 0_n : x \in \IS(A)  &\Leftrightarrow & \frac{x^t (A+I)x}{x^t x} = 1
\end{eqnarray*}
Which means that a binary vector $x$ is an independent set of $A$ if and only if the Rayleigh Quotient at $A+I$ on $x$ is equal to $1$.

Now remark that \emph{all} the terms of the sum $x^t Ax$ are null and not only the whole sum, 
hence the following vector equation also holds:
\begin{eqnarray}
\forall x\in \{0,1\}^n \ne 0_n : x \in \IS(A) &\Leftrightarrow &  x \HM Ax = 0_n \nonumber \\
&\Leftrightarrow & x \HM (A+I) x = x \HM x\label{eqn_ray_squared}
\end{eqnarray}
As $x$ is binary $\forall i: x_i^2 = x_i$, hence $x \HM x = x$,  and so
\[
x \HM (A+I) x = x
\]

which is a fixed point equation for the independent sets of $A$.

Graph normalization can thus be viewed as the corresponding fixed point algorithm:
\[ 
x_{n+1} = x_n \HD (A+I)x_n
\]

Note that this derivation suggests that any non-linear function $h$ which is stable on $0$ and $1$, 
either applied on the numerator of the Hadamard division only, 
or on $x_n$ in both the numerator and the denominator - which amounts to alternating 
normalization and non-linear activation - also gives a fixed point equation.

Further note that these fixed point equations also have non binary fixed points in general, and they have much in practice as we have seen before, but which - as we conjecture - can be avoided by introducing a non-linearity within the iterations.

Interestingly, the fixed point algorithm based on the squared $x_i$ at the numerator, 
i.e. corresponding to the equation \ref{eqn_ray_squared}, diverges empirically.
We explain it by remarking that the equation $x \HM A x = 0_n$  also holds for any multiple 
of the indicator vector of an independent set.
Using $x_i^2 = x$ then restricts it to binary vectors. 

\subsection{Matrix cross normalization in the Fourier domain}
An interesting property of cross normalization of a matrix 
comes from the fact that the matrix of the cross sums, 
i.e. $A\ONE\ONE^t + \ONE\ONE^tA - A$ can be expressed as 
a 2D circular convolution (denoted by $*$):
\[
A\ONE\ONE^t + \ONE\ONE^tA - A \quad= \quad \Gamma * A
\]
with 
\[
\Gamma = 
\begin{pmatrix}
1 & 1 & \cdots & 1 \\
1 & 0 & \cdots & 0 \\
\vdots & \vdots & & \vdots \\
1 & 0 & \cdots & 0 
\end{pmatrix}
\]
The convolution theorem applied to a fixed point $X$ of cross normalization then gives:
\begin{eqnarray*}
X \HM (\Gamma * X) = X &\Leftrightarrow&
F\left[ X \HM (\Gamma * X)\right] = F(X) \\
&\Leftrightarrow& 
F(X) * F(\Gamma * X) = F(X) \\ 
&\Leftrightarrow& 
F(X) * \left[ F(\Gamma)\HM F(X) \right] = F(X)
\end{eqnarray*}
where $F$ represents the Discrete Fourier Transform. 
The roles of the pointwise multiplication (Hadamard product) and the convolution are thus swapped
in the spectral domain.

Cross normalization could thus be related to a form of deconvolution in the spectral domain.
\subsection{Various extensions}
\subsubsection{Extension to other fields}
The normalization equation is valid for weights belonging to any field, e.g. to the field of complex numbers.
Graph normalization also empirically converges on graphs valued by complex weights.
Furthermore, applying a non-linear activation on both the real and imaginary part of the weights after 
each normalization also empirically yields convergence to binary solutions, 
i.e. projects the complex weights on the two points $0$ and $1$ of the real axis, 
like in the real-valued weights case.

\subsubsection{Extension to continuous time dynamics}
Let $A$ be the adjacency matrix of a graph and $B=A+I$.
The continuous time ordinary differential equation corresponding to $\IGN$ is
\[
\frac{dx}{dt} = h ( x \HD Bx ) - x 
\]

Numerical integrations of this ODE for $P_3$ are represented in figure \ref{fig_P_3_continuous}.

\begin{figure}[!ht]\center\begin{tabular}{cc}
	\includegraphics[width=0.4\linewidth]{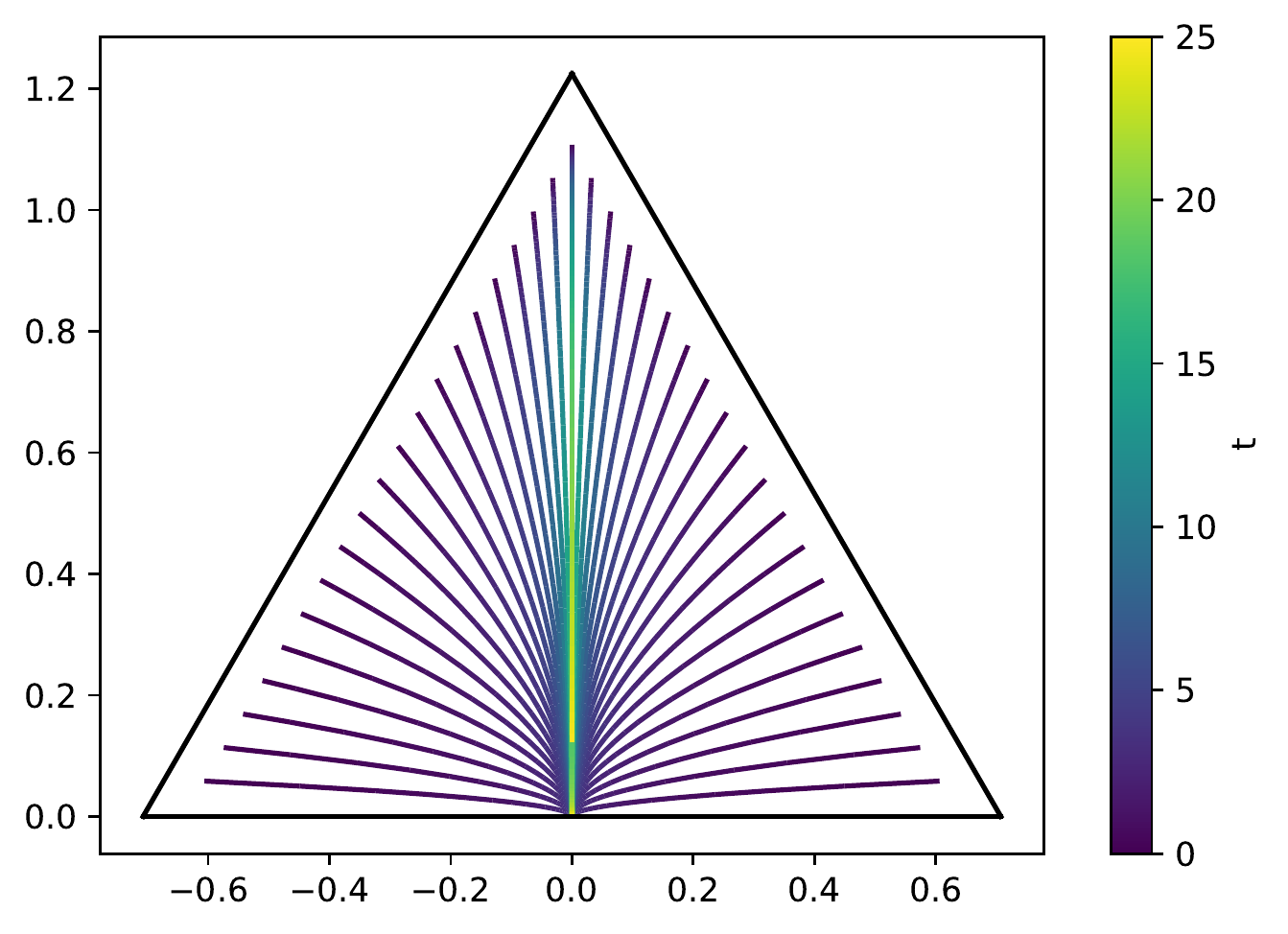} &
	\includegraphics[width=0.4\linewidth]{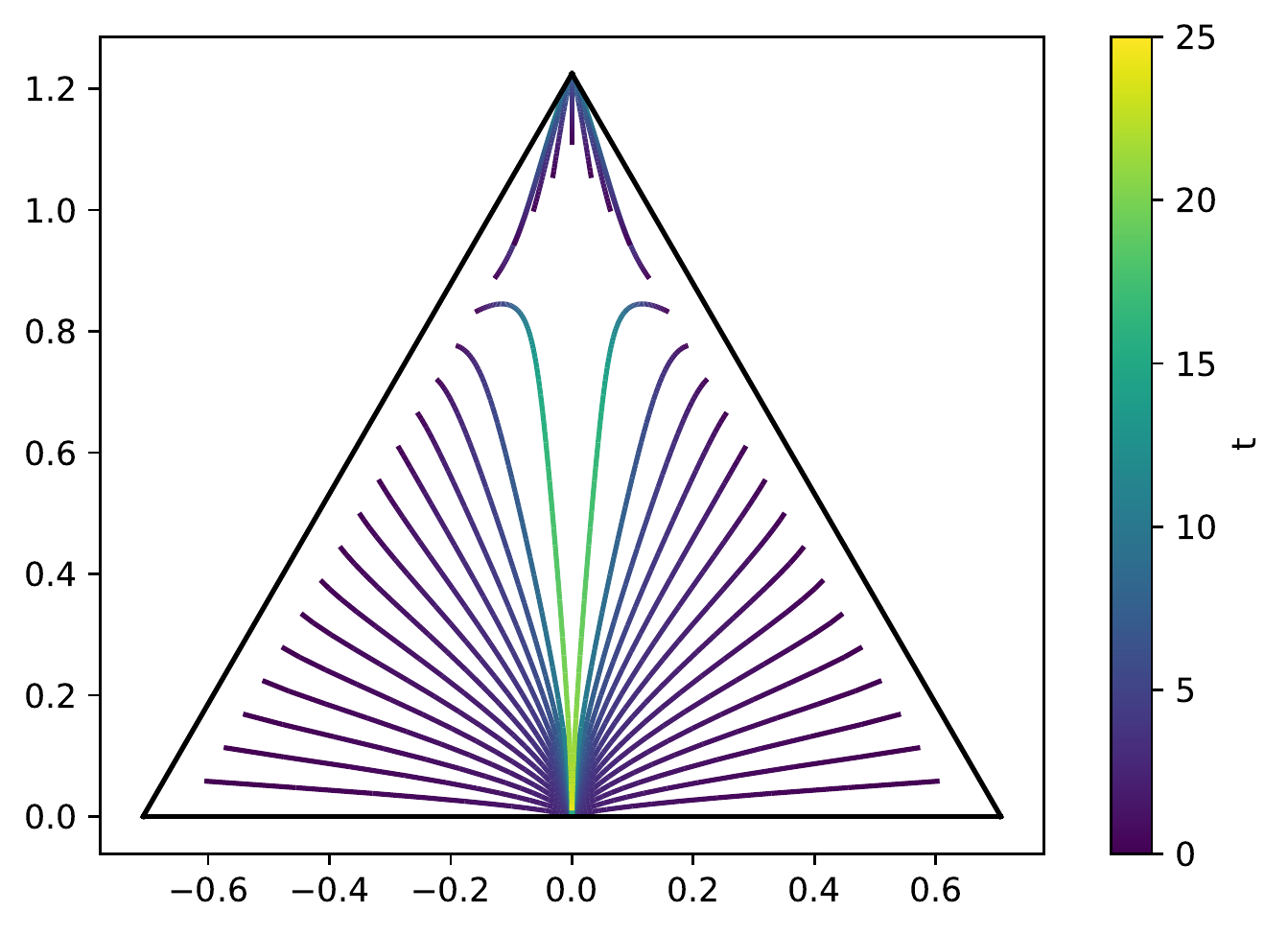} \\
	(a) & (b)	
	\end{tabular}
    \caption{Flows corresponding to the continuous normalization of $P_3$ projected on the simplex.
    (a) Linear activation. 
    (b) Power activation with $a=1.1$.} 
    \label{fig_P_3_continuous}
\end{figure}

\subsubsection{Extension to edge-weighted graphs and graph learning}
A natural question which arises is: what happens if the matrix $A$ - the adjacency matrix - is not binary?
$A$ would then correspond to an edge-weighted graph, hence to adding some transition weights between nodes.
Empirically, graph renormalization with a non binary transition matrix also seems to always converge with a non-linear activation function.
The solution are still characterized by their sparsity, in that they contain multiple zeros, but 
the non zero weights are not $1$ anymore.
As $\IGN$ is also differentiable w.r.t $A$, the edge weights can be learned in order to 
minimize a final loss function.
$\IGN$ with learnable edge weights thus corresponds to a new class of neural networks on graphs.

\subsubsection{Extension to more general topologies}
We studied $\IGN$ for discrete topologies defined by finite graphs.
Another subject of interest would be the study of iterative normalization on other topological spaces.
One could for example consider the functions from $\mathbb{R}^n$ to a field $F$ and a system of weighted neighborhoods on the points of space. 
For each point $x \in \mathbb{R}^n$, define its neighborhood as a function $N(x) : \mathbb{R}^n \rightarrow \mathbb{R}$.
 The normalization of a function $f : \mathbb{R}^n \rightarrow F$ would then be another function $\R(f)$ 
 whose values would be defined by:
\[
\R(f)(x) = \frac{f(x)} { \int_{\mathbb{R}^n} N(y) f(y) dy }
\]

Under proper conditions on the functions and the neighborhoods.

\section{Conclusion}
If our conjectures are correct, iterative graph normalization would be 
a systematically convergent differentiable approximation 
of the maximum weight independent set problem in a graph, 
thereby unlocking various end-to-end machine learning applications.

We have proved convergence for complete graphs and local convergence around 
maximal independent sets in general graphs but couldn't prove convergence in general.
We are certain that better mathematicians than we are will be able to prove or disprove 
our conjectures.
There are also a number of intriguing relationships between $\IGN$ and other objects, 
such as the graph Laplacian or Rayleigh Quotients that are worth exploring.
The extension to other numerical fields, topological spaces, continuous time or to edge-weighted graphs and to graph learning are also totally blank pages.

In practice, we have shown through numerical experiments that $\IGN$ provides close solutions 
to the greedy algorithm $\KAKO$ by Kako et al. \cite{kako2005approximation}.
We have also studied the relationship between $\IGN$ and the Softassign algorithm for the 
assignment problem, a special case of $\MWIS$, and showed 
that chaining both algorithms provided an approximation 
algorithm with a small optimality gap. 
As mentioned in introduction, 
$\MWIS$ encompasses a broad class of combinatorial optimization sub-problems, 
such as max flow or shortest path problems, for which optimal algorithms are known.
It would be interesting to study the empirical accuracy of $\IGN$ on such problems.

Of course, practical application in end-to-end deep learning systems, 
for example for computer vision, are to be explored.

We believe that $\IGN$ opens a whole new field of theoretical investigations 
and practical applications.

\section*{Acknowledgments}
We would like to thank Jean-Luc Guigues for fruitful discussions on matrix cross normalization
at early stages of our journey into $\IGN$, 
his complete enumerations of fixed points for small dimensions 
and a number of results on cross normalization which we didn't include here.
We are grateful to Paul Munger for the finding of the "Taco" and for many insightful discussions.
We also thank Ali Rahimi for his feedback, our rich exchanges and 
the idea of the "vec trick" to transform the matrix formulation of cross normalization into 
the vector formulation corresponding to the associated general $\MWIS$ problem. 
\appendix
\section{Proofs}
\thKnconvergence*
\begin{proof}
Without loss of generality, 
we can assume that the maximum component of $x$ is $x_1$, i.e. that $\forall i>1 : x_i < x_1$. 

For the complete graph, normalization amounts to rescaling all the weights by the same value, hence 
it preserves the order of the weights. As $h$ is increasing it also preserves order, hence $x_1$ remains the 
maximum component throughout the iterations. 

Let us split the activated normalization iteration into two alternating steps:
\begin{eqnarray*}
x^0 & = & x \\
y^{k+1}  & = & \R(x^k) \\
x^{k+1} & = & h(y^{k+1})
\end{eqnarray*}

We have $\forall i>1 : y_i^{k} = x_i^{k} / (x_1^k+x_i^k+\sum_{j\ne 1, j\ne i} x_j^k)$ 
hence as $x_1 > x_i$, $y_i^k < x_i^k / (x_1^k + x_i^k) < x_i^k / (2x_i^k) = 1/2$.
Now $h$ is strictly convex on $(0, 1/2[$ and so if $y_1^k$ is also smaller than $1/2$ then 
the value of $h$ at $y_i^k$ is 
strictly lower than the value at $y_i^k$ on the secant from $(0, 0)$ to $( y_1^k, h(y_1^k))$, i.e.:
\begin{equation}
\label{prop_Kn_eqn_1}
h(y_i^k) < \frac{h(y_1^k)}{y_1^k} y_i^k
\end{equation}
This also holds if $y_i^k > 1/2$ as we required that if the activation function crosses 
the line $y=x$ then it remains over that line, hence if it is the case, the slope $\frac{h(y_1^k)}{y_1^k}$ 
is greater than $1$ and the inequality still holds. 
From equation \ref{prop_Kn_eqn_1} we get:
\begin{equation*}
\frac{h(y_i^k)}{h(y_1^k)} < \frac{y_i^k }{y_1^k}
\quad\Leftrightarrow \quad
\frac{x_i^k}{x_1^k} < \frac{y_i^k }{y_1^k}
\end{equation*}
As noted above, as $x_1^k>0$, normalization of $x_1^k$ can be expressed in terms of the ratios of the other weights to $x_1^k$: 
\begin{equation*}
y_1^{k+1} = \left( 1 + \sum_{i>1} \frac{x_i^k}{x_1^k} \right)^{-1}
\end{equation*}
And thus
\begin{equation}
\label{prop_Kn_eqn_2}
y_1^{k+1} = \left( 1 + \sum_{i>1} \frac{x_i^k}{x_1^k} \right)^{-1}
> \left( 1 + \sum_{i>1} \frac{y_i^k}{y_1^k} \right)^{-1}
= \quad \R_1(y^k)
\end{equation}
Now, $y^k = \R(x^{k-1})$, hence it is already normalized and as graph normalization 
(with a linear activation function) is stable on the complete graph, we obtain $\R(y_k)=y^k$ 
and in particular $\R_1(y^k)=y_1^k$.
We thus get that $y_1^{k+1}  > y_1^k$ and thus 
that $y_1$ is strictly increasing. As it is bounded above by $1$ it converges.
The only way to achieve equality in equation \ref{prop_Kn_eqn_2}, hence convergence,  
is to have $\forall i>1 : h(y_i^k) = y_i^k$ in equation \ref{prop_Kn_eqn_1} which,
as $y_i^k < 1/2$, only happens when $\forall i>1 : h(y_i^k) = y_i^k = 0$.  
As $\sum y_i^k = 1$, the only possible fixed point verifies $y_1=1$ and $\forall i>1 : y_i = 0$ and 
thus $x_1=1$ and $\forall i>1 : x_i = 0$.
\end{proof}
\thprojective*
\begin{proof}
Let $G=(A,x)$ be a normalizable graph.
Let $B=A+I$ and $y=x \HD Bx$ for some $x \in \DOM$.
Assume that another point than $x$ maps on the same line than $y$, 
i.e. assume $\exists z \in \DOM$, $\exists k>0$ such that $ky = z \HD Bz$.
We get $kx\HD Bx = z\HD Bz$ hence $k (x\HM Bz) = z \HM Bx$.
Summing up the components of these vectors, we obtain:
\begin{eqnarray*}
k \sum_i (x\HM Bz)_i &=& \sum_i (z\HM Bx)_i  \\
k x^t B z &=& z^t B x
\end{eqnarray*}
$B$ is symmetric hence $x^t B z = z^t B x$, and thus $k=1$.
\end{proof}
\thtreeinjective*
\begin{proof}
Let 
Consider $x>0$ and $y>0$ such that $x\HD Bx = y\HD By$.
This gives 
\begin{eqnarray}
x\HM By \quad - \quad y \HM Bx &=& 0 \nonumber\\
\Leftrightarrow \quad 
x\HM(A+I)y \quad - \quad y \HM(A+I)x &=& 0 \nonumber\\
\Leftrightarrow \quad 
x\HM Ay \quad - \quad y \HM Ax &=& 0 \label{eqn1_th_homo_tree}
\end{eqnarray}
Without loss of generality, we can assume that $T$ is connected, 
hence that its adjacency matrix $A$ has at least one non zero entry in each line. 
In this case, the last equation is a system of $n$ homogeneous polynomials equations 
of degree $2$ of the $2n$ variables 
$x_1, \dots x_n, y_1 \dots y_n$.

Let $d_{ij} = x_i y_j  -  x_j y_i$.

As $A$ is symmetric, the $i$th equation can be written:
\begin{equation}
\sum_j A_{ij} d_{ij} = 0
\label{eqn_rev}
\end{equation}
 
If a node $i$ is a leave, i.e. has degree $1$, then its corresponding equation reduces to 
$d_{ij} = 0$,  where $j$ is $i$'s unique neighbor.
As the weights are positive, it means that $(x_i,x_j)$ is proportional to $(y_i,y_j)$, hence $\exists k_{ij}>0 | (x_i,x_j) = k_{ij} (y_i,y_j)$.

If $T$ is a tree, one can order the equations in climbing order from the leaves.
Let $V=\{1\dots n\}$ denote the set of nodes of $T$.
Let $L_1 \subset V$ be the set of the leaves of $T$. 
We then define $L_2$ as the leaves of the graph obtained after removing $L_1$ from $T$, 
$L_3$ as the leaves of the graph obtained after removing $L_1$ and $L_2$ from $T$, etc.
Formally, $L_{k+1}$ is defined recursively as the set of the leaves 
of the subgraph induced by $V \backslash \cup_{i=1\dots k} L_i$.
As $T$ is a tree, the $L_i$ form a partition of $V$.
We say that $L_k$ is the $k$-th layer of this partition.
Also because $T$ is a tree, 
any node $i$ in a layer $L_k$, except the node belonging to the 
last non-empty layer, has a unique parent $p(i)$ in a layer $k' > k$.

One can now traverse the equations layer by layer and prove that $y = kx$ by recursion.
As $L_1$ contains the leaves of $T$, we know that $\forall i \in L_1 :  d_{i p(i)} = 0$.
Assume that $\forall m \in \{1\dots k\}, \forall i \in L_m : d_{i p(i)} = 0$.
$\forall i \in L_{k+1}$, the $i$th equation of the system \ref{eqn_rev} 
contains a term $d_{i p(i)}$ corresponding to the edge between $i$ and its parent, 
and all the other terms corresponding to edges with adjacent nodes belonging to a previous layer $k' < k+1$.
As $d_{ij} = -d{ji}$, all these terms are null, hence $d_{i p(i)} = 0$.

We thus get $n-1$ coefficients of proportionality $k_{i p(i)}$ between $(x_i, x_{p(i)})$ and $(y_i, y_{p(i)})$, 
one for each edge of the tree, which are all equal by connectivity of the tree.
\end{proof}
\thbinaryfparemis*
\begin{proof}
Let $x\in \MIS(G)$ and $S=\SUPP(x)$.
$\forall i\in S: x_i=1$ and $\forall j\in \N{i} : x_j=0$ hence $\sum A_{ij}x_j = 0$ 
and thus ${\Rh}_i(x) = h(1) = 1$.
$\forall i\notin S: x_i=0$ and as $x$ is maximal $\exists j\in \N{i} : x_j=1$ hence $\sum A_{ij}x_j >= 1$ 
and thus ${\Rh}_i(x) = h(0) = 0$. $x$ is thus a fixed point of $\Rh$.

Assume now that $x$ is a normalizable binary vector which is not an independent set of $G$.
As $x$ is not independent, $\exists i$ such that $x_i = 1$ and $x_j = 1$ for some $j \in \N{i}$. 
Hence $\sum A_{ij}x_j >= x_j >= 1$ and as $h$ is strictly increasing, 
${\Rh}_i(x) = h(1 / 1 + \sum A_{ij}x_j) <= h(1/2) < 1$. Thus $x$ is not a fixed point of $\Rh$. 

\end{proof}

\propfixedregular*
\begin{proof}
Assume that $G=(A,x)$ is a non regular fixed cluster with all-identical weights equal to some value $a>0$. 
As $G$ is connected, it must contain two adjacent nodes $i$ and $j$ which have different degrees.
The normalization conditions on $i$ and $j$ are respectively:
\begin{eqnarray*}
x_i + x_j + \sum_{k\in \N{i}, k\ne j} x_k &=& 1\\ 
x_j + x_i + \sum_{k\in \N{j}, k\ne i} x_k &=& 1
\end{eqnarray*}
Substracting these two equation we get:
\[
\sum_{k\in \N{i}, k\ne j} x_k = \sum_{k\in \N{j}, k\ne i} x_k
\]
As all weights are equal to $a>0$, it gives: $a (\DEG(i)-1) = a (\DEG(j)-1)$
which contradicts the fact that $i$ and $j$ have different degrees.
\end{proof}

\thmisspectralradius*
\begin{proof}
Let $S \in \MIS(G)$ and $x = ind(S)$.

If $i\in S$ then $x_i=1$ and $\forall j\in \N{i} : x_j = 0$ hence $\sum A_{ij}x_j  = 0$ and $\sum B_{ij}x_j  = 1$.
We thus get:

$\forall i\in S:$
\begin{eqnarray*}
J_{ii}(x) &=& 0 \\
\forall j\ne i \quad J_{ij}(x)&=&-h\prime(1) A_{ij}
\end{eqnarray*}

If $i\notin S$ then $x_i=0$ and thus $\sum A_{ij}x_j = \sum B_{ij}x_j$. We thus get:

$\forall i\notin S:$
\begin{eqnarray*}
J_{ii}(x) &=& \frac{h\prime(0)}{\sum A_{ij}x_j} \\
\forall j\ne i \quad J_{ij}(x)&=&0
\end{eqnarray*}

Let $v$ be an eigenvector of $J$ associated with the eigenvalue $\lambda$.
Writing $Jv = \lambda v$, we obtain:
\begin{eqnarray}
\forall i \in S &:& -h\prime(1) \sum A_{ij} v_j = \lambda v_i  \label{eqn_evines}\\
\forall i \notin S &:& \frac{h\prime(0)}{\sum A_{ij}x_j} v_i = \lambda v_i \label{eqn_evnines}
\end{eqnarray}

If the eigenvector $v$ verifies $\forall i\notin S: v_i = 0$ then the equations \ref{eqn_evnines} are 
all true for any $\lambda$. 
Furthermore, in this case $\forall i\in S : \sum A_{ij}v_j = 0$ because $v_j = 0$ whenever $A_{ij} \ne 0$ in the sum.
The equations \ref{eqn_evines} thus give $\forall i\in S : \lambda v_i = 0$. 
As $v\ne 0$, at least one of the $v_i$ is nonzero, hence $\lambda = 0$.
Hence any vector $v$ such that $\forall i\notin S v_i = 0$ and $\exists j\in S | v_j\ne 0$ is an eigenvector 
of $J$ associated with the eigenvalue $0$.
A basis for this eigenspace is thus the subset of the natural basis vectors 
$\{ ind(\{i\}); i \in S\}$ of size $|S|$.

On the other hand, if $v$ verifies that $\exists i\notin S | v_i \ne 0$ then the equations \ref{eqn_evnines} give:
$\forall i\notin S | v_i\ne 0 : \lambda = \frac{h\prime(0)}{\sum A_{ij}x_j}$. The subsets of 
indices $K\subset \bar{S}$ such that $\forall k \in K | v_k \ne 0$, and such that all the eigenvalue found from all $k$ equations are consistent are the only possible generators of an eigenvector.
The elements in such a subset all have an equal adjacency to elements of $S$. 
One easily verifies that such a solution is compatible with the equations \ref{eqn_evines}.
We thus have as many distinct non zero eigenvalues as values in the set $\left\{ \frac{h\prime(0)}{\sum A_{ij}x_j}; i \notin S \right\}$

Let's denote these eigenvalues by $\lambda_i$, including the $0$ eigenvalue.
The spectral radius is then given by
\begin{eqnarray*}
\rho(J(x)) 
&=& \max_{i\notin S} \{ |\lambda_i| \} \\
&=& \max_{i\notin S}  \frac{h\prime(0)}{\sum A_{ij}x_j} \\
&=& \frac{h\prime(0)}{\min_{i\notin S}  \sum A_{ij}x_j} \\
&=& \frac{h\prime(0)}{dens_{G}(x)} \\
\end{eqnarray*}
as we assume that $h$ is increasing hence that $h\prime \ge 0$.
\end{proof}

\thnmisrepulsive*
\begin{proof}
 We prove that if $\forall y\in[0,1] : h\prime(y)>0$ then 
the Jacobian of $x+\epsilon$ diverges as $\epsilon \rightarrow 0$.
As $x$ is a non maximal independent set of $G$, $\exists i$ such that $x_i = 0$ and 
$\forall j\in \N{i} : x_j = 0$.
The diagonal entry $J_{ii}(x+\epsilon)$ of the Jacobian then becomes:

\begin{eqnarray*}
J_{ii}(x+\epsilon) &=& h\prime\left( \R_i(x+\epsilon)\right) 
\frac{ \sum_{j\ne i} B_{ij}(x_j + \epsilon_j) }{\left(\sum B_{ij}(x_j + \epsilon_j)\right)^2}\\
&=& 
h\prime\left( \R_i(x+\epsilon) \right) 
\frac{ \sum_{j\ne i} B_{ij}\epsilon_j }{\left(\sum B_{ij}\epsilon_j\right)^2} \\
&\ge& 
h\prime\left( \R_i(x+\epsilon) \right) 
\frac{ \sum_{j\ne i} B_{ij}\epsilon_j }{\left(\sum_{j\ne i} B_{ij}\epsilon_j\right)^2} \\
&=& 
h\prime\left( \R_i(x+\epsilon) \right) 
\frac{ 1 }{\sum A_{ij}\epsilon_j}
\end{eqnarray*}
As $\forall y : \R_i(y) \in [0,1]$:
\[
J_{ii}(x+\epsilon)  \ge 
 \frac{\min_{y\in [0,1]} h\prime(y) }{\sum A_{ij}\epsilon_j}
\]
which diverges as $\epsilon \rightarrow 0$ if $\forall y\in[0,1] : h\prime(y)\ne 0$.
Hence $Tr(J(x+\epsilon))$ diverges as $\epsilon \rightarrow 0$ 
and so is the spectral radius which means that $x$ is a repulsive point.
\end{proof}

\thmisbasin*

\begin{proof}
\begin{eqnarray*}
\forall i \in S: x_i^k > 1/2 &\Rightarrow & \forall j\notin S : r_j^k > 1 \Rightarrow l_j^k= 0\\
\forall i \notin S : x_i^k < 1/(2 d_S) &\Rightarrow & \forall j\in S : r_j^k < 1/2 \Rightarrow l_j^k > 1/2
\end{eqnarray*} 
which shows that $\forall i\notin S$, $x_i^{k+1}$ is decreasing towards $0$ and 
$\forall i\in S$, $x_i^{k+1}$ remains greater than $1/2$. 
The conditions of equations \ref{eqn_mis_basin_1} and \ref{eqn_mis_basin_2} thus still hold 
at iteration $k+1$. The weights of the nodes which do not belong to the independent set $S$ 
monotonically decrease to $0$ thus $\forall i \in S$, $r_i$ also monotonically decrease to $0$ and 
$x_i$ converges to $1$.
\end{proof}

\thLoneincreases*

\begin{proof}
We prove that:
\[
\forall x\in  {\mathbb{R}^{+*}}^{n}  \quad
\norm{\R(\R(x))}_1 \ge \norm{\R(x)}_1
\]
with equality if only if $x$ is a fixed point of $\R$.

Let $y = \R(x)$, $z=\R(y)=\R(\R(x))$ and $\Delta = z - y$.
Let $B = A+I$.

$z = y \HD By$ is equivalent to $y = z \HM By$. Hence:

\begin{eqnarray*}
\Delta &=& z - z \HM By \\
&=& \frac{y}{By} - \frac{y}{By}\HM By\\
&=& \frac{1}{By}\HM \left(  y -  y \HM By \right)
\end{eqnarray*}

And thus
\[
\sum_i \Delta_i = 
\sum_i \left[ \frac{1}{By}\HM \left(  y -  y\HM By \right) \right]_i
\]

Remark that each term of the righthand sum is not necessarily positive, and is not in general as some weights increase and other decrease by normalization.
However, whenever a term $y_i - y_i (By)_i$ is positive then $(By)_i$ is smaller than $1$ and 
thus $y_i - y_i (By)_i \ge 0 \Rightarrow \left(  y_i -  y_i (By)_i \right) / (By)_i \ge y_i -  y_i (By)_i$. 

In the same way, $y_i - y_i (By)_i \le 0 \Rightarrow (By)_i \ge 1 \Rightarrow \left(  y_i -  y_i (By)_i \right) / (By)_i \ge y_i -  y_i (By)_i$.

We thus get:
\begin{eqnarray}
\sum_i \Delta_i 
&\ge& \sum_i \left[ y -  y\HM By \right]_i \label{th_LoneIneq}\\
&=& \sum_i y_i -  \sum_i \left[ y \HM By \right]_i \nonumber\\
&=& \sum_i y_i -  \sum_i y_i \left[ By \right]_i \nonumber\\
&=& \sum_i y_i -  y^t By \nonumber
\end{eqnarray}

And thus if the conjecture \ref{conj_QL1} holds then:
\[
\sum_i \Delta_i  \ge 0
\]

As $\norm{\R(x^k)}_1 \le n$, the $L_1$ norm of the weights is increasing 
and bounded above, hence it converges. One easily verifies that equality in 
equation \ref{th_LoneIneq} 
only occurs on fixed points.
\end{proof}

\bibliography{IterativeGraphNormalization}{}
\bibliographystyle{plain}

\end{document}